\documentclass[final]{scrartcl}

\usepackage[utf8]{inputenc}
\usepackage[T1]{fontenc}

\usepackage{mathptmx}

\usepackage{keyval}
\usepackage{footmisc}
\usepackage{ragged2e}
\usepackage{xspace}
\usepackage[usenames,dvipsnames,svgnames,table]{xcolor}
\usepackage[final,stretch=10,shrink=10]{microtype}
\usepackage{ifdraft}
\usepackage[inline]{enumitem}
\usepackage{footnote}
\usepackage{tikz}
\usetikzlibrary{backgrounds,snakes,arrows,decorations.markings,calc}
\usepackage{varwidth}
\usepackage{algpseudocode}
\usepackage{algorithm} %
\usepackage{soul} %
\usepackage{adjustbox}
\usepackage{multirow}
\usepackage{calc}

\usepackage{captcont}
\usepackage{caption}
\DeclareCaptionLabelSeparator{dot}{.\enspace}
\clearcaptionsetup{figure}
\captionsetup[figure]{format=plain,font={small,up},labelfont={bf},labelsep=dot}
\usepackage[nottoc,notlof,section]{tocbibind}

\usepackage[thmmarks]{ntheorem}
\usepackage{macros}
\usepackage{cite}
\usepackage{amsmath}
\usepackage{amssymb}

\setlistdepth{9}
\setlist[itemize,1]{label=$\bullet$}
\setlist[itemize,2]{label=$\bullet$}
\setlist[itemize,3]{label=$\bullet$}
\setlist[itemize,4]{label=$\bullet$}
\setlist[itemize,5]{label=$\bullet$}
\setlist[itemize,6]{label=$\bullet$}
\setlist[itemize,7]{label=$\bullet$}
\setlist[itemize,8]{label=$\bullet$}
\setlist[itemize,9]{label=$\bullet$}

\renewlist{itemize}{itemize}{9}

\usepackage[english]{babel}
\usepackage[final]{hyperref}
\usepackage{color}
\definecolor{linkcolor}{rgb}{0,0,0.5}
\hypersetup{
 colorlinks=true,
  linkcolor=linkcolor,
  anchorcolor=linkcolor,
  citecolor=linkcolor,
  filecolor=linkcolor,
  menucolor=linkcolor,
  runcolor=linkcolor,
  urlcolor=linkcolor,
}

\usepackage{todonotes}

\usepackage{fancybox}
\usepackage{wrapfig}

\usepackage{fullpage}

\ifdraft{
  \usepackage[scrtime]{prelim2e}
  \usepackage[inline]{showlabels}

  \ifdefined\fastmode
    \usepackage{comment}
    \excludecomment{figure}
    
  \fi
}{}

\setcounter{tocdepth}{2}

\pagestyle{plain}

\DeclareSymbolFont{greek}     {OML}{cmm}{m}{it}
\DeclareMathSymbol{\alpha}{\mathord}{greek}{"0B}
\DeclareMathSymbol{\beta}{\mathord}{greek}{"0C}
\DeclareMathSymbol{\gamma}{\mathord}{greek}{"0D}
\DeclareMathSymbol{\delta}{\mathord}{greek}{"0E}
\DeclareMathSymbol{\epsilon}{\mathord}{greek}{"22}
\DeclareMathSymbol{\varepsilon}{\mathord}{greek}{"22}
\DeclareMathSymbol{\zeta}{\mathord}{greek}{"10}
\DeclareMathSymbol{\eta}{\mathord}{greek}{"11}
\DeclareMathSymbol{\theta}{\mathord}{greek}{"12}
\DeclareMathSymbol{\iota}{\mathord}{greek}{"13}
\DeclareMathSymbol{\kappa}{\mathord}{greek}{"14}
\DeclareMathSymbol{\lambda}{\mathord}{greek}{"15}
\DeclareMathSymbol{\mu}{\mathord}{greek}{"16}
\DeclareMathSymbol{\nu}{\mathord}{greek}{"17}
\DeclareMathSymbol{\xi}{\mathord}{greek}{"18}
\DeclareMathSymbol{\pi}{\mathord}{greek}{"19}
\DeclareMathSymbol{\rho}{\mathord}{greek}{"1A}
\DeclareMathSymbol{\sigma}{\mathord}{greek}{"1B}
\DeclareMathSymbol{\tau}{\mathord}{greek}{"1C}
\DeclareMathSymbol{\upsilon}{\mathord}{greek}{"1D}
\DeclareMathSymbol{\phi}{\mathord}{greek}{"27}
\DeclareMathSymbol{\chi}{\mathord}{greek}{"1F}
\DeclareMathSymbol{\psi}{\mathord}{greek}{"20}
\DeclareMathSymbol{\omega}{\mathord}{greek}{"21}
\DeclareMathSymbol{\vartheta}{\mathord}{greek}{"23}
\DeclareMathSymbol{\varpi}{\mathord}{greek}{"24}
\DeclareMathSymbol{\varrho}{\mathord}{greek}{"25}
\DeclareMathSymbol{\varsigma}{\mathord}{greek}{"26}
\DeclareMathSymbol{\varphi}{\mathord}{greek}{"27}

\begin{document}
\hyphenation{Brow-serID}
\hyphenation{in-fra-struc-ture}
\hyphenation{brow-ser}
\hyphenation{doc-u-ment}
\hyphenation{Chro-mi-um}
\hyphenation{meth-od}
\hyphenation{sec-ond-ary}
\hyphenation{Java-Script}
\hyphenation{Mo-zil-la}
\hyphenation{post-Mes-sage}

\title{A Comprehensive Formal Security Analysis of OAuth 2.0\footnote{\hspace{1ex}An abridged version appears
    in CCS 2016 \cite{FettKuestersSchmitz-CCS-2016}.}}
\date{}
\author{Daniel Fett\\University of Trier, Germany\\\texttt{fett@uni-trier.de} \and Ralf K{\"u}sters\\University of Trier, Germany\\\texttt{kuesters@uni-trier.de} \and Guido Schmitz\\University of Trier, Germany\\\texttt{schmitzg@uni-trier.de}}

\maketitle

The OAuth~2.0 protocol 
is one of the most widely deployed authorization/single sign-on (SSO)
protocols and also serves as the foundation for the new SSO standard
OpenID Connect. Despite the popularity of OAuth, so far analysis
efforts were mostly targeted at finding bugs in specific
implementations and were based on formal models which abstract from
many web features or did not provide a formal treatment at all.

In this paper, we carry out the first extensive formal analysis of the
OAuth 2.0 standard in an expressive web model. Our analysis aims at
establishing strong authorization, authentication, and session
integrity guarantees, for which we provide formal definitions. In our
formal analysis, all four OAuth grant types (authorization code grant,
implicit grant, resource owner password credentials grant, and the
client credentials grant) are covered. They may even run
simultaneously in the same and different relying parties and identity
providers, where malicious relying parties, identity providers, and
browsers are considered as well. Our modeling and
analysis of the OAuth 2.0 standard assumes that security
recommendations and best practices are
followed in order to avoid obvious and known attacks.

When proving the security of OAuth in our model, we discovered four attacks which
break the security of OAuth. The vulnerabilities can be exploited in
practice and are present also in OpenID Connect.

We propose fixes for the identified vulnerabilities, and then, for the
first time, actually prove the security of OAuth in an expressive web
model. In particular, we show that the fixed version of OAuth (with
security recommendations and best practices in place) provides the
authorization, authentication, and session integrity properties we
specify.

\newpage{}

\ifdraft{
\listoftodos
}{ }

\tableofcontents

\newpage

\listoffigures
\listofalgorithms

\newpage

\section{Introduction}
\label{sec:introduction}

The OAuth~2.0 authorization framework \cite{rfc6749-oauth2} defines a
web-based protocol that allows a user to grant web sites access to her
resources (data or services) at other web sites
(\emph{authorization}).  The former web sites are called relying
parties (RP) and the latter are called identity providers
(IdP).\footnote{Following the OAuth~2.0 terminology, IdPs are called
  \emph{authorization servers} and \emph{resource servers}, RPs are
  called \emph{clients}, and users are called \emph{resource
    owners}. Here, however, we stick to the more common terms
  mentioned above.} In practice, OAuth~2.0 is often used for
\emph{authentication} as well. That is, a user can log in at an RP
using her identity managed by an IdP (single sign-on, SSO).

Authorization and SSO solutions have found widespread adoption in the
web over the last years, with OAuth 2.0 being one of the most popular
frameworks. OAuth~2.0, in the following often simply called
\emph{OAuth},\footnote{Note that in this document, we consider only
  OAuth~2.0, which is very different to its predecessor, OAuth~1.0(a).}
is used by identity providers such as Amazon, Facebook, Google,
Microsoft, Yahoo, GitHub, LinkedIn, StackExchange, and Dropbox. This
enables billions of users to log in at millions of RPs or share their
data with these \cite{similartech-fb-connect-stats-2015}, making OAuth
one of the most used single sign-on systems on the web.

OAuth is also the foundation for the new single sign-on protocol
OpenID Connect, which is already in use and actively supported by
PayPal (``Log In with PayPal''), Google, and Microsoft, among others.
Considering the broad industry support for OpenID Connect, a
widespread adoption of OpenID Connect in the next years seems
likely. OpenID Connect builds upon OAuth and provides clearly defined
interfaces for user authentication and additional (optional) features,
such as dynamic identity provider discovery and relying party
registration, signing and encryption of messages, and logout.

In OAuth, the interactions between the user and her
browser, the RP, and the IdP can be performed in four different flows,
or \emph{grant types}: authorization code grant, implicit grant,
resource owner password credentials grant, and the client credentials
grant (we refer to these as \emph{modes} in the following). In
addition, all of these modes provide further options.

The goal of this work is to provide an in-depth security analysis of
OAuth. Analyzing the security of OAuth is a challenging task, on the
one hand due to the various modes and options that OAuth provides, and
on the other hand due to the inherent complexity of the web.

So far, most analysis efforts regarding the security of OAuth were
targeted towards finding errors in specific
implementations~\cite{BansalBhargavanetal-JCS-2014,LiMitchell-ISC-2014,SantsaiBeznosov-CCS-2012-OAuth,Chenetal-2014,ShehabMohsen-2014,Yangetal-AsiaCCS-2016,Shernanetal-DIMVA-2015},
rather than the comprehensive analysis of the standard itself.
Probably the most detailed formal analysis carried out on OAuth so far
is the one in \cite{BansalBhargavanetal-JCS-2014}. However, none of
the existing analysis efforts of OAuth account for all modes of OAuth
running simultaneously, which may potentially introduce new security
risks. In fact, many existing approaches analyze only the
authorization code mode and the implicit mode of OAuth. Also,
importantly, there are no analysis efforts that are based on a
comprehensive formal web model (see below), which, however, is
essential to rule out security risks that arise when running the
protocol in the context of common web technologies (see
Section~\ref{sec:related-work} for a more detailed discussion of
related work).

\subsubsection{Contributions of this Paper} We perform the first
extensive formal analysis of the OAuth~2.0 standard for all four
modes, which can even run simultaneously within the same and different
RPs and IdPs, based on a comprehensive web model which covers large
parts of how browsers and servers interact in real-world setups. Our
analysis also covers the case of malicious IdPs, RPs, and
browsers/users.

\paragraph{Formal model of OAuth} Our formal analysis of OAuth uses an
expressive Dolev-Yao style model of the web
infrastructure~\cite{FettKuestersSchmitz-SP-2014} proposed by Fett,
K{\"u}sters, and Schmitz (FKS). The FKS model has already been used to
analyze the security of the BrowserID single sign-on
system~\cite{FettKuestersSchmitz-SP-2014,FettKuestersSchmitz-ESORICS-BrowserID-Primary-2015}
as well as the security and privacy of the SPRESSO single sign-on
system~\cite{FettKuestersSchmitz-CCS-2015}. This web model is designed
independently of a specific web application and closely mimics
published (de-facto) standards and specifications for the web, for
instance, the HTTP/1.1 and HTML5 standards and associated (proposed)
standards. It is the most comprehensive web model to date. Among
others, HTTP(S) requests and responses, including several headers,
such as cookie, location, strict transport security (STS), and origin
headers, are modeled. The model of web browsers captures the concepts
of windows, documents, and iframes, including the complex navigation
rules, as well as new technologies, such as web storage and
web messaging (via postMessage). JavaScript is modeled in an
abstract way by so-called \emph{scripts} which can be sent around
and, among others, can create iframes and initiate XMLHTTPRequests (XHRs).
Browsers may be corrupted dynamically by the adversary.

Using the generic FKS model, we build a formal model of OAuth, closely
following the OAuth~2.0 standard (RFC6749~\cite{rfc6749-oauth2}).
Since this RFC does not fix all aspects of the protocol and in order
to avoid known implementation attacks, we use the OAuth~2.0 security
recommendations (RFC6819~\cite{rfc6819-oauth2-security}), additional
RFCs and OAuth Working Group drafts
(e.g., RFC7662~\cite{rfc7662-oauth-token-introspection},
\cite{rfc-draft-oauth-jwt-encoded-state}) and current web best
practices (e.g., regarding session handling) to obtain a model of
OAuth with state-of-the-art security features in place, while
making as few assumptions as possible. Moreover, as mentioned above,
our model includes RPs and IdPs that (simultaneously) support all four
modes and can be dynamically corrupted by the adversary. Also, we
model all configuration options of OAuth (see
Section~\ref{sec:oauth}).

\paragraph{Formalization of security properties} Based on this model
of OAuth, we provide three central security properties of OAuth:
authorization, authentication, and session integrity, where session
integrity in turn is concerned with both authorization and
authentication.

\paragraph{Attacks on OAuth 2.0 and fixes} While trying to prove these
properties, we discovered four attacks on OAuth. In the first attack,
which breaks the authorization and authentication properties, IdPs
inadvertently forward user credentials (i.e., username and password)
to the RP or the attacker. In the second attack (IdP mix-up), a
network attacker playing the role of an IdP can impersonate any
victim. This severe attack, which again breaks the authorization and
authentication properties, is caused by a logical flaw in the
OAuth~2.0 protocol. Two further attacks allow an attacker to force a
browser to be logged in under the attacker’s name at an RP or force an
RP to use a resource of the attacker instead of a resource of the
user, breaking the session integrity property. We have verified all
four attacks on actual implementations of OAuth and OpenID Connect. We
present our attacks on OAuth in detail in Section~\ref{sec:attacks}.
In Appendix~\ref{app:openid-connect} we show how the attacks can be exploited in OpenID
Connect. We also show how the attacks can be fixed by changes that are easy to
implement in new and existing deployments of OAuth and OpenID Connect.

We notified the respective working groups, who confirmed the
attacks and that changes to the standards/recommendations are
needed. The IdP mix-up attack already resulted in a draft of a new
RFC~\cite{rfc-draft-ietf-oauth-mix-up-mitigation-01}.

\paragraph{Formal analysis of OAuth 2.0} Using our model of OAuth with
the fixes in place, we then were able to prove that OAuth satisfies
the mentioned security properties. This is the first proof which
establishes central security properties of OAuth in a comprehensive
and expressive web model  (see also Section~\ref{sec:related-work}).

We emphasize that, as mentioned before, we model OAuth with security
recommendations and best practices in place. As discussed in
Section~\ref{sec:analysis}, implementations not following these
recommendations and best practices may be vulnerable to attacks. In
fact, many such attacks on specific implementations have been pointed
out in the literature (e.g.,
\cite{BansalBhargavanetal-JCS-2014,rfc6749-oauth2,rfc6819-oauth2-security,Wangetal-USENIX-Explicating-SDKs-2013,SantsaiBeznosov-CCS-2012-OAuth,LiMitchell-ISC-2014,Chenetal-2014}).
Hence, our results also provide guidelines for secure OAuth
implementations.

We moreover note that, while these results provide strong security guarantees
for OAuth, they do not directly imply security of OpenID Connect
because OpenID Connect adds specific details on top of OAuth. We leave
a formal analysis of OpenID Connect to future work. The results
obtained here can serve as a good foundation for such an analysis.

\subsubsection{Structure of this Paper} In Section~\ref{sec:oauth}, we
provide a detailed description of OAuth~2.0 using the
authorization code mode as an example. In Section~\ref{sec:attacks}, we
present the attacks that we found during our analysis. An overview of
the FKS model we build upon in our analysis is provided in
Section~\ref{sec:model}, with the formal analysis of OAuth presented
in Section~\ref{sec:analysis}. Related work is discussed in
Section~\ref{sec:related-work}. We conclude in
Section~\ref{sec:conclusion}. Full details, including how the attacks
can be applied to OpenID Connect, further details on our model of
OAuth, and our security proof, can be found in the appendix.

\section{OAuth 2.0}
\label{sec:oauth}

In this section, we provide a description of the OAuth authorization
code mode, with the other three modes explained only briefly. In
Appendix~\ref{sec:oauth-all-modes}, we provide a detailed description
of the remaining three modes (grant types).

OAuth was first intended for \emph{authorization}, i.e.,  users
authorize RPs to access user data (called \emph{protected resources})
at IdPs. For example, a user can use OAuth to authorize services such
as IFTTT\footnote{IFTTT (\emph{If This Then That}) is a web service which can be used to automate
  actions: IFTTT is triggered by user-defined events (e.g., Twitter
  messages) and carries out user-defined tasks (e.g., posting on the
  user's Facebook wall).} to access her (private) timeline on
Facebook. In this case, IFTTT is the RP and Facebook the IdP.

Roughly speaking, in the most common modes, OAuth works as follows: If
a user wants to authorize an RP to access some of the user's data at
an IdP, the RP redirects the user (i.e., the user's browser) to the IdP,
where the user authenticates and agrees to grant the RP access to some
of her user data at the IdP. Then, along with some token (an
\emph{authorization code} or an \emph{access token}) issued by the IdP, the user is
redirected back to the RP. The RP can then use the token as a
credential at the IdP to access the user's data at the IdP.

OAuth is also commonly used for \emph{authentication}, although it was
not designed with authentication in mind. A user can, for example, use
her Facebook account, with Facebook being the IdP, to log in at the
social network Pinterest (the RP). Typically, in order to log in, the
user authorizes the RP to access a unique user identifier at the IdP.
The RP then retrieves this identifier and considers this user to be
logged in.

Before an RP can interact with an IdP, the RP needs to be registered
at the IdP. The details of the registration process are out of the
scope of the OAuth protocol. In practice, this process is usually a
manual task. During the registration process, the IdP assigns credentials to the
RP: a public OAuth client id and (optionally) a client
secret. (Recall that in the terminology of the OAuth standard the term
``client'' stands for RP.)
The RP may later use the client secret (if issued) to authenticate to
the IdP.

Also, an RP registers one or more \emph{redirection endpoint} URIs (located at the RP) at an IdP. As
we will see below, in some OAuth modes, the IdP redirects the user's
browser to one of these URIs. Note that
(depending on the implementation of an IdP) an RP may also register a
pattern as a redirect URI and then specify the exact redirect URI
during the OAuth run.

In all modes, OAuth provides several options, such as those
mentioned above. For brevity of presentation (and in contrast to our analysis), in the following
descriptions, we consider only a specific set of options. For example,
we assume that an RP always provides a redirect URI and shares an
OAuth client secret with the IdP.

\subsubsection{Authorization Code Mode}
\begin{figure}[t!]
  \centering
  \input{figure-oauth-auth-code-grant}
  \caption[OAuth~2.0 authorization code mode]{OAuth~2.0 authorization code mode. Note that data depicted
    below the arrows is either transferred in URI parameters, HTTP headers, or
    POST bodies.}
  \label{fig:oauth-auth-code-grant}
  \vspace{-2ex}
\end{figure}
When the user tries to authorize an RP to access her data at an IdP or
to log in at an RP, the RP first redirects the user's browser to the
IdP. The user then authenticates to the IdP, e.g., by providing her
user name and password, and finally is redirected back to the RP along
with an \emph{authorization code} generated by the IdP. The RP can now
contact the IdP with this authorization code (along with the client id
and client secret) and receive an \emph{access token}, which the RP in turn
can use as a credential to access the user's protected
resources at the IdP.

\paragraph{Step-by-Step Protocol Flow} In what follows, we describe
the protocol flow of the authorization code mode step-by-step (see
also Figure~\ref{fig:oauth-auth-code-grant}). First, the user starts
the OAuth flow, e.g., by clicking on a button to select an IdP,
resulting in request~\refprotostep{acg-start-req} being sent to the
RP. The RP selects one of its redirection endpoint URIs $\mi{redirect\_uri}$
(which will be used later in~\refprotostep{acg-redir-ep-req}) and a
value $\mi{state}$
(which will serve as a token to prevent CSRF attacks). The RP then redirects the browser to the
so-called \emph{authorization endpoint} URI at the IdP in~\refprotostep{acg-start-resp}
and~\refprotostep{acg-idp-auth-req-1} with its $\mi{client\_id}$,
$\mi{redirect\_uri}$,
and $\mi{state}$
appended as parameters to the URI. The IdP then prompts the user
to provide her username and password
in~\refprotostep{acg-idp-auth-resp-1}. The user's browser sends this
information to the IdP in~\refprotostep{acg-idp-auth-req-2}. If the
credentials are correct, the IdP creates a nonce $\mi{code}$
(the authorization code) and redirects the user's browser to RP's
redirection endpoint URI $\mi{redirect\_uri}$
in~\refprotostep{acg-idp-auth-resp-2}
and~\refprotostep{acg-redir-ep-req} with $\mi{code}$ and $\mi{state}$
appended as parameters to the URI. If $\mi{state}$ is the same as
above, the RP contacts the IdP in~\refprotostep{acg-token-req} and
provides $\mi{code}$, $\mi{client\_id}$, $\mi{client\_secret}$, and
$\mi{redirect\_uri}$.  Then the IdP checks whether this information is
correct, i.e., it checks that $\mi{code}$ was issued for the RP
identified by $\mi{client\_id}$, that $\mi{client\_secret}$ is the
secret for $\mi{client\_id}$, that $\mi{redirect\_uri}$
coincides with the one in Step~\refprotostep{acg-start-resp}, and that
$\mi{code}$ has not been redeemed before. If these checks are
successful, the IdP issues an access token $\mi{access\_token}$
in~\refprotostep{acg-token-resp}. Now, the RP can use
$\mi{access\_token}$ to access the user's protected resources at the
IdP (authorization) or log in the user (authentication), as described
next.

When OAuth is used for \emph{authorization}, the RP uses the access
token to view or manipulate the protected resource at the IdP
(illustrated in Steps~\refprotostep{acg-resource-req}
and~\refprotostep{acg-resource-resp}).

For \emph{authentication}, the RP fetches a user id (which uniquely
identifies the user at the IdP) using the access token,
Steps~\refprotostep{acg-introspect-req}
and~\refprotostep{acg-introspect-resp}.  The RP then
issues a session cookie to the user's browser as shown
in~\refprotostep{acg-redir-ep-resp}.\footnote{Authentication is not part of RFC6749, but this method
  for authentication is commonly used in practice, for example by
  Amazon, Facebook, LinkedIn, and StackExchange, and is also defined
  in OpenID Connect~\cite{openid-connect-core-1.0}.}

\paragraph{Tracking User Intention} Note that in order for an RP which
supports multiple IdPs to process
Step~\refprotostep{acg-redir-ep-req}, the RP must know which IdP a
user wanted to use for authorization. There are two different
approaches to this used in practice: First, the RP can use different
redirection URIs to distinguish different IdPs. We call this
\emph{na\"ive user intention tracking}. Second, the RP can store the
user intention in a session after Step~\refprotostep{acg-start-req}
and use this information later. We call this \emph{explicit user
  intention tracking}. The same applies to the implicit mode of OAuth
presented below.

\subsubsection{Implicit Mode}
This mode is similar to the authorization code mode, but
instead of providing an authorization code, the IdP directly
delivers an access token to the RP via the user's browser.

More specifically, in the implicit mode,
Steps~\refprotostep{acg-start-req}--\refprotostep{acg-idp-auth-req-2}
(see Figure~\ref{fig:oauth-auth-code-grant}) are the same as in the
authorization code mode. Instead of creating an authorization code, the IdP 
issues an access token right away and redirects the user's
browser to RP's redirection endpoint with the access token contained
in the fragment of the URI. (Recall that a fragment is a
special part of a URI indicated by the `\#' symbol.) 

As fragments are not sent in HTTP requests, the access token is not
immediately transferred when the browser contacts the RP. Instead, the RP
needs to use a JavaScript to retrieve the contents of the fragment.
Typically, such a JavaScript is sent in RP's answer at the redirection
endpoint. Just as in the authorization code mode, the RP can now use
the access token for authorization or authentication (analogously to
Steps~\refprotostep{acg-resource-req}--\refprotostep{acg-redir-ep-resp}
of Figure~\ref{fig:oauth-auth-code-grant}).\footnote{The response from
  the IdP in Step~\refprotostep{acg-introspect-resp} includes the RP's
  OAuth client id, which is checked by the RP when
  \emph{authenticating} a user
  (cf.~RFC7662~\cite{rfc7662-oauth-token-introspection}). This check
  prevents re-use of access tokens across RPs in the OAuth implicit
  mode, as explained in~\cite{Wangetal-USENIX-Explicating-SDKs-2013}.
  This check is not needed for authorization.}

\subsubsection{Resource Owner Password Credentials Mode}
In this mode, the user gives her credentials for an IdP directly to an
RP. The RP can then authenticate to the IdP on the user's behalf and
retrieve an access token.  This mode is intended for highly-trusted
RPs, such as the operating system of the user's device or
highly-privileged applications, or if the previous two modes are not
possible to perform (e.g., for applications without a web browser).

\subsubsection{Client Credentials Mode}
In contrast to the modes shown above, this mode works without the
user's interaction. Instead, it is started by an RP in order to fetch
an access token to access the resources of RP at an IdP. For example,
Facebook allows RPs to use the client credentials mode to obtain an
access token to access reports of their advertisements' performance.

\section{Attacks}
\label{sec:attacks}

As mentioned in the introduction, while trying to prove the security
of OAuth based on the FKS web model and our OAuth model, we found four
attacks on OAuth, which we call \emph{307 redirect attack}, \emph{IdP
  mix-up attack}, \emph{state leak attack}, and \emph{na\"ive RP
  session integrity attack}, respectively. In this section, we provide
detailed descriptions of these attacks along with easily implementable
fixes. Our formal analysis of OAuth (see Section~\ref{sec:analysis})
then shows that these fixes are indeed sufficient to establish the
security of OAuth. The attacks also apply to OpenID Connect (see
Section~\ref{sec:impl-open-conn}). Figure~\ref{fig:oauth-att-table}
provides an overview of where the attacks apply. We have verified our
attacks on actual implementations of OAuth and OpenID Connect and
reported the attacks to the respective working groups who confirmed
the attacks (see Section~\ref{sec:verification}).

\begin{figure*}
  \centering
  \footnotesize{
  \begin{tabular}[h]{|l|c|c|c|c|c|}
    \cline{2-6}
    \multicolumn{1}{c|}{}& \multicolumn{2}{c|}{attack on OAuth}      & \multicolumn{3}{c|}{applicable to OpenID Connect} \\
    \multicolumn{1}{c|}{}& auth code mode & implicit mode   & auth code mode & implicit mode   & hybrid mode  \\\hline
    307 Redirect Attack   & az + an  & az + an   & az + an   & az + an  & az + an  \\ \hline
    IdP Mix-Up Attack     & az* + an & az + an   & az* + an  & --     & az + an** \\ \hline
    State Leak Attack     & si       & si        & si        & si     & si \\ \hline
    Na\"ive RP Session Integrity Att.     & si & si   & si  & si   & si \\ \hline
  \end{tabular}\\[0.7em]
    \textbf{az:} breaks authorization. \textbf{an:} breaks authentication. \textbf{si:} breaks session integrity. \textbf{--:} not applicable. \textbf{*}~if client secrets are not used. \textbf{**}~restriction: if client secrets are used, either authorization or authentication is broken, depending on implementation details.
}
  \caption{Overview of attacks on OAuth 2.0 and OpenID Connect}

  \label{fig:oauth-att-table}
\end{figure*}

\subsection{307 Redirect Attack}
\label{sec:307-redirect}

In this attack, which breaks our authorization and authentication
properties (see Section~\ref{sec:secur-auth-prop}), the attacker
(running a malicious RP) learns the user's credentials when the user
logs in at an IdP that uses the wrong HTTP redirection status
code. While the attack itself is based on a simple error, to the best
of our knowledge, this is the first description of an attack of this
kind.

\subsubsection{Assumptions} The main assumptions are that (1) the IdP
that is used for the login chooses the 307 HTTP status code when
redirecting the user's browser back to the RP
(Step~\refprotostep{acg-idp-auth-resp-2}
in Figure~\ref{fig:oauth-auth-code-grant}), and (2) the IdP redirects
the user immediately after the user entered her credentials (i.e., in
the response to the HTTP POST request that contains the form data sent
by the user's browser).

\paragraph{Assumption~(1)}
This assumption is reasonable because neither the OAuth
standard~\cite{rfc6749-oauth2} nor the OAuth security
considerations~\cite{rfc6819-oauth2-security} (nor the OpenID Connect
standard~\cite{openid-connect-core-1.0}) specify the exact method of
how to redirect. The OAuth standard rather explicitly permits any HTTP
redirect: 
\begin{quotation}
\noindent
  While the examples in this specification show the use of the HTTP
  302 status code, any other method available via the user-agent to
  accomplish this redirection is allowed and is considered to be an
  implementation detail.
\end{quotation}

\paragraph{Assumption~(2)}
This assumption is reasonable as many examples for redirects
immediately after entering the user credentials can be found in
practice, for example at \nolinkurl{github.com} (where, however,
assumption~(1) is not satisfied.)

\subsubsection{Attack} When a user uses the authorization code or
implicit mode of OAuth to log in at a \emph{malicious} RP, then she is
redirected to the IdP and prompted to enter her credentials. The IdP
then receives these credentials from the user's browser in a POST
request. It checks the credentials and redirects the user's browser to
the RP's redirection endpoint in the response to the POST request.
Since the 307 status code is used for this redirection, the user's
browser will send a POST request to RP that contains all form data
from the previous request, including the user credentials. Since the
RP is run by the attacker, he can use these credentials to impersonate
the user.

\subsubsection{Fix}
Contrary to the current wording in the OAuth standard, the exact
method of the redirect is not an implementation detail but essential
for the security of OAuth. In the HTTP
standard~\cite{rfc7231-http-semantics}, only the 303 redirect is
defined unambiguously to drop the body of an HTTP POST request.
Therefore, the OAuth standard should require 303 redirects for the
steps mentioned above in order to fix this problem.

\subsection{IdP Mix-Up Attack}
\label{sec:malicious-idp-mitm}

\begin{figure}[t!]
  \centering
  \input{figure-oauth-auth-code-grant-attack}
  \caption{Attack on OAuth 2.0 authorization code mode}
  \label{fig:oauth-acg-att}
  \vspace{-3ex}
\end{figure}

In this attack, which breaks our authorization and
authentication properties (see Section~\ref{sec:secur-auth-prop}), the
attacker confuses an RP about which IdP the user chose at the
beginning of the login/authorization process in order to acquire an
authentication code or access token which can be used to impersonate
the user or access user data.

This attack applies to the authorization code mode and the implicit
mode of OAuth when explicit user intention tracking\footnote{Recall the meaning of ``user intention tracking'' from Section~\ref{sec:oauth}.} is used by the RP.
To launch the attack, the attacker manipulates the first request of
the user such that the RP thinks that the user wants to use an
identity managed by an IdP of the attacker (AIdP) while the user
instead wishes to use her identity managed by an honest IdP (HIdP). As
a result, the RP sends the authorization code or the access token
issued by HIdP to the attacker. The attacker then can use
this information to login at the RP under the user's identity (managed
by HIdP) or access the user's protected resources at HIdP.

We here present the attack in the authorization code mode. In the implicit
mode, the attack is very similar and is shown in detail
in Appendix~\ref{sec:attack-mixup-in-implicit}.

\subsubsection{Assumptions}
\label{sec:attack-model-assumpt}
For the IdP mix-up attack to work, we need three assumptions that we
further discuss below: (1) the presence of a network attacker who can
manipulate the request in which the user sends her identity to the RP
as well as the corresponding response to this request (see
Steps~\refprotostep{acg-start-req} and~\refprotostep{acg-start-resp}
in Figure~\ref{fig:oauth-auth-code-grant}), (2) an
RP which allows users to log in with identities provided by (some)
HIdP and identities provided by AIdP, and (3) an RP that uses explicit
user intention tracking and issues the same redirection URI to all
IdPs.\footnote{Alternatively, the attack would work if the RP issues
  different redirection URIs to different IdPs, but treats them as the
  same URI.} We emphasize that we do not assume that the user sends
any secret (such as passwords) over an unencrypted channel.

\paragraph{Assumption~(1)} It would be unrealistic to assume that a
network attacker can never manipulate
Steps~\refprotostep{acg-start-req} and~\refprotostep{acg-start-resp}
in Figure~\ref{fig:oauth-auth-code-grant}.

First, these messages are sent between the user and the RP, i.e., the
attacker does not need to intercept server-to-server communication. He
could, e.g., use ARP spoofing in a wifi network to mount the attack.

Second, the need for HTTPS for these steps is not obvious to users or
RPs, and the use of HTTPS is not suggested by the OAuth security
recommendations, since the user only selects an IdP at this point;
credentials are not transferred.

Third, even if an RP intends to use HTTPS also for the first request
(as in our model), it has to protect itself against TLS stripping by
adding the RP domain to a browser preloaded Strict Transport Security
(STS) list~\cite{hsts-preloading-form}. Other mitigations, such as the
STS header, can be circumvented (see~\cite{Selvi-Blackhat-2014}), and
do not work on the very first connection between the user's browser
and RP. For example, when a user enters the address of an RP into her
browser, browsers by default try unencrypted connections. It is
therefore unrealistic to assume that all RPs are always protected
against TLS stripping.

Our formal analysis presented in Section~\ref{sec:analysis} shows that
OAuth can be operated securely even if no HTTPS is used for the
initial request (given that our fix, presented below, is applied).

\paragraph{Assumption~(2)} RPs may use different IdPs, some of which
might be malicious, and hence, OAuth should provide security in this
case. Using a technique called dynamic client registration, OAuth RPs can even allow the
ad-hoc use of any IdP, including malicious ones. This is particularly
relevant in OpenID Connect, where this technique was first
implemented. 

\paragraph{Assumption~(3)} 
Typically, RPs that use explicit user intention tracking do not
register different redirection URIs for different IdPs, as in this
case the RP records the IdP a user wants to authenticate with. In
particular, for RPs that allow for dynamic registration, using the
same URI is an obvious implementation choice. This is for
example the case in the OAuth/OpenID Connect implementations
\emph{mod\_auth\_openidc} and \emph{pyoidc} (see below).

\subsubsection{Attack on Authorization Code Mode}
We now describe the IdP Mix-Up attack on the OAuth authorization code
mode. As mentioned, a very similar attack also applies to the 
implicit mode. Both attacks also work if IdP supports just one of
these two modes.

The IdP mix-up attack for the authorization code mode is depicted in 
Figure~\ref{fig:oauth-acg-att}. Just
as in a regular flow, the attack starts when the user selects that she
wants to log in using HIdP (Step~\refprotostep{acg-att-start-req} in
Figure~\ref{fig:oauth-acg-att}). Now, the attacker intercepts the
request intended for the RP and modifies the content of this request
by replacing HIdP by AIdP.\footnote{At this point, the attacker could
  also read the session id for the user's session at RP. Our attack,
  however, is not based on this possibility and works even if the RP
  changes this session id as soon as the user is logged in and the
  connection is protected by HTTPS (a best practice for session
  management).} The response of the
RP~\refprotostep{acg-att-start-resp} (containing a redirect to AIdP)
is then again intercepted and modified by the attacker such that it
redirects the user to
HIdP~\refprotostep{acg-att-start-resp-manipulated}.  The attacker also
replaces the OAuth client id of the RP at AIdP with the client id of
the RP at HIdP (which is public information). (Note that we assume
that from this point on, in accordance with the OAuth security
recommendations, the communication between the user's browser and HIdP
and the RP is encrypted by using HTTPS, and thus, cannot be inspected
or altered by the attacker.) The user then authenticates to HIdP and
is redirected back to the RP~\refprotostep{acg-att-idp-auth-resp-2}.
The RP thinks, due to Step~\refprotostep{acg-start-req-manipulated} of the
attack, that the nonce $\mi{code}$ contained in this redirect was
issued by AIdP, rather than HIdP. The RP therefore now tries to redeem
this nonce for an access token at
AIdP~\refprotostep{acg-att-token-req}, rather than HIdP. This leaks
$\mi{code}$ to the attacker.

\paragraph{Breaking Authorization} If HIdP has not issued an OAuth
client secret to RP during registration, the attacker can now redeem
$\mi{code}$
for an access token at HIdP (in~\refprotostep{acg-att-token-aidp-req}
and~\refprotostep{acg-att-token-resp}).\footnote{In the
  case that RP has to provide a client secret, this would not work in
  this mode (see also Figure~\ref{fig:oauth-att-table}). Recall that
  in this mode, client secrets are optional.} This access token allows
the attacker to access protected resources of the user at HIdP. This
breaks the authorization property (see
Section~\ref{sec:secur-auth-prop}). We note that at this point, the
attacker might even provide false information about the user or her
protected resources to the RP: he could issue a self-created access
token which RP would then use to access such information at the
attacker.

\paragraph{Breaking Authentication}
To break the authentication property (see
Section~\ref{sec:secur-auth-prop}) and impersonate the honest user,
the attacker, after obtaining $\mi{code}$
in Step~\refprotostep{acg-att-token-req}, starts a new login process
(using his own browser) at the RP. He selects HIdP as the IdP for this
login process and receives a redirect to HIdP, which he ignores. This
redirect contains a cookie for a new login session and a fresh state
parameter. The attacker now sends $\mi{code}$
to the RP imitating a real login (using the cookie and fresh state
value from the previous response). The RP then retrieves an access
token at HIdP using $\mi{code}$
and uses this access token to fetch the (honest) user's id. Being
convinced that the attacker owns the honest user's account, the RP
issues a session cookie for this account to the attacker. As a result,
the attacker is logged in at the RP under the honest user's id. (Note
that the attacker does not learn an access token in this case.)

\subsubsection{Variant} There is also a variant of the IdP mix-up
attack that only requires a web attacker (which does not intercept and
manipulate network messages). In this variant, the user wants to log
in with AIdP, but is redirected by AIdP to log in at HIdP; a fact a
vigilant user might detect. 

In detail, the first four steps in Figure~\ref{fig:oauth-acg-att} are
replaced by the following steps: First, the user starts a new OAuth
flow with RP using AIdP. She is then redirected by RP to AIdP's
authorization endpoint. Now, instead of prompting the user for her
password, AIdP redirects the user to HIdP's authorization endpoint.
(Note that, as above, in this step, the attacker uses the state value
he received from the browser plus the client id of RP at HIdP.) From
here on, the attack proceeds exactly as in
Step~\refprotostep{acg-att-idp-auth-req-1} in
Figure~\ref{fig:oauth-acg-att}.

\subsubsection{Related Attacks}  
An attack in the same class, \emph{cross social-network request forgery},
was outlined by Bansal, Bhargavan, Delignat-La\-vaud, and Maffeis in
\cite{BansalBhargavanetal-JCS-2014}. It applies to RPs with na\"ive
user intention tracking (rather than explicit user intention tracking
assumed in our IdP mix-up attack above) in combination with IdPs, such as
Facebook, that only loosely check the redirect URI.\footnote{Facebook,
  by default, only checks the origin of redirect URIs.} Our IdP mix-up
attack works even if an IdP strictly checks redirect URIs. While the
attack in \cite{BansalBhargavanetal-JCS-2014} is described in the
context of concrete social network implementations, our findings show
that this class of attacks is not merely an implementation error, but
a more general problem in the OAuth standard. This was confirmed by
the IETF OAuth Working Group, who, as mentioned, are in the process of
amending the OAuth standard according to our fixes (see Section~\ref{sec:verification}).

Another attack with a similar outcome,
called \emph{Malicious Endpoints Attack}, leveraging the OpenID Connect
Discovery mechanism and therefore limited to OpenID Connect, was
described
in~\cite{MladenovMainkaKrautwaldFeldmannSchwenk-OpenIDConnect-arXiv-2016}. This
attack assumes a CSRF vulnerability on the RP's side.

\subsubsection{Fix} A fundamental problem in the authorization code
and implicit modes of the OAuth standard is a lack of reliable
information in the redirect in
Steps~\refprotostep{acg-idp-auth-resp-2} and
\refprotostep{acg-redir-ep-req} in
Figure~\ref{fig:oauth-auth-code-grant} (even if HTTPS is used). The RP
does not receive information from where the redirect was initiated
(when explicit user intention tracking is used) or receives
information that can easily be spoofed (when na\"ive user intention
tracking is used with IdPs such as Facebook). Hence, the RP cannot
check whether the information contained in the redirect stems from the
IdP that was indicated in Step~\refprotostep{acg-start-req}.

Our fix therefore is to include the identity of
the IdP in the redirect URI in some form that cannot be influenced by
the attacker, e.g., using a new URI parameter. Each IdP should add
such a parameter to the redirect URI.\footnote{The OAuth Working Group
  indeed created a draft for an
  RFC~\cite{rfc-draft-ietf-oauth-mix-up-mitigation-01} that includes
  this fix, where this parameter is called \emph{iss} (issuer).} The
RP can then check that the parameter contains the identity of the IdP
it expects to receive the response from. (This could be used with
either na\"ive or explicit user intention tracking, but to mitigate
the \emph{na\"ive RP session integrity attack} described below, we
advise to use explicit user intention tracking only, see below.)

We show in Section~\ref{sec:analysis} that this fix is indeed
sufficient to mitigate the IdP mix-up attack (as well as the
attacks pointed out in  
\cite{BansalBhargavanetal-JCS-2014,MladenovMainkaKrautwaldFeldmannSchwenk-OpenIDConnect-arXiv-2016}).

\subsection{State Leak Attack}
\label{sec:attack-state-leak}

Using the state leak attack, an attacker can force a browser to be
logged in under the attacker's name at an RP or force an RP to use a
resource of the attacker instead of a resource of the user. This
attack, which breaks our session integrity property (see
Section~\ref{sec:secur-auth-prop}), enables what is often called
session swapping or login
CSRF~\cite{BarthJacksonMitchell-robust-defenses-2008}. 

\subsubsection{Attack} After the user has authenticated to the IdP in
the authorization code mode, the user is redirected to RP
(Step~\refprotostep{acg-redir-ep-req} in
Figure~\ref{fig:oauth-auth-code-grant}). This request contains state
and code as parameters. The response to this request
(Step~\refprotostep{acg-redir-ep-resp}) can be a page containing a
link to the attacker's website or some resource located at the
attacker's website. When the user clicks the link or the resource is
loaded, the user's browser sends a request to the attacker. This
request contains a Referer header with the full URI of the page the
user was redirected to, which in this case contains state and code.

As the state value is supposed to protect the browser's session
against CSRF attacks, the attacker can now use the leaked state value
to perform a CSRF attack against the victim. For example, he can
redirect the victim's browser to the RP's redirection endpoint (again) and
by this, overwrite the previously performed authorization. The user
will then be logged in as the attacker.

Given the history of OAuth, leaks of sensitive data through the
referrer header are not surprising. For example, the fact that the
authorization code can leak through the Referer header was described
as an attack (in a similar setting)
in~\cite{homakov-github-oauth-attack-2014}. Since the authorization
code is single-use only~\cite{rfc6749-oauth2}, it might already
be redeemed by the time it is received by the attacker. State,
however, is not limited to single use, making this attack easier to
exploit in practice. Stealing the state value through the Referer
header to break session integrity has not been reported as an attack
before, as was confirmed by the IETF OAuth Working Group.

\subsubsection{State Leak at IdPs} 
A variant of this attack exists if the login page at an IdP contains
links to external resources. If the user visits this page to
authenticate at the IdP and the browser follows links to external
resources, the state is transferred in the Referer header. This
variant is applicable to the authorization code mode and the implicit mode.

\subsubsection{Fix} We suggest to limit state to a single use and to
use the recently introduced \emph{referrer policies}
\cite{w3c-draft-referrer-policy} to avoid leakage of the state (or
code) to the attacker. Using referrer policies, a web server can
instruct a web browser to (partially or completely) suppress the
Referer header when the browser follows links in or loads resources
for some web page. The Referer header can be blocked entirely, or it
can, for example, be stripped down to the origin of the URI of the web page.
Referrer policies are supported by all modern browsers. 

Our OAuth model includes this fix (such that only the origin is
permitted in the Referer header for links on  web pages of RPs/IdPs) and our
security proof shows its effectiveness (see
Section~\ref{sec:analysis}). The fix also protects the
authorization code from leaking as in the attack described
in~\cite{homakov-github-oauth-attack-2014}.

\subsection{Na\"ive RP Session Integrity Attack}\label{sec:attack-naive-rp}

This attack again breaks the session integrity property for RPs, where here we assume an RP that uses \emph{na\"ive user
intention tracking}.\footnote{Recall the meaning of ``na\"ive user
intention tracking'' from Section~\ref{sec:oauth}.} (Note that we may still assume that the OAuth state parameter is used,
i.e., RP is not necessarily stateless.) 

\subsubsection{Attack} First, an attacker starts a session with HIdP (an
honest IdP) to obtain an authorization code or access token for his
own account.  Next, when a user wants to log in at some RP using AIdP (an IdP
controlled by the attacker), AIdP redirects the user back to the
redirection URI of HIdP at RP. AIdP attaches to this redirection URI
the state issued by RP, and the code or token obtained from HIdP. Now,
since RP performs na\"ive user intention tracking only, the RP then
believes that the user logged in at HIdP.  Hence, the user is logged
in at RP using the attacker's identity at HIdP or the RP accesses the
attacker's resources at HIdP believing that these resources are owned
by the user.

\paragraph{Fix} The fix against the IdP mix-up attack (described
above) does not work in this case: Since RP does not track where the
user wanted to log in, it has to rely on parameters in the redirection
URI which the attacker can easily spoof. Instead, we propose to always
use explicit user intention tracking.

\subsection{Implications to OpenID Connect}
\label{sec:impl-open-conn}

OpenID Connect~\cite{openid-connect-core-1.0} is a standard for
authentication built on top of the OAuth protocol. Among others,
OpenID Connect is used by PayPal, Google, and Microsoft.

All four attacks can be applied to OpenID Connect as well. We here
outline OpenID Connect and how the attacks apply to this protocol. A
detailed description can be found
in Appendix~\ref{app:openid-connect}.

OpenID Connect extends OAuth in several ways, e.g., by additional
security measures. OpenID Connect defines an \emph{authorization code
  mode}, an \emph{implicit mode}, and a \emph{hybrid mode}. The former
two are based on the corresponding OAuth modes and the latter is a
combination of the two modes.

\paragraph{307 Redirect, State Leak, Na\"ive RP Session Integrity Attacks}
All three attacks apply to OpenID Connect in exactly the same way as
described above. The vulnerable steps are identical. 

\paragraph{IdP Mix-Up Attack}
In OpenID Connect, the mix-up attack applies to the authorization code
mode and the hybrid mode. In the authorization code mode, the attack
is very similar to the one on the OAuth authorization code mode. In
the hybrid mode, the attack is more complicated as additional security
measures have to be circumvented by the attacker. In particular, it
must be ensured that the RP does not detect that the issuer of the id
token, a signed cryptographic document used in OpenID Connect, is not the honest IdP.
Interestingly, in the hybrid mode, depending on an implementation
detail of the RP, either authorization or authentication is broken (or
both if no client secret is used).

\subsection{Verification and Disclosure}
\label{sec:verification}
We verified the IdP mix-up and 307 redirect attacks on the Apache
web server module \emph{mod\_auth\_openidc}, an implementation of an
OpenID Connect (and therefore also OAuth) RP. We also verified the IdP
mix-up attack on the python implementation \emph{pyoidc}. We verified
the state leak attack on the current version of the Facebook PHP SDK
and the na\"ive RP session integrity attack on
\nolinkurl{nytimes.com}.\footnote{\emph{mod\_auth\_openidc} and
  \nolinkurl{nytimes.com} are not susceptible to the state leak attack
  since after the login/authorization, the user is immediately redirected to
  another web page at the same RP.}

We reported all attacks to the OAuth and OpenID Connect working groups
who confirmed the attacks. The OAuth working group invited us to
present our findings to them and prepared a draft for an RFC that
mitigates the IdP mix-up attack (using the fix described in
Section~\ref{sec:malicious-idp-mitm})~\cite{rfc-draft-ietf-oauth-mix-up-mitigation-01}.
Fixes regarding the other attacks are currently under discussion. We
also notified \nolinkurl{nytimes.com}, Facebook, and the developers of
\emph{mod\_auth\_openidc} and \emph{pyoidc}.

\section{FKS Model}
\label{sec:model}

Our formal security analysis of OAuth is based on a slightly  extended version
(see Section~\ref{sec:model-1}) of the FKS model, a general Dolev-Yao
(DY) style web model proposed by Fett et al.
in~\cite{FettKuestersSchmitz-SP-2014,FettKuestersSchmitz-CCS-2015}.
This model is designed independently of a specific web application and
closely mimics published (de-facto) standards and specifications for
the web, for example, the HTTP/1.1 and HTML5 standards and associated
(proposed) standards. The FKS model defines a general communication
model, and, based on it, web systems consisting of web browsers, DNS
servers, and web servers as well as web and network attackers. Here,
we only briefly recall the FKS model (see
\cite{FettKuestersSchmitz-SP-2014,FettKuestersSchmitz-CCS-2015} for a
full description, comparison with other models, and a discussion of
its limitations); see also Appendices~\ref{app:web-model}--\ref{app:deta-descr-brows}. 

\paragraph{Communication Model}
The main entities in the model are \emph{(atomic) processes}, which
are used to model browsers, servers, and attackers. Each process
listens to one or more (IP) addresses.
Processes communicate via \emph{events}, which consist of a message as
well as a receiver and a sender address. In every step of a run, one
event is chosen non-deterministically from a ``pool'' of waiting
events and is delivered to one of the processes that listens to the
event's receiver address. The process can then handle the event and
output new events, which are added to the pool of events, and so on.

As usual in DY models (see, e.g., \cite{AbadiFournet-POPL-2001}),
messages are expressed as formal terms over a signature $\Sigma$.  The
signature contains constants (for (IP) addresses, strings, nonces) as
well as sequence, projection, and function symbols (e.g., for
encryption/decryption and signatures). For example, in the web model,
an HTTP request is represented as a term $r$ containing a nonce, an
HTTP method, a domain name, a path, URI parameters, headers,
and a message body. For example, a request for the URI
\url{http://example.com/s?p=1} is represented as 
\[\mi{r} :=\! \langle
  \cHttpReq, n_1, \mGet, \str{example.com}, \str{/s},
  \an{\an{\str{p},1}}, \an{}, \an{} \rangle\] where the body and the
headers are empty. An HTTPS request for $r$
is of the form
$\ehreqWithVariable{r}{k'}{\pub(k_\text{example.com})}$,
where $k'$
is a fresh symmetric key (a nonce) generated by the sender of the
request (typically a browser); the responder is supposed to use this
key to encrypt the response.

The \emph{equational theory} associated
with $\Sigma$
is defined as usual in DY models. The theory induces a congruence
relation $\equiv$
on terms, capturing the meaning of the function symbols in $\Sigma$.
For instance, the equation in the equational theory which captures
asymmetric decryption is $\dec{\enc x{\pub(y)}}{y}=x$.
With this, we have that, for example, \[\dec{\ehreqWithVariable{r}{k'}{\pub(k_\text{example.com})}}{k_\text{example.com}}\equiv
  \an{r,k'}\]
i.e., these two terms are equivalent w.r.t.~the equational theory.

A \emph{(DY) process} consists of a set of addresses the
process listens to, a set of states (terms), an initial state, and a
relation that takes an event and a state as input and
(non-deterministically) returns a new state and a sequence of events.
The relation models a computation step of the
process.
It is required that the output can be computed (more formally, derived
in the usual DY style) from the input event and the state.

The so-called \emph{attacker process} is a DY process which
records all messages it receives and outputs all events it can
possibly derive from its recorded messages. Hence, an attacker process
carries out all attacks any DY process could possibly perform.
Attackers can corrupt other parties.

A \emph{script} models JavaScript running in a browser. Scripts are
defined similarly to DY processes. When triggered by a browser, a
script is provided with state information. The script then outputs a
term representing a new internal state and a command to be interpreted
by the browser (see also the specification of browsers
below). Similarly to an attacker process, the so-called \emph{attacker
  script} may output everything that is derivable from the input.

A \emph{system} is a set of  processes. A \emph{configuration}
of this system consists of the states of all  processes
in the system, the pool of waiting events, and a sequence of unused
nonces. Systems induce \emph{runs}, i.e., sequences of 
configurations, where each configuration is obtained by delivering one
of the waiting events of the preceding configuration to a
process, which then performs a computation step.

A \emph{web system} formalizes the web infrastructure and
web applications. It contains a system consisting of honest and attacker
processes. Honest processes can be web browsers, web servers, or DNS
servers. Attackers can be either \emph{web attackers} (who can listen
to and send messages from their own addresses only) or \emph{network
  attackers} (who may listen to and spoof all addresses and therefore
are the most powerful attackers). A web system further contains a set of
scripts (comprising honest scripts and the attacker script).

In our analysis of OAuth, we consider either one network attacker or a
set of web attackers (see Section~\ref{sec:analysis}). In our OAuth
model, we need to specify only the behavior of servers and scripts.
These are not defined by the FKS model since they depend on the
specific application, unless they are corrupt or become corrupted in
which case they behave like attacker processes and attacker scripts;
browsers are specified by the FKS model (see below). The modeling of
OAuth servers and scripts is outlined in Section~\ref{sec:model-1} and
defined in detail in Appendices~\ref{app:model-oauth-auth}
and~\ref{app:model-oauth-auth-webattackers}.

\paragraph{Web Browsers}
\label{sec:web-browsers} 
An honest browser is thought to be used by one honest user, who is
modeled as part of the browser. User actions, such as following a link, are
modeled as non-deterministic actions of the web browser. User
credentials are stored in the initial state of the browser and are
given to selected web pages when needed. Besides user credentials, the
state of a web browser contains (among others) a tree of windows and
documents, cookies, and web storage data (localStorage and
sessionStorage).

A \emph{window} inside a browser contains a set of
\emph{documents} (one being active at any time), modeling the
history of documents presented in this window. Each represents one
loaded web page and contains (among others) a script and a list of
subwindows (modeling iframes). The script, when triggered by the
browser, is provided with all data it has access to, such as a
(limited) view on other documents and windows, certain cookies, and
web storage data. Scripts then output a command and a new state. This
way, scripts can navigate or create windows, send \xhrs and
postMessages, submit forms, set/change cookies and web storage data,
and create iframes. Navigation and security rules ensure that scripts
can manipulate only specific aspects of the browser's state, according
to the web standards.

A browser can output messages on the network of different types,
namely DNS and HTTP(S) requests as well as XHRs, and it processes the responses. Several HTTP(S) headers are modeled,
including, for example, cookie, location, strict transport security
(STS), and origin headers. A browser, at any time, can also receive a
so-called trigger message upon which the browser non-de\-ter\-min\-is\-tically
chooses an action, for instance, to trigger a script in some
document. The script now outputs a command, as described above, which is then further processed by the browser. Browsers can
also become corrupted, i.e., be taken over by web and network
attackers. Once corrupted, a browser behaves like an attacker process.

\section{Analysis}
\label{sec:analysis}

We now present our security analysis of OAuth (with the fixes
mentioned in Section~\ref{sec:attacks} applied). We first present our
model of OAuth. We then formalize the security properties and state the main theorem, namely
the security of OAuth w.r.t.~these properties. We provide full
details of the model and our proof in Appendices~\ref{app:model-oauth-auth}--\ref{app:proof-oauth}.

\subsection{Model}
\label{sec:model-1}

As mentioned above, our model for OAuth is based on the FKS model
outlined in Section~\ref{sec:model}. For the analysis, we extended the
model to include HTTP Basic Authentication~\cite{rfc2617-http-authentication} and Referrer
Policies~\cite{w3c-draft-referrer-policy} (the Referer header itself was already part of the
model). We developed the OAuth model to adhere to RFC6749, the
OAuth~2.0 standard, and follow the security considerations described
in \cite{rfc6819-oauth2-security}.

\subsubsection{Design}
Our comprehensive model of OAuth includes all configuration options of
OAuth and makes as few assumptions as possible in order to strengthen
our security results: 

\paragraph{OAuth Modes} Every RP and IdP may run any of the  
four OAuth modes, even simultaneously.

\paragraph{Corruption} RPs, IdPs, and browsers can be corrupted by the
attacker at any time.

\paragraph{Redirection URIs} RP chooses redirection URIs explicitly
or the IdP selects a redirection URI that was registered before.
Redirection URIs can contain patterns. This covers all cases specified
in the OAuth standard. We allow that IdPs do not strictly check the
redirection URIs, and instead apply loose checking, i.e., only the
origin is checked (this is the default for Facebook, for
example). This only strengthens the security guarantees we prove. 

\paragraph{Client Secrets} Just as in the OAuth standard, RPs can,
for a certain IdP, have a secret or not have a secret in our model.

\paragraph{Usage of HTTP and HTTPS} Users may visit HTTP and HTTPS
URIs (e.g., for RPs) and parties are not required to use Strict-Transport-Security
(STS), although we still recommend STS in practice (for example, to
reduce the risk of password eavesdropping). Again, this only strengthens our results. 

\paragraph{General User Interaction} As usual in the FKS model, the
user can at any time navigate backwards or forward in her browser
history, navigate to any web page, open multiple windows, start
simultaneous login flows using different or the same IdPs, etc. Web
pages at RPs can contain regular links to arbitrary external web
sites.

\paragraph{Authentication at IdP} User authentication at the
IdP, which is out of the scope of OAuth, is performed using username
and password.

\paragraph{Session Mechanism at RP} OAuth does not prescribe a
specific session mechanism to be used at an RP. Our model therefore
includes a standard cookie-based session mechanism (as suggested in
\cite{rfc-draft-oauth-jwt-encoded-state}).

\subsubsection{Attack Mitigations}
To prove the security properties of OAuth, our model includes the
fixes against the new attacks presented in Section~\ref{sec:attacks}
as well as standard mitigations against known attacks. Altogether this
offers clear implementation guidelines, without which OAuth would be
insecure:

\paragraph{Honest Parties} RPs and IdPs, as long as they are honest,
do not include (untrusted) third-party JavaScript on their websites,
do not contain open redirectors, and do not have Cross-Site Scripting
vulnerabilities. Otherwise, access tokens and authorization codes can
be stolen in various ways, as described, among others,
in~\cite{rfc6749-oauth2,rfc6819-oauth2-security,BansalBhargavanetal-JCS-2014,SantsaiBeznosov-CCS-2012-OAuth}.

\paragraph{CSRF Protection} The $\mi{state}$
parameter is used with a nonce that is bound to the user's session
(see~\cite{rfc-draft-oauth-jwt-encoded-state}) to prevent
CSRF vulnerabilities on the RP redirection
endpoint. Omitting or incorrectly using this parameter can lead to
attacks described
in~\cite{rfc6749-oauth2,rfc6819-oauth2-security,BansalBhargavanetal-JCS-2014,SantsaiBeznosov-CCS-2012-OAuth,LiMitchell-ISC-2014}.

More specifically, a new state nonce is freshly chosen for each login
attempt. Otherwise, the following attack is applicable: First, a user starts an OAuth flow at some
RP using a malicious IdP. The IdP learns the state value that is used
in the current user session. Then, as soon as the user starts a new
OAuth flow with the same RP and an honest IdP, the malicious IdP can
use the known state value to mount a CSRF attack, breaking the session
integrity property.\footnote{Note that in this attack, the state value
  does not leak unintentionally (in contrast to the state leak
  attack). Also note that this attack and the mitigation we describe here, while not
  surprising, do not seem to have been explicitly documented so far.
  For example, \nolinkurl{nytimes.com} is vulnerable also to this
  attack.}

We also model CSRF protection for some URIs as follows: For RPs, we
model origin header checking\footnote{The origin header is added to
  certain HTTP(S) requests by browsers to declare the origin of the
  document that caused the request. For example, when a user submits a
  form loaded from the URI \nolinkurl{http://a/form} and this form
  is sent to \nolinkurl{http://b/path} then the browser will add
  the origin header \nolinkurl{http://a} in the request to
  \nolinkurl{b}. All modern browsers support origin headers. See \cite{w3c/cors} for details.} (1) at the URI
where the OAuth flow is started (for the implicit and authorization
code mode), (2) at the password login for the resource owner password
credentials mode, and (3) at the URI to which the JavaScript posts the
access token in the implicit mode. For IdPs, we do the same at the URI
to which the username and password pairs are posted. The CSRF
protection of these four URIs is out of the scope of OAuth and
therefore, we follow good web development practices by checking the
origin header. Without this or similar CSRF protection, IdPs and RPs
would be vulnerable to CSRF attacks described
in~\cite{SantsaiBeznosov-CCS-2012-OAuth,BansalBhargavanetal-JCS-2014}.

\paragraph{Referrer Policy and Status Codes} RPs and IdPs use the
Referrer Policy \cite{w3c-draft-referrer-policy} to specify that
Referer headers on links from any of their web pages may not contain
more than the origin of the respective page. Otherwise, RPs or IdPs
would be vulnerable to the state leak attack described in
Section~\ref{sec:attack-state-leak} and the code leak attack described
in~\cite{homakov-github-oauth-attack-2014}. IdPs use 303 redirects
following our fix described in Section~\ref{sec:307-redirect}.

\paragraph{HTTPS Endpoints} All endpoint URIs use HTTPS to protect
against attackers eavesdropping on tokens or manipulating messages
(see, e.g.,
\cite{rfc6819-oauth2-security,SantsaiBeznosov-CCS-2012-OAuth}).
Obviously, IdPs or RPs do not register URIs that point to servers
other than their own. (Otherwise, access tokens or authorization codes
can be stolen trivially.)

\paragraph{Session Cookies} Cookies are always set with the
\emph{secure} attribute, ensuring that the cookie value is only
transmitted over HTTPS. Otherwise, a network attacker could read cookie values by eavesdropping on non-HTTPS connections
to RPs. After successful login at an RP, the RP creates a fresh
session id for that user. Otherwise, a network attacker could set a login session cookie that is
bound to a known state value into the user's browser (see
\cite{Zhengetal-cookies-usenix-2015}), lure the user into logging in
at the corresponding RP, and then use the session cookie to access the
user's data at the RP (\emph{session fixation}, see
\cite{owasp-session-fixation}).

\paragraph{Authentication to the IdP} It is assumed that the user only
ever sends her password over an encrypted channel and only to the IdP
this password was chosen for (or to trusted RPs, as mentioned above).
(The user also does not re-use her password for different IdPs.)
Otherwise, a malicious IdP would be able to use the account of the
user at an honest IdP.

\paragraph{Authentication using Access Tokens} When an RP sends an
access token to the introspection endpoint of an IdP for
authentication (Step~\refprotostep{acg-introspect-req} in
Figure~\ref{fig:oauth-auth-code-grant}), the IdP returns the user
identifier and the client id for which the access token was issued
(Step~\refprotostep{acg-introspect-resp}). The RP must check that the
returned client id is its own, otherwise a
malicious RP could impersonate an honest user at an honest RP (see
\cite{Wangetal-USENIX-Explicating-SDKs-2013,rfc6749-oauth2}). We
therefore require this check. 

\paragraph{User Intention Tracking} We use explicit user intention
tracking. Otherwise, the attack described in
Section~\ref{sec:attack-naive-rp} can be applied.

\subsubsection{Concepts Used in Our Model}
In our model and the security properties, we use the following
concepts:

\paragraph{Protected Resources}
Closely following RFC6749~\cite{rfc6749-oauth2}, OAuth protected
resources are an abstract concept for any resource an RP could use at
an IdP after successful authorization. For example, if Facebook gives
access to the friends list of a user to an RP, this would be
considered a protected resource. In our model,
there is a mapping from (IdP, RP, identity) to nonces (which model protected resources). In this mapping,
the identity part can be $\bot$,
modeling a resource that is acquired in the client credentials mode
and thus not bound to a user.

\paragraph{Service Tokens}
When OAuth is used for authentication, we assume that after successful
login, the RP sends a \emph{service token} to the browser. The
intuition is that with this service token a user can use the services
of the RP. The service token consists of a nonce, the user's
identifier, and the domain of the IdP which was used in the login
process. The service token is a generic model for any session
mechanism the RP could use to track the user's login status (e.g., a
cookie). We note that the actual session mechanism
used by the RP \emph{after} a successful login is out of the scope of
OAuth, which is why we use the generic concept of a service token. In
our model, the service token is delivered by an RP to a browser as a cookie.

\paragraph{Trusted RPs}
In our model, among others, a browser can choose to launch the
resource owner password credentials mode with any RP, causing this RP
to know the password of the user. RPs, however, can become corrupted
and thus leak the password to the attacker. Therefore, to define the
security properties, we define the concept of \emph{trusted RPs}.
Intuitively, this is a set of RPs a user entrusts with her password.
In particular, whether an RP is trusted depends on the user. In our
security properties, when we state that an adversary should not be
able to impersonate a user $u$
in a run, we would assume that all trusted RPs of $u$
have not become corrupted in this run.

\subsubsection{OAuth Web System with a Network Attacker}
\label{sec:proc-oauthw}
We model OAuth as a class of web systems (in the sense of
Section~\ref{sec:model}) that can contain an unbounded finite number
of RPs, IdPs, and browsers. We call a web system $\oauthwebsystem^n$
an \emph{OAuth web system with a network attacker} if it is of the
form described in what follows.

\paragraph{Outline}
The system consists of a network attacker, a finite set of web
browsers, a finite set of web servers for the RPs, and a finite set of
web servers for the IdPs. Recall that in $\oauthwebsystem^n$,
since we have a network attacker, we do not need to consider web
attackers (as our network attacker subsumes all web attackers). The
set of scripts consists of the three scripts $\mi{script\_rp\_index}$,
$\mi{script\_rp\_implicit}$,
and $\mi{script\_idp\_form}$.
We now briefly sketch RPs, IdPs, and the scripts, with full details
provided in Appendix~\ref{app:model-oauth-auth}.

\paragraph{Relying Parties}
Each RP is a web server modeled as an atomic DY process following the
description in Section~\ref{sec:oauth}, including all OAuth modes, as
well as the fixes and mitigations discussed before. The RP can either
(at any time) launch a client credentials mode flow or wait for users
to start any of the other flows. RP manages two kinds of sessions: The
\emph{login sessions}, which are used only during the user login
phase, and the \emph{service sessions} (modeled by a \emph{service
  token} as described above). When receiving a special message, an RP can become corrupted and then
behaves like an attacker process.

\paragraph{Identity Providers}
Each IdP is a web server modeled as an atomic DY process following the
description in Section~\ref{sec:oauth}, again including all
OAuth modes, as well as the fixes and mitigations discussed
before. Users can authenticate to an IdP with their credentials. Just
as RPs, IdPs can become corrupted at any time.

\paragraph{Scripts} The scripts which run in a user's browser are
defined as follows: The script \emph{script\_rp\_index} is loaded from
an RP into a user's browser when the user visits the RP's web site. It
starts the authorization or login process. 
The script \emph{script\_rp\_implicit} is loaded into the user's
browser from an RP during an implicit mode flow to retrieve the data
from the URI fragment. It extracts the access token and state from the
fragment part of its own URI. The script then sends this information
in the body of an HTTPS POST request to the RP.
The script \emph{script\_idp\_form} is loaded from an IdP into the
user's browser for user authentication at the IdP.

\subsubsection{OAuth Web System with Web Attackers}
In addition to $\oauthwebsystem^n$, we also consider a class of web
systems where the network attacker is replaced by an unbounded finite
set of web attackers. We denote such a system by $\oauthwebsystem^w$
and call it an \emph{OAuth web system with web attackers}, Such web
systems are used to analyze session integrity, see below.

\subsubsection{Limitations of Our OAuth Model}
While our model of OAuth is very comprehensive, a few aspects of OAuth
were not taken into consideration in our analysis:

We do not model \emph{expiration} of access tokens and session ids.
Also, IdPs may issue so-called \emph{refresh tokens} in
Step~\refprotostep{acg-token-resp} of
Figure~\ref{fig:oauth-auth-code-grant}. In practice, an RP may use
such a (long-living) refresh token to obtain a new (short-lived)
access token. In our model, we overapproximate this by not expiring
access tokens. We also do not model \emph{revocation} of access tokens
and \emph{user log out}.

OAuth IdPs support controlling the \emph{scope} of
resources made available to an RP. For example, a Facebook user can
grant a third party the right to read her user profile but deny access
to her friends list. The scope is a property of the access token, but
handled internally by the IdP with its implementation, details, and
semantics highly dependent on the IdP. We therefore model that RPs
always get full access to the user's data at the IdP.

In practice, IdPs can send \emph{error messages} (mostly static
strings) to RPs. We do not model these.

Limitations of the underlying FKS model are discussed in
\cite{FettKuestersSchmitz-SP-2014}.

\subsection{Security Properties}
\label{sec:secur-auth-prop}

Based on the formal OAuth model described above, we now formulate
central security properties of OAuth, namely authorization,
authentication, and session integrity (see
Appendix~\ref{app:form-secur-prop} for the full formal
definitions).

\subsubsection{Authorization} Intuitively, authorization for
$\oauthwebsystem^n$ means that an attacker should not be able to
obtain or use a protected resource available to some honest RP at an
IdP for some user unless, roughly speaking, the user's browser or the
IdP is corrupted. 

More formally, we say that $\oauthwebsystem^n$ is \emph{secure
  w.r.t.~authorization} if the following holds true: if at any point
in a run of $\oauthwebsystem^n$ an attacker can obtain a protected
resource available to some honest RP $r$ at an IdP $i$ for some user
$u$, then the IdP $i$ is corrupt or, if $u\not=\bot$, we have that the
browser of $u$ or at least one of the trusted RPs of $u$ must be
corrupted. Recall that if $u=\bot$, then the resource was acquired in
the client credentials mode, and hence, is not bound to a user.

\subsubsection{Authentication} Intuitively, authentication for
$\oauthwebsystem^n$ means that an attacker should not be able to login
at an (honest) RP under the identity of a user unless, roughly
speaking, the IdP involved or the user's browser is corrupted. As
explained above, being logged in at an RP under some user identity
means to have obtained a service token for this identity from the RP.

More formally, we say that $\oauthwebsystem^n$ is \emph{secure
  w.r.t.~authentication} if the following holds true: if at any point in
a run of $\oauthwebsystem^n$ an attacker can obtain the service token
that was issued by an honest RP using some IdP $i$ for a user $u$,
then the IdP $i$, the browser of $u$, or at least one of the trusted
RPs of $u$ must be corrupted.

\subsubsection{Session Integrity}
Intuitively, session integrity (for authorization) means that (a)
an RP should only be authorized to access some resources of a user
when the user actually expressed the wish to start an OAuth flow
before, and (b) if a user expressed the wish to start an OAuth
flow using some honest IdP and a specific identity, then the
OAuth flow is never completed with a different identity (in the same
session); similarly for authentication. 

More formally, we say that $\oauthwebsystem^w$ is \emph{secure
  w.r.t.~session integrity for authorization} if the following holds
true: (a) if in a run $\oauthwebsystem^w$ an OAuth login
flow is completed with a user's browser, then this user
started an OAuth flow. (b) If in addition we assume that the IdP that
is used in the completed flow is honest, then the flow was completed for the
same identity for which the OAuth flow was started by the user.  We
say that the OAuth flow was completed (for some identity $v$) iff the RP gets
access to a protected resource (of $v$).

We say that $\oauthwebsystem^w$ is \emph{secure w.r.t.~session
  integrity for authentication} if the following holds true: (a) if in
a run $\rho$ of $\oauthwebsystem^w$ a user is logged in with some
identity $v$, then the user started an OAuth flow. (b) If in addition
the IdP that is used in that flow is honest, then the user is logged
in under exactly the same identity for which the OAuth flow was
started by the user.

We note that for session integrity, as opposed to authorization and
authentication, we use the web attacker as an adversary. The rationale
behind this is that a \emph{network} attacker can always forcefully
log in a user under his own account (by setting cookies from
non-secure to secure origins \cite{Zhengetal-cookies-usenix-2015}),
thereby defeating existing CSRF defenses in OAuth (most importantly,
the state parameter). This is a common problem in the session
management of web applications, independently of OAuth. This is why we
restrict our analysis of session integrity to web attackers since
otherwise session integrity would trivially be broken. We note,
however, that more robust solutions for session integrity are conceivable
(e.g., using JavaScript and HTML5 features such as web messaging and
web storage). While some proprietary approaches exist, such approaches
are less common and typically do not conform to the OAuth standard.

\subsubsection{Main Theorem}
We prove the following theorem (see Appendix~\ref{app:proof-oauth} for the proof):
\begin{theorem}\label{thm:security}
  Let $\oauthwebsystem^n$ be an OAuth web system with a network
  attacker, then $\oauthwebsystem^n$ is secure w.r.t.~authorization
  and secure w.r.t.~authentication. Let $\oauthwebsystem^w$ be an
  OAuth web system with web attackers, then $\oauthwebsystem^w$ is
  secure w.r.t.~session integrity for authorization and authentication.
\end{theorem}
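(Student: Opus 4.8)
The plan is to prove the four assertions by (i) a sequence of invariant lemmas shown to hold in every configuration of every run by induction on the run length, and (ii) contradiction arguments that, assuming the attacker violates one of the properties, trace the relevant secret (a protected resource, a service token, an access token, or an authorization code) back through the protocol and invoke the invariants to force one of the permitted corruptions. I would treat the network-attacker system $\oauthwebsystem^n$ (authorization, authentication) and the web-attacker system $\oauthwebsystem^w$ (session integrity) separately, since the adversary models and hence the usable invariants differ. The foundational layer is a set of secrecy invariants: the ephemeral symmetric keys honest browsers generate for HTTPS requests to honest servers are never derivable by the attacker (the standard FKS argument, since such keys appear only encrypted under the honest server's public key and are never emitted in the clear), and from this follows secrecy of honest parties' long-term secrets --- user passwords (for honest IdPs and trusted RPs), client secrets of honest RPs, and honest servers' internal login/session state. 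The authentication-to-IdP and HTTPS-endpoints assumptions of the model are what make these go through, and every later argument reduces to them.

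For \textbf{authorization} and \textbf{authentication} I would then prove the protocol-level invariants: (1) an access token issued by an honest IdP $i$ for an honest RP $r$ and user $u$ travels only over HTTPS to $r$ and so stays secret from the attacker unless $i$, $u$'s browser, or a trusted RP of $u$ is corrupt; and (2) an authorization code issued by an honest $i$ is redeemed only at $i$ and only by the party it was issued for. Invariant (2) is exactly where the IdP mix-up fix enters: because each IdP now embeds its identity in the redirect, an honest RP redeems a code only at the IdP recorded for that session, so a code obtained in an honest-IdP flow cannot leak to the attacker's token endpoint. Authorization then follows by assuming the attacker holds a protected resource, observing that it can only stem from presenting a valid access token to $i$, and applying (1). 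Authentication follows because an honest RP issues a service token only after fetching a user id via an access token --- with the RP checking that the returned client id is its own --- so (1) together with that client-id check (which blocks cross-RP token re-use) forces a corruption.

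For \textbf{session integrity} I would work in $\oauthwebsystem^w$ and exploit two model features. The session-bound $\mi{state}$ nonce, chosen freshly per login attempt and stored in the honest RP's login session, is unknown to a web attacker, so the RP's redirection endpoint accepts a response only in a session the user genuinely initiated, yielding part (a) of both integrity claims. For part (b) I would use explicit user intention tracking together with the mix-up fix: the RP records the IdP and identity the user selected and checks the embedded issuer, so a completed flow with an honest IdP must carry the identity the user chose. It is essential here that a web attacker, unlike a network attacker, can neither set secure cookies nor strip HTTPS --- precisely why these CSRF-style defenses suffice and why the theorem confines session integrity to web attackers.

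The main obstacle I anticipate is the inductive step for the invariants with \emph{all four modes running simultaneously} at the same and different parties: one honest server may be mid-flow in several modes and sessions at once, a single nonce (code or token) may legitimately appear in many message types, and dynamic corruption changes the set of honest parties along the run. The delicate task is to formulate each invariant strongly enough to be preserved under every possible computation step of every honest process and of the attacker --- in particular ruling out cross-mode confusions where, say, an implicit-mode token or a client-credentials token is mistaken for an authorization-code token --- yet weak enough to actually hold. Pinning down the exact conjunction of secrecy and integrity invariants so that each case of the induction closes using only previously-listed invariants is where essentially all the real work lies.
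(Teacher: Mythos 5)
Your proposal follows essentially the same route as the paper's proof: the paper likewise first establishes HTTPS-key secrecy lemmas for honest RPs, then proves (by tracing where each value can flow through the honest algorithms) that the attacker never learns user passwords, authorization codes, access tokens, or RP client secrets, derives authorization and authentication by contradiction from these lemmas, and handles session integrity in the web-attacker model via a state-secrecy lemma combined with a backtracking argument over connected processing steps that exploits explicit user intention tracking and the issuer (mix-up fix) check. The only cosmetic differences are that the paper proves authentication before authorization (the latter reusing the former's lemmas), obtains part (a) of session integrity from the event-backtracking lemma rather than from state secrecy alone (which is needed anyway, e.g., for the resource owner password credentials mode where no state parameter occurs), and derives session integrity for authentication as a corollary of session integrity for authorization.
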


Note that this trivially implies that authentication
and authorization properties are satisfied also if web attackers are
considered.

\subsection{Discussion of Results}
\label{sec:discussion-results}

Our results show that the OAuth standard is secure, i.e., provides
strong authentication, authorization, and session integrity
properties, when (1) fixed according to our proposal and (2) when adhering to the
OAuth security recommendations and best practices, as explained in
Section~\ref{sec:model-1}. Depending on
individual implementation choices, (2) is 
potentially not satisfied in all practical scenarios. For example, RPs might run
untrusted JavaScript on their websites. Nevertheless, our
security results, for the first time, give precise implementation
guidelines for OAuth to be secure and also clearly show that if these
guidelines are not followed, then the security of OAuth cannot be
guaranteed.

\section{Related Work}
\label{sec:related-work}

We focus on work closely related to OAuth~2.0 or formal security analysis
of web standards and web applications.

The work closest to our work is the already mentioned work by Bansal,
Bhargavan, Delignat-Lavaud, and Maffeis \cite{BansalBhargavanetal-JCS-2014}. Bansal et
al.~analyze the security of OAuth using the applied pi-calculus and
the WebSpi library, along with the protocol analysis tool ProVerif.
They model various settings of OAuth~2.0, often assuming the presence
of common web implementation flaws resulting in, for example, CSRF and open redirectors in RPs
and IdPs. They identify previously unknown attacks on the OAuth
implementations of Facebook, Yahoo, Twitter, and many other
websites. Compared to our work, the WebSpi model used in
\cite{BansalBhargavanetal-JCS-2014} is less expressive and
comprehensive (see also the discussion in
\cite{FettKuestersSchmitz-SP-2014}), and the models of OAuth they
employ are more limited.\footnote{For example, only two OAuth modes
  are considered, the model is monotonic (e.g., cookies can only be
  added, but not deleted or modified), fixed bounded number of cookies
  per request, no precise handling of windows, documents, and iframes,
  no web messaging, omission of headers, such as origin. We note that
  while OAuth does not make use of all web features, taking such
  features into account is important to make positive security results
  more meaningful.}  As pointed out by Bansal et al., the main focus of
their work is to discover attacks on OAuth, rather than
proving security. They have some positive results, which, however, are
based on their more limited model. In addition, in order to prove
these results further restrictions are assumed, e.g., they consider
only one IdP per RP and all IdPs are assumed to be honest.

Wang et~al.~\cite{Wangetal-USENIX-Explicating-SDKs-2013} present a
systematic approach to find implicit assumptions in SDKs (e.g., the
Facebook PHP SDK) used for authentication and authorization, including
SDKs that implement OAuth~2.0.

In \cite{PaiSharmaKumarPaiSingh-2011}, Pai et~al.~analyze the security
of OAuth in a very limited model that does not incorporate generic web
features. They show that using their approach, based on the Alloy
finite-state model checker, known weaknesses can be found. The same
tool is used by Kumar \cite{Kumar-OAuth-2012} in a formal analysis of
the older OAuth~1.0 protocol (which, as mentioned,
is very different to OAuth~2.0).

Chari, Jutla, and Roy \cite{ChariJutlaRoy-IACR-2011} analyze the
security of the authorization code mode in the universally
composability model, again without considering web features, such as
semantics of HTTP status codes, details of cookies, or window
structures inside a browser.

Besides these formal approaches, empirical studies were conducted on
deployed OAuth implementations. In
\cite{SantsaiBeznosov-CCS-2012-OAuth}, Sun and Beznosov analyze the
security of three IdPs and 96 RPs. In \cite{LiMitchell-ISC-2014}, Li
and Mitchell study the security of 10 IdPs and 60 RPs based in China.
In \cite{Yangetal-AsiaCCS-2016}, Yang et al.~perform an automated
analysis of 4 OAuth IdPs and 500 RPs. Shernan et
al.~\cite{Shernanetal-DIMVA-2015} evaluate the lack of CSRF protection
in various OAuth deployments. In
\cite{Chenetal-2014,ShehabMohsen-2014}, practical evaluations on the
security of OAuth implementations of mobile apps are performed.

In
\cite{MladenovMainkaKrautwaldFeldmannSchwenk-OpenIDConnect-arXiv-2016},
Mladenov et al.~perform an informal analysis of OpenID Connect. They
present several attacks related to discovery and dynamic client
registration, which are extensions of OpenID Connect; see also the
discussion in Section~\ref{sec:malicious-idp-mitm} (related attacks)
concerning their malicious endpoint attack.

Note that many of the works listed here led to improved security
recommendations for OAuth as listed in RFC6749~\cite{rfc6749-oauth2}
and RFC6819~\cite{rfc6819-oauth2-security}. These are already taken into account 
in our model and analysis of OAuth.

More generally, there have been only very few analysis efforts for web
applications and standards based on formal web models so far. Work outside of the context of OAuth 
includes~\cite{kerschbaum-SP-2007-XSRF-prevention,AkhawBarthLamMitchellSong-CSF-2010,BansalBhargavanetal-POST-2013-WebSpi,FettKuestersSchmitz-CCS-2015,FettKuestersSchmitz-SP-2014,FettKuestersSchmitz-ESORICS-BrowserID-Primary-2015,Armandoetal-SAML-CS-2013,Armandoetal-FMSE-2008}.

\section{Conclusion}
\label{sec:conclusion}

In this paper, we carried out the first extensive formal analysis of
OAuth~2.0 based on a comprehensive and expressive web model. Our
analysis, which aimed at the standard itself, rather than specific
OAuth implementations and deployments, comprises all modes (grant
types) of OAuth and available options and also takes malicious RPs and
IdPs as well as corrupted browsers/users into account. The generic web
model underlying our model of OAuth and its analysis is the most
comprehensive web model to date.

Our in-depth analysis revealed four attacks on OAuth as well as OpenID
connect, which builds on OAuth. We verified the attacks, proposed
fixes, and reported the attacks and our fixes to the working groups
for OAuth and OpenID Connect. The working groups confirmed the
attacks. Fixes to the standard and recommendations are currently under
discussion or already incorporated in a draft for a new
RFC~\cite{rfc-draft-ietf-oauth-mix-up-mitigation-01}.

With the fixes applied, we were able to prove strong authorization,
authentication, and session integrity properties for OAuth 2.0. Our
security analysis assumes that OAuth security recommendations and
certain best practices are followed. We show that otherwise the
security of OAuth cannot be guaranteed. By this, we also provide clear
guidelines for implementations. The fact that OAuth is one of the most
widely deployed authorization and authentication systems in the web
and the basis for other protocols makes our analysis particularly
relevant.

As for future work, our formal analysis of OAuth offers a good
starting point for the formal analysis of OpenID Connect, and hence,
such an analysis is an obvious next step for our research.

\section{Acknowledgements} 
This work was partially supported by \textit{Deu\-tsche
  Forschungsgemeinschaft} (DFG) through  Grant KU\ 1434/10-1.

\appendix

\let\stdsection\section
\renewcommand\section{\clearpage\stdsection}  %

\section{OAuth 2.0}
\label{sec:oauth-all-modes}

The OAuth authorization code mode was presented in
Section~\ref{sec:oauth}. Here, we present the three other OAuth modes
in detail.

\subsection{Preliminaries}
\label{sec:oauth-concepts}

We now first provide some preliminary information regarding OAuth.

\subsubsection{Endpoints}
In OAuth, RPs and IdPs have to provide certain URIs to each other. The
parties and services these URIs point to are called \emph{endpoints};
often the URIs themselves are called endpoints. An IdP provides an
\emph{authorization endpoint} at which the user can authenticate to
the IdP and authorize an RP to access her user data. The IdP also
provides a \emph{token endpoint} at which the RP can request access
tokens. An RP provides one or more \emph{redirection endpoints} to
which the user's browser gets redirected by an IdP after the user
authenticated to the IdP. The URIs of the endpoints are not fixed by
the standard, but are communicated when RPs register at IdPs, as
described below.

The OAuth standard~\cite{rfc6749-oauth2} and the accompanying security
recommendations~\cite{rfc6819-oauth2-security} suggest that all
endpoints use HTTPS. We follow this recommendation in our analysis of
OAuth.

\subsubsection{Registration}
Before an RP can interact with an IdP, the RP needs to be registered
at the IdP. The details of the registration process are out of the
scope of the OAuth protocol. In practice, this process is usually a
manual task. During the registration process, the IdP assigns to the
RP a fixed OAuth client id and client secret.\footnote{Recall
  that in the terminology of the OAuth standard the term ``client''
  stands for RP.} The RP may later use the client secret to
authenticate to the IdP. If the RP cannot keep the OAuth client secret
confidential, e.g., if the RP is an in-browser app or a native
application, the secret can be omitted.
Note that the OAuth client id is public information. It is, for
example, revealed to users in redirects issued by the RP.

Also, an RP registers one or more redirection endpoints at an IdP. As
we will see below, in some OAuth modes, the IdP redirects the user's
browser to one of these redirect URIs. If more than one redirect URI
is registered, the RP must specify which redirect URI is to be used in
each run of the OAuth protocol. For simplicity of presentation, we
will assume that an RP always specifies its choice, although this can
be omitted if there exits only one (fixed) redirect URI. Note that
(depending on the implementation of an IdP) an RP may also register a
pattern as a redirect URI and then specify the exact redirect URI
during the OAuth run. In this case, the IdP checks if the specified
redirect URI matches this pattern.

During the registration process, the (fixed) endpoints belonging to an
IdP are configured at an RP as well.

Our analysis presented in Section~\ref{sec:analysis} covers all the
above mentioned options: absence and presence of client secrets,
specified redirect URIs, and URI patterns.

\subsubsection{Login Sessions}
As mentioned before, in some OAuth modes, an RP redirects the user's
browser to an IdP which later redirects the browser back to the RP. In
order to prevent cross-site request forgery (CSRF) attacks, the RP
typically establishes a session with the browser before the first
redirect. The OAuth standard recommends that an RP selects the
so-called $\mi{state}$
parameter and binds this value to the session, e.g., by choosing a
fresh nonce and storing the nonce in the session state. When the user
later gets redirected back to the RP, the $\mi{state}$
value must be identical. The intention is that this value should
always be unknown to an attacker in order to prevent CSRF attacks. In
our analysis, we follow the recommendation of using the $\mi{state}$
parameter.\footnote{Note that the OAuth standard~\cite{rfc6749-oauth2}
  as well as the accompanying security
  recommendations~\cite{rfc6819-oauth2-security} do not specify the
  session mechanism for RPs. In our analysis we assume the usual
  session mechanism with session cookies following common best
  practices. For more details, see Section~\ref{sec:model-1}.}

\subsubsection{Further Recommendations and Options}
The standard and the recommendations do not specify all implementation
details. For example, the precise user interaction with an RP,
formatting details of messages, and the authentication of the user to
an IdP (e.g., user name and password or some other mechanism) are not
covered. In our security analysis of OAuth we follow all OAuth
security recommendations as well as common best practices for
state-of-the-art web applications in order to avoid known attacks.

OAuth allows RPs to specify which \emph{scope} of the user's data they
are requesting access to at an IdP. The scopes themselves are not
defined in the standard and are considered an implementation detail of
IdPs. Therefore, in our description and analysis of OAuth, we omit the
scope parameter and assume that the user always grants full access to
her data at the IdP.

\subsection{OAuth Modes}

\subsubsection{Implicit Mode}
\begin{figure}
  \centering
   \scriptsize{ \newlength\blockExtraHeightAdedAdIAEBCHfEHcFaEEGJdJBCdIeaBGf
\settototalheight\blockExtraHeightAdedAdIAEBCHfEHcFaEEGJdJBCdIeaBGf{\parbox{0.4\linewidth}{$\mi{idp}$}}
\setlength\blockExtraHeightAdedAdIAEBCHfEHcFaEEGJdJBCdIeaBGf{\dimexpr \blockExtraHeightAdedAdIAEBCHfEHcFaEEGJdJBCdIeaBGf - 3ex/4}
\newlength\blockExtraHeightBdedAdIAEBCHfEHcFaEEGJdJBCdIeaBGf
\settototalheight\blockExtraHeightBdedAdIAEBCHfEHcFaEEGJdJBCdIeaBGf{\parbox{0.4\linewidth}{Redirect to IdP /authEP with $\mi{client\_id}$, $\mi{redirect\_uri}$, $\mi{state}$}}
\setlength\blockExtraHeightBdedAdIAEBCHfEHcFaEEGJdJBCdIeaBGf{\dimexpr \blockExtraHeightBdedAdIAEBCHfEHcFaEEGJdJBCdIeaBGf - 3ex/4}
\newlength\blockExtraHeightCdedAdIAEBCHfEHcFaEEGJdJBCdIeaBGf
\settototalheight\blockExtraHeightCdedAdIAEBCHfEHcFaEEGJdJBCdIeaBGf{\parbox{0.4\linewidth}{$\mi{client\_id}$, $\mi{redirect\_uri}$, $\mi{state}$}}
\setlength\blockExtraHeightCdedAdIAEBCHfEHcFaEEGJdJBCdIeaBGf{\dimexpr \blockExtraHeightCdedAdIAEBCHfEHcFaEEGJdJBCdIeaBGf - 3ex/4}
\newlength\blockExtraHeightDdedAdIAEBCHfEHcFaEEGJdJBCdIeaBGf
\settototalheight\blockExtraHeightDdedAdIAEBCHfEHcFaEEGJdJBCdIeaBGf{\parbox{0.4\linewidth}{}}
\setlength\blockExtraHeightDdedAdIAEBCHfEHcFaEEGJdJBCdIeaBGf{\dimexpr \blockExtraHeightDdedAdIAEBCHfEHcFaEEGJdJBCdIeaBGf - 3ex/4}
\newlength\blockExtraHeightEdedAdIAEBCHfEHcFaEEGJdJBCdIeaBGf
\settototalheight\blockExtraHeightEdedAdIAEBCHfEHcFaEEGJdJBCdIeaBGf{\parbox{0.4\linewidth}{$\mi{username}$, $\mi{password}$}}
\setlength\blockExtraHeightEdedAdIAEBCHfEHcFaEEGJdJBCdIeaBGf{\dimexpr \blockExtraHeightEdedAdIAEBCHfEHcFaEEGJdJBCdIeaBGf - 3ex/4}
\newlength\blockExtraHeightFdedAdIAEBCHfEHcFaEEGJdJBCdIeaBGf
\settototalheight\blockExtraHeightFdedAdIAEBCHfEHcFaEEGJdJBCdIeaBGf{\parbox{0.4\linewidth}{Redirect to RP $\mi{redirect\_uri}$, fragment: $\mi{access\_token}$, $\mi{state}$}}
\setlength\blockExtraHeightFdedAdIAEBCHfEHcFaEEGJdJBCdIeaBGf{\dimexpr \blockExtraHeightFdedAdIAEBCHfEHcFaEEGJdJBCdIeaBGf - 3ex/4}
\newlength\blockExtraHeightGdedAdIAEBCHfEHcFaEEGJdJBCdIeaBGf
\settototalheight\blockExtraHeightGdedAdIAEBCHfEHcFaEEGJdJBCdIeaBGf{\parbox{0.4\linewidth}{}}
\setlength\blockExtraHeightGdedAdIAEBCHfEHcFaEEGJdJBCdIeaBGf{\dimexpr \blockExtraHeightGdedAdIAEBCHfEHcFaEEGJdJBCdIeaBGf - 3ex/4}
\newlength\blockExtraHeightHdedAdIAEBCHfEHcFaEEGJdJBCdIeaBGf
\settototalheight\blockExtraHeightHdedAdIAEBCHfEHcFaEEGJdJBCdIeaBGf{\parbox{0.4\linewidth}{}}
\setlength\blockExtraHeightHdedAdIAEBCHfEHcFaEEGJdJBCdIeaBGf{\dimexpr \blockExtraHeightHdedAdIAEBCHfEHcFaEEGJdJBCdIeaBGf - 3ex/4}
\newlength\blockExtraHeightIdedAdIAEBCHfEHcFaEEGJdJBCdIeaBGf
\settototalheight\blockExtraHeightIdedAdIAEBCHfEHcFaEEGJdJBCdIeaBGf{\parbox{0.4\linewidth}{$\mi{access\_token}$, $\mi{state}$}}
\setlength\blockExtraHeightIdedAdIAEBCHfEHcFaEEGJdJBCdIeaBGf{\dimexpr \blockExtraHeightIdedAdIAEBCHfEHcFaEEGJdJBCdIeaBGf - 3ex/4}
\newlength\blockExtraHeightJdedAdIAEBCHfEHcFaEEGJdJBCdIeaBGf
\settototalheight\blockExtraHeightJdedAdIAEBCHfEHcFaEEGJdJBCdIeaBGf{\parbox{0.4\linewidth}{$\mi{access\_token}$}}
\setlength\blockExtraHeightJdedAdIAEBCHfEHcFaEEGJdJBCdIeaBGf{\dimexpr \blockExtraHeightJdedAdIAEBCHfEHcFaEEGJdJBCdIeaBGf - 3ex/4}
\newlength\blockExtraHeightBAdedAdIAEBCHfEHcFaEEGJdJBCdIeaBGf
\settototalheight\blockExtraHeightBAdedAdIAEBCHfEHcFaEEGJdJBCdIeaBGf{\parbox{0.4\linewidth}{protected resource}}
\setlength\blockExtraHeightBAdedAdIAEBCHfEHcFaEEGJdJBCdIeaBGf{\dimexpr \blockExtraHeightBAdedAdIAEBCHfEHcFaEEGJdJBCdIeaBGf - 3ex/4}

 \begin{tikzpicture}
   \tikzstyle{xhrArrow} = [color=blue,decoration={markings, mark=at
    position 1 with {\arrow[color=blue]{triangle 45}}}, preaction
  = {decorate}]

    \matrix [column sep={6cm,between origins}, row sep=4.5ex]
  {

    \node[draw,anchor=base](Browser-start-0){Browser}; & \node[draw,anchor=base](RP-start-0){RP}; & \node[draw,anchor=base](IdP-start-0){IdP};\\
\node(Browser-0){}; & \node(RP-0){}; & \node(IdP-0){};\\[\blockExtraHeightAdedAdIAEBCHfEHcFaEEGJdJBCdIeaBGf]
\node(Browser-1){}; & \node(RP-1){}; & \node(IdP-1){};\\[\blockExtraHeightBdedAdIAEBCHfEHcFaEEGJdJBCdIeaBGf]
\node(Browser-2){}; & \node(RP-2){}; & \node(IdP-2){};\\[\blockExtraHeightCdedAdIAEBCHfEHcFaEEGJdJBCdIeaBGf]
\node(Browser-3){}; & \node(RP-3){}; & \node(IdP-3){};\\[\blockExtraHeightDdedAdIAEBCHfEHcFaEEGJdJBCdIeaBGf]
\node(Browser-4){}; & \node(RP-4){}; & \node(IdP-4){};\\[\blockExtraHeightEdedAdIAEBCHfEHcFaEEGJdJBCdIeaBGf]
\node(Browser-5){}; & \node(RP-5){}; & \node(IdP-5){};\\[\blockExtraHeightFdedAdIAEBCHfEHcFaEEGJdJBCdIeaBGf]
\node(Browser-6){}; & \node(RP-6){}; & \node(IdP-6){};\\[\blockExtraHeightGdedAdIAEBCHfEHcFaEEGJdJBCdIeaBGf]
\node(Browser-7){}; & \node(RP-7){}; & \node(IdP-7){};\\[\blockExtraHeightHdedAdIAEBCHfEHcFaEEGJdJBCdIeaBGf]
\node(Browser-8){}; & \node(RP-8){}; & \node(IdP-8){};\\[\blockExtraHeightIdedAdIAEBCHfEHcFaEEGJdJBCdIeaBGf]
\node(Browser-9){}; & \node(RP-9){}; & \node(IdP-9){};\\[\blockExtraHeightJdedAdIAEBCHfEHcFaEEGJdJBCdIeaBGf]
\node(Browser-10){}; & \node(RP-10){}; & \node(IdP-10){};\\[\blockExtraHeightBAdedAdIAEBCHfEHcFaEEGJdJBCdIeaBGf]
\node[draw,anchor=base](Browser-end-1){/Browser}; & \node[draw,anchor=base](RP-end-1){/RP}; & \node[draw,anchor=base](IdP-end-1){/IdP};\\
};
\draw[->] (Browser-0) to node [above=2.6pt, anchor=base]{\protostep{ig-start-req} \textbf{POST /start}} node [below=-8pt, text width=0.5\linewidth, anchor=base]{\begin{center} $\mi{idp}$\end{center}} (RP-0); 

\draw[->] (RP-1) to node [above=2.6pt, anchor=base]{\protostep{ig-start-resp} \textbf{Response}} node [below=-8pt, text width=0.5\linewidth, anchor=base]{\begin{center} Redirect to IdP /authEP with $\mi{client\_id}$, $\mi{redirect\_uri}$, $\mi{state}$\end{center}} (Browser-1); 

\draw[->] (Browser-2) to node [above=2.6pt, anchor=base]{\protostep{ig-idp-auth-req-1} \textbf{GET /authEP}} node [below=-8pt, text width=0.5\linewidth, anchor=base]{\begin{center} $\mi{client\_id}$, $\mi{redirect\_uri}$, $\mi{state}$\end{center}} (IdP-2); 

\draw[->] (IdP-3) to node [above=2.6pt, anchor=base]{\protostep{ig-idp-auth-resp-1} \textbf{Response}} node [below=-8pt, text width=0.5\linewidth, anchor=base]{\begin{center} \end{center}} (Browser-3); 

\draw[->] (Browser-4) to node [above=2.6pt, anchor=base]{\protostep{ig-idp-auth-req-2} \textbf{POST /authEP}} node [below=-8pt, text width=0.5\linewidth, anchor=base]{\begin{center} $\mi{username}$, $\mi{password}$\end{center}} (IdP-4); 

\draw[->] (IdP-5) to node [above=2.6pt, anchor=base]{\protostep{ig-idp-auth-resp-2} \textbf{Response}} node [below=-8pt, text width=0.5\linewidth, anchor=base]{\begin{center} Redirect to RP $\mi{redirect\_uri}$, fragment: $\mi{access\_token}$, $\mi{state}$\end{center}} (Browser-5); 

\draw[->] (Browser-6) to node [above=2.6pt, anchor=base]{\protostep{ig-redir-ep-req} \textbf{GET $\mi{redirect\_uri}$}} node [below=-8pt, text width=0.5\linewidth, anchor=base]{\begin{center} \end{center}} (RP-6); 

\draw[->] (RP-7) to node [above=2.6pt, anchor=base]{\protostep{ig-redir-ep-resp} \textbf{Response}} node [below=-8pt, text width=0.5\linewidth, anchor=base]{\begin{center} \end{center}} (Browser-7); 

\draw[->] (Browser-8) to node [above=2.6pt, anchor=base]{\protostep{ig-redir-ep-token-req} \textbf{POST /token}} node [below=-8pt, text width=0.5\linewidth, anchor=base]{\begin{center} $\mi{access\_token}$, $\mi{state}$\end{center}} (RP-8); 

\draw[->] (RP-9) to node [above=2.6pt, anchor=base]{\protostep{ig-resource-req} \textbf{GET /resource}} node [below=-8pt, text width=0.5\linewidth, anchor=base]{\begin{center} $\mi{access\_token}$\end{center}} (IdP-9); 

\draw[->] (IdP-10) to node [above=2.6pt, anchor=base]{\protostep{ig-resource-resp} \textbf{Response}} node [below=-8pt, text width=0.5\linewidth, anchor=base]{\begin{center} protected resource\end{center}} (RP-10); 

\begin{pgfonlayer}{background}
\draw [color=gray] (Browser-start-0) -- (Browser-end-1);
\draw [color=gray] (RP-start-0) -- (RP-end-1);
\draw [color=gray] (IdP-start-0) -- (IdP-end-1);
\end{pgfonlayer}
\end{tikzpicture}}
  \caption{OAuth~2.0 implicit mode}
  \label{fig:oauth-implicit-grant}
\end{figure}
This mode is a simplified version of the authorization code mode:
instead of providing an authorization code to an RP, an IdP directly
delivers an access token to the RP (via the user's browser).

\paragraph{Step-by-Step Protocol Flow} We now provide a step-by-step
description of the protocol flow (see also
Figure~\ref{fig:oauth-implicit-grant}). As in the authorization code
mode, the user starts the OAuth flow, e.g., by clicking on a button to
select an IdP, triggering the browser to send
request~\refprotostep{ig-start-req} to the RP. The RP selects the
redirect URI $\mi{redirect\_uri}$
(which will be used later in~\refprotostep{ig-redir-ep-req}) and a
value $\mi{state}$.
The RP then redirects the browser with its $\mi{client\_id}$,
$\mi{redirect\_uri}$,
and $\mi{state}$
to the authorization endpoint at the IdP\footnote{Note that also a
  fixed string ``$\str{token}$''
  indicating to the IdP that implicit mode is used is appended as a
  parameter to the URI.} in~\refprotostep{ig-start-resp}
and~\refprotostep{ig-idp-auth-req-1}. The IdP prompts the user to
enter her username and password in~\refprotostep{ig-idp-auth-resp-1}.
The user's browser sends this information to the IdP
in~\refprotostep{ig-idp-auth-req-2}. If the user's credentials are
correct, the IdP creates an access token $\mi{access\_token}$
and redirects the user's browser to the RP's redirection endpoint
$\mi{redirect\_uri}$
in~\refprotostep{ig-idp-auth-resp-2}
and~\refprotostep{ig-redir-ep-req}, where the IdP appends
$\mi{access\_token}$
and $\mi{state}$
to the fragment of the redirection URI. (Recall that a fragment is a
special part of a URI indicated by the `\#' symbol. When the browser
opens a URI, the information in the fragment is not transferred to the
server.) Hence, in Step~\refprotostep{ig-redir-ep-req}
$\mi{access\_token}$
and $\mi{state}$
are not transferred to the RP. To retrieve these values, the RP
in~\refprotostep{ig-redir-ep-resp} delivers a document containing
JavaScript code. It retrieves $\mi{access\_token}$
and $\mi{state}$
from the fragment and sends these to the RP
in~\refprotostep{ig-redir-ep-token-req}. The RP then checks if
$\mi{state}$
is the same as above. Just as in the authorization code mode, the RP
can now use $\mi{access\_token}$
for authorization (illustrated in Steps~\refprotostep{ig-resource-req}
and~\refprotostep{ig-resource-resp}); authentication is analogous to Steps~\refprotostep{acg-introspect-req},
\refprotostep{acg-introspect-resp}, and~\refprotostep{acg-redir-ep-resp} of
Figure~\ref{fig:oauth-auth-code-grant}.

For authentication, note that the response from the IdP includes the
RP's OAuth client id, which is also checked by the RP. This check
prevents re-usage of access tokens across RPs in the OAuth implicit
mode as explained in~\cite{Wangetal-USENIX-Explicating-SDKs-2013}.

We note that in the implicit mode, an IdP cannot verify the
identity of the receiver of the access token, as an RP does not
authenticate itself to the IdP (using $\mi{client\_secret}$).
Hence, this mode is more suitable for RPs that do not have access to a
secure, long-lived storage (for a $\mi{client\_secret}$)
such as in-browser applications.

\subsubsection{Resource Owner Password Credentials Mode}
\begin{figure}
  \centering
   \scriptsize{ \newlength\blockExtraHeightADaecDdGeJADaEcfBICfAIaeIDAHBaAEe
\settototalheight\blockExtraHeightADaecDdGeJADaEcfBICfAIaeIDAHBaAEe{\parbox{0.4\linewidth}{$\mi{idp}$, $\mi{username}$, $\mi{password}$}}
\setlength\blockExtraHeightADaecDdGeJADaEcfBICfAIaeIDAHBaAEe{\dimexpr \blockExtraHeightADaecDdGeJADaEcfBICfAIaeIDAHBaAEe - 3ex/4}
\newlength\blockExtraHeightBDaecDdGeJADaEcfBICfAIaeIDAHBaAEe
\settototalheight\blockExtraHeightBDaecDdGeJADaEcfBICfAIaeIDAHBaAEe{\parbox{0.4\linewidth}{$\mi{username}$, $\mi{password}$, $\mi{client\_id}$, $\mi{client\_secret}$}}
\setlength\blockExtraHeightBDaecDdGeJADaEcfBICfAIaeIDAHBaAEe{\dimexpr \blockExtraHeightBDaecDdGeJADaEcfBICfAIaeIDAHBaAEe - 3ex/4}
\newlength\blockExtraHeightCDaecDdGeJADaEcfBICfAIaeIDAHBaAEe
\settototalheight\blockExtraHeightCDaecDdGeJADaEcfBICfAIaeIDAHBaAEe{\parbox{0.4\linewidth}{$\mi{access\_token}$}}
\setlength\blockExtraHeightCDaecDdGeJADaEcfBICfAIaeIDAHBaAEe{\dimexpr \blockExtraHeightCDaecDdGeJADaEcfBICfAIaeIDAHBaAEe - 3ex/4}
\newlength\blockExtraHeightDDaecDdGeJADaEcfBICfAIaeIDAHBaAEe
\settototalheight\blockExtraHeightDDaecDdGeJADaEcfBICfAIaeIDAHBaAEe{\parbox{0.4\linewidth}{$\mi{access\_token}$}}
\setlength\blockExtraHeightDDaecDdGeJADaEcfBICfAIaeIDAHBaAEe{\dimexpr \blockExtraHeightDDaecDdGeJADaEcfBICfAIaeIDAHBaAEe - 3ex/4}
\newlength\blockExtraHeightEDaecDdGeJADaEcfBICfAIaeIDAHBaAEe
\settototalheight\blockExtraHeightEDaecDdGeJADaEcfBICfAIaeIDAHBaAEe{\parbox{0.4\linewidth}{protected resource}}
\setlength\blockExtraHeightEDaecDdGeJADaEcfBICfAIaeIDAHBaAEe{\dimexpr \blockExtraHeightEDaecDdGeJADaEcfBICfAIaeIDAHBaAEe - 3ex/4}

 \begin{tikzpicture}
   \tikzstyle{xhrArrow} = [color=blue,decoration={markings, mark=at
    position 1 with {\arrow[color=blue]{triangle 45}}}, preaction
  = {decorate}]

    \matrix [column sep={6cm,between origins}, row sep=4.5ex]
  {

    \node[draw,anchor=base](Browser-start-0){Browser}; & \node[draw,anchor=base](RP-start-0){RP}; & \node[draw,anchor=base](IdP-start-0){IdP};\\
\node(Browser-0){}; & \node(RP-0){}; & \node(IdP-0){};\\[\blockExtraHeightADaecDdGeJADaEcfBICfAIaeIDAHBaAEe]
\node(Browser-1){}; & \node(RP-1){}; & \node(IdP-1){};\\[\blockExtraHeightBDaecDdGeJADaEcfBICfAIaeIDAHBaAEe]
\node(Browser-2){}; & \node(RP-2){}; & \node(IdP-2){};\\[\blockExtraHeightCDaecDdGeJADaEcfBICfAIaeIDAHBaAEe]
\node(Browser-3){}; & \node(RP-3){}; & \node(IdP-3){};\\[\blockExtraHeightDDaecDdGeJADaEcfBICfAIaeIDAHBaAEe]
\node(Browser-4){}; & \node(RP-4){}; & \node(IdP-4){};\\[\blockExtraHeightEDaecDdGeJADaEcfBICfAIaeIDAHBaAEe]
\node[draw,anchor=base](Browser-end-1){/Browser}; & \node[draw,anchor=base](RP-end-1){/RP}; & \node[draw,anchor=base](IdP-end-1){/IdP};\\
};
\draw[->] (Browser-0) to node [above=2.6pt, anchor=base]{\protostep{ropcg-start-req} \textbf{POST /start}} node [below=-8pt, text width=0.5\linewidth, anchor=base]{\begin{center} $\mi{idp}$, $\mi{username}$, $\mi{password}$\end{center}} (RP-0); 

\draw[->] (RP-1) to node [above=2.6pt, anchor=base]{\protostep{ropcg-token-req} \textbf{POST tokenEP}} node [below=-8pt, text width=0.5\linewidth, anchor=base]{\begin{center} $\mi{username}$, $\mi{password}$, $\mi{client\_id}$, $\mi{client\_secret}$\end{center}} (IdP-1); 

\draw[->] (IdP-2) to node [above=2.6pt, anchor=base]{\protostep{ropcg-token-resp} \textbf{Response}} node [below=-8pt, text width=0.5\linewidth, anchor=base]{\begin{center} $\mi{access\_token}$\end{center}} (RP-2); 

\draw[->] (RP-3) to node [above=2.6pt, anchor=base]{\protostep{ropcg-resource-req} \textbf{GET /resource}} node [below=-8pt, text width=0.5\linewidth, anchor=base]{\begin{center} $\mi{access\_token}$\end{center}} (IdP-3); 

\draw[->] (IdP-4) to node [above=2.6pt, anchor=base]{\protostep{ropcg-resource-resp} \textbf{Response}} node [below=-8pt, text width=0.5\linewidth, anchor=base]{\begin{center} protected resource\end{center}} (RP-4); 

\begin{pgfonlayer}{background}
\draw [color=gray] (Browser-start-0) -- (Browser-end-1);
\draw [color=gray] (RP-start-0) -- (RP-end-1);
\draw [color=gray] (IdP-start-0) -- (IdP-end-1);
\end{pgfonlayer}
\end{tikzpicture}}
  \caption{OAuth~2.0 resource owner password credentials mode}
  \label{fig:oauth-ropc-grant}
\end{figure}
In this mode, the user gives her credentials for an IdP directly to an
RP. The RP can then authenticate to the IdP on the user's behalf and
retrieve an access token.  The resource owner password credentials
mode is intended for highly-trusted RPs, such as the operating system
of the user's device or highly-privileged applications, or if the
previous two modes are not possible to perform (e.g., for applications
without a web browser). In the following, we assume that the
authorization/login process is started by the user using a web
browser.

\paragraph{Step-by-Step Protocol Flow} We now provide a step-by-step
description of the resource owner password credentials mode (see also
Figure~\ref{fig:oauth-ropc-grant}): The user provides her username and
password for the IdP to the RP in~\refprotostep{ropcg-start-req}. Now,
the RP sends the username, the password, its $\mi{client\_id}$
and $\mi{client\_secret}$\footnote{Note that in this mode, if an RP does not have an OAuth
  client secret for an IdP, the $\mi{client\_secret}$
  and $\mi{client\_id}$
  parameters are \emph{both} omitted in this request. This option is
  also covered by our analysis.} to the IdP
in~\refprotostep{ropcg-token-req}. The IdP then issues an access token
$\mi{access\_token}$
to the RP in~\refprotostep{ropcg-token-resp}.\footnote{As
  in the authorization code mode, an IdP may also issue a refresh
  token to the RP here.} Just as in the authorization code mode, the RP
can now use $\mi{access\_token}$
for authorization (illustrated in Steps~\refprotostep{ropcg-resource-req}
and~\refprotostep{ropcg-resource-resp}) and authentication (as in
Steps~\refprotostep{acg-introspect-req},
\refprotostep{acg-introspect-resp}, and~\refprotostep{acg-redir-ep-resp} of
Figure~\ref{fig:oauth-auth-code-grant}).

\subsubsection{Client Credentials Mode}
\begin{figure}
  \centering
   \scriptsize{ \newlength\blockExtraHeightAbaAdIfcfddEeEAEEJJcDFDIfCAaadFII
\settototalheight\blockExtraHeightAbaAdIfcfddEeEAEEJJcDFDIfCAaadFII{\parbox{0.4\linewidth}{$\mi{client\_id}$, $\mi{client\_secret}$}}
\setlength\blockExtraHeightAbaAdIfcfddEeEAEEJJcDFDIfCAaadFII{\dimexpr \blockExtraHeightAbaAdIfcfddEeEAEEJJcDFDIfCAaadFII - 3ex/4}
\newlength\blockExtraHeightBbaAdIfcfddEeEAEEJJcDFDIfCAaadFII
\settototalheight\blockExtraHeightBbaAdIfcfddEeEAEEJJcDFDIfCAaadFII{\parbox{0.4\linewidth}{$\mi{access\_token}$}}
\setlength\blockExtraHeightBbaAdIfcfddEeEAEEJJcDFDIfCAaadFII{\dimexpr \blockExtraHeightBbaAdIfcfddEeEAEEJJcDFDIfCAaadFII - 3ex/4}
\newlength\blockExtraHeightCbaAdIfcfddEeEAEEJJcDFDIfCAaadFII
\settototalheight\blockExtraHeightCbaAdIfcfddEeEAEEJJcDFDIfCAaadFII{\parbox{0.4\linewidth}{$\mi{access\_token}$}}
\setlength\blockExtraHeightCbaAdIfcfddEeEAEEJJcDFDIfCAaadFII{\dimexpr \blockExtraHeightCbaAdIfcfddEeEAEEJJcDFDIfCAaadFII - 3ex/4}
\newlength\blockExtraHeightDbaAdIfcfddEeEAEEJJcDFDIfCAaadFII
\settototalheight\blockExtraHeightDbaAdIfcfddEeEAEEJJcDFDIfCAaadFII{\parbox{0.4\linewidth}{protected resource}}
\setlength\blockExtraHeightDbaAdIfcfddEeEAEEJJcDFDIfCAaadFII{\dimexpr \blockExtraHeightDbaAdIfcfddEeEAEEJJcDFDIfCAaadFII - 3ex/4}

 \begin{tikzpicture}
   \tikzstyle{xhrArrow} = [color=blue,decoration={markings, mark=at
    position 1 with {\arrow[color=blue]{triangle 45}}}, preaction
  = {decorate}]

    \matrix [column sep={6cm,between origins}, row sep=4.5ex]
  {

    \node[draw,anchor=base](Browser-start-0){Browser}; & \node[draw,anchor=base](RP-start-0){RP}; & \node[draw,anchor=base](IdP-start-0){IdP};\\
\node(Browser-0){}; & \node(RP-0){}; & \node(IdP-0){};\\[\blockExtraHeightAbaAdIfcfddEeEAEEJJcDFDIfCAaadFII]
\node(Browser-1){}; & \node(RP-1){}; & \node(IdP-1){};\\[\blockExtraHeightBbaAdIfcfddEeEAEEJJcDFDIfCAaadFII]
\node(Browser-2){}; & \node(RP-2){}; & \node(IdP-2){};\\[\blockExtraHeightCbaAdIfcfddEeEAEEJJcDFDIfCAaadFII]
\node(Browser-3){}; & \node(RP-3){}; & \node(IdP-3){};\\[\blockExtraHeightDbaAdIfcfddEeEAEEJJcDFDIfCAaadFII]
\node[draw,anchor=base](Browser-end-1){/Browser}; & \node[draw,anchor=base](RP-end-1){/RP}; & \node[draw,anchor=base](IdP-end-1){/IdP};\\
};
\draw[->] (RP-0) to node [above=2.6pt, anchor=base]{\protostep{ccg-token-req} \textbf{POST /tokenEP}} node [below=-8pt, text width=0.5\linewidth, anchor=base]{\begin{center} $\mi{client\_id}$, $\mi{client\_secret}$\end{center}} (IdP-0); 

\draw[->] (IdP-1) to node [above=2.6pt, anchor=base]{\protostep{ccg-token-resp} \textbf{Response}} node [below=-8pt, text width=0.5\linewidth, anchor=base]{\begin{center} $\mi{access\_token}$\end{center}} (RP-1); 

\draw[->] (RP-2) to node [above=2.6pt, anchor=base]{\protostep{ccg-resource-req} \textbf{GET /resource}} node [below=-8pt, text width=0.5\linewidth, anchor=base]{\begin{center} $\mi{access\_token}$\end{center}} (IdP-2); 

\draw[->] (IdP-3) to node [above=2.6pt, anchor=base]{\protostep{ccg-resource-resp} \textbf{Response}} node [below=-8pt, text width=0.5\linewidth, anchor=base]{\begin{center} protected resource\end{center}} (RP-3); 

\begin{pgfonlayer}{background}
\draw [color=gray] (Browser-start-0) -- (Browser-end-1);
\draw [color=gray] (RP-start-0) -- (RP-end-1);
\draw [color=gray] (IdP-start-0) -- (IdP-end-1);
\end{pgfonlayer}
\end{tikzpicture}}
  \caption{OAuth~2.0 client credentials mode}
  \label{fig:oauth-cc-grant}
\end{figure}
In contrast to the modes shown above, this mode works without the
user's interaction. Instead, it is started by an RP in order to fetch
an access token to access RP's own resources at an IdP or to access
resources at an IdP the RP is authorized to by other means. For
example, Facebook allows RPs to use the client credentials mode to
obtain an access token to access reports of their advertisements'
performance.

\paragraph{Step-by-Step Protocol Flow} The step-by-step description of
the client credentials mode is as follows (see also
Figure~\ref{fig:oauth-cc-grant}): First, the RP contacts the IdP with
RP's $\mi{client\_id}$
and $\mi{client\_secret}$
in~\refprotostep{ccg-token-req}. The IdP now issues an
$\mi{access\_token}$
in~\refprotostep{ccg-token-resp}. Just as in the authorization code
mode, the RP can now use $\mi{access\_token}$
for authorization (illustrated in
Steps~\refprotostep{ccg-resource-req}
and~\refprotostep{ccg-resource-resp}). In contrast to the other modes
presented above, the access token is not bound to a specific user
account, but only to the RP.

\section{IdP Mix-Up Attack in the OAuth Implicit Mode}

\label{sec:attack-mixup-in-implicit}

\begin{figure}[p!]
  \input{figure-oauth-implicit-grant-attack}
  \caption{IdP Mix-Up Attack on OAuth 2.0 implicit mode}
  \label{fig:oauth-ig-att}
\end{figure}

Here, we present the IdP Mix-Up attack in the implicit mode. 
It is depicted in
Figure~\ref{fig:oauth-ig-att}. 

Just as in the authorization code mode, the
attack starts when the user selects that she wants to log in using
HIdP (Step~\refprotostep{ig-att-start-req} in
Figure~\ref{fig:oauth-ig-att}). Now, the attacker intercepts the
request intended for the RP and modifies the content of this request
by replacing HIdP by AIdP. The response of the
RP~\refprotostep{ig-att-start-resp} (containing a redirect to AIdP) is
then again intercepted and modified by the attacker such that it
redirects the user to
HIdP~\refprotostep{ig-att-start-resp-manipulated}. The attacker also
replaces the OAuth client id of the RP at AIdP with the client id of
the RP at HIdP.\footnote{As mentioned above, OAuth client ids are
  public information.} (Note that we assume that from this point on,
in accordance with the OAuth security recommendations, the
communication between the user's browser and HIdP and the RP is
encrypted by using HTTPS, and thus, cannot be inspected or altered by
the attacker.) The user then authenticates to HIdP and is redirected
back to the RP~\refprotostep{ig-att-idp-auth-resp-2}. The RP, however,
still assumes that the access token contained in this redirect is an
access token issued by AIdP, rather than HIdP. The RP therefore now
uses this access token to retrieve protected resources of the user (or
the user id) at AIdP~\refprotostep{ig-att-introspect-req}, rather than
HIdP. This leaks the access token to the attacker who can now access
protected resources of the user at IdP. This breaks the authorization
property (see Section~\ref{sec:secur-auth-prop} below). (We note that
at this point, the attacker might even provide false information about
the user or her protected resources to the RP.)

To break authentication and impersonate the honest user, the attacker
now starts a new login process (using his own browser) at the RP.
In~\refprotostep{ig-att-start-attacker-req} he selects HIdP as the IdP
for this login process. He receives a redirect to HIdP, which he
skips.\footnote{Note that this redirect contains (besides a cookie for
  a new login session) a fresh state parameter, say $\mi{state}'$.
  The attacker will use this information in subsequent requests to the
  RP.} The attacker now sends the access token $\mi{access\_token}$
captured in Step~\refprotostep{ig-att-introspect-req} to the RP imitating
a real login~\refprotostep{ig-att-redir-ep-token-attacker-req}. The RP
now uses this access token to retrieve the user id at
HIdP~\refprotostep{ig-att-introspect-attacker-req} and receives the
(honest) user's id as well as its own OAuth client
id~\refprotostep{ig-att-introspect-attacker-resp}. Being convinced that
the attacker owns the honest user's account, the RP issues a session
cookie for this account to the
attacker~\refprotostep{ig-att-redir-ep-token-resp}. As a result, the
attacker is logged in at the RP under the honest user's id. This
breaks the authentication property of OAuth (see
Section~\ref{sec:secur-auth-prop} below).

\subsection{OpenID Connect and the Attacks on this Standard}
\label{app:openid-connect}

\begin{figure}[p!]
  \centering
  \input{figure-openid-connect-hybrid-flow}
  \caption{OpenID Connect 1.0 hybrid mode with discovery and dynamic client registration}
  \label{fig:oic-hf}
\end{figure}

We here provide a more detailed description of the OpenID Connect
standard as well as on the two attacks, 307 redirect and IdP mix-up,
on it.

\subsubsection{Modes and Protocol Flow}
OpenID Connect makes use of the OAuth authorization code mode and the
implicit mode (both OAuth modes constitute an OpenID Connect mode),
but also introduces a new \emph{hybrid} mode, which combines both
modes.

\paragraph{Overview}
From a high-level perspective, first, the RP retrieves meta data about
the IdP, such as the URLs of the IdP used in the protocol. This is the
information that is ``hard-wired'' in the manual, out-of-band
registration in a classic OAuth setup. Next, the RP automatically
registers itself as an OAuth client at the IdP (using OpenID Connect
dynamic client registration). Then, the OAuth protocol is started
(using one of the modes mentioned above). In addition to an access
token this (extended) run delivers a so-called \emph{id token} to
RP. The id token is issued by the IdP and contains a unique user
identifier along with several meta data, such as the intended receiver
(the RP) of the id token and the issuer of the id token (the IdP). The
id token is (optionally) signed by the IdP. Finally, the RP can
retrieve more meta data about the user at the \emph{userinfo} endpoint
at the IdP using the access token and consider the user to be logged
in.

\paragraph{Step-by-Step Protocol Flow}
In the step-by-step description below (see also
Figure~\ref{fig:oic-hf}), we focus on the hybrid mode only. First, the
user starts the login process by entering her email
address\footnote{Note that OpenID Connect also allows other types of
  user identifiers, such as a personal URL.} in her browser (at some
web page of an RP), which sends the email address to the RP
in~\refprotostep{oichf-start-req}.

Now, the RP uses the OpenID Connect Discovery protocol
\cite{openid-connect-discovery-1.0} to gain information about the IdP:
The RP uses the WebFinger \cite{rfc7033-webfinger} mechanism to discover
information about which IdP is responsible for this user. For this
discovery, the RP contacts the server of the user's email domain
(depicted as the same party as the IdP in the figure)
in~\refprotostep{oichf-wf-req}. The result of the WebFinger request
in~\refprotostep{oichf-wf-resp} contains the domain of the server
responsible for the OpenID Connect configuration (the IdP). The
configuration is requested from the IdP
in~\refprotostep{oichf-conf-req} and returned
in~\refprotostep{oichf-conf-resp}. The configuration contains meta
data about the IdP, including all endpoints at the IdP. This concludes
the OpenID Discovery in this login flow.

Next, if the RP is not registered at the IdP, the RP starts the OpenID
Connect dynamic client registration
\cite{openid-connect-dynamic-client-registration-1.0} protocol: the RP
contacts the IdP in~\refprotostep{oichf-reg-req} providing its
redirect URIs. Now, the IdP issues an (OAuth) client id and
(optionally) an (OAuth) client secret to the RP
in~\refprotostep{oichf-reg-resp}. This concludes the OpenID Connect
dynamic client registration.

Now, the core part of the OpenID Connect protocol (based on OAuth)
starts: the RP redirects the user's browser to the IdP
in~\refprotostep{oichf-start-resp}. This redirect contains information
that the hybrid mode is used and which tokens are requested. In this
description, we assume that an authorization code and an access token
are requested.\footnote{The Hybrid Flow allows to request several
  different combinations of authorization code, access token, and id
  token.} Also, this redirect contains the (OAuth) client id of the
RP, a redirect URI and a state value. As in the OAuth flows, this data
is sent to the IdP~\refprotostep{oichf-idp-auth-req-1}, the user
authenticates to the IdP~\refprotostep{oichf-idp-auth-resp-1},
\refprotostep{oichf-idp-auth-req-2}, and the IdP redirects the user's
browser back to the RP in~\refprotostep{oichf-idp-auth-resp-2}
and~\refprotostep{oichf-redir-ep-req} (using the redirect URI from the
request in~\refprotostep{oichf-idp-auth-req-1}). This redirect
contains an authorization code, an access token, and the state value
in the fragment part of the URL.\footnote{Note that depending on the
  parameters in step~\refprotostep{oichf-idp-auth-req-1}, also an id
  token may be contained in the fragment part of the URL.} Now, the RP
in~\refprotostep{oichf-redir-ep-resp} sends a document containing
JavaScript code which sends the parameters contained in the fragment back
to the RP (in~\refprotostep{oichf-redir-ep-token-req}). If the state
value matches, the RP contacts the IdP
in~\refprotostep{oichf-token-req} with the received authorization
code, its (OAuth) client id, its (OAuth) client secret, and the
redirect URI used to obtain the authorization code. The IdP sends a
response with the same or a fresh access token and an id token to the
RP in~\refprotostep{oichf-token-resp}. Now, the RP retrieves the key
that was used to sign the id token from the IdP
in~\refprotostep{oichf-jwks-req} and~\refprotostep{oichf-jwks-resp}
and verifies the id token's signature. As the id token typically 
contains only a unique user identifier, but no other meta data about the
user, RP requests this meta data (such as nickname, birthday, or
address) from the IdP in~\refprotostep{oichf-userinfo-req}
and~\refprotostep{oichf-userinfo-resp} using one of the authorization
tokens received before. Finally, the RP considers the user to be
logged in and may set a session cookie at the user's browser
in~\refprotostep{oichf-redir-ep-token-resp}.

Note that the authorization code mode and the implicit mode are
similar to the hybrid mode: Roughly speaking, the
Steps~\refprotostep{oichf-idp-auth-resp-2}--\refprotostep{oichf-token-resp}
of the OpenID Connect hybrid mode are replaced by the corresponding
steps of the OAuth authorization code or implicit mode, respectively.
These OAuth modes are then extended with the transfer of an id token.
In the authorization code mode, the id token is appended to the
response~\refprotostep{acg-token-resp} of
Figure~\ref{fig:oauth-auth-code-grant} and in the implicit mode, the
id token is appended to the fragment of the redirect URI
in~\refprotostep{ig-idp-auth-resp-2} of
Figure~\ref{fig:oauth-implicit-grant} (and later sent to the RP in
Step~\refprotostep{ig-redir-ep-token-req}).

\subsubsection{The 307 Redirect Attack}
The 307 redirect attack presented in Section~\ref{sec:307-redirect}
can also be applied to OpenID Connect. Note that the critical part of
OAuth, namely the redirect of the user's browser from the IdP to the
RP after the authentication of the user to the IdP, is also present in
OpenID Connect. Hence, if the IdP uses an HTTP status 307 redirect
immediately after the user's browser has transferred the user's
credentials to IdP in a POST request, the RP receives these
credentials.

\subsubsection{The IdP Mix-Up Attack}
\begin{figure}[p!]
  \input{figure-openid-connect-hybrid-flow-attack}
  \caption{Attack on OpenID Connect 1.0 hybrid mode with discovery and
    dynamic client registration}
  \label{fig:oic-hf-att}
\end{figure}
When applying the attack presented in
Section~\ref{sec:malicious-idp-mitm} to OpenID Connect, the attacker needs to circumvent \df{better: deal with}
some additional security measures: In the implicit mode of OpenID
Connect, an $\mi{id\_token}$ (as described above) is sent along with
$\mi{access\_token}$ in the redirect from HIdP to the RP. As this
redirect might use HTTPS, the attacker cannot inspect or modify the
corresponding network messages. As mentioned above, the id token
contains the domain of the issuer of both, the access token and the id
token. Therefore, the RP can detect that the user did not use AIdP
(which the RP redirected to).

An attacker could try to use the authorization code mode of OpenID
Connect to mount a similar attack as described above. In this case,
however, the attacker does not learn a valid access token for the
user's account at HIdP if a client secret is used.

In the hybrid mode, however, an attacker can learn an access token and
mount the attack as follows (see also Figure~\ref{fig:oic-hf-att}):

As above, the user first visits the RP. When the user sends her email
address to the RP in order to
login~\refprotostep{oichf-att-start-req}, the attacker manipulates the
domain part of the email address, to be the domain of
AIdP~\refprotostep{oichf-att-start-req-manipulated}. The RP then looks
up the IdP to be used (which is now AIdP) using the WebFinger protocol
in Steps~\refprotostep{oichf-att-wf-req}
and~\refprotostep{oichf-att-wf-resp}. The RP fetches the OpenID
Connect configuration from the
attacker~(\refprotostep{oichf-att-conf-req}
and~\refprotostep{oichf-att-conf-resp}). In this document, the
attacker states that the authorization endpoint is located at HIdP
while all other endpoints are located at the attacker. Using
parameters not shown in the figures, the attacker can also state that
this IdP does not support delivering an id token in the redirect and
can state that no signatures are supported. Since no signatures need
to be checked, also the key retrieval is skipped in the protocol.

After retrieving the OpenID configuration, the RP registers at AIdP, as
the attacker uses a domain previously unknown to the RP. (If the domain
was known to the RP, this step would be skipped.) The attacker issues the
same $\mi{client\_id}$
with which the RP is registered at HIdP~(\refprotostep{oichf-att-reg-req}
and~\refprotostep{oichf-att-reg-resp}). Now, the RP redirects the user's
browser to HIdP in order to log in. After the user authenticated to
HIdP, HIdP redirects the user's browser back to the RP. The fragment part
of the URL contains an authorization code and an access
token~\refprotostep{oichf-att-redir-ep-req}. The RP then sends the
authorization code to the attacker
in~\refprotostep{oichf-att-token-req}. 

If the RP does not have a client secret registered at HIdP, the
attacker can redeem this authorization code at HIdP in order to
receive an access token to access the honest user's protected
resources at HIdP. This breaks the authorization of OpenID Connect
(compare the OAuth authorization property in
Section~\ref{sec:secur-auth-prop}).

Alternatively, the attacker responds to the RP
with a faked access token and a faked id
token~\refprotostep{oichf-att-token-resp} (which the attacker can
create, because he controls all security settings for this id token,
see Step~\refprotostep{oichf-att-conf-resp}).

Next, the RP retrieves other meta information about the user from
AIdP. The RP is now in possession of two access tokens. The OpenID
Connect standard explicitly allows this situation, but fails to state
which access token has to be used in subsequent requests. The RP can
now chose either of the access tokens for the next steps, with
different outcomes for the attacker:

\paragraph{First Access Token is Selected}
In this case the access token originating from HIdP is selected by the
RP and sent to the attacker~\refprotostep{oichf-att-userinfo-req}.
(This behavior was observed by us in the real-world implementation
mod\_auth\_openidc.)

Now the attacker can use this access token to access other protected
resources of the user at HIdP. This breaks authorization for OpenID
Connect (compare our OAuth authorization property in
Section~\ref{sec:secur-auth-prop}).

\paragraph{Second Access Token is Selected}
In this case the access token originating from AIdP is selected. This
means that the attacker does not learn a valid access token for
HIdP. The attacker can, however, reuse the authorization code for
HIdP, which he learned in~\refprotostep{oichf-att-token-req} and which
is still valid as it has not been redeemed at HIdP, yet. Using this
method, the attacker can impersonate the honest user at the RP. To
accomplish this, the attacker starts a new login flow at the RP with
the user's email address. In
Step~\refprotostep{oichf-redir-ep-token-req} of
Figure~\ref{fig:oic-hf}, he provides the authorization code he has
learned along with some (invalid) access token and the state from his
(new) login flow to the RP. The RP then requests an access token and
an id token from HIdP with this (still valid) authorization code. The
RP receives a valid access token and a valid id token (for the honest
user) from HIdP. As the RP uses this valid access token in this case,
all subsequent requests from the RP to HIdP are successful and the RP
receives the user id of the honest user, the RP considers the attacker
to be logged in as the honest user. This breaks the authentication of
OpenID Connect (compare our OAuth authentication property in
Section~\ref{sec:secur-auth-prop}).

\section{The FKS Web Model}\label{app:web-model}

In this and the following two sections, we present the FKS model for
the web infrastructure as proposed
in~\cite{FettKuestersSchmitz-SP-2014}
and~\cite{FettKuestersSchmitz-TR-BrowserID-Primary-2015}, along with
the following changes and additions:

\begin{itemize}
\item We introduce a new header, $\str{Authorization}$,
  as a model for HTTP Basic Authentication.\footnote{Note that
    although the header is called ``Authorization'' (following
    RFC2617), this is a mechanism for authentication.}
\item Browsers now may have multiple passwords stored for a single
  origin; before, there was only one password for each origin.
\item We introduce the header $\str{ReferrerPolicy}$
  as a model for a referrer policy delivered in an HTTP response
  header. 
\end{itemize}

\subsection{Communication Model}\label{app:communication-model}

We here present details and definitions on the basic concepts of the
communication model.

\subsubsection{Terms, Messages and Events} 
The signature $\Sigma$ for the terms and
messages considered in this work is the union of the following
pairwise disjoint sets of function symbols:
\begin{itemize}
\item constants $C = \addresses\,\cup\, \mathbb{S}\cup
  \{\True,\bot,\notdef\}$ where the three sets are pairwise disjoint,
  $\mathbb{S}$ is interpreted to be the set of ASCII strings
  (including the empty string $\varepsilon$), and $\addresses$ is
  interpreted to be a set of (IP) addresses,
\item function symbols for public keys, (a)symmetric
  en\-cryp\-tion/de\-cryp\-tion, and signatures: $\mathsf{pub}(\cdot)$,
  $\enc{\cdot}{\cdot}$, $\dec{\cdot}{\cdot}$, $\encs{\cdot}{\cdot}$,
  $\decs{\cdot}{\cdot}$, $\sig{\cdot}{\cdot}$,
  $\checksigThree{\cdot}{\cdot}{\cdot}$, and $\unsig{\cdot}$,
\item $n$-ary sequences $\an{}, \an{\cdot}, \an{\cdot,\cdot},
  \an{\cdot,\cdot,\cdot},$ etc., and
\item projection symbols $\pi_i(\cdot)$ for all $i \in \mathbb{N}$.
\end{itemize}
For strings (elements in $\mathbb{S}$), we use a
specific font. For example, $\cHttpReq$ and $\cHttpResp$
are strings. We denote by $\dns\subseteq \mathbb{S}$ the
set of domains, e.g., $\str{example.com}\in \dns$.  We
denote by $\methods\subseteq \mathbb{S}$ the set of methods
used in HTTP requests, e.g., $\mGet$, $\mPost\in \methods$.

The equational theory associated with the signature
$\Sigma$ is given in Figure~\ref{fig:equational-theory}.

\begin{figure}
\begin{align}
\dec{\enc x{\pub(y)}}{y} &= x\\
\decs{\encs x{y}}{y} &= x\\
\checksigThree{\sig{x}{y}}{x}{\pub(y)} &= \True\\
\unsig{\sig{x}{y}} &= x\\
\pi_i(\an{x_1,\dots,x_n}) &= x_i \text{\;\;if\ } 1 \leq i \leq n \\
\proj{j}{\an{x_1,\dots,x_n}} &= \notdef \text{\;\;if\ } j
\not\in \{1,\dots,n\}
\end{align}
\caption{Equational theory for $\Sigma$}\label{fig:equational-theory}
\end{figure}

\begin{definition}[Nonces and Terms]\label{def:terms}
  By $X=\{x_0,x_1,\dots\}$ we denote a set of variables and by
  $\nonces$ we denote an infinite set of constants (\emph{nonces})
  such that $\Sigma$, $X$, and $\nonces$ are pairwise disjoint. For
  $N\subseteq\nonces$, we define the set $\gterms_N(X)$ of
  \emph{terms} over $\Sigma\cup N\cup X$ inductively as usual: (1) If
  $t\in N\cup X$, then $t$ is a term. (2) If $f\in \Sigma$ is an
  $n$-ary function symbol in $\Sigma$ for some $n\ge 0$ and
  $t_1,\ldots,t_n$ are terms, then $f(t_1,\ldots,t_n)$ is a term.
\end{definition}

By $\equiv$ we denote the congruence relation on $\terms(X)$ induced
by the theory associated with $\Sigma$. For example, we have that
$\pi_1(\dec{\enc{\an{\str{a},\str{b}}}{\pub(k)}}{k})\equiv \str{a}$.

\begin{definition}[Ground Terms, Messages, Placeholders, Protomessages]\label{def:groundterms-messages-placeholders-protomessages}
  By $\gterms_N=\gterms_N(\emptyset)$, we denote the set of all terms
  over $\Sigma\cup N$ without variables, called \emph{ground terms}.
  The set $\messages$ of messages (over $\nonces$) is defined to be
  the set of ground terms $\gterms_{\nonces}$. 
  
  We define the set $V_{\text{process}} = \{\nu_1, \nu_2, \dots\}$ of
  variables (called placeholders). The set $\messages^\nu :=
  \gterms_{\nonces}(V_{\text{process}})$ is called the set of \emph{protomessages},
  i.e., messages that can contain placeholders.
\end{definition}

\begin{example}
  For example, $k\in \nonces$ and $\pub(k)$ are messages, where $k$
  typically models a private key and $\pub(k)$ the corresponding
  public key. For constants $a$, $b$, $c$ and the nonce $k\in
  \nonces$, the message $\enc{\an{a,b,c}}{\pub(k)}$ is interpreted to
  be the message $\an{a,b,c}$ (the sequence of constants $a$, $b$,
  $c$) encrypted by the public key $\pub(k)$.
\end{example}

\begin{definition}[Normal Form]
  Let $t$ be a term. The \emph{normal form} of $t$ is acquired by
  reducing the function symbols from left to right as far as possible
  using the equational theory shown in
  Figure~\ref{fig:equational-theory}. For a term $t$, we denote its
  normal form as $t\nf$.
\end{definition}

\begin{definition}[Pattern Matching]\label{def:pattern-matching}
  Let $\mi{pattern} \in \terms(\{*\})$ be a term containing the
  wildcard (variable $*$). We say that a term $t$ \emph{matches}
  $\mi{pattern}$ iff $t$ can be acquired from $\mi{pattern}$ by
  replacing each occurrence of the wildcard with an arbitrary term
  (which may be different for each instance of the wildcard). We write
  $t \sim \mi{pattern}$. For a sequence of patterns $\mi{patterns}$ we
  write $t \dot{\sim} \mi{patterns}$ to denote that $t$ matches at
  least one pattern in $\mi{patterns}$.

  For a term $t'$ we write $t'|\, \mi{pattern}$ to denote the term
  that is acquired from $t'$ by removing all immediate subterms of
  $t'$ that do not match $\mi{pattern}$. 
\end{definition}

\begin{example}
  For example, for a pattern $p = \an{\top,*}$ we have that $\an{\top,42} \sim p$, $\an{\bot,42} \not\sim p$, and \[\an{\an{\bot,\top},\an{\top,23},\an{\str{a},\str{b}},\an{\top,\bot}} |\, p = \an{\an{\top,23},\an{\top,\bot}}\ .\]
\end{example}

\begin{definition}[Variable Replacement]
  Let $N\subseteq \nonces$, $\tau \in \gterms_N(\{x_1,\ldots,x_n\})$,
  and $t_1,\ldots,t_n\in \gterms_N$. 

  By
  $\tau[t_1\!/\!x_1,\ldots,t_n\!/\!x_n]$ we denote the (ground) term obtained
  from $\tau$ by replacing all occurrences of $x_i$ in $\tau$ by
  $t_i$, for all $i\in \{1,\ldots,n\}$.
\end{definition}

\begin{definition}[Events and Protoevents]
  An \emph{event (over $\addresses$ and $\messages$)} is a term of the
  form $\an{a, f, m}$, for $a$, $f\in \addresses$ and $m \in
  \messages$, where $a$ is interpreted to be the receiver address and
  $f$ is the sender address. We denote by $\events$ the set of all
  events. Events over $\addresses$ and $\messages^\nu$ are called
  \emph{protoevents} and are denoted $\events^\nu$. By
  $2^{\events\an{}}$ (or $2^{\events^\nu\an{}}$, respectively) we
  denote the set of all sequences of (proto)events, including the
  empty sequence (e.g., $\an{}$, $\an{\an{a, f, m}, \an{a', f', m'},
    \dots}$, etc.). 
\end{definition}

\subsubsection{Atomic Processes, Systems and Runs} 

An atomic process takes its current state and an
event as input, and then (non-deterministi\-cally) outputs a new state
and a set of events.
\begin{definition}[Generic Atomic Processes and Systems]\label{def:atomic-process-and-process}
  A \emph{(generic) \ap} is a tuple $$p = (I^p, Z^p, R^p, s^p_0)$$ where
  $I^p \subseteq \addresses$, $Z^p \in \terms$ is a set of states,
  $R^p\subseteq (\events \times Z^p) \times (2^{\events^\nu\an{}}
  \times \terms(V_{\text{process}}))$ (input event and old state map to sequence of
  output events and new state), and $s^p_0\in Z^p$ is the initial
  state of $p$. For any new state $s$ and any sequence of nonces
  $(\eta_1, \eta_2, \dots)$ we demand that $s[\eta_1/\nu_1,
  \eta_2/\nu_2, \dots] \in Z^p$. A \emph{system} $\process$ is a
  (possibly infinite) set of \aps.
\end{definition}

\begin{definition}[Configurations]
  A \emph{configuration of a system $\process$} is a tuple $(S, E, N)$
  where the state of the system $S$ maps every atomic process
  $p\in \process$ to its current state $S(p)\in Z^p$, the sequence of
  waiting events $E$ is an infinite sequence\footnote{Here: Not in the
    sense of terms as defined earlier.} $(e_1, e_2, \dots)$ of events
  waiting to be delivered, and $N$ is an infinite sequence of nonces
  $(n_1, n_2, \dots)$.
\end{definition}

\begin{definition}[Concatenating sequences]
  For a term $a = \an{a_1, \dots, a_i}$ and a sequence $b = (b_1, b_2,
  \dots)$, we define the \emph{concatenation} as $a \cdot b := (a_1,
  \dots, a_i, b_1, b_2, \dots)$.
  
\end{definition}

\begin{definition}[Subtracting from Sequences]
  For a sequence $X$ and a set or sequence $Y$ we define $X \setminus
  Y$ to be the sequence $X$ where for each element in $Y$, a
  non-deterministically chosen occurence of that element in $X$ is
  removed.
\end{definition}

\begin{definition}[Processing Steps]\label{def:processing-step}
  A \emph{processing step of the system $\process$} is of the form
  \[(S,E,N) \xrightarrow[p \rightarrow E_{\text{out}}]{e_\text{in}
    \rightarrow p} (S', E', N')\]
  where
  \begin{enumerate}
  \item $(S,E,N)$ and $(S',E',N')$ are configurations of $\process$,
  \item $e_\text{in} = \an{a, f, m} \in E$ is an event,
  \item $p \in \process$ is a process,
  \item $E_{\text{out}}$ is a sequence (term) of events
  \end{enumerate}
  such that there exists 
  \begin{enumerate}
  \item a sequence (term)
    $E^\nu_{\text{out}} \subseteq 2^{\events^\nu\an{}}$ of protoevents,
  \item a term $s^\nu \in \gterms_{\nonces}(V_{\text{process}})$, 
  \item a sequence $(v_1, v_2, \dots, v_i)$ of all placeholders appearing in $E^\nu_{\text{out}}$ (ordered lexicographically),
  \item a sequence $N^\nu = (\eta_1, \eta_2, \dots, \eta_i)$ of the first $i$ elements in $N$ 
  \end{enumerate}
  with
  \begin{enumerate}
  \item $((e_{\text{in}}, S(p)), (E^\nu_{\text{out}}, s^\nu)) \in R^p$
    and $a \in I^p$,
  \item $E_{\text{out}} = E^\nu_{\text{out}}[m_1/v_1, \dots, m_i/v_i]$
  \item $S'(p) = s^\nu[m_1/v_1, \dots, m_i/v_i]$ and $S'(p') = S(p')$ for all $p' \neq p$
  \item $E' = E_{\text{out}} \cdot (E \setminus \{e_{\text{in}}\})$ 
  \item $N' = N \setminus N^\nu$ 
  \end{enumerate}
  We may omit the superscript and/or subscript of the arrow.
\end{definition} 
Intuitively, for a processing step, we select one of the processes in
$\process$, and call it with one of the events in the list of waiting
events $E$. In its output (new state and output events), we replace
any occurences of placeholders $\nu_x$ by ``fresh'' nonces from $N$
(which we then remove from $N$). The output events are then prepended
to the list of waiting events, and the state of the process is
reflected in the new configuration.

\begin{definition}[Runs]
  Let $\process$ be a system, $E^0$ be sequence of events, and $N^0$ be
  a sequence of nonces. A \emph{run $\rho$ of a system $\process$
    initiated by $E^0$ with nonces $N^0$} is a finite sequence of
  configurations $((S^0, E^0, N^0),\dots,$ $(S^n, E^n, N^n))$ or an infinite sequence
  of configurations $((S^0, E^0, N^0),\dots)$ such that $S^0(p) = s_0^p$ for
  all $p \in \process$ and $(S^i, E^i, N^i) \xrightarrow{} (S^{i+1},
  E^{i+1}, N^{i+1})$ for all $0 \leq i < n$ (finite run) or for all $i \geq 0$  
  (infinite run).

  We denote the state $S^n(p)$ of a process $p$ at the end of a run $\rho$ by $\rho(p)$.
\end{definition}

Usually, we will initiate runs with a set $E^0$ containing infinite
trigger events of the form $\an{a, a, \str{TRIGGER}}$ for each $a \in
\addresses$, interleaved by address.

\subsubsection{Atomic Dolev-Yao Processes}  We next define
atomic Dolev-Yao processes, for which we require that the
messages and states that they output can be computed (more
formally, derived) from the current input event and
state. For this purpose, we first define what it means to
derive a message from given messages.

\begin{definition}[Deriving Terms]
  Let $M$ be a set of ground terms. We say that \emph{a
    term $m$ can be derived from $M$ with placeholders $V$} if there
  exist $n\ge 0$, $m_1,\ldots,m_n\in M$, and $\tau\in
  \gterms_{\emptyset}(\{x_1,\ldots,x_n\} \cup V)$ such that $m\equiv
  \tau[m_1/x_1,\ldots,m_n/x_n]$. We denote by $d_V(M)$ the set of all
  messages that can be derived from $M$ with variables $V$.
\end{definition}
For example, $a\in d_{\{\}}(\{\enc{\an{a,b,c}}{\pub(k)}, k\})$.

\begin{definition}[Atomic Dolev-Yao Process] \label{def:adyp} An \emph{atomic Dolev-Yao process
    (or simply, a DY process)} is a tuple $p = (I^p, Z^p,$ $R^p,
  s^p_0)$ such that $(I^p, Z^p, R^p, s^p_0)$ is an atomic process and
  (1) $Z^p \subseteq \gterms_{\nonces}$ (and hence, $s^p_0\in
  \gterms_{\nonces}$), and (2) for all events $e \in \events$,
  sequences of protoevents $E$, $s\in \gterms_{\nonces}$, $s'\in
  \gterms_{\nonces}(V_{\text{process}})$, with $((e, s), (E, s')) \in R^p$ it holds
  true that $E$, $s' \in d_{V_{\text{process}}}(\{e,s\})$.
\end{definition}

\begin{definition}[Atomic Attacker Process]\label{def:atomicattacker}
  An \emph{(atomic) attacker process for a set of sender addresses
    $A\subseteq \addresses$} is an atomic DY process $p = (I, Z, R,
  s_0)$ such that for all events $e$, and $s\in \gterms_{\nonces}$ we
  have that $((e, s), (E,s')) \in R$ iff $s'=\an{e, E, s}$ and
  $E=\an{\an{a_1, f_1, m_1}, \dots, \an{a_n, f_n, m_n}}$ with $n \in
  \mathbb{N}$, $a_1,\dots,a_n\in \addresses$, $f_0,\dots,f_n\in A$,
  $m_1,\dots,m_n\in d_{V_{\text{process}}}(\{e,s\})$.
\end{definition}

\subsection{Scripts}
We define scripts, which model client-side scripting
technologies, such as JavaScript. Scripts are defined
similarly to DY processes.
\begin{definition}[Placeholders for Scripts]\label{def:placeholder-sp}
  By $V_{\text{script}} = \{\lambda_1, \dots\}$ we denote an infinite set of variables
  used in scripts.
\end{definition}

\begin{definition}[Scripts]\label{def:sp} A \emph{script} is a relation $R\subseteq \terms \times
  \terms(V_{\text{script}})$ such that for all $s \in \terms$, $s' \in \terms(V_{\text{script}})$ with
  $(s, s') \in R$ it follows that $s'\in d_{V_{\text{script}}}(s)$.
\end{definition}
A script is called by the browser which provides it with state
information (such as the script's last state and limited information
about the browser's state) $s$. The script then outputs a term $s'$,
which represents the new internal state and some command which is
interpreted by the browser. The term $s'$ may contain variables
$\lambda_1, \dots$ which the browser will replace by (otherwise
unused) placeholders $\nu_1,\dots$ which will be replaced by nonces
once the browser DY process finishes (effectively providing the script
with a way to get ``fresh'' nonces).

Similarly to an attacker process, we define the
\emph{attacker script} $\Rasp$: 
\begin{definition}[Attacker Script]
  The attacker script $\Rasp$ outputs everything that is derivable
  from the input, i.e., $\Rasp=\{(s, s')\mid s\in \terms, s'\in
  d_{V_{\text{script}}}(s)\}$.
\end{definition}

\subsection{Web System}\label{app:websystem}

The web infrastructure and web applications are formalized by what is
called a web system. A web system contains, among others, a (possibly
infinite) set of DY processes, modeling web browsers, web servers, DNS
servers, and attackers (which may corrupt other entities, such as
browsers).

\begin{definition}\label{def:websystem}
  A \emph{web system $\completewebsystem=(\websystem,
    \scriptset,\mathsf{script}, E^0)$} is a tuple with its
  components defined as follows:

  The first component, $\websystem$, denotes a system
  (a set of DY processes) and is partitioned into the
  sets $\mathsf{Hon}$, $\mathsf{Web}$, and $\mathsf{Net}$
  of honest, web attacker, and network attacker processes,
  respectively.  

  Every $p\in \mathsf{Web} \cup \mathsf{Net}$ is an
  attacker process for some set of sender addresses
  $A\subseteq \addresses$. For a web attacker $p\in
  \mathsf{Web}$, we require its set of addresses $I^p$ to
  be disjoint from the set of addresses of all other web
  attackers and honest processes, i.e., $I^p\cap I^{p'} =
  \emptyset$ for all $p' \in \mathsf{Hon} \cup
  \mathsf{Web}$. Hence, a web attacker cannot listen to
  traffic intended for other processes. Also, we require
  that $A=I^p$, i.e., a web attacker can only use sender
  addresses it owns. Conversely, a network attacker may
  listen to all addresses (i.e., no restrictions on $I^p$)
  and may spoof all addresses (i.e., the set $A$ may be
  $\addresses$).

  Every $p \in \mathsf{Hon}$ is a DY process which
  models either a \emph{web server}, a \emph{web browser},
  or a \emph{DNS server}, as further described in the
  following subsections. Just as for web attackers, we
  require that $p$ does not spoof sender addresses and that
  its set of addresses $I^p$ is disjoint from those of
  other honest processes and the web attackers. 

  The second component, $\scriptset$, is a finite set of
  scripts such that $\Rasp\in \scriptset$. The third
  component, $\mathsf{script}$, is an injective mapping
  from $\scriptset$ to $\mathbb{S}$, i.e., by
  $\mathsf{script}$ every $s\in \scriptset$ is assigned its
  string representation $\mathsf{script}(s)$. 

  Finally, $E^0$ is an  (infinite) sequence of events, containing an
  infinite number of events of the form $\an{a,a,\trigger}$
  for every $a \in \bigcup_{p\in \websystem} I^p$.

  A \emph{run} of $\completewebsystem$ is a run of
  $\websystem$ initiated by $E^0$.
\end{definition}

\section{Message and Data
  Formats}\label{app:message-data-formats}

We now provide some more details about data and message
formats that are needed for the formal treatment of the web
model and the analysis of BrowserID presented in the rest
of the appendix.

\subsection{Notations}\label{app:notation}

\begin{definition}[Sequence Notations]
  For a sequence $t = \an{t_1,\dots,t_n}$ and a set $s$ we
  use $t \subsetPairing s$ to say that $t_1,\dots,t_n \in
  s$.  We define $\left. x \inPairing t\right. \iff \exists
  i: \left. t_i = x\right.$.
  We write $t \plusPairing y$ to denote the sequence
  $\an{t_1,\dots,t_n,y}$.
  For a finite set $M$ with $M = \{m_1, \dots,m_n\}$ we use
  $\an{M}$ to denote the term of the form
  $\an{m_1,\dots,m_n}$. (The order of the elements does not
  matter; one is chosen arbitrarily.) 
\end{definition}

\begin{definition}\label{def:dictionaries}
  A \emph{dictionary over $X$ and $Y$} is a term of the
  form \[\an{\an{k_1, v_1}, \dots, \an{k_n,v_n}}\] where
  $k_1, \dots,k_n \in X$, $v_1,\dots,v_n \in Y$, and the
  keys $k_1, \dots,k_n$ are unique, i.e., $\forall i\neq j:
  k_i \neq k_j$. We call every term $\an{k_i,v_i}$, $i\in
  \{1,\ldots,n\}$, an \emph{element} of the dictionary with
  key $k_i$ and value $v_i$.  We often write $\left[k_1:
    v_1, \dots, k_i:v_i,\dots,k_n:v_n\right]$ instead of
  $\an{\an{k_1, v_1}, \dots, \an{k_n,v_n}}$. We denote the
  set of all dictionaries over $X$ and $Y$ by $\left[X
    \times Y\right]$.
\end{definition}
We note that the empty dictionary is equivalent to the
empty sequence, i.e.,  $[] = \an{}$.  Figure
\ref{fig:dictionaries} shows the short notation for
dictionary operations that will be used when describing the
browser atomic process. For a dictionary $z = \left[k_1:
  v_1, k_2: v_2,\dots, k_n:v_n\right]$ we write $k \in z$ to
say that there exists $i$ such that $k=k_i$. We write
$z[k_j] := v_j$ to extract elements. If $k \not\in z$, we
set $z[k] := \an{}$.

\begin{figure}[htb!]\centering
  \begin{align}
    \left[k_1: v_1, \dots, k_i:v_i,\dots,k_n:v_n\right][k_i] = v_i%
  \end{align}\vspace{-2.5em}
  \begin{align}
    \nonumber \left[k_1: v_1, \dots, k_{i-1}:v_{i-1},k_i: v_i, k_{i+1}:v_{i+1}\dots,k_n: v_n\right]-k_i =\\
         \left[k_1: v_1, \dots, k_{i-1}:v_{i-1},k_{i+1}:v_{i+1}\dots,k_n: v_n\right]
  \end{align}
  \caption{Dictionary operators with $1\le i\le n$}\label{fig:dictionaries}
\end{figure}

Given a term $t = \an{t_1,\dots,t_n}$, we can refer to any
subterm using a sequence of integers. The subterm is
determined by repeated application of the projection
$\pi_i$ for the integers $i$ in the sequence. We call such
a sequence a \emph{pointer}:

\begin{definition}\label{def:pointer}
  A \emph{pointer} is a sequence of non-negative
  integers. We write $\tau.\ptr{p}$ for the application of
  the pointer $\ptr{p}$ to the term $\tau$. This operator
  is applied from left to right. For pointers consisting of
  a single integer, we may omit the sequence braces for
  brevity.
\end{definition}

\begin{example}
  For the term $\tau = \an{a,b,\an{c,d,\an{e,f}}}$ and the
  pointer $\ptr{p} = \an{3,1}$, the subterm of $\tau$ at
  the position $\ptr{p}$ is $c =
  \proj{1}{\proj{3}{\tau}}$. Also, $\tau.3.\an{3,1} =
  \tau.3.\ptr{p} = \tau.3.3.1 = e$.
\end{example}

To improve readability, we try to avoid writing, e.g.,
$\compn{o}{2}$ or $\proj{2}{o}$ in this document. Instead,
we will use the names of the components of a sequence that
is of a defined form as pointers that point to the
corresponding subterms. E.g., if an \emph{Origin} term is
defined as $\an{\mi{host}, \mi{protocol}}$ and $o$ is an
Origin term, then we can write $\comp{o}{protocol}$ instead
of $\proj{2}{o}$ or $\compn{o}{2}$. See also
Example~\ref{ex:url-pointers}.

\subsection{URLs}\label{app:urls}

\begin{definition}\label{def:url}
  A \emph{URL} is a term of the form
  $$\an{\tUrl, \mi{protocol}, \mi{host}, \mi{path},
    \mi{parameters}, \mi{fragment}}$$ with $\mi{protocol}$
  $\in \{\http, \https\}$
  (for \textbf{p}lain (HTTP) and \textbf{s}ecure (HTTPS)),
  $\mi{host} \in \dns$,
  $\mi{path} \in \mathbb{S}$,
  $\mi{parameters} \in \dict{\mathbb{S}}{\terms}$,
  and $\mi{fragment} \in \terms$.
  The set of all valid URLs is $\urls$.
\end{definition}

The $\mi{fragment}$ part of a URL can be omitted when
writing the URL. Its value is then defined to be $\bot$.

\begin{example} \label{ex:url-pointers}
  For the URL $u = \an{\tUrl, a, b, c, d}$, $\comp{u}{protocol} =
  a$. If, in the algorithm described later, we say $\comp{u}{path} :=
  e$ then $u = \an{\tUrl, a, b, c, e}$ afterwards. 
\end{example}

\subsection{Origins}\label{app:origins}
\begin{definition} An \emph{origin} is a term of the form
  $\an{\mi{host}, \mi{protocol}}$ with $\mi{host} \in
  \dns$ and $\mi{protocol} \in \{\http, \https\}$. We write
  $\origins$ for the set of all origins.  
\end{definition}

\begin{example}
  For example, $\an{\str{FOO}, \https}$ is the HTTPS origin
  for the domain $\str{FOO}$, while $\an{\str{BAR}, \http}$
  is the HTTP origin for the domain $\str{BAR}$.
\end{example}
\subsection{Cookies}\label{app:cookies}

\begin{definition} A \emph{cookie} is a term of the form
  $\an{\mi{name}, \mi{content}}$ where $\mi{name} \in
  \terms$, and $\mi{content}$ is a term of the form
  $\an{\mi{value}, \mi{secure}, \mi{session},
    \mi{httpOnly}}$ where $\mi{value} \in \terms$,
  $\mi{secure}$, $\mi{session}$, $\mi{httpOnly} \in
  \{\True, \bot\}$. We write $\cookies$ for the set of all
  cookies and $\cookies^\nu$ for the set of all cookies
  where names and values are defined over $\terms(V)$.
\end{definition}

If the $\mi{secure}$ attribute of a cookie is set, the
browser will not transfer this cookie over unencrypted HTTP
connections. If the $\mi{session}$ flag is set, this cookie
will be deleted as soon as the browser is closed. The
$\mi{httpOnly}$ attribute controls whether JavaScript has
access to this cookie.

Note that cookies of the form described here are only
contained in HTTP(S) requests. In responses, only the
components $\mi{name}$ and $\mi{value}$ are transferred as
a pairing of the form $\an{\mi{name}, \mi{value}}$.

\subsection{HTTP Messages}\label{app:http-messages-full}
\begin{definition}
  An \emph{HTTP request} is a term of the form shown in
  (\ref{eq:default-http-request}). An \emph{HTTP response}
  is a term of the form shown in
  (\ref{eq:default-http-response}).
  \begin{align}
    \label{eq:default-http-request}
    & \hreq{ nonce=\mi{nonce}, method=\mi{method},
      xhost=\mi{host}, xpath=\mi{path},
      parameters=\mi{parameters}, headers=\mi{headers},
      xbody=\mi{body}
    } \\
    \label{eq:default-http-response}
    & \hresp{ nonce=\mi{nonce}, status=\mi{status},
      headers=\mi{headers}, xbody=\mi{body} }
  \end{align}
  The components are defined as follows:
  \begin{itemize}
  \item $\mi{nonce} \in \nonces$ serves to map each
    response to the corresponding request 
  \item $\mi{method} \in \methods$ is one of the HTTP
    methods.
  \item $\mi{host} \in \dns$ is the host name in the HOST
    header of HTTP/1.1.
  \item $\mi{path} \in \mathbb{S}$ is a string indicating
    the requested resource at the server side
  \item $\mi{status} \in \mathbb{S}$ is the HTTP status
    code (i.e., a number between 100 and 505, as defined by
    the HTTP standard)
  \item $\mi{parameters} \in
    \dict{\mathbb{S}}{\terms}$ contains URL parameters
  \item $\mi{headers} \in \dict{\mathbb{S}}{\terms}$,
    containing request/response headers. The dictionary
    elements are terms of one of the following forms: 
    \begin{itemize}
    \item $\an{\str{Origin}, o}$ where $o$ is an origin,
    \item $\an{\str{Set{\mhyphen}Cookie}, c}$ where $c$ is
      a sequence of cookies,
    \item $\an{\str{Cookie}, c}$ where $c \in,
      \dict{\mathbb{S}}{\terms}$ (note that in this header,
      only names and values of cookies are transferred),
    \item $\an{\str{Location}, l}$ where $l \in \urls$,
    \item $\an{\str{Referer}, r}$ where $r \in \urls$,
    \item $\an{\str{Strict{\mhyphen}Transport{\mhyphen}Security},\True}$,
    \item $\an{\str{Authorization}, \an{u,p}}$ where $u$, $p \in \mathbb{S}$,
    \item $\an{\str{ReferrerPolicy}, p}$ where $p \in \{\str{noreferrer}, \str{origin}\}$
    \end{itemize}
  \item $\mi{body} \in \terms$ in requests and responses. 
  \end{itemize}
  We write $\httprequests$/$\httpresponses$ for the set of
  all HTTP requests or responses, respectively.
\end{definition}

\begin{example}[HTTP Request and Response]
  \begin{align}
    \label{eq:ex-request}
    \nonumber \mi{r} := & \langle
                   \cHttpReq,
                   n_1,
                   \mPost,
                   \str{example.com},
                   \str{/show},
                   \an{\an{\str{index,1}}},\\ & \quad
                   [\str{Origin}: \an{\str{example.com, \https}}],
                   \an{\str{foo}, \str{bar}}
                \rangle \\
    \label{eq:ex-response} \mi{s} := & \hresp{ nonce=n_1,
      status=200,
      headers=\an{\an{\str{Set{\mhyphen}Cookie},\an{\an{\str{SID},\an{n_2,\bot,\bot,\True}}}}},
      xbody=\an{\str{somescript},x}}
  \end{align}
  \noindent
  An HTTP $\mGet$ request for the URL
  \url{http://example.com/show?index=1} is shown in
  (\ref{eq:ex-request}), with an Origin header and a body
  that contains $\an{\str{foo},\str{bar}}$. A possible
  response is shown in (\ref{eq:ex-response}), which
  contains an httpOnly cookie with name $\str{SID}$ and
  value $n_2$ as well as the string representation
  $\str{somescript}$ of the script
  $\mathsf{script}^{-1}(\str{somescript})$ (which should be
  an element of $\scriptset$) and its initial state
  $x$.
\end{example}

\subsubsection{Encrypted HTTP
  Messages.} \label{app:http-messages-encrypted-full}
For HTTPS, requests are encrypted using the public key of
the server.  Such a request contains an (ephemeral)
symmetric key chosen by the client that issued the
request. The server is supported to encrypt the response
using the symmetric key.

\begin{definition} An \emph{encrypted HTTP request} is of
  the form $\enc{\an{m, k'}}{k}$, where $k$, $k' \in
  \nonces$ and $m \in \httprequests$. The corresponding
  \emph{encrypted HTTP response} would be of the form
  $\encs{m'}{k'}$, where $m' \in \httpresponses$. We call
  the sets of all encrypted HTTP requests and responses
  $\httpsrequests$ or $\httpsresponses$, respectively.
\end{definition}

\begin{example}
  \begin{align}
    \label{eq:ex-enc-request} \ehreqWithVariable{r}{k'}{\pub(k_\text{example.com})} \\
    \label{eq:ex-enc-response} \ehrespWithVariable{s}{k'}
  \end{align} The term (\ref{eq:ex-enc-request}) shows an
  encrypted request (with $r$ as in
  (\ref{eq:ex-request})). It is encrypted using the public
  key $\pub(k_\text{example.com})$.  The term
  (\ref{eq:ex-enc-response}) is a response (with $s$ as in
  (\ref{eq:ex-response})). It is encrypted symmetrically
  using the (symmetric) key $k'$ that was sent in the
  request (\ref{eq:ex-enc-request}).
\end{example}

\subsection{DNS Messages}\label{app:dns-messages}
\begin{definition} A \emph{DNS request} is a term of the form
$\an{\cDNSresolve, \mi{domain}, \mi{n}}$ where $\mi{domain}$ $\in
\dns$, $\mi{n} \in \nonces$. We call the set of all DNS requests
$\dnsrequests$.
\end{definition}

\begin{definition} A \emph{DNS response} is a term of the form
$\an{\cDNSresolved, \mi{domain}, \mi{result}, \mi{n}}$ with $\mi{domain}$ $\in
\dns$, $\mi{result} \in
\addresses$, $\mi{n} \in \nonces$. We call the set of all DNS
responses $\dnsresponses$.
\end{definition}

DNS servers are supposed to include the nonce they received
in a DNS request in the DNS response that they send back so
that the party which issued the request can match it with
the request.

\subsection{DNS Servers}\label{app:DNSservers}

Here, we consider a flat DNS model in which DNS queries are
answered directly by one DNS server and always with the
same address for a domain. A full (hierarchical) DNS system
with recursive DNS resolution, DNS caches, etc.~could also
be modeled to cover certain attacks on the DNS system
itself.

\begin{definition}\label{def:dns-server}
  A \emph{DNS server} $d$ (in a flat DNS model) is modeled
  in a straightforward way as an atomic DY process
  $(I^d, \{s^d_0\}, R^d, s^d_0)$. It has a finite set of
  addresses $I^d$ and its initial (and only) state $s^d_0$
  encodes a mapping from domain names to addresses of the
  form
$$s^d_0=\langle\an{\str{domain}_1,a_1},\an{\str{domain}_2, a_2}, \ldots\rangle \ .$$ DNS
queries are answered according to this table (otherwise
ignored).
\end{definition}

\section{Detailed Description of the Browser Model}
\label{app:deta-descr-brows}
Following the informal description of the browser model in
Section~\ref{sec:web-browsers}, we now present a formal
model. We start by introducing some notation and
terminology. 

\subsection{Notation and Terminology (Web Browser State)}

Before we can define the state of a web browser, we first
have to define windows and documents. 

\begin{sloppypar}
  \begin{definition} A \emph{window} is a term of the form
    $w = \an{\mi{nonce}, \mi{documents}, \mi{opener}}$ with
    $\mi{nonce} \in \nonces$,
    $\mi{documents} \subsetPairing \documents$ (defined
    below), $\mi{opener} \in \nonces \cup \{\bot\}$ where
    $\comp{d}{active} = \True$ for exactly one
    $d \inPairing \mi{documents}$ if $\mi{documents}$ is
    not empty (we then call $d$ the \emph{active document
      of $w$}). We write $\windows$ for the set of all
    windows. We write $\comp{w}{activedocument}$ to denote
    the active document inside window $w$ if it exists and
    $\an{}$ else.
  \end{definition}
\end{sloppypar}

We will refer to the window nonce as \emph{(window)
  reference}.

The documents contained in a window term to the left of the
active document are the previously viewed documents
(available to the user via the ``back'' button) and the
documents in the window term to the right of the currently
active document are documents available via the ``forward''
button.

A window $a$ may have opened a top-level window $b$ (i.e.,
a window term which is not a subterm of a document
term). In this case, the \emph{opener} part of the term $b$
is the nonce of $a$, i.e., $\comp{b}{opener} =
\comp{a}{nonce}$.

\begin{sloppypar}
  \begin{definition} A \emph{document} $d$ is a term of the
    form
    \begin{align*}
      \an{\mi{nonce}, \mi{location}, \mi{headers}, \mi{referrer}, \mi{script},
      \mi{scriptstate},\mi{scriptinputs}, \mi{subwindows},
      \mi{active}}  
    \end{align*}
    where $\mi{nonce} \in \nonces$,
    $\mi{location} \in \urls$,
    $\mi{headers} \in \dict{\mathbb{S}}{\terms}$,
    $\mi{referrer} \in \urls \cup \{\bot\}$,
    $\mi{script} \in \terms$,
    $\mi{scriptstate} \in \terms$,
    $\mi{scriptinputs} \in \terms$,
    $\mi{subwindows} \subsetPairing \windows$,
    $\mi{active} \in \{\True, \bot\}$. A \emph{limited
      document} is a term of the form
    $\an{\mi{nonce}, \mi{subwindows}}$ with $\mi{nonce}$,
    $\mi{subwindows}$ as above. A window
    $w \inPairing \mi{subwindows}$ is called a
    \emph{subwindow} (of $d$). We write $\documents$ for
    the set of all documents. For a document term $d$ we
    write $d.\str{origin}$ to denote the origin of the
    document, i.e., the term
    $\an{d.\str{location}.\str{host},
      d.\str{location}.\str{protocol}} \in \origins$.
  \end{definition}
\end{sloppypar}

We will refer to the document nonce as \emph{(document)
  reference}.

We can now define the set of states of web browsers. Note
that we use the dictionary notation that we introduced in
Definition~\ref{def:dictionaries}.

\begin{definition} The
  \emph{set of states $Z^p$ of a web browser atomic process}
  $p$ consists of the terms of the form
  \begin{align*} \langle\mi{windows}, \mi{ids},
    \mi{secrets}, \mi{cookies}, \mi{localStorage},
    \mi{sessionStorage}, \mi{keyMapping},& \\\mi{sts},
    \mi{DNSaddress}, \mi{pendingDNS},
    \mi{pendingRequests}, \mi{isCorrupted}&\rangle
  \end{align*} where
  \begin{itemize}
  \item $\mi{windows} \subsetPairing \windows$,
  \item $\mi{ids} \subsetPairing \terms$,
  \item $\mi{secrets} \in \dict{\origins}{\terms}$,
  \item $\mi{cookies}$ is a dictionary over $\dns$ and
    sequences of $\cookies$, 
  \item $\mi{localStorage} \in \dict{\origins}{\terms}$,
  \item $\mi{sessionStorage} \in \dict{\mi{OR}}{\terms}$ for $\mi{OR} := \left\{\an{o,r}
    \middle|\, o \in \origins,\, r \in \nonces\right\}$,
  \item $\mi{keyMapping} \in \dict{\dns}{\terms}$,
  \item $\mi{sts} \subsetPairing \dns$,
  \item $\mi{DNSaddress} \in \addresses$,
  \item $\mi{pendingDNS} \in \dict{\nonces}{\terms}$,
  \item $\mi{pendingRequests} \in$ $\terms$,
  \item and $\mi{isCorrupted} \in \{\bot, \fullcorrupt,$ $
    \closecorrupt\}$.
  \end{itemize} 
\end{definition}

\begin{definition} For two window terms $w$ and $w'$ we
  write $w \windowChildOf w'$ if \\
  \[w \inPairing \comp{\comp{w'}{activedocument}}{subwindows}\text{\ .}\]
We write
  $\windowChildOfX$ for the transitive closure.
\end{definition}

In the following description of the web browser relation
$R^p$ we use the helper functions
$\mathsf{Subwindows}$, $\mathsf{Docs}$, $\mathsf{Clean}$,
$\mathsf{CookieMerge}$ and $\mathsf{AddCookie}$. 

Given a browser state $s$, $\mathsf{Subwindows}(s)$ denotes
the set of all pointers\footnote{Recall the definition of a
  pointer in Definition~\ref{def:pointer}.} to windows in
the window list $\comp{s}{windows}$, their active
documents, and (recursively) the subwindows of these
documents. We exclude subwindows of inactive documents and
their subwindows. With $\mathsf{Docs}(s)$ we denote the set
of pointers to all active documents in the set of windows
referenced by $\mathsf{Subwindows}(s)$.
\begin{definition} 
  For a browser state $s$ we denote by
  $\mathsf{Subwindows}(s)$ the minimal set of
  pointers that satisfies the
  following conditions: (1) For all windows $w \inPairing
  \comp{s}{windows}$ there is a $\ptr{p} \in
  \mathsf{Subwindows}(s)$ such that $\compn{s}{\ptr{p}} =
  w$. (2) For all $\ptr{p} \in \mathsf{Subwindows}(s)$, the
  active document $d$ of the window $\compn{s}{\ptr{p}}$
  and every subwindow $w$ of $d$ there is a pointer
  $\ptr{p'} \in \mathsf{Subwindows}(s)$ such that
  $\compn{s}{\ptr{p'}} = w$.

  Given a browser state $s$, the set $\mathsf{Docs}(s)$ of
  pointers to active documents is the minimal set such that
  for every $\ptr{p} \in \mathsf{Subwindows}(s)$, there is
  a pointer $\ptr{p'} \in \mathsf{Docs}(s)$ with
  $\compn{s}{\ptr{p'}} =
  \comp{\compn{s}{\ptr{p}}}{activedocument}$.
\end{definition}

By $\mathsf{Subwindows}^+(s)$ and $\mathsf{Docs}^+(s)$ we
denote the respective sets that also include the inactive
documents and their subwindows.

The function $\mathsf{Clean}$ will be used to determine
which information about windows and documents the script
running in the document $d$ has access to.
\begin{definition} Let $s$
  be a browser state and $d$
  a document. By $\mathsf{Clean}(s, d)$
  we denote the term that equals $\comp{s}{windows}$
  but with (1) all inactive documents removed (including
  their subwindows etc.), (2) all subterms that represent
  non-same-origin documents w.r.t.~$d$
  replaced by a limited document $d'$
  with the same nonce and the same subwindow list, and (3)
  the values of the subterms $\str{headers}$
  for all documents set to $\an{}$.
  (Note that non-same-origin documents on all levels are
  replaced by their corresponding limited document.)
\end{definition}

The function $\mathsf{CookieMerge}$ merges two sequences of
cookies together: When used in the browser,
$\mi{oldcookies}$ is the sequence of existing cookies for
some origin, $\mi{newcookies}$ is a sequence of new cookies
that was output by some script. The sequences are merged
into a set of cookies using an algorithm that is based on
the \emph{Storage Mechanism} algorithm described in
RFC6265.
\begin{definition} \label{def:cookiemerge} For a sequence
  of cookies (with pairwise different names)
  $\mi{oldcookies}$ and a sequence of cookies
  $\mi{newcookies}$, the set
  $\mathsf{CookieMerge}(\mi{oldcookies}, \mi{newcookies})$
  is defined by the following algorithm: From
  $\mi{newcookies}$ remove all cookies $c$ that have
  $c.\str{content}.\str{httpOnly} \equiv \True$. For any
  $c$, $c' \inPairing \mi{newcookies}$, $\comp{c}{name}
  \equiv \comp{c'}{name}$, remove the cookie that appears
  left of the other in $\mi{newcookies}$. Let $m$ be the
  set of cookies that have a name that either appears in
  $\mi{oldcookies}$ or in $\mi{newcookies}$, but not in
  both. For all pairs of cookies $(c_\text{old},
  c_\text{new})$ with $c_\text{old} \inPairing
  \mi{oldcookies}$, $c_\text{new} \inPairing
  \mi{newcookies}$, $\comp{c_\text{old}}{name} \equiv
  \comp{c_\text{new}}{name}$, add $c_\text{new}$ to $m$ if
  $\comp{\comp{c_\text{old}}{content}}{httpOnly} \equiv
  \bot$ and add $c_\text{old}$ to $m$ otherwise. The result
  of $\mathsf{CookieMerge}(\mi{oldcookies},
  \mi{newcookies})$ is $m$.
\end{definition}

The function $\mathsf{AddCookie}$ adds a cookie $c$
received in an HTTP response to the sequence of cookies
contained in the sequence $\mi{oldcookies}$. It is again
based on the algorithm described in RFC6265 but simplified
for the use in the browser model.
\begin{definition} \label{def:addcookie} For a sequence of cookies (with pairwise different
  names) $\mi{oldcookies}$ and a cookie $c$, the sequence
  $\mathsf{AddCookie}(\mi{oldcookies}, c)$ is defined by the
  following algorithm: Let $m := \mi{oldcookies}$. Remove
  any $c'$ from $m$ that has $\comp{c}{name} \equiv
  \comp{c'}{name}$. Append $c$ to $m$ and return $m$.
\end{definition}

The function $\mathsf{NavigableWindows}$ returns a set of
windows that a document is allowed to navigate. We closely
follow \cite{html5}, Section~5.1.4 for this definition.
\begin{definition} The set $\mathsf{NavigableWindows}(\ptr{w}, s')$
  is the set $\ptr{W} \subseteq
  \mathsf{Subwindows}(s')$ of pointers to windows that the
  active document in $\ptr{w}$ is allowed to navigate. The
  set $\ptr{W}$ is defined to be the minimal set such that
  for every $\ptr{w'}
  \in \mathsf{Subwindows}(s')$ the following is true: %
\begin{itemize}
\item If
  $\comp{\comp{\compn{s'}{\ptr{w}'}}{activedocument}}{origin}
  \equiv
  \comp{\comp{\compn{s'}{\ptr{w}}}{activedocument}}{origin}$
  (i.e., the active documents in $\ptr{w}$ and $\ptr{w'}$ are
  same-origin), then $\ptr{w'} \in \ptr{W}$, and
\item If ${\compn{s'}{\ptr{w}} \childof
    \compn{s'}{\ptr{w'}}}$ $\wedge$ $\nexists\, \ptr{w}''
  \in \mathsf{Subwindows}(s')$ with $\compn{s'}{\ptr{w}'}
  \childof \compn{s'}{\ptr{w}''}$ ($\ptr{w'}$ is a
  top-level window and $\ptr{w}$ is an ancestor window of
  $\ptr{w'}$), then $\ptr{w'} \in \ptr{W}$, and
\item If $\exists\, \ptr{p} \in \mathsf{Subwindows}(s')$
  such that $\compn{s'}{\ptr{w}'} \windowChildOfX
  \compn{s'}{\ptr{p}}$ \\$\wedge$
  $\comp{\comp{\compn{s'}{\ptr{p}}}{activedocument}}{origin}
  =
  \comp{\comp{\compn{s'}{\ptr{w}}}{activedocument}}{origin}$
  ($\ptr{w'}$ is not a top-level window but there is an
  ancestor window $\ptr{p}$ of $\ptr{w'}$ with an active
  document that has the same origin as the active document
  in $\ptr{w}$), then $\ptr{w'} \in \ptr{W}$, and
\item If $\exists\, \ptr{p} \in \mathsf{Subwindows}(s')$ such
  that $\comp{\compn{s'}{\ptr{w'}}}{opener} =
  \comp{\compn{s'}{\ptr{p}}}{nonce}$ $\wedge$ $\ptr{p} \in
  \ptr{W}$ ($\ptr{w'}$ is a top-level window---it has an
  opener---and $\ptr{w}$ is allowed to navigate the opener
  window of $\ptr{w'}$, $\ptr{p}$), then $\ptr{w'} \in
  \ptr{W}$. 
\end{itemize}
\end{definition}

\subsection{Description of the Web Browser Atomic
  Process}\label{app:descr-web-brows}
We will now describe the relation $R^p$ of a standard HTTP
browser $p$. We define $\left(\left(\an{\an{a,f,m}},
    s\right), \left(M, s'\right)\right)$ to belong to $R^p$
if\/f the non-deterministic algorithm presented below, when
given $\left(\an{a,f,m}, s\right)$ as input, terminates
with \textbf{stop}~$M$,~$s'$, i.e., with output $M$ and
$s'$. Recall that $\an{a,f,m}$ is an (input) event and $s$
is a (browser) state, $M$ is a sequence of (output)
protoevents, and $s'$ is a new (browser) state (potentially
with placeholders for nonces).

\paragraph{Notations.} The notation $\textbf{let}\ n \leftarrow N$ is used to
describe that $n$ is chosen non-de\-ter\-mi\-nis\-tic\-ally from the
set $N$.  We write $\textbf{for each}\ s \in M\ \textbf{do}$
to denote that the following commands (until \textbf{end
  for}) are repeated for every element in $M$, where the
variable $s$ is the current element. The order in which the
elements are processed is chosen non-deterministically. %
We will write, for example, 
\begin{algorithmic}
  \LetST{$x,y$}{$\an{\str{Constant},x,y} \equiv
    t$}{doSomethingElse}
\end{algorithmic} \setlength{\parindent}{1em}
for some variables $x,y$, a string
$\str{Constant}$, and some term $t$ to express that $x :=
\proj{2}{t}$, and $y := \proj{3}{t}$ if $\str{Constant}
\equiv \proj{1}{t}$ and if $|\an{\str{Constant},x,y}| =
|t|$,  and that otherwise
$x$ and $y$ are not set and doSomethingElse is executed.

\paragraph{Placeholders.} In several places throughout the
algorithms presented next we use placeholders to generate
``fresh'' nonces as described in our communication model
(see Definition~\ref{def:terms}).
Figure~\ref{fig:browser-placeholder-list} shows a list of
all placeholders used.

\begin{figure}[htb]
  \centering
  \begin{tabular}{|@{\hspace{1ex}}l@{\hspace{1ex}}|@{\hspace{1ex}}l@{\hspace{1ex}}|}\hline 
    \hfill Placeholder\hfill  &\hfill  Usage\hfill  \\\hline\hline
    $\nu_1$ & Algorithm~\ref{alg:browsermain}, new window nonces  \\\hline
    $\nu_2$ & Algorithm~\ref{alg:browsermain}, new HTTP request nonce   \\\hline
    $\nu_3$ & Algorithm~\ref{alg:browsermain}, lookup key for pending HTTP requests entry  \\\hline
    $\nu_4$ & Algorithm~\ref{alg:runscript}, new HTTP request nonce (multiple lines)  \\\hline
    $\nu_5$ & Algorithm~\ref{alg:runscript}, new subwindow nonce  \\\hline
    $\nu_6$ & Algorithm~\ref{alg:processresponse}, new HTTP request nonce  \\\hline
    $\nu_7$ & Algorithm~\ref{alg:processresponse}, new document nonce   \\\hline
    $\nu_8$ & Algorithm~\ref{alg:send}, lookup key for pending DNS entry  \\\hline
    $\nu_9$ & Algorithm~\ref{alg:getnavigablewindow}, new window nonce  \\\hline
    $\nu_{10}, \dots$ & Algorithm~\ref{alg:runscript}, replacement for placeholders in script output   \\\hline

  \end{tabular}
  
  \caption{List of placeholders used in browser algorithms}
  \label{fig:browser-placeholder-list}
\end{figure}

Before we describe the main browser algorithm, we first
define some functions.

\subsubsection{Functions} \label{app:proceduresbrowser} In
the description of the following functions we use $a$,
$f$, $m$, and $s$ as read-only global input
variables. All other variables are local variables or
arguments.

The following function, $\mathsf{GETNAVIGABLEWINDOW}$, is
called by the browser to determine the window that is
\emph{actually} navigated when a script in the window
$s'.\ptr{w}$ provides a window reference for navigation
(e.g., for opening a link). When it is given a window
reference (nonce) $\mi{window}$,
this function returns a pointer to a
selected window term in $s'$:
\begin{itemize}
\item If $\mi{window}$ is the string $\wBlank$, a new
  window is created and a pointer to that window is
  returned.
\item If $\mi{window}$ is a nonce (reference) and there is
  a window term with a reference of that value in the
  windows in $s'$, a pointer $\ptr{w'}$ to that window term
  is returned, as long as the window is navigable by the
  current window's document (as defined by
  $\mathsf{NavigableWindows}$ above).
\end{itemize}
In all other cases, $\ptr{w}$ is returned instead (the
script navigates its own window).
\captionof{algorithm}{\label{alg:getnavigablewindow}
  Determine window for navigation}
\begin{algorithmic}[1]
  \Function{$\mathsf{GETNAVIGABLEWINDOW}$}{$\ptr{w}$, $\mi{window}$, $\mi{noreferrer}$, $s'$}
    \If{$\mi{window} \equiv \wBlank$} \Comment{Open a new window when $\wBlank$ is used}
      \If{$\mi{noreferrer} \equiv \True$}
        \Let{$w'$}{$\an{\nu_9, \an{}, \bot}$}
      \Else
        \Let{$w'$}{$\an{\nu_9, \an{}, \comp{\compn{s'}{\ptr{w}}}{nonce} }$}
      \EndIf
      \Append{$w'$}{$\comp{s'}{windows}$}  \breakalgohook{2}\textbf{and} let
      $\ptr{w}'$ be a pointer to this new element in $s'$
      \State \Return{$\ptr{w}'$}
    \EndIf
    \LetNDST{$\ptr{w}'$}{$\mathsf{NavigableWindows}(\ptr{w},
      s')$}{$\comp{\compn{s'}{\ptr{w}'}}{nonce} \equiv
      \mi{window}$\breakalgohook{1}}{\textbf{return} $\ptr{w}$} %
    \State \Return{$\ptr{w'}$}
  \EndFunction
\end{algorithmic} \setlength{\parindent}{1em}

The following function takes a window reference as input
and returns a pointer to a window as above, but it checks
only that the active documents in both windows are
same-origin. It creates no new windows.
\captionof{algorithm}{\label{alg:getwindow} Determine same-origin window}
\begin{algorithmic}[1]
  \Function{$\mathsf{GETWINDOW}$}{$\ptr{w}$, $\mi{window}$, $s'$}
    \LetNDST{$\ptr{w}'$}{$\mathsf{Subwindows}(s')$}{$\comp{\compn{s'}{\ptr{w}'}}{nonce} \equiv \mi{window}$\breakalgohook{1}}{\textbf{return} $\ptr{w}$} %
    \If{
      $\comp{\comp{\compn{s'}{\ptr{w}'}}{activedocument}}{origin}
      \equiv
      \comp{\comp{\compn{s'}{\ptr{w}}}{activedocument}}{origin}$
    }
      \State \Return{$\ptr{w}'$}
    \EndIf
    \State \Return{$\ptr{w}$}
  \EndFunction
\end{algorithmic} \setlength{\parindent}{1em}

The next function is used to stop any pending
requests for a specific window. From the pending requests
and pending DNS requests it removes any requests with the
given window reference $n$.
\captionof{algorithm}{\label{alg:cancelnav} Cancel pending requests for given window}
\begin{algorithmic}[1]
  \Function{$\mathsf{CANCELNAV}$}{$n$, $s'$}
    \State \textbf{remove all} $\an{n, \mi{req}, \mi{key}, \mi{f}}$ \textbf{ from } $\comp{s'}{pendingRequests}$ \textbf{for any} $\mi{req}$, $\mi{key}$, $\mi{f}$
    \State \textbf{remove all} $\an{x, \an{n, \mi{message}, \mi{url}}}$ \textbf{ from } $\comp{s'}{pendingDNS}$\breakalgohook{1} \textbf{for any} $\mi{x}$, $\mi{message}$, $\mi{url}$
    \State \Return{$s'$}
  \EndFunction
\end{algorithmic} \setlength{\parindent}{1em}

The following function takes an HTTP request $\mi{message}$
as input, adds cookie and origin headers to the message,
creates a DNS request for the hostname given in the request
and stores the request in $\comp{s'}{pendingDNS}$ until the
DNS resolution finishes. For normal HTTP requests,
$\mi{reference}$ is a window reference. For \xhrs,
$\mi{reference}$ is a value of the form
$\an{\mi{document}, \mi{nonce}}$ where $\mi{document}$ is a
document reference and $\mi{nonce}$ is some nonce that was
chosen by the script that initiated the request. $\mi{url}$
contains the full URL of the request (this is mainly used
to retrieve the protocol that should be used for this
message, and to store the fragment identifier for use
after the document was loaded). $\mi{origin}$ is the origin
header value that is to be added to the HTTP request.

\captionof{algorithm}{\label{alg:send} Prepare headers, do DNS resolution, save message }
\begin{algorithmic}[1]
  \Function{$\mathsf{SEND}$}{$\mi{reference}$, $\mi{message}$, $\mi{url}$, $\mi{origin}$, $\mi{referrer}$, $\mi{referrerPolicy}$, $s'$}
    \If{$\comp{\mi{message}}{host} \inPairing \comp{s'}{sts}$}
      \Let{$\mi{url}.\str{protocol}$}{$\https$}
    \EndIf
    \Let{ $\mi{cookies}$}{$\langle\{\an{\comp{c}{name}, \comp{\comp{c}{content}}{value}} | c\inPairing \comp{s'}{cookies}\left[\comp{\mi{message}}{host}\right]$} \label{line:assemble-cookies-for-request} \breakalgohook{1} $\wedge \left(\comp{\comp{c}{content}}{secure} \implies \left(\mi{url}.\str{protocol} = \https\right)\right) \}\rangle$ \label{line:cookie-rules-http}
    \Let{$\comp{\mi{message}}{headers}[\str{Cookie}]$}{$\mi{cookies}$}
    \If{$\mi{origin} \not\equiv \bot$}
      \Let{$\comp{\mi{message}}{headers}[\str{Origin}]$}{$\mi{origin}$}
    \EndIf
    \If{$\mi{referrerPolicy} \equiv \str{noreferrer}$} 
      \Let{$\mi{referrer}$}{$\bot$}
    \EndIf
    \If{$\mi{referrer} \not\equiv \bot$}
      \If{$\mi{referrerPolicy} \equiv \str{origin}$} 
        \Let{$\mi{referrer}$}{$\an{\cUrl, \mi{referrer}.\str{protocol}, \mi{referrer}.\str{host}, \str{/}, \an{}, \bot}$} \Comment{Referrer stripped down to origin.}
      \EndIf
      \Let{$\mi{referrer}.\str{fragment}$}{$\bot$} \Comment{Browsers do not send fragment identifiers in the Referer header.}
      \Let{$\comp{\mi{message}}{headers}[\str{Referer}]$}{$\mi{referrer}$}
    \EndIf
    \Let{$\comp{s'}{pendingDNS}[\nu_8]$}{$\an{\mi{reference},
        \mi{message}, \mi{url}}$} \label{line:add-to-pendingdns}
    \State \textbf{stop} $\an{\an{\comp{s'}{DNSaddress},a,
    \an{\cDNSresolve, \mi{message}.\str{host}, \nu_8}}}$, $s'$
  \EndFunction
\end{algorithmic} \setlength{\parindent}{1em}
\noindent

The following functions navigate a window forward or
backward. More precisely, they deactivate one document and
activate that document's succeeding document or preceding
document, respectively. If no such successor/predecessor exists, the
functions do not change the state.

\captionof{algorithm}{\label{alg:navback} Navigate a window backward }
\begin{algorithmic}[1]
  \Function{$\mathsf{NAVBACK}$}{$\ptr{w}$, $s'$}
      \If{$\exists\, \ptr{j} \in
        \mathbb{N}, \ptr{j} > 1$ \textbf{such that}
        $\comp{\compn{\comp{\compn{s'}{\ptr{w'}}}{documents}}{\ptr{j}}}{active}
        \equiv \True$} %
        \Let{$\comp{\compn{\comp{\compn{s'}{\ptr{w'}}}{documents}}{\ptr{j}}}{active}$}{$\bot$}
        \Let{$\comp{\compn{\comp{\compn{s'}{\ptr{w'}}}{documents}}{(\ptr{j}-1)}}{active}$}{$\True$}
        \Let{$s'$}{$\mathsf{CANCELNAV}(\comp{\compn{s'}{\ptr{w}'}}{nonce},
        s')$}
        \EndIf
  \EndFunction
\end{algorithmic} \setlength{\parindent}{1em}
\noindent

\captionof{algorithm}{\label{alg:navforward} Navigate a window forward }
\begin{algorithmic}[1]
  \Function{$\mathsf{NAVFORWARD}$}{$\ptr{w}$, $s'$}
        \If{$\exists\, \ptr{j} \in \mathbb{N} $ \textbf{such that} $\comp{\compn{\comp{\compn{s'}{\ptr{w'}}}{documents}}{\ptr{j}}}{active} \equiv \True$ \breakalgohook{3} $\wedge$  $\compn{\comp{\compn{s'}{\ptr{w'}}}{documents}}{(\ptr{j}+1)} \in \mathsf{Documents}$} %
          \Let{$\comp{\compn{\comp{\compn{s'}{\ptr{w'}}}{documents}}{\ptr{j}}}{active}$}{$\bot$}
          \Let{$\comp{\compn{\comp{\compn{s'}{\ptr{w'}}}{documents}}{(\ptr{j}+1)}}{active}$}{$\True$}
          \Let{$s'$}{$\mathsf{CANCELNAV}(\comp{\compn{s'}{\ptr{w}'}}{nonce}, s')$}
        \EndIf
  \EndFunction
\end{algorithmic} \setlength{\parindent}{1em}
\noindent

The function $\mathsf{RUNSCRIPT}$ performs a script
execution step of the script in the document
$\compn{s'}{\ptr{d}}$ (which is part of the window
$\compn{s'}{\ptr{w}}$). A new script and document state is
chosen according to the relation defined by the script and
the new script and document state is saved. Afterwards, the
$\mi{command}$ that the script issued is interpreted. 

\captionof{algorithm}{\label{alg:runscript} Execute a script}
\begin{algorithmic}[1]
  \Function{$\mathsf{RUNSCRIPT}$}{$\ptr{w}$, $\ptr{d}$, $s'$}
    \Let{$\mi{tree}$}{$\mathsf{Clean}(s', \compn{s'}{\ptr{d}})$} \label{line:clean-tree}

    \Let{$\mi{cookies}$}{$\langle\{\an{\comp{c}{name}, \comp{\comp{c}{content}}{value}} | c \inPairing \comp{s'}{cookies}\left[  \comp{\comp{\compn{s'}{\ptr{d}}}{origin}}{host}  \right]$
     \breakalgohook{1} $\wedge\,\comp{\comp{c}{content}}{httpOnly} = \bot$ \breakalgohook{1} $\wedge\,\left(\comp{\comp{c}{content}}{secure} \implies \left(\comp{\comp{\compn{s'}{\ptr{d}}}{origin}}{protocol} \equiv \https\right)\right) \}\rangle$} \label{line:assemble-cookies-for-script}
    \LetND{$\mi{tlw}$}{$\comp{s'}{windows}$ \textbf{such that} $\mi{tlw}$ is the top-level window containing $\ptr{d}$} 
    \Let{$\mi{sessionStorage}$}{$\comp{s'}{sessionStorage}\left[\an{\comp{\compn{s'}{\ptr{d}}}{origin}, \comp{\mi{tlw}}{nonce}}\right]$} %
    \Let{$\mi{localStorage}$}{$\comp{s'}{localStorage}\left[\comp{\compn{s'}{\ptr{d}}}{origin}\right]$}
    \Let{$\mi{secrets}$}{$\comp{s'}{secrets}\left[\comp{\compn{s'}{\ptr{d}}}{origin}\right]$} \label{line:browser-secrets}
    \LetND{$R$}{$\mathsf{script}^{-1}(\comp{\compn{s'}{\ptr{d}}}{script})$} %
    \Let{$\mi{in}$}{$\langle\mi{tree}$, $\comp{\compn{s'}{\ptr{d}}}{nonce}, \comp{\compn{s'}{\ptr{d}}}{scriptstate}$, $\comp{\compn{s'}{\ptr{d}}}{scriptinputs}$, $\mi{cookies},$ \breakalgohook{1}  $\mi{localStorage}$, $\mi{sessionStorage}$, $\comp{s'}{ids}$, $\mi{secrets}\rangle$}\label{line:browser-scriptinputs}
    \LetND{$\mi{state}'$}{$\terms(V)$, \breakalgohook{1}
      $\mi{cookies}' \gets \mathsf{Cookies}^\nu$, \breakalgohook{1}
      $\mi{localStorage}' \gets \terms(V)$,\breakalgohook{1}
      $\mi{sessionStorage}' \gets \terms(V)$,\breakalgohook{1}
      $\mi{command} \gets \terms(V)$, \breakalgohook{1} 
      $\mi{out}^\lambda := \an{\mi{state}', \mi{cookies}', \mi{localStorage}',$ $\mi{sessionStorage}', \mi{command}}$
      \breakalgohook{1} \textbf{such that} $(\mi{in}, \mi{out}^\lambda) \in R$}  \label{line:trigger-script} 
    \Let{$\mi{out}$}{$\mi{out}^\lambda[\nu_{10}/\lambda_1, \nu_{11}/\lambda_2, \dots]$}

    \Let{$\comp{s'}{cookies}\left[\comp{\comp{\compn{s'}{\ptr{d}}}{origin}}{host}\right]$\breakalgohook{1}}{$\langle\mathsf{CookieMerge}(\comp{s'}{cookies}\left[\comp{\comp{\compn{s'}{\ptr{d}}}{origin}}{host}\right]$, $\mi{cookies}')\rangle$} \label{line:cookiemerge}
    \Let{$\comp{s'}{localStorage}\left[\comp{\compn{s'}{\ptr{d}}}{origin}\right]$}{$\mi{localStorage}'$}
    \Let{$\comp{s'}{sessionStorage}\left[\an{\comp{\compn{s'}{\ptr{d}}}{origin}, \comp{\mi{tlw}}{nonce}}\right]$}{$\mi{sessionStorage}'$}
    \Let{$\comp{\compn{s'}{\ptr{d}}}{scriptstate}$}{$state'$}
    \Switch{$\mi{command}$}
      \Case{$\an{\tHref, \mi{url},
          \mi{hrefwindow}, \mi{noreferrer}}$}
      \Let{$\ptr{w}'$}{$\mathsf{GETNAVIGABLEWINDOW}$($\ptr{w}$,
        $\mi{hrefwindow}$, $\mi{noreferrer}$, $s'$)} 
      \Let{$\mi{req}$}{$\hreq{ nonce=\nu_4, 
          method=\mGet, host=\comp{\mi{url}}{host},
          path=\comp{\mi{url}}{path},
          headers=\an{},
          parameters=\comp{\mi{url}}{parameters}, body=\an{}
        }$}
      \If{$\mi{noreferrer} \equiv \True$}
        \Let{$\mi{referrerPolicy}$}{$\str{noreferrer}$}
      \Else
        \Let{$\mi{referrerPolicy}$}{$\compn{s'}{\ptr{d}}.\str{headers}[\str{ReferrerPolicy}]$}
      \EndIf
      \Let{$s'$}{$\mathsf{CANCELNAV}(\comp{\compn{s'}{\ptr{w}'}}{nonce}, s')$}
      \State \textsf{SEND}($\comp{\compn{s'}{\ptr{w}'}}{nonce}$, $\mi{req}$, $\mi{url}$, $\bot$, $\mi{referrer}$, $\mi{referrerPolicy}$, $s'$) \label{line:send-href}
      \EndCase
      \Case{$\an{\tIframe, \mi{url}, \mi{window}}$}
        \Let{$\ptr{w}'$}{$\mathsf{GETWINDOW}(\ptr{w}, \mi{window}, s')$}
        \Let{$\mi{req}$}{$\hreq{
            nonce=\nu_4,
            method=\mGet,
            host=\comp{\mi{url}}{host},
            path=\comp{\mi{url}}{path},
            headers=\an{},
            parameters=\comp{\mi{url}}{parameters},
            body=\an{}
          }$}
        \Let{$\mi{referrer}$}{$s'.\ptr{w}'.\str{activedocument}.\str{location}$}
        \Let{$\mi{referrerPolicy}$}{$\compn{s'}{\ptr{d}}.\str{headers}[\str{ReferrerPolicy}]$}
        \Let{$w'$}{$\an{\nu_5, \an{}, \bot}$}
        \Let{$\comp{\comp{\compn{s'}{\ptr{w}'}}{activedocument}}{subwindows}$\breakalgohook{3}}{ $\comp{\comp{\compn{s'}{\ptr{w}'}}{activedocument}}{subwindows} \plusPairing w'$}
        \State \textsf{SEND}($\nu_5$, $\mi{req}$, $\mi{url}$, $\bot$, $\mi{referrer}$, $\mi{referrerPolicy}$, $s'$) \label{line:send-iframe}
      \EndCase
      \Case{$\an{\tForm, \mi{url}, \mi{method}, \mi{data}, \mi{hrefwindow}}$}
        \If{$\mi{method} \not\in \{\mGet, \mPost\}$} \footnote{The working draft for HTML5 allowed for DELETE and PUT methods in HTML5 forms. However, these have since been removed. See \url{http://www.w3.org/TR/2010/WD-html5-diff-20101019/\#changes-2010-06-24}.}
          \State \textbf{stop} $\an{}$, $s'$
        \EndIf
        \Let{$\ptr{w}'$}{$\mathsf{GETNAVIGABLEWINDOW}$($\ptr{w}$, $\mi{hrefwindow}$, $\bot$, $s'$)}
        \If{$\mi{method} = \mGet$}
          \Let{$\mi{body}$}{$\an{}$}
          \Let{$\mi{parameters}$}{$\mi{data}$}
          \Let{$\mi{origin}$}{$\bot$}
        \Else
          \Let{$\mi{body}$}{$\mi{data}$}
          \Let{$\mi{parameters}$}{$\comp{\mi{url}}{parameters}$}
          \Let{$\mi{origin}$}{$\comp{\compn{s'}{\ptr{d}}}{origin}$}
        \EndIf
        \Let{$\mi{req}$}{$\hreq{
            nonce=\nu_4,
            method=\mi{method},
            host=\comp{\mi{url}}{host},
            path=\comp{\mi{url}}{path},
            headers=\an{},
            parameters=\mi{parameters},
            xbody=\mi{body}
          }$}
        \Let{$\mi{referrer}$}{$\comp{\compn{s'}{\ptr{d}}}{location}$}
        \Let{$\mi{referrerPolicy}$}{$\compn{s'}{\ptr{d}}.\str{headers}[\str{ReferrerPolicy}]$}
        \Let{$s'$}{$\mathsf{CANCELNAV}(\comp{\compn{s'}{\ptr{w}'}}{nonce}, s')$}
        \State \textsf{SEND}($\comp{\compn{s'}{\ptr{w}'}}{nonce}$, $\mi{req}$, $\mi{url}$, $\mi{origin}$, $\mi{referrer}$, $\mi{referrerPolicy}$, $s'$) \label{line:send-form}
      \EndCase
      \Case{$\an{\tSetScript, \mi{window}, \mi{script}}$}
        \Let{$\ptr{w}'$}{$\mathsf{GETWINDOW}(\ptr{w}, \mi{window}, s')$}
        \Let{$\comp{\comp{\compn{s'}{\ptr{w}'}}{activedocument}}{script}$}{$\mi{script}$}
        \State \textbf{stop} $\an{}$, $s'$
      \EndCase
      \Case{$\an{\tSetScriptState, \mi{window}, \mi{scriptstate}}$}
        \Let{$\ptr{w}'$}{$\mathsf{GETWINDOW}(\ptr{w}, \mi{window}, s')$}
        \Let{$\comp{\comp{\compn{s'}{\ptr{w}'}}{activedocument}}{scriptstate}$}{$\mi{scriptstate}$}
        \State \textbf{stop} $\an{}$, $s'$
      \EndCase
      \Case{$\an{\tXMLHTTPRequest, \mi{url}, \mi{method}, \mi{data}, \mi{xhrreference}}$}
        \If{$\mi{method} \in \{\mConnect, \mTrace, \mTrack\} \wedge \mi{xhrreference} \not\in \{\nonces, \bot\}$} 
          \State \textbf{stop} $\an{}$, $s'$
        \EndIf
        \If{$\comp{\mi{url}}{host} \not\equiv \comp{\comp{\compn{s'}{\ptr{d}}}{origin}}{host}$ \breakalgohook{3} $\vee$ $\mi{url} \not\equiv \comp{\comp{\compn{s'}{\ptr{d}}}{origin}}{protocol}$} 
          \State \textbf{stop} $\an{}$, $s'$
        \EndIf
        \If{$\mi{method} \in \{\mGet, \mHead\}$}
          \Let{$\mi{data}$}{$\an{}$}
          \Let{$\mi{origin}$}{$\bot$}
        \Else
          \Let{$\mi{origin}$}{$\comp{\compn{s'}{\ptr{d}}}{origin}$}
        \EndIf
        \Let{$\mi{req}$}{$\hreq{
            nonce=\nu_4,
            method=\mi{method},
            host=\comp{\mi{url}}{host},
            path=\comp{\mi{url}}{path},
            headers={},
            parameters=\comp{\mi{url}}{parameters},
            xbody=\mi{data}
          }$}
        \Let{$\mi{referrer}$}{$\comp{\compn{s'}{\ptr{d}}}{location}$}
        \Let{$\mi{referrerPolicy}$}{$\compn{s'}{\ptr{d}}.\str{headers}[\str{ReferrerPolicy}]$}
        \State \textsf{SEND}($\an{\comp{\compn{s'}{\ptr{d}}}{nonce}, \mi{xhrreference}}$, $\mi{req}$, $\mi{url}$, $\mi{origin}$, $\mi{referrer}$, $\mi{referrerPolicy}$, $s'$)\label{line:send-xhr}
      \EndCase
      \Case{$\an{\tBack, \mi{window}}$} \footnote{Note that
        navigating a window using the back/forward buttons
        does not trigger a reload of the affected
        documents. While real world browser may chose to
        refresh a document in this case, we assume that the
        complete state of a previously viewed document is
        restored. A reload can be triggered
        non-deterministically at any point (in the main algorithm).}
      \Let{$\ptr{w}'$}{$\mathsf{GETNAVIGABLEWINDOW}$($\ptr{w}$,
        $\mi{window}$, $\bot$, $s'$)} 
        \State $\mathsf{NAVBACK}$($\ptr{w}$, $s'$)
        \State \textbf{stop} $\an{}$, $s'$
      \EndCase
      \Case{$\an{\tForward, \mi{window}}$}
        \Let{$\ptr{w}'$}{$\mathsf{GETNAVIGABLEWINDOW}$($\ptr{w}$, $\mi{window}$, $\bot$, $s'$)}
        \State $\mathsf{NAVFORWARD}$($\ptr{w}$, $s'$)
        \State \textbf{stop} $\an{}$, $s'$
      \EndCase
      \Case{$\an{\tClose, \mi{window}}$}
        \Let{$\ptr{w}'$}{$\mathsf{GETNAVIGABLEWINDOW}$($\ptr{w}$, $\mi{window}$, $\bot$, $s'$)}
        \State \textbf{remove} $\compn{s'}{\ptr{w'}}$ from the sequence containing it 
        \State \textbf{stop} $\an{}$, $s'$
      \EndCase

      \Case{$\an{\tPostMessage, \mi{window}, \mi{message}, \mi{origin}}$}
        \LetND{$\ptr{w}'$}{$\mathsf{Subwindows}(s')$ \textbf{such that} $\comp{\compn{s'}{\ptr{w}'}}{nonce} \equiv \mi{window}$} %
        \If{$\exists \ptr{j} \in \mathbb{N}$ \textbf{such that} $\comp{\compn{\comp{\compn{s'}{\ptr{w'}}}{documents}}{\ptr{j}}}{active} \equiv \True$ \breakalgohook{3} $\wedge  (\mi{origin} \not\equiv \bot \implies \comp{\compn{\comp{\compn{s'}{\ptr{w'}}}{documents}}{\ptr{j}}}{origin} \equiv \mi{origin})$}    \label{line:append-pm-to-scriptinputs-condition} %
        \Let{$\comp{\compn{\comp{\compn{s'}{\ptr{w'}}}{documents}}{\ptr{j}}}{scriptinputs}$\breakalgohook{4}}{ $\comp{\compn{\comp{\compn{s'}{\ptr{w'}}}{documents}}{\ptr{j}}}{scriptinputs}$ \breakalgohook{4} $\plusPairing$
         $\an{\tPostMessage, \comp{\compn{s'}{\ptr{w}}}{nonce}, \comp{\compn{s'}{\ptr{d}}}{origin}, \mi{message}}$} \label{line:append-pm-to-scriptinputs}
        \EndIf
        \State \textbf{stop} $\an{}$, $s'$
      \EndCase
      \Case{else}
        \State \textbf{stop} $\an{}$, $s'$
      \EndCase
    \EndSwitch
  \EndFunction
\end{algorithmic} \setlength{\parindent}{1em}

The function $\mathsf{PROCESSRESPONSE}$ is responsible for
processing an HTTP response ($\mi{response}$) that was
received as the response to a request ($\mi{request}$) that
was sent earlier. In $\mi{reference}$, either a window or a
document reference is given (see explanation for
Algorithm~\ref{alg:send} above). $\mi{requestUrl}$ contains
the URL used when retrieving the document.

The function first saves any cookies that were contained in
the response to the browser state, then checks whether a
redirection is requested (Location header). If that is not
the case, the function creates a new document (for normal
requests) or delivers the contents of the response to the
respective receiver (for \xhr responses).
\captionof{algorithm}{\label{alg:processresponse} Process an HTTP response}
\begin{algorithmic}[1]
\Function{$\mathsf{PROCESSRESPONSE}$}{$\mi{response}$, $\mi{reference}$, $\mi{request}$, $\mi{requestUrl}$, $s'$}
  \If{$\mathtt{Set{\mhyphen}Cookie} \in
    \comp{\mi{response}}{headers}$}
    \For{\textbf{each} $c \inPairing \comp{\mi{response}}{headers}\left[\mathtt{Set{\mhyphen}Cookie}\right]$, $c \in \mathsf{Cookies}$}
      \Let{$\comp{s'}{cookies}\left[\mi{request}.\str{host}\right]$\breakalgohook{3}}{$\mathsf{AddCookie}(\comp{s'}{cookies}\left[\mi{request}.\str{host}\right], c)$} \label{line:set-cookie}
    \EndFor
  \EndIf  
  \If{$\mathtt{Strict{\mhyphen}Transport{\mhyphen}Security} \in \comp{\mi{response}}{headers}$ $\wedge$ $\mi{requestUrl}.\str{protocol} \equiv \https$}
    \Append{$\comp{\mi{request}}{host}$}{$\comp{s'}{sts}$}
  \EndIf
  \If{$\str{Referer} \in \comp{request}{headers}$} 
    \Let{$\mi{referrer}$}{$\comp{request}{headers}[\str{Referer}]$}
  \Else
    \Let{$\mi{referrer}$}{$\bot$}
  \EndIf
  \If{$\mathtt{Location} \in \comp{\mi{response}}{headers} \wedge \comp{\mi{response}}{status} \in \{303, 307\}$} \label{line:location-header} 
    \Let{$\mi{url}$}{$\comp{\mi{response}}{headers}\left[\mathtt{Location}\right]$}
    \If{$\mi{url}.\str{fragment} \equiv \bot$}
      \Let{$\mi{url}.\str{fragment}$}{$\mi{requestUrl}.\str{fragment}$}
    \EndIf
    \Let{$\mi{method}'$}{$\comp{\mi{request}}{method}$} 
    \Let{$\mi{body}'$}{$\comp{\mi{request}}{body}$} 
    \If{$\str{Origin} \in \comp{request}{headers}$}
      \Let{$\mi{origin}$}{$\an{\comp{request}{headers}[\str{Origin}], \an{\comp{request}{host}, \mi{url}.\str{protocol}}}$}
    \Else
      \Let{$\mi{origin}$}{$\bot$}
    \EndIf
    \If{$\comp{\mi{response}}{status} \equiv 303 \wedge \comp{\mi{request}}{method} \not\in \{\mGet, \mHead\}$}
      \Let {$\mi{method}'$}{$\mGet$}
      \Let{$\mi{body}'$}{$\an{}$} \label{browser-remove-body}
    \EndIf
    \If{$\nexists\, \ptr{w} \in \mathsf{Subwindows}(s')$ \textbf{such that} $\comp{\compn{s'}{\ptr{w}}}{nonce} \equiv \mi{reference}$} \Comment{Do not redirect XHRs.}
      \State \textbf{stop} $\an{}$, $s$
    \EndIf
    \Let{$\mi{req}$}{$\hreq{
            nonce=\nu_6,
            method=\mi{method'},
            host=\comp{\mi{url}}{host},
            path=\comp{\mi{url}}{path},
            headers=\an{},
            parameters=\comp{\mi{url}}{parameters},
            xbody=\mi{body}'
          }$}
    \Let{$\mi{referrerPolicy}$}{$\mi{response}.\str{headers}[\str{ReferrerPolicy}]$}
    \State \textsf{SEND}($\mi{reference}$, $\mi{req}$, $\mi{url}$, $\mi{origin}$, $\mi{referrer}$, $\mi{referrerPolicy}$, $s'$)\label{line:send-redirect}
  \EndIf
  \If{$\exists\, \ptr{w} \in \mathsf{Subwindows}(s')$ \textbf{such that} $\comp{\compn{s'}{\ptr{w}}}{nonce} \equiv \mi{reference}$} \Comment{normal response}
    \If{$\mi{response}.\str{body} \not\sim \an{*,*}$}
      \State \textbf{stop} $\{\}$, $s'$
    \EndIf
    \Let{$\mi{script}$}{$\proj{1}{\comp{\mi{response}}{body}}$}
    \Let{$\mi{scriptstate}$}{$\proj{2}{\comp{\mi{response}}{body}}$}
    \Let{$\mi{referrer}$}{$\mi{request}.\str{headers}[\str{Referer}]$}
    \Let{$d$}{$\an{\nu_7, \mi{requestUrl}, \mi{response}.\str{headers}, \mi{referrer}, \mi{script}, \mi{scriptstate}, \an{}, \an{}, \True}$} \label{line:take-script} \label{line:set-origin-of-document}
    \If{$\comp{\compn{s'}{\ptr{w}}}{documents} \equiv \an{}$}
      \Let{$\comp{\compn{s'}{\ptr{w}}}{documents}$}{$\an{d}$}
    \Else
      \LetND{$\ptr{i}$}{$\mathbb{N}$ \textbf{such that} $\comp{\compn{\comp{\compn{s'}{\ptr{w}}}{documents}}{\ptr{i}}}{active} \equiv \True$} %
      \Let{$\comp{\compn{\comp{\compn{s'}{\ptr{w}}}{documents}}{\ptr{i}}}{active}$}{$\bot$}
      \State \textbf{remove} $\compn{\comp{\compn{s'}{\ptr{w}}}{documents}}{(\ptr{i}+1)}$ and all following documents \breakalgohook{3} from $\comp{\compn{s'}{\ptr{w}}}{documents}$
      \Append{$d$}{$\comp{\compn{s'}{\ptr{w}}}{documents}$}
    \EndIf
    \State \textbf{stop} $\{\}$, $s'$
  \ElsIf{$\exists\, \ptr{w} \in \mathsf{Subwindows}(s')$, $\ptr{d}$ \textbf{such that} $\comp{\compn{s'}{\ptr{d}}}{nonce} \equiv \proj{1}{\mi{reference}} $ \breakalgohook{1}  $\wedge$  $\compn{s'}{\ptr{d}} = \comp{\compn{s'}{\ptr{w}}}{activedocument}$} \label{line:process-xhr-response} \Comment{process XHR response}
    \Let{$\mi{headers}$}{$\mi{response}.\str{headers} - \str{Set\mhyphen{}Cookie}$}
    \Append{\breakalgo{3}$\an{\tXMLHTTPRequest, \mi{headers}, \comp{\mi{response}}{body}, \proj{2}{\mi{reference}}}$}{$\comp{\compn{s'}{\ptr{d}}}{scriptinputs}$}
  \EndIf
\EndFunction
\end{algorithmic} \setlength{\parindent}{1em}

\subsubsection{Main Algorithm.}\label{app:mainalgorithmwebbrowserprocess}
This is the main algorithm of the browser relation.
It receives the message $m$ as input, as
well as $a$, $f$ and $s$ as above.

\captionof{algorithm}{\label{alg:browsermain} Web browser main algorithm.}
\begin{algorithmic}[1]
\Statex[-1] \textbf{Input:} $\an{a,f,m},s$
  \Let{$s'$}{$s$}

  \If{$\comp{s}{isCorrupted} \not\equiv \bot$}
    \Let{$\comp{s'}{pendingRequests}$}{$\an{m, \comp{s}{pendingRequests}}$} \Comment{Collect incoming messages}
    \LetND{$m'$}{$d_{V}(s')$} %
    \LetND{$a'$}{$\addresses$} %
    \State \textbf{stop} $\an{\an{a',a,m'}}$, $s'$
  \EndIf
  \If{$m \equiv \trigger$} \Comment{A special trigger message. }
    \LetND{$\mi{switch}$}{$\{\str{script},\str{urlbar},\str{reload},\str{forward}, \str{back}\}$} \label{line:browser-switch}  %
    \LetNDST{$\ptr{w}$}{$\mathsf{Subwindows}(s')$}{$\comp{\compn{s'}{\ptr{w}}}{documents} \neq \an{}$\breakalgohook{2}}{\textbf{stop} $\an{}$, $s'$}%
    \Comment{Pointer to some window.}
    \LetNDST{$\ptr{tlw}$}{$\mathbb{N}$}{$\comp{\compn{s'}{\ptr{tlw}}}{documents} \neq \an{}$\breakalgohook{2}}{\textbf{stop} $\an{}$, $s'$}%
    \Comment{Pointer to some top-level window.}
    \If{$\mi{switch} \equiv \str{script}$} \Comment{Run some script.}
      \Let{$\ptr{d}$}{$\ptr{w} \plusPairing \str{activedocument}$} \label{line:browser-trigger-document}  
      \State \textsf{RUNSCRIPT}($\ptr{w}$, $\ptr{d}$, $s'$)
    \ElsIf{$\mi{switch} \equiv \str{urlbar}$} \Comment{Create some new request.}
      \LetND{$\mi{newwindow}$}{$\{\True, \bot \}$}
      \If{$\mi{newwindow} \equiv \True$} \Comment{Create a new window.}
        \Let{$\mi{windownonce}$}{$\nu_1$}
        \Let{$w'$}{$\an{\mi{windownonce}, \an{}, \bot}$}
        \Append{$w'$}{$\comp{s'}{windows}$}
      \Else \Comment{Use existing top-level window.}
        \Let{$\mi{windownonce}$}{$s'.\ptr{tlw}.nonce$}
      \EndIf
      \LetND{$\mi{protocol}$}{$\{\http, \https\}$} \label{line:browser-choose-url} %
      \LetND{$\mi{host}$}{$\dns$} %
      \LetND{$\mi{path}$}{$\mathbb{S}$} %
      \LetND{$\mi{fragment}$}{$\mathbb{S}$} %
      \LetND{$\mi{parameters}$}{$\dict{\mathbb{S}}{\mathbb{S}}$} %
      \Let{$\mi{url}$}{$\an{\cUrl, \mi{protocol}, \mi{host}, \mi{path}, \mi{parameters}, \mi{fragment}}$}
      \Let{$\mi{req}$}{$\hreq{
          nonce=\nu_2,
          method=\mGet,
          host=\mi{host},
          path=\mi{path},
          headers=\an{},
          parameters=\mi{parameters},
          body=\an{}
        }$}
      \State \textsf{SEND}($\mi{windownonce}$, $\mi{req}$, $\mi{url}$, $\bot$, $\bot$, $\bot$, $s'$)\label{line:send-random}
    \ElsIf{$\mi{switch} \equiv \str{reload}$} \Comment{Reload some document.}
      \LetNDST{$\ptr{w}$}{$\mathsf{Subwindows}(s')$}{$\comp{\compn{s'}{\ptr{w}}}{documents} \neq \an{}$\breakalgohook{2}}{\textbf{stop} $\an{}$, $s'$} \label{line:browser-reload-window}%
      \Let{$\mi{url}$}{$s'.\ptr{w}.\str{activedocument}.\str{location}$}
      \Let{$\mi{req}$}{$\hreq{ nonce=\nu_2, 
          method=\mGet, host=\comp{\mi{url}}{host},
          path=\comp{\mi{url}}{path},
          headers=\an{},
          parameters=\comp{\mi{url}}{parameters}, body=\an{}
        }$}
      \Let{$\mi{referrer}$}{$s'.\ptr{w}.\str{activedocument}.\str{referrer}$}
      \Let{$s'$}{$\mathsf{CANCELNAV}(\comp{\compn{s'}{\ptr{w}}}{nonce}, s')$}
      \State \textsf{SEND}($\comp{\compn{s'}{\ptr{w}}}{nonce}$, $\mi{req}$, $\mi{url}$, $\bot$, $\mi{referrer}$, $\bot$, $s'$)
    \ElsIf{$\mi{switch} \equiv \str{forward}$}
      \State $\mathsf{NAVFORWARD}$($\ptr{w}$, $s'$)
    \ElsIf{$\mi{switch} \equiv \str{back}$}
      \State $\mathsf{NAVBACK}$($\ptr{w}$, $s'$)
    \EndIf
  \ElsIf{$m \equiv \fullcorrupt$} \Comment{Request to corrupt browser}
    \Let{$\comp{s'}{isCorrupted}$}{$\fullcorrupt$}
    \State \textbf{stop} $\an{}$, $s'$
  \ElsIf{$m \equiv \closecorrupt$} \Comment{Close the browser}
    \Let{$\comp{s'}{secrets}$}{$\an{}$}  
    \Let{$\comp{s'}{windows}$}{$\an{}$}
    \Let{$\comp{s'}{pendingDNS}$}{$\an{}$}
    \Let{$\comp{s'}{pendingRequests}$}{$\an{}$}
    \Let{$\comp{s'}{sessionStorage}$}{$\an{}$}
    \State \textbf{let} $\comp{s'}{cookies} \subsetPairing \cookies$ \textbf{such that} \breakalgohook{1} $(c \inPairing \comp{s'}{cookies}) {\iff} (c \inPairing \comp{s}{cookies} \wedge \comp{\comp{c}{content}}{session} \equiv \bot$)
    \Let{$\comp{s'}{isCorrupted}$}{$\closecorrupt$}
    \State \textbf{stop} $\an{}$, $s'$
  \ElsIf{$\exists\, \an{\mi{reference}, \mi{request}, \mi{url}, \mi{key}, f}$
      $\inPairing \comp{s'}{pendingRequests}$ \breakalgohook{0}
      \textbf{such that} $\proj{1}{\decs{m}{\mi{key}}} \equiv \cHttpResp$ } %
    \Comment{Encrypted HTTP response}
    \Let{$m'$}{$\decs{m}{\mi{key}}$}
    \If{$\comp{m'}{nonce} \not\equiv \comp{\mi{request}}{nonce}$}
      \State \textbf{stop} $\an{}$, $s$
    \EndIf
    \State \textbf{remove} $\an{\mi{reference}, \mi{request}, \mi{url}, \mi{key}, f}$ \textbf{from} $\comp{s'}{pendingRequests}$
    \State \textsf{PROCESSRESPONSE}($m'$, $\mi{reference}$, $\mi{request}$, $\mi{url}$, $s'$)
  \ElsIf{$\proj{1}{m} \equiv \cHttpResp$ $\wedge$ $\exists\, \an{\mi{reference}, \mi{request}, \mi{url}, \bot, f}$ $\inPairing \comp{s'}{pendingRequests}$ \breakalgohook{0}\textbf{such that} $\comp{m'}{nonce} \equiv \comp{\mi{request}}{key}$ } %
    \State \textbf{remove} $\an{\mi{reference}, \mi{request}, \mi{url}, \bot, f}$ \textbf{from} $\comp{s'}{pendingRequests}$
    \State \textsf{PROCESSRESPONSE}($m$, $\mi{reference}$, $\mi{request}$, $\mi{url}$, $s'$)
  \ElsIf{$m \in \dnsresponses$} \Comment{Successful DNS response}
      \If{$\comp{m}{nonce} \not\in \comp{s}{pendingDNS} \vee \comp{m}{result} \not\in \addresses \vee \comp{m}{domain} \not\equiv \comp{\proj{2}{\comp{s}{pendingDNS}}}{host}$}
        \State \textbf{stop} $\an{}$, $s$ \label{line:browser-dns-response-stop}
      \EndIf
      \Let{$\an{\mi{reference}, \mi{message}, \mi{url}}$}{$\comp{s}{pendingDNS}[\comp{m}{nonce}]$}
      \If{$\mi{url}.\str{protocol} \equiv \https$}
        \AppendBreak{2}{$\langle\mi{reference}$, $\mi{message}$, $\mi{url}$, $\nu_3$, $\comp{m}{result}\rangle$}{$\comp{s'}{pendingRequests}$} \label{line:add-to-pendingrequests-https}
        \Let{$\mi{message}$}{$\enc{\an{\mi{message},\nu_3}}{\comp{s'}{keyMapping}\left[\comp{\mi{message}}{host}\right]}$} \label{line:select-enc-key}
      \Else
        \AppendBreak{2}{$\langle\mi{reference}$, $\mi{message}$, $\mi{url}$, $\bot$, $\comp{m}{result}\rangle$}{$\comp{s'}{pendingRequests}$} \label{line:add-to-pendingrequests}
      \EndIf
      \Let{$\comp{s'}{pendingDNS}$}{$\comp{s'}{pendingDNS} - \comp{m}{nonce}$}
      \State \textbf{stop} $\an{\an{\comp{m}{result}, a, \mi{message}}}$, $s'$
  \EndIf
  \State \textbf{stop} $\an{}$, $s$

\end{algorithmic} \setlength{\parindent}{1em}

\section{Formal Model of OAuth with a Network Attacker}
\label{app:model-oauth-auth}

We here present the full details of our formal model of OAuth which we
use to analyze all but one of the authentication and authorization
properties. This model contains a network attacker. We will later
derive from this model a model where the network attacker is replaced
by a web attacker. 

We model OAuth as a web system (in the sense of
Appendix~\ref{app:websystem}). We call a web system
$\oauthwebsystem^n=(\bidsystem, \scriptset, \mathsf{script}, E^0)$
an \emph{OAuth web system with a network attacker} if it is of the
form described in what follows.

\subsection{Outline}\label{app:outlineoauthmodel}
The system $\bidsystem=\mathsf{Hon} \cup \mathsf{Net}$
consists of a network attacker process (in $\mathsf{Net}$),
a finite set $\fAP{B}$
of web browsers, a finite set $\fAP{RP}$
of web servers for the relying parties, a finite set $\fAP{IDP}$
of web servers for the identity providers, with
$\mathsf{Hon} := \fAP{B} \cup \fAP{RP} \cup \fAP{IDP}$.
More details on the processes in $\bidsystem$
are provided below. We do not model DNS servers, as they are subsumed
by the network attacker.
Figure~\ref{fig:scripts-in-w} shows the set of scripts $\scriptset$
and their respective string representations that are defined by the
mapping $\mathsf{script}$.
The set $E^0$ contains only the trigger events as specified in
Appendix~\ref{app:websystem}. 

\begin{figure}[htb]
  \centering
  \begin{tabular}{|@{\hspace{1ex}}l@{\hspace{1ex}}|@{\hspace{1ex}}l@{\hspace{1ex}}|}\hline 
    \hfill $s \in \scriptset$\hfill  &\hfill  $\mathsf{script}(s)$\hfill  \\\hline\hline
    $\Rasp$ & $\str{att\_script}$  \\\hline
    $\mi{script\_rp\_index}$ & $\str{script\_rp\_index}$  \\\hline
    $\mi{script\_rp\_implicit}$ & $\str{script\_rp\_implicit}$  \\\hline
    $\mi{script\_idp\_form}$ &  $\str{script\_idp\_form}$  \\\hline
  \end{tabular}
  
  \caption{List of scripts in $\scriptset$ and their respective string
    representations.}
  \label{fig:scripts-in-w}
\end{figure}

This outlines $\oauthwebsystem^n$. We will now define the DY processes in
$\oauthwebsystem^n$ and their addresses, domain names, and secrets in more
detail. 

\subsection{Addresses and Domain Names}\label{app:addresses-and-domain-names}
The set $\addresses$
contains for the network attacker in $\fAP{Net}$,
every relying party in $\fAP{RP}$,
every identity provider in $\fAP{IDP}$,
and every browser in $\fAP{B}$
a finite set of addresses each. By $\mapAddresstoAP$
we denote the corresponding assignment from a process to its address.
The set $\dns$
contains a finite set of domains for every relying party in
$\fAP{RP}$,
every identity provider in $\fAP{IDP}$,
and the network attacker in $\fAP{Net}$.
Browsers (in $\fAP{B})$ do not have a domain.

By $\mapAddresstoAP$ and $\mapDomain$ we denote the assignments from
atomic processes to sets of $\addresses$ and $\dns$, respectively.

\subsection{Keys and Secrets} The set $\nonces$ of nonces is
partitioned into five sets, an infinite sequence $N$, an infinite set
$K_\text{SSL}$, an infinite set $K_\text{sign}$, and finite sets
$\mathsf{Passwords}$, $\mathsf{RPSecrets}'$ and $\SecRes$. We thus have
\begin{align*}
\def\hereMaxHeightPhantom{\vphantom{K_{\text{p}}^\bidsystem}}
\nonces = 
\underbrace{N\hereMaxHeightPhantom}_{\text{infinite sequence}} 
\dot\cup \underbrace{K_{\text{SSL}}\hereMaxHeightPhantom}_{\text{finite}} 
\dot\cup \underbrace{\mathsf{Passwords}\hereMaxHeightPhantom}_{\text{finite}}
\dot\cup \underbrace{\mathsf{RPSecrets}'\hereMaxHeightPhantom}_{\text{finite}}
\dot\cup \underbrace{\SecRes\hereMaxHeightPhantom}_{\text{finite}}\ .
\end{align*}
We then define $\mathsf{RPSecrets} := \mathsf{RPSecrets} \cup \{\bot\}$.
These sets are used as follows:
\begin{itemize}
\item The set $N$ contains the nonces that are available for each DY
  process in $\bidsystem$ (it can be used to create a run of
  $\bidsystem$).

\item The set $K_\text{SSL}$
  contains the keys that will be used for SSL encryption. Let
  $\mapSSLKey\colon \dns \to K_\text{SSL}$
  be an injective mapping that assigns a (different) private key to
  every domain. For an atomic DY process $p$
  we define
  $\mi{sslkeys}^p = \an{\left\{\an{d, \mapSSLKey(d)} \mid d \in
      \mapDomain(p)\right\}}$.

\item The set $\mathsf{Passwords}$
  is the set of passwords (secrets) the browsers share with the
  identity providers. These are the passwords the users use to log in
  at the IdPs.

\item The set $\mathsf{RPSecrets}$
  is the set of passwords (secrets) the relying parties share with the
  identity providers. These are the passwords the relying parties use
  to log in at the IdPs. The passwords can also be blank ($\bot$).

\item   The set $\SecRes$ contains a secret for each combination of IdP,
  client, and user. These are thought of as protected resources that
  only the owner of the resource (i.e., the user) should be able to
  read. (See also Definition~\ref{def:property-authz-a}.)
\end{itemize}

\subsection{Identities, Passwords, and Protected Resources}\label{app:oauth-pidp-identities}
Identites consist, similar to email addresses, of a user name and a
domain part. For our model, this is defined as follows:
\begin{definition}
  An \emph{identity} (email address) $i$ is a term of the form
  $\an{\mi{name},\mi{domain}}$ with $\mi{name}\in \mathbb{S}$ and
  $\mi{domain} \in \dns$.

  Let $\IDs$ be the finite set of identities. By $\IDs^y$ we denote
  the set $\{ \an{\mi{name}, \mi{domain}} \in \IDs\,|\, \mi{domain}
  \in \mapDomain(y) \}$.

  We say that an ID is \emph{governed} by the DY process to which the
  domain of the ID belongs. Formally, we define the mapping $\mapGovernor:
  \IDs \to \bidsystem$, $\an{\mi{name}, \mi{domain}} \mapsto
  \mapDomain^{-1}(\mi{domain})$.
\end{definition}%
The governor of an ID will usually be an IdP, but could also be the
attacker. Besides $\mapGovernor$, we define the following mappings:

\begin{itemize}
\item By $\mapIDtoPLI:\IDs \to \mathsf{Passwords}$ we denote the bijective
  mapping that assigns secrets to all identities.

\item   Let $\mapPLItoOwner: \mathsf{Passwords} \to \fAP{B}$ denote the mapping that
  assigns to each secret a browser that \emph{owns} this secret. Now,
  we define the mapping $\mapIDtoOwner: \IDs \to \fAP{B}$, $i \mapsto
  \mapPLItoOwner(\mapIDtoPLI(i))$, which assigns to each identity the
  browser that owns this identity (we say that the identity belongs to
  the browser).

\item Let $\mathsf{trustedRPs}: \mathsf{Passwords} \to 2^{\fAP{RP}}$
  denote a mapping that assigns a set of \emph{trusted relying
    parties} to each password. Intuitively a trusted relying party is
  a relying party the user entrusts with her password (in the resource
  owner password credentials grant mode of OAuth).

\item Let
  $\mathsf{clientIDOfRP}: (\mathsf{RP} \cup \{\bot\}) \times
  \mathsf{IDP} \to \mathbb{S} \cup \{\bot\}$ denote a mapping that
  assigns an OAuth client id for an relying party to each combination
  of a relying party and an identity provider. We require that
  $\mathsf{clientIDOfRP}(\cdot,i)$
  is bijective for all $i \in \mathsf{IDP}$
  and that $\mathsf{clientIDOfRP}(r, i) = \bot$
  iff $r$ = $\bot$ for all $i \in \mathsf{IDP}$.

\item Let
  $\mathsf{secretOfRP}: \mathsf{RP} \times \mathsf{IDP} \to
  \mathsf{RPSecrets}$ denote a bijective mapping that
  assigns a relying party password (or the empty password $\bot$)
  to each combination of a relying party and an identity provider.

\item As a shortcut, we define the mapping
  $\mathsf{secretOfClientID}: \mathbb{S} \times \mathsf{IDP} \to
  \mathsf{RPSecrets}$ to return the relying party password to a
  relying party identified by an OAuth client id (at some specific
  identity provider), i.e., $\mathsf{secretOfClientID}(s,i)$
  maps to $\mathsf{secretOfRP}(r, i)$
  with $r$ such that $s = \mathsf{clientIDOfRP}(r,i)$.

\item By
  $\mathsf{resourceOf}: \fAP{IDP} \times (\fAP{RP} \cup \{\bot\})
  \times (\IDs \cup \{\bot\}) \to \SecRes$ we denote the injective
  mapping that assigns a protected resource to each combination of
  user identity, IdP and client (RP). We also include protected
  resources that are not assigned to a specific user (in this case,
  the user is $\bot$)
  and those that are not assigned to a specific RP (the RP then is
  $\bot$).
  Note that a protected resource depends not only on the IdP and user
  ID but also the RP. This is motivated by the fact that different RPs
  may get access to different protected resources at one IdP, even if
  they access the resources of the same user. In the resource owner
  password credentials mode, RPs can also access resources that do not
  depend on the RP, we then have that RP is $\bot$.\footnote{In
    the resource owner password credentials mode, the RP gets the
    user's credentials and thus has full access to the user's account
    at IdP. This access is not bound to potential limitations that
    depend on the RP's identity.}
\end{itemize}

\subsection{Corruption}
RPs and IdPs can become corrupted: If they receive the message
$\corrupt$, they start collecting all incoming messages in their state
and (upon triggering) send out all messages that are derivable from
their state and collected input messages, just like the attacker
process. We say that an RP or an IdP is \emph{honest} if the according
part of their state ($s.\str{corrupt}$) is $\bot$, and that they are
corrupted otherwise.

We are now ready to define the processes in $\websystem$ as well as
the scripts in $\scriptset$ in more detail.

\subsection{Processes in $\bidsystem$ (Overview)}

We first provide an overview of the processes in $\bidsystem$. All
processes in $\websystem$ contain in their initial states all public
keys and the private keys of their respective domains (if any). We
define $I^p=\mapAddresstoAP(p)$ for all $p\in \mathsf{Hon}$.

\subsubsection{Network Attacker}
There is one atomic DY process $\mi{na} \in \mathsf{Net}$
which is a network attacker (see Appendix~\ref{app:websystem}), who
uses all addresses for sending and listening.

\subsubsection{Browsers} Each $b \in \fAP{B}$
is a web browser as defined in Appendix~\ref{app:deta-descr-brows}.
The initial state contains all secrets owned by $b$,
stored under the origins of the respective IdP and of all trusted RPs
for the respective secret. See Appendix~\ref{app:browsers-oauth} for
details.

\subsubsection{Relying Parties} 
Each relying party is a web server modeled as an atomic DY process
following the description in Section~\ref{sec:oauth} and the fixes
discussed in Section~\ref{sec:attacks}. The RP can either (at any
time) launch a client credentials mode flow or wait for users to start
any of the other flows. RP manages two kinds of sessions: The
\emph{login sessions}, which are only used during the login phase of a
user, and the \emph{service sessions} (modeled by a \emph{service
  token} as described above).

When receiving a special message ($\corrupt$)
RPs can become corrupted. Similar to the definition of corruption for
the browser, RPs then start sending out all messages that are
derivable from their state.

\subsubsection{Identity Providers} Each IdP is a web server modeled
as an atomic DY process following the description in
Section~\ref{sec:oauth} and the fixes discussed in
Section~\ref{sec:attacks}. In particular, users can authenticate to
the IdP with their credentials. Authenticated users can interact with
the authorization endpoint of the IdP (e.g., to acquire an
authorization code). Just as RPs, IdPs can become corrupted.

\subsection{Network Attackers}\label{app:networkattackers-oauth} As mentioned, the network attacker
$\mi{na}$
is modeled to be a network attacker as specified in
Appendix~\ref{app:websystem}. We allow it to listen to/spoof all
available IP addresses, and hence, define $I^\mi{na} = \addresses$.
The initial state is
$s_0^\mi{na} = \an{\mi{attdoms}, \mi{sslkeys}, \mi{signkeys}}$,
where $\mi{attdoms}$
is a sequence of all domains along with the corresponding private keys
owned by the attacker $\mi{na}$,
$\mi{sslkeys}$
is a sequence of all domains and the corresponding public keys, and
$\mi{signkeys}$
is a sequence containing all public signing keys for all IdPs.

\subsection{Browsers}\label{app:browsers-oauth} 

Each $b \in \fAP{B}$ is a web browser as defined in
Appendix~\ref{app:deta-descr-brows}, with $I^b := \mapAddresstoAP(b)$
being its addresses.

To define the inital state, first let $\IDs_b :=
\mapIDtoOwner^{-1}(b)$ be
 the set of all IDs of $b$. We then define the set of passwords that a browser $b$ gives to an origin $o$ to consist of two parts: (1) If the origin belongs to an IdP, then the user's passwords of this IdP are contained in the set. (2) If the origin belongs to an RP, then those passwords with which the user entrusts this RP are contained in the set. To define this mapping in the initial state, we first define for some process~$p$

\begin{align*}
 \mathsf{Secrets}^{b,p} = \Big\{ s \Bigm| b = \mathsf{ownerOfSecret}(s)  \wedge  \big( & (\exists\, i : s = \mathsf{secretOfID}(i) \wedge i \in \mathsf{governor}^{-1}(p))\\ &\vee  (\exists\, R : p \in R \wedge s \in \mathsf{trustedRPs}^{-1}(R)) \big) \Big\} \,.
\end{align*}

Then, the initial state $s_0^b$
is defined as follows: the key mapping maps every domain to its public
(ssl) key, according to the mapping $\mapSSLKey$;
the DNS address is an address of the network attacker; the list of
secrets contains an entry
$\an{\an{d,\https}, \an{\mathsf{Secrets}^{b,p}}}$
for each $p \in \fAP{RP} \cup \fAP{IDP}$
and $d \in \mathsf{dom}(p)$;
$\mi{ids}$ is $\an{\IDs_b}$; $\mi{sts}$ is empty.

\subsection{Relying Parties} \label{app:relying-parties-oauth}

A relying party $r \in \fAP{RP}$
is a web server modeled as an atomic DY process
$(I^r, Z^r, R^r, s^r_0)$
with the addresses $I^r := \mapAddresstoAP(r)$.
Its initial state $s^r_0$
contains its domains, the private keys associated with its domains,
the DNS server address, and information about IdPs RP is registered
at. The full state additionally contains the sets of service tokens
and login session identifiers the RP has issued as well as information
about pending DNS and pending HTTPS requests (similar to browsers). RP
only accepts HTTPS requests.

RP manages two kinds of sessions: The \emph{login sessions}, which are
only used during the login phase of a user, and the \emph{service
  sessions} (we call the session identifier of a service session a
\emph{service token}). Service sessions allow a user to use RP's
services. The ultimate goal of a login flow is to establish such a
service session.

We now first describe how $r$
can become corrupted, then we describe the handling of DNS and HTTPS
requests and responses, before we describe the behaviour of $r$
during a login flow.

\paragraph{Corruption}
When $r$
receives a corrupt message, it becomes corrupt and acts like the
attacker from then on (i.e., it collects all incoming messages and
non-deterministically sends out all messages derivable from its
state).

\paragraph{Pending DNS Requests and Pending HTTPS Requests}
Since the RP $r$
also acts as an HTTPS client, it manages two kinds of records for
messages that have been sent out into the network and are waiting for
corresponding responses. When an HTTPS message is to be sent, the RP
first needs to resolve the hostname into an IP address. To this end,
the RP first stores the HTTPS request (together with some state
information) in a subterm of its state called $\mi{pendingDNS}$
and (instead of sending the HTTPS request immediately) sends out a DNS
request to the DNS server. When a DNS response arrives that matches
one of the entries in this subterm, the HTTPS request is sent out over
the network (to the resolved IP address) and stored in the subterm
$\mi{pendingRequests}$
of the RP's state. Note that this mechanism is very similar to
(generic) browsers (see Appendix~\ref{app:deta-descr-brows}).

\paragraph{Initial Request}
In a typical flow, $r$
will first receive an HTTP GET request from a browser for the path
$\str{/}$.
In this case, $r$
returns the script $\str{script\_rp\_index}$. Besides providing arbitrary links, this script allows users to start an OAuth flow in the browser.
If an OAuth flow is started, this script non-deterministically chooses an identity
of the user, i.e., a combination of a username and a domain of an IdP.
Further this script non-deterministically decides whether an
interactive login (i.e., authorization code mode or implicit mode) or
a non-interactive login (i.e., resource owner password credentials
mode) is used. If an interactive login is chosen, the script instructs
the browser to send an HTTPS POST request to $r$
for the path $\str{/startInteractiveLogin}$.
This POST request contains in its body the domain of the
IdP.\footnote{Note that while the script has selected an identity of
  the user, only the domain of the IdP is used in this case and during
  the authentication to the IdP, a different username may be chosen.}
If the script chooses a non-interactive login, the domain of the IdP,
the username, and the user's password are sent to $r$
in an HTTPS POST request for the path $\str{/passwordLogin}$.

As the flow now forks into different branches, we will explain (the
first part of) each of these branches separately: If the script has
chosen to run an interactive login, we continue our description in the
paragraph \emph{Interactive Login} below. Else, if the script has
chosen to run a non-interactive login, we continue our description of this in
the paragraph \emph{Non-Interactive Login}.

\paragraph{Interactive Login}
In this case, $\str{script\_rp\_index}$
has sent an HTTPS POST request for the path
$\str{/startInteractiveLogin}$
to $r$
containing the name of an IdP in its body. When $r$
receives such a request, $r$
non-deterministically decides whether the OAuth authorization code
mode or the OAuth implicit mode is used. Also, $r$
non-deterministically selects a redirect URI $\mi{redirect\_uri}$
of its redirection endpoints (and appends the domain of the IdP to
this redirect URI) or selects no redirect URI. Further, $r$
non-deterministically selects a (fresh) nonce $\mi{state}$
and a (fresh) nonce as login session id. Then, $r$
saves all the chosen information in its state. Now, $r$
constructs and sends an HTTPS response containing an HTTP 303 location
redirect or an HTTP 307 location redirect\footnote{Note that while in
  this paper we present an attack against OAuth based on an HTTP 307
  location redirect, our analysis shows that an HTTP 307 location
  redirect is safe at this point in the protocol flow.} (chosen
non-deterministically) which points to the corresponding authorization
endpoint at the IdP along with $r$'s
OAuth client id for this IdP, $\mi{state}$
and information which OAuth mode $r$
has chosen. Additionally, this response also contains a Set-Cookie
header, which sets a cookie containing the login session id. $r$
also stores a record in the subterm $\mi{loginSessions}$
of its state. This record contains the login session id, the chosen
OAuth mode, and the domain of the IdP.

Later, when IdP redirects the user's browser to $r$'s
redirection endpoint, $r$
will receive an HTTPS GET request for the path
$\str{/redirectionEndpoint}$.
This request must contain a login session id cookie, which refers to
the information stored in the subterm $\mi{loginSessions}$ in $r$'s
state.  The request must also contain a parameter with the domain of the
IdP and this domain must match the domain stored for this login session.

If $r$
has stored that for this login session the OAuth authorization code
mode is used, $r$
checks if the $\mi{state}$
value contained in a parameter is correct (i.e., the value of this
parameter is congruent to the value recorded in
$r$'s
state). Then, $r$
extracts the authorization code $\mi{code}$
from the parameters of the incoming request and prepares an HTTPS POST
request to the IdP's token endpoint to obtain an access token as
follows: $r$
adds the authorization code to the request's body. If a redirect URI
has been set by $r$
before (according to $r$'s
state for this login session), the redirect URI is included in the
request's body. If $r$
knows an OAuth client secret for the IdP, $r$
adds its OAuth client id and its OAuth client secret for the IdP to
the header of the request, else $r$
adds its OAuth client id for the IdP to the request's body. Now, $r$
sends a DNS request for the domain of the IdP's token endpoint to the
DNS server (according to $r$'s
state), saves this (prepared) request and all information belonging to
the (incoming) HTTPS request $r$
received from the browser (such as IP addresses, temporary HTTPS keys)
in $\mi{pendingDNS}$
in its state. We will
continue our description of which requests $r$
will process next in the OAuth authorization code mode in the
paragraph \emph{Token Response} below.

If the (incoming) HTTPS request's login session at $r$
states that implicit mode is used, $r$
instead sends an HTTPS response to the sender of the incoming
message. This HTTPS response contains the script
$\mi{script\_rp\_implicit}$
and the initial state for this script in this response contains the
domain of the IdP.

In a browser, this script extracts $\mi{access\_token}$
and $\mi{state}$
from the fragment part of its URL and extracts the domain of the IdP
from its initial state. The script then sends this information in the
body of an HTTPS POST request for the path
$\str{/receiveTokenFromImplicitGrant}$ to $r$.

When $r$
receives such an HTTPS POST request (for the path
$\str{/receiveTokenFromImplicitGrant}$),
$r$
checks if this request contains a login session id cookie, which
refers to the information stored in its
state and if the values of $\mi{state}$
and $\mi{idp}$
(contained in the request) match the information there. Next, $r$
prepares an HTTPS request to IdP's introspection endpoint containing the
access token just received. $r$
saves all information belonging to this new request and the (incoming)
request it had just received in $\mi{pendingDNS}$ in its
state and sends out a DNS request for the domain of the IdP's
introspection endpoint to the DNS server.

We describe what happens when $r$
later receives the response from IdP in the paragraph \emph{Introspection
  Response} below.

\paragraph{Non-Interactive Login} 
In this case, $\str{script\_rp\_index}$
has sent an HTTPS POST request for the path $\str{/passwordLogin}$
to $r$
containing a domain of an IdP, a username and a user's password in
its body.  Next, $r$
constructs an HTTPS POST request to the token endpoint of the
IdP. This request contains the username and the user's password in its
body and if $r$
knows an OAuth client secret for the IdP, the request contains an HTTP
header with $r$'s
OAuth client id and OAuth client secret.  $r$
saves all information belonging to this new request and the (incoming) request
$r$
has just received in the subterm $\mi{pendingDNS}$ in $r$'s state
and sends out a DNS request for the domain of the IdP's token endpoint to the DNS server.

We describe what happens when $r$
later receives the response from the IdP in the paragraph \emph{Token
  Response} below.

\paragraph{Client Credentials Mode} 
When $r$
receives a $\str{TRIGGER}$
message (which models that $r$
non-deterministically starts an OAuth flow in the client credentials
mode), $r$
first non-deterministically selects a domain of an IdP. Then, $r$
constructs an HTTPS POST request to the token endpoint of the
IdP. This request contains an HTTP header with $r$'s
OAuth client id and OAuth client secret.\footnote{Note that in our
  model, $r$
  may even construct such a request if $r$
  does not have an OAuth client secret for the IdP. In this case, the
  symbol $\bot$
  is placed in this header instead of an OAuth client secret. The IdP,
  however, will drop such a request, as it is not authenticated.} $r$
saves all information belonging to this (prepared) request in
$\mi{pendingDNS}$
and sends out a DNS request for the domain of the IdP's token endpoint
to the DNS server.

We describe what happens when $r$
later receives the response from IdP in the paragraph \emph{Token
  Response} below.

\paragraph{Token Response} 
 When $r$
receives an encrypted HTTP response that matches a record in the
subterm $\mi{pendingRequests}$
of its state and belongs to a request for an access token from an IdP
(according to the information recorded in $\mi{pendingRequests}$),
then $r$
extracts the access token and prepares an HTTPS request to the IdP's
introspection endpoint containing the access token. $r$
saves all information belonging to this new request in
$\mi{pendingDNS}$.
Further, $r$
also stores selected information, which is passed along in $r$'s
state in the corresponding record of the incoming request, such as the
IP address of the sender and the HTTPS response key of the request
which initiated $r$'s
request for the access token before. Then, $r$
sends out a DNS request for the domain of the IdP's introspection
endpoint to the DNS server.

\paragraph{Introspection Response}
When $r$
receives an encrypted HTTP response that matches a record in the
subterm $\mi{pendingRequests}$
in its state and this record belongs to a request to an IdP's
introspection endpoint, $r$
checks whether the response belongs to a flow in client credentials
mode (according to the record). If that is the case, $r$
stops. Otherwise, $r$
non-deterministically proceeds with either an authorization flow or an
authentication flow:
\begin{itemize}
\item If authorization is selected, $r$
  retrieves the protected resource from the IdP's response and sends
  out an HTTPS response to the IP address recorded in the record in
  $\mi{pendingRequests}$
  (which contains the IP address of the browser, which initially sent
  either user credentials, an authorization code, or an access token).
\item Else, authentication is selected. Now, if the response does not
  contain $r$'s
  OAuth client id, $r$
  stops. Otherwise, $r$
  retrieves the user id from the response and non-deterministically
  chooses a fresh nonce as a service token. $r$
  records in its state that the service token belongs to the user
  identified by the user id at the IdP. Now, $r$
  sends out a response (as above) which contains the service token in
  a cookie.
\end{itemize}
In both cases, $r$ replies with the script $\mi{script\_rp\_index}$,
which provides arbitrary links and the possibility to start a new
OAuth flow (see above).

This concludes the description of the behaviour of an RP.

\subsubsection{Formal description} We now provide the formal
definition of $r$
as an atomic DY process $(I^r, Z^r, R^r, s^r_0)$.
As mentioned, we define $I^r = \mapAddresstoAP(r)$.
Next, we define the set $Z^r$
of states of $r$ and the initial state $s^r_0$ of~$r$.

\begin{definition}\label{def:idp-registration-record}
  An \emph{IdP registration record} is a term of the form
  \[
    \an{\mi{tokenEndpoint},\mi{authorizationEndpoint},\mi{introspectionEndpoint},\mi{clientId},\mi{clientPassword}}
  \]
  with $\mi{tokenEndpoint}$,
  $\mi{authorizationEndpoint}$,
  $\mi{introspectionEndpoint} \in \urls$,
  $\mi{clientId} \in \mathbb{S}$,
  and $\mi{clientPassword} \in \nonces$.

  An \emph{IdP registration record for an identity provider $i$
    at a relying party $r$}
  is an IdP registration record with $\mi{tokenEndpoint}.\str{host}$,
  $\mi{authorizationEndpoint}.\str{host}$,
  $\mi{introspectionEndpoint}.\str{host} \in \mathsf{dom}(i)$,
  $\mi{clientId} = \mathsf{clientIDOfRP}(r,i)$,
  and $\mi{clientPassword} = \mathsf{secretOfRP}(r,i)$.
\end{definition}

\begin{definition}\label{def:relying-parties}
  A \emph{state $s\in Z^r$ of an RP $r$} is a term of the form
  $\langle\mi{DNSAddress}$, $\mi{idps}$, $\mi{serviceTokens}$,
  $\mi{loginSessions}$, $\mi{keyMapping}$, $\mi{sslkeys}$,
  $\mi{pendingDNS}$, $\mi{pendingRequests}$, $\mi{corrupt}\rangle$
  where $\mi{DNSAddress} \in \addresses$,
  $\mi{idps} \in \dict{\dns}{\terms}$ is a dictionary of IdP
  registration records,
  $\mi{serviceTokens}\in\dict{\nonces}{\terms}$,
  $\mi{loginSessions} \in \dict{\nonces}{\terms}$ is a dictionary of
  login session records,
  $\mi{keyMapping} \in \dict{\mathbb{S}}{\nonces}$,
  $\mi{sslkeys}=\mi{sslkeys}^r$,
  $\mi{pendingDNS} \in \dict{\nonces}{\terms}$,
  $\mi{pendingRequests} \in \dict{\nonces}{\terms}$,
  $\mi{corrupt}\in\terms$.

  An \emph{initial state $s^r_0$ of $r$} is a state of $r$ with
  $s^r_0.\str{idps}$ being a dictionary that maps each domain of all
  identity providers $i$ to an IdP registration record for $i$ at $r$,
  $s^r_0.\str{serviceTokens} = s^r_0.\str{loginSessions} = \an{}$,
  $s^r_0.\str{corrupt} = \bot$, and $s^r_0.\str{keyMapping}$ is the
  same as the keymapping for browsers above.
\end{definition}

We now specify the relation $R^r$. Just like
in Appendix~\ref{app:deta-descr-brows}, we describe this
relation by a non-deterministic algorithm. In several places throughout this
algorithm  we use placeholders to generate
``fresh'' nonces as described in our communication model
(see Definition~\ref{def:terms}).
Figure~\ref{fig:rp-placeholder-list} shows a list of
all placeholders used.

\begin{figure}[htb]
  \centering
  \begin{tabular}{|@{\hspace{1ex}}l@{\hspace{1ex}}|@{\hspace{1ex}}l@{\hspace{1ex}}|}\hline 
    \hfill Placeholder\hfill  &\hfill  Usage\hfill  \\\hline\hline
    $\nu_1$ & new HTTP request nonce  \\\hline
    $\nu_2$ & lookup key for pending DNS entry   \\\hline
    $\nu_3$ & new service token  \\\hline
    $\nu_4$ & fresh HTTPS response key  \\\hline
    $\nu_5$ & new HTTP request nonce  \\\hline
    $\nu_6$ & lookup key for pending DNS entry  \\\hline
    $\nu_7$ & new CSRF token  \\\hline
    $\nu_8$ & new login session cookie  \\\hline
    $\nu_9$ & new HTTP request nonce  \\\hline
    $\nu_{10}$ & lookup key for pending DNS entry   \\\hline
    $\nu_{11}$ & new HTTP request nonce  \\\hline
    $\nu_{12}$ & lookup key for pending DNS entry   \\\hline
    $\nu_{13}$ & new HTTP request nonce  \\\hline
    $\nu_{14}$ & lookup key for pending DNS entry   \\\hline
    
  \end{tabular}
  
  \caption{List of placeholders used in the relying party algorithm}
  \label{fig:rp-placeholder-list}
\end{figure}

\captionof{algorithm}{\label{alg:rp-oauth} Relation of a Relying
  Party $R^r$}
\begin{algorithmic}[1]
\Statex[-1] \textbf{Input:} $\an{a,f,m},s$
  \If{$s'.\str{corrupt} \not\equiv \bot \vee m \equiv \corrupt$}
    \Let{$s'.\str{corrupt}$}{$\an{\an{a, f, m}, s'.\str{corrupt}}$}
    \LetND{$m'$}{$d_{V}(s')$}
    \LetND{$a'$}{$\addresses$}
    \Stop{$\an{\an{a',a,m'}}$, $s'$}
  \EndIf
  \If{$\exists\, \an{\mi{reference}, \mi{request}, \mi{key}, f}$
      $\inPairing \comp{s'}{pendingRequests}$ \breakalgohook{0}
      \textbf{such that} $\proj{1}{\decs{m}{\mi{key}}} \equiv \cHttpResp$ } \label{line:rp-https-response}
    \Comment{Encrypted HTTP response}

    \Let{$m'$}{$\decs{m}{\mi{key}}$}
    \If{$\comp{m'}{nonce} \not\equiv \comp{\mi{request}}{nonce}$}
      \Stop{\DefStop}
    \EndIf
    \State \textbf{remove} $\an{\mi{reference}, \mi{request}, \mi{key}, f}$ \textbf{from} $\comp{s'}{pendingRequests}$\label{line:rp-remove-pending-request}

    \Let{$\mi{mode}$}{$\proj{1}{\mi{reference}}$}
    \If{$\mi{mode} \equiv \str{code} \vee \mi{mode} \equiv \str{password} \vee \mi{mode} \equiv \str{client\_credentials}$}
      \LetST{$\mi{idp}$, $a'$, $f'$, $n'$, $k'$}{$\an{\mi{mode},\mi{idp},a',f',n',k'} \equiv \mi{reference}$}{\textbf{stop} \DefStop}\label{line:rp-https-response-code-password-ccred}
      \Let{$\mi{token}$}{$m'.\str{body}[\str{access\_token}]$}\label{line:rp-https-response-atoken}
      \Let{$\mi{introspectionEndpoint}$}{$s'.\str{idps}[\mi{idp}].\str{introspectionEndpoint}$}
      \Let{$\mi{parameters}$}{$\mi{introspectionEndpoint}.\str{parameters}$}
      \Append{$\an{\str{token},\mi{token}}$}{$\mi{parameters}$}
      \Let{$\mi{host}$}{$\mi{introspectionEndpoint}.\str{domain}$}
      \Let{$\mi{path}$}{$\mi{introspectionEndpoint}.\str{path}$}
      \Let{$\mi{message}$}{$\hreq{ nonce=\nu_1,
             method=\mGet,
             xhost=\mi{host},
             path=\mi{path},
             parameters=\mi{parameters},
             headers=\an{},
             xbody=\an{}}$}  \label{line:rp-send-something-1}
      \Let{$\comp{s'}{pendingDNS}[\nu_2]$}{$\an{\an{\str{introspect},\mi{mode},\mi{idp},a',f',n',k'}, \mi{message}}$} \label{line:rp-add-introspect-1}
      \Stop{$\an{\an{s'.\str{DNSaddress}, a, \an{\cDNSresolve, \mi{introspectionEndpoint}.\str{domain}, \nu_2}}}$, $s'$}

    \ElsIf{$\mi{mode} \equiv \str{introspect}$} \label{line:rp-introspect-result}
      \LetST{$\mi{resource}$, $\mi{clientId}$, $\mi{user}$}{\breakalgohook{2}$\an{\an{\str{protected\_resource}, \mi{resource}}, \an{\str{client\_id}, \mi{clientId}}, \an{\str{user}, \mi{user}}} \equiv m'.\str{body}$\breakalgohook{2}}{\textbf{stop} \DefStop} \label{line:rp-get-resource}
      \LetST{$\mi{mode}'$, $\mi{idp}$, $a'$, $f'$, $n'$, $k'$}{$\an{\str{introspect},\mi{mode}',\mi{idp},a',f',n',k'} \equiv \mi{reference}$\breakalgohook{2}}{\textbf{stop} \DefStop}\label{line:rp-check-reference-for-introspect}
      \If{$\mi{mode}' \equiv \str{client\_credentials}$} \label{line:rp-check-mode-cc}
        \Stop{\DefStop} \Comment{In client credential grant mode, no service token is issued.}
      \EndIf
      \LetND{$\mi{goal}$}{$\{\str{authz},\str{authn}\}$} \Comment{Proceed with authorization or authentication.}
      \If{$\mi{goal} \equiv \str{authz}$}
        \Let{$\mi{headers}$}{$\an{}$}
      \Else
        \If{$\mi{clientId} \equiv s'.\str{idps}[\mi{idp}].\str{clientId} \vee (\mi{clientId} \equiv \an{} \wedge \mi{mode} \equiv \str{password}\, \wedge$ \breakalgohook{3}$s'.\str{idps}[\mi{idp}].\str{clientPassword} \equiv \bot)$}
          \If{$\mi{user} \equiv \an{}$} \label{line:rp-check-user-empty}
           \Stop{\DefStop}
          \EndIf
        \Else
          \Stop{\DefStop}
        \EndIf
        \Let{$\mi{serviceToken}$}{$\nu_3$} \label{line:rp-create-service-token}
        \Let{$s'.\str{serviceTokens}[\mi{serviceToken}]$}{$\an{\mi{user},\mi{idp}}$} \label{line:rp-store-service-token}
        \Let{$\mi{headers}$}{$\an{\an{\str{Set{\mhyphen}Cookie},\an{\an{\str{serviceToken},\an{\mi{serviceToken},\bot,\bot,\True}}}}}$} \label{line:rp-set-cookie}
      \EndIf
      \Append{$\an{\str{ReferrerPolicy}, \str{origin}}$}{$\mi{headers}$}
      \Let{$m'$}{$\encs{\an{\cHttpResp, n', 200, \mi{headers}, \an{\str{script\_rp\_index}, \an{}}}}{k'}$} 
      \Stop{$\an{\an{f',a',m'}},s'$} \label{line:rp-send-auth-response}

    \EndIf
    \Stop{\DefStop}

  \ElsIf{$m \in \dnsresponses$} \Comment{Successful DNS response}
      \If{$\comp{m}{nonce} \not\in \comp{s}{pendingDNS} \vee \comp{m}{result} \not\in \addresses \vee \comp{m}{domain} \not\equiv s.\str{pendingDNS}[m.\str{domain}].2.\str{host}$}
      \Stop{\DefStop}
      \EndIf
      \Let{$\an{\mi{reference}, \mi{request}}$}{$\comp{s}{pendingDNS}[\comp{m}{nonce}]$}
      \AppendBreak{2}{$\langle\mi{reference}$, $\mi{request}$, $\nu_4$, $\comp{m}{result}\rangle$}{$\comp{s'}{pendingRequests}$} \label{line:move-reference-to-pending-request} 
      \Let{$\mi{message}$}{$\enc{\an{\mi{request},\nu_4}}{\comp{s'}{keyMapping}\left[\comp{\mi{request}}{host}\right]}$} \label{line:select-enc-key-oauth}
      \Let{$\comp{s'}{pendingDNS}$}{$\comp{s'}{pendingDNS} - \comp{m}{nonce}$}\label{line:rp-remove-pendingdns}
      \Stop{$\an{\an{\comp{m}{result}, a, \mi{message}}}$, $s'$} \label{line:rp-send-https-request}
  \ElsIf{$m \equiv \str{TRIGGER}$} \Comment{Start Client Credentials Grant}
    \LetND{$\mi{idpEntry}$}{$\comp{s'}{idps}$} \label{line:rp-send-client-password-1}
    \Let{$\mi{idp}$}{$\proj{1}{\mi{idpEntry}}$}
    \Let{$\mi{tokenEndpoint}$}{$s'.\str{idps}[\mi{idp}].\str{tokenEndpoint}$} \Comment{$\mi{tokenEndpoint}$ is a URL}
    \Let{$\mi{host}$}{$\mi{tokenEndpoint}.\str{domain}$}
    \Let{$\mi{path}$}{$\mi{tokenEndpoint}.\str{path}$}
    \Let{$\mi{parameters}$}{$\mi{tokenEndpoint}.\str{parameters}$}
 \Let{$\mi{headers}$}{$\an{\an{\str{Authorization},\an{s'.\str{idps}[\mi{idp}].\str{clientId},s'.\str{idps}[\mi{idp}].\str{clientPassword}}}}$}
    \Let{$\mi{message}$}{\breakalgohook{1}$\hreq{ nonce=\nu_5,
        method=\mPost,
        xhost=\mi{host},
        path=\mi{path},
        parameters=\mi{parameters},
        headers=\mi{headers},
        xbody=\an{\an{\str{grant\_type}, \str{client\_credentials}}}}$}\label{line:rp-send-something-2}  \Let{$\comp{s'}{pendingDNS}[\nu_6]$}{$\an{\an{\str{client\_credentials},\mi{idp},\bot,\bot,\bot,\bot}, \mi{message}}$} \label{line:rp-pendingdns-trigger}
    \Stop{$\an{\an{\comp{s'}{DNSaddress}, a, \an{\cDNSresolve, \comp{\comp{\mi{idp}}{tokenEndpoint}}{domain}, \nu_6}}}$, $s'$}

   \Else \Comment{Handle HTTP requests}
     \LetST{$m_{\text{dec}}$, $k$, $k'$, $\mi{inDomain}$}{\breakalgohook{0}$\an{m_{\text{dec}}, k} \equiv \dec{m}{k'} \wedge \an{inDomain,k'} \in s.\str{sslkeys}$\breakalgohook{0}}{\textbf{stop} \DefStop}
     \LetST{$n$, $\mi{method}$, $\mi{path}$, $\mi{parameters}$, $\mi{headers}$, $\mi{body}$}{\breakalgohook{0}$\an{\cHttpReq, n, \mi{method}, \mi{inDomain}, \mi{path}, \mi{parameters}, \mi{headers}, \mi{body}} \equiv m_{\text{dec}}$\breakalgohook{0}}{\textbf{stop} \DefStop}
     \If{$\mi{path} \equiv \str{/}$} \label{line:rp-serve-index} \Comment{Serve index page.}
      \Let{$\mi{headers}$}{$\an{\an{\str{ReferrerPolicy}, \str{origin}}}$}
       \Let{$m'$}{$\encs{\an{\cHttpResp, n, 200, \mi{headers}, \an{\str{script\_rp\_index}, \an{}}}}{k}$}
       \Stop{$\an{\an{f,a,m'}}$, $s'$}
     \ElsIf{$\mi{path} \equiv \str{/startInteractiveLogin} \wedge \mi{method} \equiv \mPost$} \Comment{Serve start interactive login request.}
       \If{$\mi{headers}[\str{Origin}]  \not\equiv \an{\mi{inDomain}, \https}$}  \Comment{CSRF protection.}
         \Stop{\DefStop}
       \EndIf
       \Let{$\mi{idp}$}{$\mi{body}$}
       \If{$\mi{idp} \not\in \comp{s'}{idps}$}
         \Stop{\DefStop}
       \EndIf
       \Let{$\mi{state}$}{$\nu_7$}
       \LetND{$\mi{mode}$}{$\{\str{code},\str{token}\}$}

       \LetND{$\mi{responseStatus}$}{$\{303,307\}$}
       \Let{$\mi{authEndpoint}$}{$s'.\str{idps}[\mi{idp}].\str{authorizationEndpoint}$}\Comment{$\mi{authEndpoint}$ is a URL}
       \Append{$\an{\str{response\_type},\mi{mode}}$}{$\mi{authEndpoint}.\str{parameters}$}
       \Append{\breakalgohook{2} $\an{\str{client\_id},s'.\str{idps}[\mi{idp}].\str{clientId}}$}{$\mi{authEndpoint}.\str{parameters}$}
       \Append{$\an{\str{state},\mi{state}}$}{$\mi{authEndpoint}.\str{parameters}$}
       \LetND{$\mi{redirectUri}$}{$\{\bot,\top\}$}
       \If{$\mi{redirectUri} \equiv \top$}
         \LetND{$\mi{sslkey'}$}{$s'.\str{sslkeys}$} \Comment{Choose one of RP's domains non-deterministically}
         \Let{$\mi{host'}$}{$\proj{1}{\mi{sslkey'}}$}
         \Let{$\mi{redirectUri}$}{$\an{\tUrl, \https, \mi{host'}, \str{/redirectionEndpoint},
    \an{\an{\str{idp}, \mi{idp}}}, \an{}}$}
       \EndIf
       \Let{$\mi{loginSessionId}$}{$\nu_8$}
       \Append{$\an{\mi{loginSessionId},\an{\mi{idp},\mi{state},\mi{mode},\mi{redirectUri}}}$}{$s'.\str{loginSessions}$} \label{line:rp-create-session}
       \Let{$\mi{headers}$}{$\an{\an{\str{Location},\mi{authEndpoint}}}$}
       \Append{$\an{\cSetCookie,\an{\an{\str{loginSessionId},\an{\mi{loginSessionId},\top,\top,\top}}}}$}{$\mi{headers}$}
       \Append{$\an{\str{ReferrerPolicy}, \str{origin}}$}{$\mi{headers}$}
       \Let{$m'$}{$\mathsf{enc_s}(\langle\cHttpResp, n, \mi{responseStatus}, \mi{headers}, \bot\rangle, k)$} \label{line:rp-send-redirect}
       \Stop{$\an{\an{f,a,m'}}$, $s'$}

     \ElsIf{$\mi{path} \equiv \str{/redirectionEndpoint}$} \label{line:rp-redir-endpoint}
       \Let{$\mi{loginSessionId}$}{$\mi{headers}[\str{Cookie}][\str{loginSessionId}]$}
       \LetST{$\mi{idp}$, $\mi{state}$, $\mi{mode}$, $\mi{redirectUri}$}{$\an{\mi{idp},\mi{state},\mi{mode},\mi{redirectUri}} \equiv$ \breakalgohook{2} $s'.\str{loginSessions}[\mi{loginSessionId}]$}{\textbf{stop} \DefStop}
       \Let{$\mi{clientId}$}{$s'.\str{idps}[\mi{idp}].\str{clientId}$}
       \If{$\mi{idp} \not\equiv \mi{parameters}[\str{iss}] \vee \mi{clientId} \not\equiv \mi{parameters}[\str{client\_id}]$} \label{line:rp-check-idp-param}
         \Stop{\DefStop}
       \EndIf
       \If{$\mi{mode} \equiv \str{code}$}
         \Comment{Continue Authorization Code Grant}
         \If{$\mi{parameters}[\str{state}] \not\equiv \mi{state}$} \label{line:rp-check-state}
           \Stop{\DefStop}
         \EndIf
         \Let{$\mi{code}$}{$\mi{parameters}[\str{code}]$}

         \Let{$\mi{tokenRequestHeaders}$}{$\an{}$}
         \Let{$\mi{tokenRequestBody}$}{$\an{\an{\str{grant\_type}, \str{authorization\_code}},\an{\str{code},\mi{code}}}$} \label{line:rp-prepare-atoken-from-code-req}
         \If{$\mi{redirectUri} \not\equiv \bot$}
           \Append{$\an{\str{redirect\_uri},\mi{redirectUri}}$}{$\mi{tokenRequestBody}$}
         \EndIf

         \Let{$\mi{clientPassword}$}{$s'.\str{idps}[\mi{idp}].\str{clientPassword}$} \label{line:rp-send-client-password-2}
         \If{$\mi{clientPassword} \equiv \bot$}
           \Append{$\an{\str{client\_id},\mi{clientId}}$}{$\mi{tokenRequestBody}$}
         \Else
           \Append{\breakalgohook{4} $\an{\str{Authorization},\an{\mi{clientId},\mi{clientPassword}}}$}{$\mi{tokenRequestHeaders}$}
         \EndIf
         
         \Let{$\mi{tokenEndpoint}$}{$s'.\str{idps}[\mi{idp}].\str{tokenEndpoint}$}
         \Let{$\mi{message}$}{$\hreq{ nonce=\nu_9,
             method=\mPost,
             xhost=\mi{tokenEndpoint}.\str{domain},
             path=\mi{tokenEndpoint}.\str{path},
             parameters=\mi{tokenEndpoint}.\str{parameters},
             headers=\mi{tokenRequestHeaders},
             xbody=\mi{tokenRequestBody}}$}  \label{line:rp-send-something-3}
         \Let{$\comp{s'}{pendingDNS}[\nu_{10}]$}{$\an{\an{\str{code},\mi{idp},a,f,n,k}, \mi{message}}$} \label{line:rp-start-retrieve-code}

         \Stop{$\an{\an{\comp{s'}{DNSaddress}, a, \an{\cDNSresolve, \mi{tokenEndpoint}.\str{domain}, \nu_{10}}}}$, $s'$}

       \ElsIf{$\mi{mode} \equiv \str{token}$}
         \Comment{Continue Implicit Grant}
         \Let{$\mi{headers}$}{$\an{\an{\str{ReferrerPolicy}, \str{origin}}}$}
         \Let{$m'$}{$\encs{\an{\cHttpResp, n, 200, \mi{headers}, \an{\str{script\_rp\_implicit}, \mi{idp}}}}{k}$} \label{line:rp-send-script-implicit}
         \Stop{$\an{\an{f,a,m'}}$, $s'$}
       \EndIf
       \Stop{\DefStop}

     \ElsIf{$\mi{path} \equiv \str{/passwordLogin} \wedge \mi{method} \equiv \mPost$}\label{line:rp-password-login} 
       \If{$\mi{headers}[\str{Origin}]  \not\equiv \an{\mi{inDomain}, \https}$}  \Comment{CSRF protection.}
         \Stop{\DefStop}
       \EndIf
       \LetST{$\mi{idp}$, $\mi{username}$, $\mi{password}$}{$\an{\an{\mi{username},\mi{idp}},\mi{password}} \equiv \mi{body}$}{\breakalgohook{2} \textbf{stop} \DefStop}

       \Let{$\mi{tokenRequestHeaders}$}{$\an{}$}
       \Let{$\mi{tokenRequestBody}$}{$\langle \an{\str{grant\_type}, \str{password}},\an{\str{username},\an{\mi{username},\mi{idp}}},$ \breakalgohook{2} $\an{\str{password},\mi{password}}\rangle$}\label{line:rp-prepare-atoken-from-password-req}

       \Let{$\mi{clientId}$}{$s'.\str{idps}[\mi{idp}].\str{clientId}$}\label{line:rp-send-client-password-3}
       \Let{$\mi{clientPassword}$}{$s'.\str{idps}[\mi{idp}].\str{clientPassword}$}
       \If{$\mi{clientPassword} \not\equiv \bot$}
     \Append{\breakalgohook{3} $\an{\str{Authorization},\an{\mi{clientId},\mi{clientPassword}}}$}{$\mi{tokenRequestHeaders}$}
       \EndIf

       \Let{$\mi{tokenEndpoint}$}{$s'.\str{idps}[\mi{idp}].\str{tokenEndpoint}$}
       \Let{$\mi{message}$}{$\hreq{ nonce=\nu_{11},
           method=\mPost,
           xhost=\mi{tokenEndpoint}.\str{domain},
           path=\mi{tokenEndpoint}.\str{path},
           parameters=\mi{tokenEndpoint}.\str{parameters},
           headers=\mi{tokenRequestHeaders},
           xbody=\mi{tokenRequestBody}}$}\label{line:rp-send-something-4}
       \Let{$\comp{s'}{pendingDNS}[\nu_{12}]$}{$\an{\an{\str{password},\mi{idp},a,f,n,k}, \mi{message}}$}\label{line:rp-pendingdns-password}\label{line:rp-start-retrieve-code-from-password}

       \Stop{$\an{\an{\comp{s'}{DNSaddress}, a, \an{\cDNSresolve, \mi{tokenEndpoint}.\str{domain}, \nu_{12}}}}$, $s'$}

     \ElsIf{$\mi{path} \equiv \str{/receiveTokenFromImplicitGrant} \wedge \mi{method} \equiv \mPost$}\label{line:rp-receive-implicit-token}
       \If{$\mi{headers}[\str{Origin}]  \not\equiv \an{\mi{inDomain}, \https}$}  \Comment{CSRF protection.}
         \Stop{\DefStop}
       \EndIf
       \Let{$\mi{loginSessionId}$}{$\mi{headers}[\str{Cookie}][\str{loginSessionId}]$}
       \LetST{$\mi{idp}$, $\mi{state}$, $\mi{mode}$, $\mi{redirectUri}$}{$\an{\mi{idp},\mi{state},\mi{mode},\mi{redirectUri}} \equiv$ \breakalgohook{2} $s'.\str{loginSessions}[\mi{loginSessionId}]$}{\textbf{stop} \DefStop} \label{line:rp-take-value-idp}
       \LetST{$\mi{token}$}{$\an{\mi{token},\mi{state},\mi{idp}} \equiv \mi{body}$}{\textbf{stop} \DefStop} 
       \Let{$\mi{introspectionEndpoint}$}{$s'.\str{idps}[\mi{idp}].\str{introspectionEndpoint}$}\Comment{$\mi{introspectionEndpoint}$ is a URL} \label{line:rp-prepare-introspect-2}
       \Let{$\mi{parameters'}$}{$\mi{introspectionEndpoint}.\str{parameters}$}
       \Append{$\an{\str{token},\mi{token}}$}{$\mi{parameters'}$}
       \Let{$\mi{host}$}{$\mi{introspectionEndpoint}.\str{domain}$}
       \Let{$\mi{path'}$}{$\mi{introspectionEndpoint}.\str{path}$}
       \Let{$\mi{message}$}{$\hreq{ nonce=\nu_{13},
           method=\mGet,
           xhost=\mi{host},
           path=\mi{path'},
           parameters=\mi{parameters'},
           headers=\an{},
           xbody=\an{}}$} \label{line:rp-send-something-5}
       \Let{$\comp{s'}{pendingDNS}[\nu_{14}]$}{$\an{\an{\str{introspect},\str{implicit},\mi{idp},a,f,n,k}, \mi{message}}$} \label{line:rp-add-introspect-2}
       \Stop{$\an{\an{\comp{s'}{DNSaddress}, a, \an{\cDNSresolve, \mi{introspectionEndpoint}.\str{domain}, \nu_{14}}}}$, $s'$}
     \EndIf
  \EndIf
  \Stop{\DefStop}
  
\end{algorithmic} \setlength{\parindent}{1em}

In the following scripts, to extract the current
URL of a document, the function
$\mathsf{GETURL}(\mi{tree},\mi{docnonce})$
is used. We define this function as follows: It searches for the
document with the identifier $\mi{docnonce}$
in the (cleaned) tree $\mi{tree}$
of the browser's windows and documents. It then returns the URL $u$
of that document. If no document with nonce $\mi{docnonce}$
is found in the tree $\mi{tree}$, $\notdef$ is returned.

We use the helper function $\mathsf{GETDOCWINDOW}(\mi{tree},\mi{docnonce})$. It returns the nonce of the window in $\mi{tree}$ that contains the document identified by $\mi{docnonce}$.

\captionof{algorithm}{\label{alg:script-rp-index}Relation of $\mi{script\_rp\_index}$ }
\begin{algorithmic}[1]
\Statex[-1] \textbf{Input:} $\langle\mi{tree}$, $\mi{docnonce}$, $\mi{scriptstate}$, $\mi{scriptinputs}$, $\mi{cookies}$, $\mi{localStorage}$, $\mi{sessionStorage}$, $\mi{ids}$, $\mi{secrets}\rangle$

\LetND{$\mi{switch}$}{$\{\str{auth},\str{link}\}$}

\If{$\mi{switch} \equiv \str{auth}$}

\Let{$\mi{url}$}{$\mathsf{GETURL}(\mi{tree},\mi{docnonce})$}
\LetND{$\mi{id}$}{$\mi{ids}$}\label{line:script-rp-index-select-id}

\Let{$\mi{username}$}{$\proj{1}{\mi{id}}$}
\Let{$\mi{domain}$}{$\proj{2}{\mi{id}}$} \label{line:script-rp-index-selected-domain}
\LetND{$\mi{interactive}$}{$\{\bot,\top\}$} \label{line:script-rp-index-select-interactive}

\If{$\mi{interactive} \equiv \top$}
  \Let{$\mi{url'}$}{$\an{\tUrl, \https, \mi{url}.\str{host}, \str{/startInteractiveLogin},
    \an{}, \an{}}$}
  \Let{$\mi{command}$}{$\an{\tForm, \mi{url'}, \mi{\mPost}, \mi{domain}, \bot}$}
\Else
  \Let{$\mi{url'}$}{$\an{\tUrl, \https, \mi{url}.\str{host}, \str{/passwordLogin},
    \an{}, \an{}}$}
  \LetST{$\mi{secret}$}{$\mi{secret} = \mathsf{secretOfID}(\mi{id}) \wedge \mi{secret} \in \mi{secrets}$}{\breakalgohook{1}\textbf{stop} $\an{s,\mi{cookies},\mi{localStorage},\mi{sessionStorage},\an{}}$}\label{line:script-rp-index-select-secret}
  \Let{$\mi{command}$}{$\an{\tForm, \mi{url'}, \mi{\mPost}, \an{\mi{id},\mi{secret}}, \bot}$}\label{line:script-rp-index-password-login-form}
\EndIf

\State \textbf{stop} $\an{s,\mi{cookies},\mi{localStorage},\mi{sessionStorage},\mi{command}}$ \label{line:script-rp-index-start-oauth-session}

\Else

  \LetND{$\mi{protocol}$}{$\{\http, \https\}$} 
  \LetND{$\mi{host}$}{$\dns$}
  \LetND{$\mi{path}$}{$\mathbb{S}$}
  \LetND{$\mi{fragment}$}{$\mathbb{S}$}
  \LetND{$\mi{parameters}$}{$\dict{\mathbb{S}}{\mathbb{S}}$} 
  \Let{$\mi{url}$}{$\an{\cUrl, \mi{protocol}, \mi{host}, \mi{path}, \mi{parameters}, \mi{fragment}}$}

  \State \textbf{stop} $\an{\tHref, \mi{url}, \mathsf{GETDOCWINDOW}(\mi{tree},\mi{docnonce})), \bot}$

\EndIf

\end{algorithmic} \setlength{\parindent}{1em}

\captionof{algorithm}{Relation of $\mi{script\_rp\_implicit}$ }
\begin{algorithmic}[1]
\Statex[-1] \textbf{Input:} $\langle\mi{tree}$, $\mi{docnonce}$, $\mi{scriptstate}$, $\mi{scriptinputs}$, $\mi{cookies}$, $\mi{localStorage}$, $\mi{sessionStorage}$, $\mi{ids}$, $\mi{secrets}\rangle$
\Let{$\mi{url}$}{$\mathsf{GETURL}(\mi{tree},\mi{docnonce})$}
\Let{$\mi{url'}$}{$\an{\tUrl, \https, \mi{url}.\str{host}, \str{/receiveTokenFromImplicitGrant},
    \an{}, \an{}}$}
\Let{$\mi{body}$}{$\an{\mi{url}.\str{fragment}[\str{access\_token}],\mi{url}.\str{fragment}[\str{state}], \mi{scriptstate}}$}
\Let{$\mi{command}$}{$\an{\tForm, \mi{url'}, \mi{\mPost}, \mi{body}, \bot}$}

\State \textbf{stop} $\an{s,\mi{cookies},\mi{localStorage},\mi{sessionStorage},\mi{command}}$

\end{algorithmic} \setlength{\parindent}{1em}

\subsection{Identity Providers}  \label{app:idps}

An identity provider $i \in \mathsf{IdPs}$ is a web server modeled as
an atomic process $(I^i, Z^i, R^i, s_0^i)$ with the addresses $I^i :=
\mapAddresstoAP(i)$. Its initial state $s^i_0$ contains a list of its
domains and (private) SSL keys, the paths for the endpoints
(authorization and token), a list of users, a list of clients, and
information about the corruption status (initially, the IdP is not
corrupted). Besides this, the full state of $i$ further contains a
list of issued authorization codes and access tokens.

Once the IdP becomes corrupted (when it receives the message
$\str{corrupt}$), it starts collecting all input messages and
non-deterministically sending out whatever messages are derivable from
its state.

Otherwise, IdPs react to three types of requests:

\textbf{Requests to the authorization endpoint path:} In this case, the
IdP expects a POST request containing valid user credentials. If the
user credentials are not supplied, or the request is not a POST
request, the answer contains a script which
shows a form to the user to enter her user credentials. In our model,
the script just extracts the user credentials from the browser and
sends a request to the IdP containing the user credentials and any
OAuth parameters contained in the original request (e.g., the intended
redirect URI).

If the IdP received a POST request with valid user credentials, it
checks the contained client identifier against its own list of
clients. If the client identifier is unknown, the IdP aborts.
Otherwise, it ensures that the redirect URI, if contained in the
request, is valid. For this, it checks the list of redirect URIs
stored along with the client identifier. If none of the redirect URIs
match the redirect URI presented in the request (see ``Matching
Redirect URIs'' below), the IdP aborts. If no redirect URI is provided
in the request, the first URI in the list of redirect URIs is chosen
as the redirect URI. 

Now the IdP creates a new authorization code and 
saves this code together with the client identifier and the redirect
URI (if provided in the request) to the list of authorization codes.

Now, if the response type parameter in the request is ``code'', the
IdP issues a Location redirect header to the redirect URI, appending
(as parameters) the newly created authorization code and the state (if
provided in the request).

If the reponse type is ``token'', the IdP redirects the browser to the
redirect URI, but appends the authorization code, the state (if
provided) and a fixed string (containing the token type, which is
``bearer'') to the hash of the redirect URI.

\textbf{Requests to the token endpoint path:} Requests to the token
endpoint path are only accepted by the IdP if they are POST requests.
The IdP then checks that the request either contains a valid client
ID, provided as a parameter, or a pair of client ID and client
password in a basic authentication header. 

If the grant type parameter is \emph{authorization code}, then the IdP
checks that the authorization code delivered to it is contained in the
list of codes. It checks that the client ID and redirect URI are the
same as those stored in the list of codes. It then creates an access
token and returns it in the HTTPS response (with token type
``bearer'').

If the grant type is \emph{password}, the IdP checks the provided
username and password and creates an access token as above.

If the grant type is \emph{client credentials}, the IdP checks that
the client was authorized with client ID and client password above. If
so, it creates an access token as above.

\textbf{Requests to the introspection endpoint path:} In this case, the
IdP expects an access token in the parameters of the request. If the
access token is valid, the IdP returns the client and user id for
which the access token was issued along with the protected resource for
this client, user, and IdP. 

\subsubsection{Formal description} In the following, we will first
define the (initial) state of $i$ formally and afterwards present the
definition of the relation $R^i$.

To define the initial state, we will need to add a list of all
protected resources that this IdP manages. We therefore define
$\mi{srlist}^i := \an{\{\mathsf{resourceOf}(i, c, u)\,|\, c \in
  \fAP{RP} \cup \{\bot\}, u \in \IDs\}}$ for some IdP $i$.
(Note that we do not use this term for term manipulations in the
algorithm. Instead, this term ensures that the output of the atomic
process is derivable from the input.)

\begin{definition}\label{def:initial-state-idp}
  A \emph{state $s\in Z^i$ of an IdP $i$} is a term of the form
  $\langle\mi{sslkeys}$, $\mi{srlist}$, $\mi{authEndpoint}$,
  $\mi{tokenEndpoint}$, $\mi{introspectEndpoint}$, $\mi{clients}$,
  $\mi{codes}$, $\mi{corrupt}\rangle$ where $\mi{sslkeys} =
  \mi{sslkeys}^i$, $\mi{srlist} = \mi{srlist}^i$, $\mi{authEndpoint}$,
  $\mi{tokenEndpoint}$, $\mi{introspectEndpoint}$ $\in \mathbb{S}$,
  $\mi{clients} \in \dict{\mathbb{S}}{\terms}$, $\mi{codes} \in \terms$, $\mi{atokens} \in
  \dict{\nonces}{\mathbb{S}}$.

  An \emph{initial state $s^i_0$
    of $i$}
  is a state of the form
  $\an{ \mi{sslkeys}^i, \mi{srlist}^i, w, x, y, \mi{clients}^i, \an{},
    \an{}, \bot}$ for some strings $w$,
  $x$
  and $y$
  and a dictionary $\mi{clients}^i$
  that for each relying party $r$
  contains an entry of the form $\an{\mathsf{clientIDOfRP}(r,i), z}$
  where $z$
  is a sequence of URL terms that may contain the wildcard $*$
  (see Definition~\ref{def:pattern-matching}) where for every
  $u \inPairing z$
  we have that $u.\str{protocol} \equiv \https$,
  $u.\str{host} \in \mathsf{dom}(r)$,
  $u.\str{parameters}[\str{iss}] \equiv d$
  for some $d \in \mathsf{dom}(i)$,
  $u.\str{parameters}[\str{client\_id}] \equiv
  \mathsf{clientIDOfRP}(r,i)$, $u.\str{fragment} \equiv \an{}$,
  and $u.\str{path} \equiv \str{/redirectionEndpoint}$.
  (Note that this includes the changes proposed by \df{todo})
\end{definition}

The relation $R^i$ that defines the behavior of the IdP $i$ is defined as follows:

\captionof{algorithm}{\label{alg:idp-oauth} Relation of IdP $R^i$}
\begin{algorithmic}[1]
\Statex[-1] \textbf{Input:} $\an{a,f,m},s$
  \If{$s'.\str{corrupt} \not\equiv \bot \vee m \equiv \corrupt$}
    \Let{$s'.\str{corrupt}$}{$\an{\an{a, f, m}, s'.\str{corrupt}}$}
    \LetND{$m'$}{$d_{V}(s')$}
    \LetND{$a'$}{$\addresses$}
    \Stop{$\an{\an{a',a,m'}}$, $s'$}
  \EndIf
  \Let{$s'$}{$s$}
  \LetST{$m_{\text{dec}}$, $k$, $k'$, $\mi{inDomain}$}{\breakalgohook{0}$\an{m_{\text{dec}}, k} \equiv \dec{m}{k'} \wedge \an{inDomain,k'} \in s.\str{sslkeys}$\breakalgohook{0}}{\textbf{stop} \DefStop} %
  \LetST{$n$, $\mi{method}$, $\mi{path}$, $\mi{parameters}$, $\mi{headers}$, $\mi{body}$}{\breakalgohook{0}$\an{\cHttpReq, n, \mi{method}, \mi{inDomain}, \mi{path}, \mi{parameters}, \mi{headers}, \mi{body}} \equiv m_{\text{dec}}$\breakalgohook{0}}{\textbf{stop} \DefStop} %
  \If{$\mi{path} \equiv s.\str{authEndpoint}$} \Comment{Authorization Endpoint.}
    \If{$\mi{method} \equiv \mGet \vee (\mi{method} \equiv \mPost \wedge (\mi{body}[\str{username}] \equiv \an{} \vee \mi{body}[\str{password}] \equiv \an{}))$}
      \Let{$\mi{data}$}{$\mi{parameters}$}
      \Let{$m'$}{$\encs{\an{\cHttpResp, n, 200, \an{\an{\str{ReferrerPolicy}, \str{origin}}}, \an{\str{script\_idp\_form}, \mi{data}}}}{k}$} \label{line:idp-send-form}
      \Stop{\StopWithMPrime}
    \ElsIf{$\mi{method} \equiv \mPost$} \label{line:idp-auth-endpoint-post}
      \If{$\mi{headers}[\str{Origin}]  \not\equiv \an{\mi{inDomain}, \https}$}  \Comment{CSRF protection.}
        \Stop{\DefStop}
      \EndIf
      \Let{$\mi{username}$}{$\mi{body}[\str{username}]$}
      \Let{$\mi{password}$}{$\mi{body}[\str{password}]$}
      \Let{$\mi{clientid}$}{$\mi{body}[\str{client\_id}]$} 
      \Let{$\mi{allowedredirects}$}{$s.\str{clients}[\mi{clientid}]$} \label{line:check-redir-uris}
      \If{$\mi{password} \not\equiv \mapIDtoPLI(\mi{username})$}
        \Stop{\DefStop}
      \EndIf
      \If{$\mi{allowedredirects} \equiv \an{}$}
        \Stop{\DefStop}
      \EndIf
      \Let{$\mi{redirecturi}$}{$\mi{body}[\str{redirect\_uri}]$}
      \If{$\mi{redirecturi} \not\equiv \an{}$}
        \If{\textbf{not}\ $\mi{redirecturi}\,\dot{\sim}\,\mi{allowedredirects}$}
          \Stop{\DefStop}
        \EndIf
      \Else
        \LetND{$\mi{redirecturi}$}{$\mi{allowedredirects}$} \Comment{Take one from list of redir URIs.}
      \EndIf
      \If{$\mi{body}[\str{response\_type}] \equiv \str{code}$}
        \Append{$\an{\nu_1,\an{\mi{clientid}, \mi{body}[\str{redirect\_uri}], \mi{username}}}$}{$s'.\str{codes}$} \label{line:idp-create-auth-code}\Comment{Create authorization code.} 
        \Append{$\an{\str{code}, \nu_1}$}{$\mi{redirecturi}.\str{parameters}$}
        \Append{$\an{\str{state}, \mi{body}[\str{state}]}$}{$\mi{redirecturi}.\str{parameters}$}
        \Let{$m'$}{$\encs{\an{\cHttpResp, n, 303, \an{\an{\str{Location}, \mi{redirecturi}}}, \an{}}}{k}$} \label{line:idp-redir-with-auth-code}
        \Stop{\StopWithMPrime}
      \Else \Comment{Assume response type token.}
        \Append{$\an{\nu_1, \mi{clientid}, \mi{username}}$}{$s'.\str{atokens}$} \label{line:idp-create-atoken-implicit}
        \Append{$\an{\str{access\_token}, \nu_1}$}{$\mi{redirecturi}.\str{fragment}$}
        \Append{$\an{\str{token\_type}, \str{bearer}}$}{$\mi{redirecturi}.\str{fragment}$}
        \Append{$\an{\str{state}, \mi{body}[\str{state}]}$}{$\mi{redirecturi}.\str{fragment}$}
        \Let{$m'$}{$\encs{\an{\cHttpResp, n, 303, \an{\an{\str{Location}, \mi{redirecturi}}}, \an{}}}{k}$}\label{line:idp-redir-with-access-token}\label{line:idp-redir-with-token}
        \Stop{\StopWithMPrime}
      \EndIf
    \EndIf \label{line:idp-auth-endpoint-post-end}
  \ElsIf{$\mi{path} \equiv s.\str{tokenEndpoint}$} \label{line:idp-token-endpoint} \Comment{Token Endpoint.}
    \If{$\mi{method} \not\equiv \mPost$}
      \Stop{\DefStop}
    \EndIf
    \Let{$\mi{auth}$}{$\bot$}
    \Let{$\mi{clientid}$}{$\bot$}
    \If{$\mi{body}[\str{client\_id}] \not\equiv \an{}$} \Comment{Only client ID is provided, no password.}
      \Let{$\mi{clientid}$}{$\mi{body}[\str{client\_id}]$}
      \Let{$\mi{clientinfo}$}{$s.\str{clients}[\mi{clientid}]$}
      \If{$\mi{clientinfo} \equiv \an{} \vee \mathsf{secretOfClientID}(\mi{clientid}, i) \not\equiv \bot$} \Comment{Empty client secret allowed?}
        \Stop{\DefStop}
      \EndIf
    \ElsIf{$\mi{headers}[\str{Authorization}].1 \not\equiv \an{}$} \label{line:idp-check-auth}
      \Let{$\mi{clientid}$}{$\mi{headers}[\str{Authorization}].1$}
      \Let{$\mi{clientpw}$}{$\mi{headers}[\str{Authorization}].2$}
      \If{$\mathsf{secretOfClientID}(\mi{clientid}, i) \not\equiv \mi{clientpw} \vee \mi{clientpw} \equiv \bot$}
        \Stop{\DefStop}
      \EndIf
      \Let{$\mi{auth}$}{$\mi{clientid}$} \Comment{Authentication with client credentials.}
    \EndIf
    \If{$\mi{body}[\str{grant\_type}] \equiv \str{authorization\_code}$}
      \If{$\mi{clientid} \equiv \bot$}
        \Stop{\DefStop}
      \EndIf
      \Let{$\mi{codeinfo}$}{$s.\str{codes}[\mi{body}[\str{code}]]$}
      \If{$\mi{codeinfo} \equiv \an{} \vee \mi{codeinfo}.1 \not\equiv \mi{clientid} \vee \mi{codeinfo}.2 \not\equiv \mi{body}[\str{redirect\_uri}]$}
        \Stop{\DefStop}
      \EndIf
      \Remove{$s'.\str{codes}$}{$\mi{body}[\str{code}]$}
      \Append{$\an{\nu_1, \mi{clientid}, \mi{codeinfo}.3}$}{$s'.\str{atokens}$} \label{line:idp-create-atoken-code}\Comment{Add nonce, client ID and user ID to list of tokens.}
      \Let{$m'$}{$\encs{\an{\cHttpResp, n, 200, \an{}, \an{\an{\str{access\_token}, \nu_1}, \an{\str{token\_type}, \str{bearer}}}}}{k}$}      
      \Stop{\StopWithMPrime}
    \ElsIf{$\mi{body}[\str{grant\_type}] \equiv \str{password}$} \label{line:idp-grant-type-password}
      \Let{$\mi{username}$}{$\mi{body}[\str{username}]$}
      \Let{$\mi{password}$}{$\mi{body}[\str{password}]$}
      \If{$\mi{password} \not\equiv \mapIDtoPLI(\mi{username})$}
        \Stop{\DefStop}
      \EndIf
      \Append{$\an{\nu_1, \mi{clientid}, \mi{username}}$}{$s'.\str{atokens}$} \label{line:idp-create-atoken-password}
      \Let{$m'$}{$\encs{\an{\cHttpResp, n, 200, \an{}, \an{\an{\str{access\_token}, \nu_1}, \an{\str{token\_type}, \str{bearer}}}}}{k}$}      
      \Stop{\StopWithMPrime}
    \ElsIf{$\mi{body}[\str{grant\_type}] \equiv \str{client\_credentials}$}
      \If{$\mi{auth} \equiv \bot$}
        \Stop{\DefStop}
      \EndIf
      \Append{$\an{\nu_1, \mi{clientid}, \bot}$}{$s'.\str{atokens}$} \label{line:idp-create-atoken-clientcred}
      \Let{$m'$}{$\encs{\an{\cHttpResp, n, 200, \an{}, \an{\an{\str{access\_token}, \nu_1}, \an{\str{token\_type}, \str{bearer}}}}}{k}$}      
      \Stop{\StopWithMPrime}
    \EndIf \label{line:idp-token-endpoint-end}
  \ElsIf{$\mi{path} \equiv s.\str{introspectEndpoint}$}\label{line:idp-introspection-endpoint} \Comment{Introspection Endpoint.}
    \If{$\mi{method} \not\equiv \mGet$}
      \Stop{\DefStop}
    \EndIf
    \Let{$\mi{atoken}$}{$\mi{parameters}[\str{token}]$}
    \LetST{$\mi{clientid}$, $\mi{userid}$}{$\an{\mi{atoken}, \mi{clientid}, \mi{userid}} \inPairing s'.\str{atokens}$}{\textbf{stop}\ \DefStop}
    \Let{$\mi{secret}$}{$\mathsf{resourceOf}(i, \mi{clientid}, \mi{userid})$}
    \Let{$\mi{body}'$}{$\an{\an{\str{protected\_resource}, \mi{secret}}, \an{\str{client\_id}, \mi{clientid}}, \an{\str{user}, \mi{userid}}}$}
    \Let{$m'$}{$\encs{\an{\cHttpResp, n, 200, \an{}, \mi{body}'}}{k}$} \label{line:idp-send-secret-resource}
    \Stop{\StopWithMPrime}
  \EndIf
  \Stop{\DefStop}
\end{algorithmic} \setlength{\parindent}{1em}

\captionof{algorithm}{\label{alg:oauth-script-idp-form} Relation of $\mi{script\_idp\_form}$ }
\begin{algorithmic}[1]
\Statex[-1] \textbf{Input:} $\langle\mi{tree}$, $\mi{docnonce}$, $\mi{scriptstate}$, $\mi{scriptinputs}$, $\mi{cookies}$, $\mi{localStorage}$, $\mi{sessionStorage}$, $\mi{ids}$, $\mi{secrets}\rangle$
\Let{$\mi{url}$}{$\mathsf{GETURL}(\mi{tree},\mi{docnonce})$} 
\LetND{$\mi{url}.\str{path}$}{$\mathbb{S}$}
\Let{$\mi{formdata}$}{$\mi{scriptstate}$}
\LetND{$\mi{id}$}{$\mi{ids}$} \label{line:script-idp-form-select-id}
\LetND{$\mi{secret}$}{$\mi{secrets}$}
\Append{$\an{\str{username}, \mi{id}}$}{$\mi{formdata}$}
\Append{$\an{\str{password}, \mi{secret}}$}{$\mi{formdata}$} 
\Let{$\mi{command}$}{$\an{\tForm, \mi{url}, \mi{\mPost}, \mi{formdata}, \bot}$}
\State \textbf{stop} $\an{s,\mi{cookies},\mi{localStorage},\mi{sessionStorage},\mi{command}}$

\end{algorithmic} \setlength{\parindent}{1em}

\section{Formal Model of OAuth with Web Attackers}
\label{app:model-oauth-auth-webattackers}

We now derive $\oauthwebsystem^w$
(an \emph{OAuth web system with web attackers}) from
$\oauthwebsystem^n$
by replacing the network attacker with a finite set of web attackers.

\begin{definition}
  An \emph{OAuth web system with web attackers}, $\oauthwebsystem^w$,
  is an OAuth web system
  $\oauthwebsystem^n=(\bidsystem, \scriptset, \mathsf{script}, E^0)$
  with the following changes:
  \begin{itemize}
  \item We have $\bidsystem = \mathsf{Hon} \cup \mathsf{Web}$,
    in particular, there is no network attacker. The set
    $\mathsf{Web}$
    contains a finite number of web attacker processes. The set
    $\mathsf{Hon}$
    is as described above, and additionally contains a DNS server $d$
    as defined below.
  \item The set of IP addresses $\mathsf{IPs}$ contains no IP addresses for the network
    attacker, but instead a finite set of IP addresses for each web
    attacker.
  \item The set of Domains $\mathsf{Doms}$
    contains no domains for the network attacker, but instead a finite
    set of domains for each web attacker.

  \item All honest parties use the DNS server $d$ as their DNS server.
  \end{itemize}
\end{definition}

\subsection{DNS Server}
\label{sec:dns-server-1}

The DNS server $d$
is a DNS server as defined in Definition~\ref{def:dns-server}. Its
initial state $s_0^d$
contains only pairings $\an{D, i}$
such that $i \in \mathsf{addr}(\mathsf{dom}^{-1}(D))$,
i.e., any domain is resolved to an IP address belonging to the owner
of that domain (as defined in Appendix~\ref{app:addresses-and-domain-names}).

\subsection{Web Attackers}
\label{sec:web-attackers}

Web attackers, as opposed to network attackers, can only use their own
IP addresses for listening to and sending messages. Therefore, for any
web attacker process $w$
we have that $I^w = \mathsf{addr}(w)$.
The inital states of web attackers are defined parallel to those of
network attackers, i.e., the initial state for a web attacker process
$w$
is $s_0^w = \an{\mi{attdoms}^w, \mi{sslkeys}, \mi{signkeys}}$,
where $\mi{attdoms}^w$
is a sequence of all domains along with the corresponding private keys
owned by the attacker $w$,
$\mi{sslkeys}$
is a sequence of all domains and the corresponding public keys, and
$\mi{signkeys}$
is a sequence containing all public signing keys for all IdPs.

\section{Formal Security Properties}\label{app:form-secur-prop}

The security properties for OAuth are formally defined as follows.

\subsection{Authorization}
\label{sec:fprop-authorization}

Intuitively, authorization for $\oauthwebsystem^n$
means that an attacker should not be able to obtain or use a protected
resource available to some honest RP at an IdP for some user unless
certain parties involved in the authorization process are corrupted.

\begin{definition}[Authorization Property]\label{def:property-authz-a} Let $\oauthwebsystem^n$ be an OAuth web system with a network attacker. We say that \emph{$\oauthwebsystem^n$
    is secure w.r.t.~authorization} iff for every run $\rho$
  of $\oauthwebsystem^n$,
  every state $(S^j, E^j, N^j)$
  in $\rho$,
  every IdP $i \in \fAP{IDP}$,
  every $r \in \fAP{RP} \cup \{\bot\}$
  with $r$ being honest in $S^j$ unless $r = \bot$,
  every $u \in \IDs \cup \{\bot\}$,
  for $n = \mathsf{resourceOf}(i,r,u)$,
  $n$
  is derivable from the attackers knowledge in $S^j$
  (i.e., $n \in d_{\emptyset}(S^j(\fAP{attacker}))$), it follows that
  \begin{enumerate}
  \item $i$ is corrupted in $S^j$, or
  \item $u \neq \bot$
    and (i) the browser $b$
    owning $u$
    is fully corrupted in $S^j$
    or (ii) some $r' \in \mathsf{trustedRPs}(\mathsf{secretOfID}(u))$
    is corrupted in $S^j$.
  \end{enumerate}
\end{definition}

Note that the protected resource $n$
being available to the attacker also models that the attacker can use
a service of the IdP $i$
under the name of the user $u$
(e.g., the attacker can post to the Facebook wall of the victim).

\subsection{Authentication}
\label{sec:fprop-authentication}

Intuitively, authentication for $\oauthwebsystem^n$
means that an attacker should not be able to login at an (honest) RP
under the identity of a user unless certain parties involved in the
login process are corrupted. As explained above, being logged in at an
RP under some user identity means to have obtained a service token for
this identity from the RP.

\begin{definition}[Authentication Property]\label{def:property-authn-a} Let $\oauthwebsystem^n$ be an OAuth web
  system with a network attacker. We say that \emph{$\oauthwebsystem^n$
    is secure w.r.t.~authentication} iff for every run $\rho$
  of $\oauthwebsystem^n$,
  every state $(S^j, E^j, N^j)$
  in $\rho$,
  every $r\in \fAP{RP}$
  that is honest in $S^j$,
  every $i \in \fAP{IDP}$,
  every $g \in \mathsf{dom}(i)$,
  every $u \in \mathbb{S}$,
  every RP service token of the form $\an{n,\an{u,g}}$
  recorded in $S^j(r).\str{serviceTokens}$,
  and $n$
  being derivable from the attackers knowledge in $S^j$
  (i.e., $n \in d_{\emptyset}(S^j(\fAP{attacker}))$),
  then the browser $b$
  owning $u$
  is fully corrupted in $S^j$
  (i.e., the value of $\mi{isCorrupted}$
  is $\fullcorrupt$),
  some $r' \in \mathsf{trustedRPs}(\mathsf{secretOfID}(\an{u,g}))$
  is corrupted in $S^j$, or $i$ is corrupted in $S^j$.
\end{definition}

\subsection{Session Integrity for Authorization and Authentication}
\label{sec:fprop-session-integrity}

\df{Todo: Einfuehrung hier ueberarbeiten.}

Before we can define the session integrity property for authorization
and authentication, we need to define the notion of \emph{Sessions}
and, in particular, \emph{OAuth Sessions}. These capture series of
processing steps related to a single OAuth flow. Note that sessions
here are not the same as sessions in the web which are usually
identified by some session identifier in a cookie.

\subsubsection{Notations}
In the following, given a finite run $\rho = ((S^0, E^0, N^0),\dots,$
$(S^n, E^n, N^n))$
or an infinite run $\rho = ((S^0, E^0, N^0),\dots)$,
we denote by $Q_i$
the processing step
$(S^i, E^i, N^i) \xrightarrow{} (S^{i+1}, E^{i+1}, N^{i+1})$
(with $i \geq 0$ and, for finite runs, $i<n$). \df{are the processing steps uniquely defined?}

\begin{definition}[Emitting Events]\label{def:emitting}
  Given an atomic process $p$,
  an event $e$,
  and a finite run $\rho = ((S^0, E^0, N^0),\dots,$
$(S^n, E^n, N^n))$
or an infinite run $\rho = ((S^0, E^0, N^0),\dots)$
  we say that \emph{$p$
    emits $e$} iff there is a processing step in $\rho$ of the form
  \[ (S^i, E^i, N^i) \xrightarrow[p \rightarrow E]{} (S^{i+1},
    E^{i+1}, N^{i+1})\] for some $i \geq 0$ and 
  a set of events $E$
  with $e \in E$.
  We also say that \emph{$p$
    emits $m$} iff $e = \an{x,y,m}$ for some addresses $x$, $y$.
\end{definition}

\subsubsection{Sessions and OAuth Sessions}
We now define a relation between processing steps. Intuitively, we say
that two processing steps are connected if one processing step causes
the other. This can happen either directly (i.e., one DY process
handles an event output by another process) or indirectly (e.g., a
script that was loaded from an earlier message runs in a browser and
outputs a new message).

\begin{definition}[Connected Processing Steps]\label{def:connected-steps}
  We say that two processing steps
  \begin{eqnarray*}
    Q_x &=& (S^x, E^x, N^x) \xrightarrow[p_x \rightarrow
            E_{\text{out},x}]{e_{\text{in},x} \rightarrow p_x} (S^{x+1}, E^{x+1},
            N^{x+1})\text{ and }\\
    Q_y &=& (S^y, E^y, N^y) \xrightarrow[p_y \rightarrow
            E_{\text{out},y}]{e_{\text{in},y} \rightarrow p_y} (S^{y+1}, E^{y+1},
            N^{y+1})
  \end{eqnarray*}
  are \emph{connected} iff (1) $e_{\text{in},y} \in E_{\text{out},x}$,
  or (2) $p_y$
  is a browser, $e_{\text{in},y}$
  is a trigger event, the browser $p_y$
  selects to run a script (i.e., selects $\str{script}$
  in Line~\ref{line:browser-switch} of
  Algorithm~\ref{alg:browsermain}), and the document selected in
  Line~\ref{line:browser-trigger-document} was created as the result
  of an HTTP(S) message in $E_{\text{out},x}$.
\end{definition}

Based on the notion of connected processing steps, we now define
sessions to be sequences of connected processing steps.
\begin{definition}[Sessions]\label{def:sessions}
  A \emph{Session (in a run $\rho$
    of a web system)} is a sequence of processing steps
  $(Q_0, \ldots, Q_n)$
  or $(Q_0, Q_1, \ldots)$
  such that (1) for all $Q_i$
  with $i > 0$,
  $Q_i$
  is connected to some processing step in $(Q_0, \ldots, Q_{i-1})$,
  and (2) all processing steps appear in the same order as in $\rho$.
\end{definition}

We can now define OAuth Sessions. Intuitively, an OAuth session starts
when a user expresses her wish to use some identity at some RP. Each
session can only contain one such request. A session ends when a
authorization or log in is complete (which does not necessarily happen
in all OAuth Sessions).

\begin{definition}[Start and End Processing Steps for OAuth]\label{def:oauth-start-end}
  We write $\mathsf{startsOA}(Q, b, r, i)$
  iff in the processing step $Q$
  the browser $b$
  triggers the script $\mi{script\_rp\_index}$
  which selects some domain of $i$
  (in Line~\ref{line:script-rp-index-selected-domain} of
  Algorithm~\ref{alg:script-rp-index}) and instructs the browser $b$
  to send a message to $r$
  in Line~\ref{line:script-rp-index-start-oauth-session}.

  We write $\mathsf{endsOA}(Q, b, r, i, t)$
  iff the RP $r$
  in the processing step $Q$
  receives an HTTPS response with a body of the form $\an{\an{\str{protected\_resource}, t}, \an{\str{client\_id}, c}, \an{\str{user}, u}}$ for some terms $c$ and $u$
  from $i$
  and emits an event in Line~\ref{line:rp-send-auth-response} of
  Algorithm~\ref{alg:rp-oauth} that is addressed to $b$.
\end{definition}

\begin{definition}[OAuth Sessions]\label{def:oauth-sessions}
  Let $\oauthwebsystem^w$
  be an OAuth web system with web attackers and $\rho$
  be a run of $\oauthwebsystem^w$.
  An \emph{OAuth Session in $\rho$
    by a browser $b$
    with an RP $r$
    and an IdP $i$}
  is a infinite session $(Q_0, Q_1, \ldots)$
  or a finite session $(Q_0, \ldots, Q_n)$
  in $\rho$
  such that $\mathsf{startsOA}(Q_0, b, r, i)$,
  but there is no $j >0$,
  $i'$
  such that $\mathsf{startsOA}(Q_j, b, r, i')$.
  If there are $j>0$,
  $t$
  such that $\mathsf{endsOA}(Q_j, b, r, i, t)$,
  then the OAuth Session is finite and $n=j$.
\end{definition}
We write $\mathsf{OASessions}(\rho,b,r,i)$
for the set of all OAuth Sessions in $\rho$
by $b$ with the RP $r$ and the IdP $i$.

We now introduce a notation to associate an OAuth Session with the
identity that the browser selected during that session. This models
the user intention to log in/authorize using a specific identity. Note
that this expression of intent can take place in two places, either
during the first step of an OAuth Session (in the resource owner
password credentials mode) or at a later time when the user logs in at
the IdP (in the implicit mode and the authorization code mode).

\begin{definition}[Selected Identity in an OAuth Session]
  Given a run $\rho$
  of an an OAuth web system with a web attacker, a browser $b$,
  an RP $r$,
  some IdP $i$,
  and an OAuth Session $o \in \mathsf{OASessions}(\rho,b,r,i)$
  we write $\mathsf{selected}_\text{nia}(o, b, r, \an{u,g})$
  iff $b$
  in (the first processing step of) $o$
  selected $\mi{id} \equiv \an{u,g}$
  in Line~\ref{line:script-rp-index-select-id} of
  Algorithm~\ref{alg:script-rp-index} and selected
  $\mi{interactive} \equiv \bot$
  in Line~\ref{line:script-rp-index-select-interactive}.

  We write $\mathsf{selected}_\text{ia}(o, b, r, \an{u,g})$
  iff $b$
  in (the first processing step of) $o$
  selected $\mi{interactive} \equiv \top$
  in Line~\ref{line:script-rp-index-select-interactive} and there is
  some $Q'$
  in $o$
  such that $b$
  triggers the script $\mi{script\_idp\_form}$
  in $Q'$
  and selects $\an{u,g}$
  in Line~\ref{line:script-idp-form-select-id} of
  Algorithm~\ref{alg:oauth-script-idp-form} and sends a message out to
  $i$.
\end{definition}

\df{Theorem anpassen, auch im Paper!}

\subsubsection{Session Integrity Property for Authorization}
This security property captures that (a) an RP should only be
authorized to access some resources when the user actually expressed
the wish to start an OAuth flow before, and (b) if a user expressed
the wish to start an OAuth flow using some honest identity provider
and a specific identity, then the OAuth flow is never completed with a
different identity.

\begin{definition}[Session Integrity for Authorization]\label{def:property-authz-b}
  Let $\oauthwebsystem^w$
  be an OAuth web system with web attackers. We say that
  \emph{$\oauthwebsystem^w$
    is secure w.r.t.~session integrity for authorization} iff for
  every run $\rho$
  of $\oauthwebsystem^w$,
  every processing step $Q$
  in $\rho$,
  every browser $b$
  that is honest in $Q$,
  every $r\in \fAP{RP}$
  that is honest in $Q$,
  every $i \in \fAP{IDP}$,
  every identity $\an{u,g}$,
  some protected resource $t$,
  the following holds true: If $\mathsf{endsOA}(Q, b, r, i, t)$, then
  \begin{enumerate}[label=(\alph*)]
  \item there is an OAuth Session $o \in \mathsf{OASessions}(\rho, b, r, i)$, and
  \item if $i$
    is honest in $Q$
    then $Q$
    is in $o$
    and we have that 
    $$\mathsf{selected}_\text{ia}(o, b, r,  \an{u,g}) \iff \big(t
    \equiv \mathsf{resourceOf}(i, r, \an{u,g})\big)$$ or
    $$\mathsf{selected}_\text{nia}(o, b, r,  \an{u,g}) \iff \big(t
    \equiv \mathsf{resourceOf}(i, r', \an{u,g})\big)$$ for some $r' \in \{r, \bot\}$.
  \end{enumerate}

\end{definition}

\subsubsection{Session Integrity Property for Authentication}
This security property captures that (a) a user should only be logged
in when the user actually expressed the wish to start an OAuth flow
before, and (b) if a user expressed the wish to start an OAuth flow
using some honest identity provider and a specific identity, then user
is not logged in under a different identity.

\begin{definition} [Session Integrity for
  Authentication]\label{def:property-authn-b}
  Let $\oauthwebsystem^w$
  be an OAuth web system with web attackers. We say that
  \emph{$\oauthwebsystem^w$
    is secure w.r.t.~session integrity for authentication} iff for
  every run $\rho$
  of $\oauthwebsystem^w$,
  every processing step $Q_\text{login}$
  in $\rho$,
  every browser $b$
  that is honest in $Q_\text{login}$,
  every $r\in \fAP{RP}$
  that is honest in $Q_\text{login}$,
  every $i \in \fAP{IDP}$,
  every identity $\an{u,g}$,
  the following holds true: If in
  $Q_\text{login}$
   a service token of the form
  $\an{n, \an{\an{u',g'},m}}$ for a domain $m \in \mathsf{dom}(i)$ and some $n$, $u'$, $g'$ 
  is created in $r$
  (in Line~\ref{line:rp-store-service-token} of
  Algorithm~\ref{alg:rp-oauth}) and $n$ is sent to the browser $b$, then
  \begin{enumerate}[label=(\alph*)]
  \item there is an OAuth Session $o \in \mathsf{OASessions}(\rho, b, r, i)$, and
  \item if $i$
    is honest in $Q_\text{login}$
    then $Q_\text{login}$
    is in $o$
    and we have that 
    $$\big(\mathsf{selected}_\text{ia}(o, b, r, \an{u,g}) \vee \mathsf{selected}_\text{nia}(o, b, r, \an{u,g})\big) \iff \big(\an{u,g} \equiv \an{u',g'}\big)\ .$$
  \end{enumerate}
  
\end{definition}

\section{Proof of Theorem~\ref{thm:security}}
\label{app:proof-oauth}

Before we present the proof for Theorem~\ref{thm:security}, we first
provide a high-level proof outline. We then show some general
properties of OAuth web systems with a network attacker. Afterwards,
we first prove the authentication property and then the authorization
property.

\subsection{Proof Outline}

We first show three basic lemmas that apply to honest RPs and capture
specific technical details: (1) messages transferred over HTTPS
connections that were initiated by honest RPs cannot be read or
altered by other parties. In particular, honest RPs do not leak the
encryption keys to other parties. (2) HTTP(S) messages which await DNS
resolution in a state of an honest RP are later sent out over the
network without being altered in between. (3) Honest RPs never send
messages to other RPs or themselves, and they send only HTTPS messages
that other RPs cannot decrypt.

\paragraph{Authentication} We then prove the authentication property,
by contradiction. To this end, we show in three separate
lemmas building on each other that (1) the attacker does not learn
passwords of the user, (2) the attacker does not learn authorization
codes that could be used to learn a relevant access token, and (3)
that the attacker in fact does not learn an access token that could be
used to retrieve a service token as described in the authentication
property. We finally show that there is no other way for an attacker
to get hold of a service token (as described in the authentication
property), and that therefore, the authentication property holds
true.

\paragraph{Authorization} As above, we assume that the authorization
property does not hold and lead this to a contradication. The proof
then builds upon lemmas shown in the authentication proof. We show
that the attacker would need to know an access token to acquire a
protected resource. If the protected resource is bound to a user
(i.e., it was not issued in the client credentials mode), then (3)
from above applies and shows that the attacker cannot learn such an
access token, and thus cannot learn this protected resource. If the
protected resource was not assigned to a user (i.e., it was issued in
the client credentials mode), then we can show that the attacker would
need to know client secrets to get the protected resource. We show,
however, that it is not possible for the attacker to learn the
necessary client secrets (which are always required in the client
credentials mode). Therefore, whether it is a user-bound protected
resource or not, the attacker cannot learn it, leading our assumption
to a contradiction.

\paragraph{Session Integrity}
We first show session integrity for authorization. To this end, we
show that an OAuth flow (when the browser $b$
and the RP $r$
are honest) can only be completed when it was actively started by the
browser $b$,
i.e., the correct script was run under an origin of $r$
and this script started the login using some identity $v$.
This is achieved by showing the existence of certain events, starting
from the last event (where the flow is completed) and backtracing to
some starting event. We then show that if $i$
is also honest, the start and end events belong to the same flow, and
that the identity $v$
that was selected in this flow is exactly the same identity for which
$r$
accesses a resource in the last event. This is done by showing that
all events (from the event where the identity was selected to the last
event) are connected and that certain values (such as the chosen
identity) are relayed correctly and not modified in between processing
steps or messages. We then show that session integrity for
authentication follows from session integrity for authorization.

\subsection{Properties of $\oauthwebsystem^n$}

Let $\oauthwebsystem^n = (\bidsystem, \scriptset, \mathsf{script}, E^0)$
be an OAuth web system with a network attacker. Let $\rho$ be a run of $\oauthwebsystem^n$. We write
$s_x = (S^x,E^x,N^x)$ for the states in $\rho$.

\begin{definition}\label{def:contains}
  We say that a term $t$ \emph{is derivably contained in (a term) $t'$
    for (a set of DY processes) $P$ (in a processing step $s_i
    \rightarrow s_{i+1}$ of a run $\rho=(s_0,s_1,\ldots)$)} if $t$ is
  derivable from $t'$ with the knowledge available to $P$, i.e.,
  \begin{align*}
    t \in d_{\emptyset}(\{t'\} \cup \bigcup_{p\in P}S^{i+1}(p))
  \end{align*}

\end{definition}

\begin{definition}\label{def:leak}
  We say that \emph{a set of processes $P$ leaks a term $t$ (in a
    processing step $s_i \rightarrow s_{i+1}$) to a set of processes
    $P'$} if there exists a message $m$ that is emitted (in $s_i
  \rightarrow s_{i+1}$) by some $p \in P$ and $t$ is derivably
  contained in $m$ for $P'$ in the processing step $s_i \rightarrow
  s_{i+1}$. If we omit $P'$, we define $P' := \bidsystem \setminus
  P$. If $P$ is a set with a single element, we omit the set notation.
\end{definition}

\begin{definition}\label{def:creating}
  We say that an DY process $p$ \emph{created} a message $m$ (at
  some point) in a run if $m$ is derivably contained in a message
  emitted by $p$ in some processing step and if there is no earlier
  processing step where $m$ is derivably contained in a message
  emitted by some DY process $p'$.
\end{definition}

\begin{definition}\label{def:accepting}
  We say that \emph{a browser $b$ accepted} a message (as a response
  to some request) if the browser decrypted the message (if it was an
  HTTPS message) and called the function $\mathsf{PROCESSRESPONSE}$,
  passing the message and the request (see
  Algorithm~\ref{alg:processresponse}).
\end{definition}

\begin{definition}\label{def:knowing}
  We say that an atomic DY process \emph{$p$ knows a term $t$} in some
  state $s=(S,E,N)$ of a run if it can derive the term from its
  knowledge, i.e., $t \in d_{\emptyset}(S(p))$.
\end{definition}

\begin{definition}\label{def:initiating}
  We say that a \emph{script initiated a request $r$} if a browser
  triggered the script (in Line~\ref{line:trigger-script} of
  Algorithm~\ref{alg:runscript}) and the first component of the
  $\mi{command}$ output of the script relation is either $\tHref$, 
  $\tIframe$, $\tForm$, or $\tXMLHTTPRequest$ such that the browser
  issues the request $r$ in the same step as a result.
\end{definition}

The following lemma captures properties of RP when it uses HTTPS. For
example, the lemma says that other parties cannot decrypt messages
encrypted by RP.

\begin{lemma}[RP messages are protected by HTTPS]\label{lemma:k-does-not-leak-from-honest-rp} 
  If in the processing step $s_i \rightarrow s_{i+1}$ of a run $\rho$
  of $\oauthwebsystem^n$ an honest relying party $r$ (I) emits an HTTPS
  request of the form

  \[ m = \ehreqWithVariable{\mi{req}}{k}{\pub(k')} \]
  (where $\mi{req}$ is an HTTP request, $k$ is a nonce (symmetric
  key), and $k'$ is the private key of some other DY process $u$), and (II) in the
  initial state $s_0$ the private key $k'$ is only known to $u$, and
  (III) $u$ never leaks $k'$, then all of the following
  statements are true:
  \begin{enumerate}
  \item There is no state of $\oauthwebsystem^n$ where any party except
    for $u$ knows $k'$, thus no one except for $u$ can
    decrypt $\mi{req}$.
    \label{prop:attacker-cannot-decrypt-oauth}
  \item If there is a processing step $s_j \rightarrow s_{j+1}$ where
    the RP $r$ leaks $k$ to $\bidsystem \setminus \{u, r\}$ there
    is a processing step $s_h \rightarrow s_{h+1}$ with $h < j$
    where $u$ leaks the symmetric key $k$ to $\bidsystem \setminus
    \{u,r\}$ or $r$ is corrupted in
    $s_j$. \label{prop:k-doesnt-leak-oauth}
  \item The value of the host header in $\mi{req}$ is the domain that
    is assigned the public key $\pub(k')$ in RP's keymapping
    $s_0.\str{keyMapping}$ (in its initial
    state). \label{prop:host-header-matches-oauth}
  \item If $r$ accepts a response (say, $m'$) to $m$ in a processing step $s_j
    \rightarrow s_{j+1}$ and $r$ is honest in $s_j$ and $u$ did not
    leak the symmetric key $k$ to $\bidsystem \setminus \{u,r\}$ prior
    to $s_j$, then either $u$ or $r$ created the HTTPS response $m'$ to the HTTPS
    request $m$, in particular, the nonce of the HTTP request $\mi{req}$ is not known to
    any atomic process $p$, except for the atomic DY processes $r$ and
    $u$.\label{prop:only-owner-answers-oauth}
  \end{enumerate}
\end{lemma}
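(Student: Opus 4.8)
The plan is to prove the four statements in an order that lets later parts reuse earlier ones: Part~\ref{prop:attacker-cannot-decrypt-oauth} (secrecy of the private key $k'$) is the foundation, Part~\ref{prop:host-header-matches-oauth} is a direct consequence of the structure of the RP relation, Part~\ref{prop:k-doesnt-leak-oauth} (secrecy of the symmetric key $k$) builds on Part~\ref{prop:attacker-cannot-decrypt-oauth}, and Part~\ref{prop:only-owner-answers-oauth} (authenticity of the response) builds on both. The common engine throughout is a secrecy invariant proved by induction over the processing steps of the run $\rho$: for a given nonce, identify the set of processes that can possibly derive it, show this set cannot grow except through an explicit leak by one of its members, and inspect the honest RP relation (Algorithm~\ref{alg:rp-oauth}) to rule out such leaks.

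For Part~\ref{prop:attacker-cannot-decrypt-oauth}, I would argue that $k'$ is a nonce occurring in protocol messages only inside $\pub(k')$ or under encryptions that require $k'$ itself to open, and that the equational theory of Figure~\ref{fig:equational-theory} provides no rule deriving $k'$ from $\pub(k')$. Hence, by~(II) ($k'$ initially known only to $u$) and~(III) ($u$ never leaks $k'$), a straightforward induction over $\rho$ shows that in no state does any process other than $u$ know $k'$; by the decryption equation $\dec{\enc{x}{\pub(y)}}{y}=x$ this means only $u$ can decrypt $\mi{req}$. Part~\ref{prop:host-header-matches-oauth} then follows by inspecting how $r$ assembles outgoing HTTPS requests: every such message is produced as $\enc{\an{\mi{request},\nu_4}}{s'.\str{keyMapping}[\mi{request}.\str{host}]}$ (line~\ref{line:select-enc-key-oauth}), so the encryption key equals $r$'s stored key for the host header; since $r$ never modifies $\str{keyMapping}$, it still equals $s_0.\str{keyMapping}$, and as $m$ was encrypted with $\pub(k')$ we conclude the host is exactly the domain to which $s_0.\str{keyMapping}$ assigns $\pub(k')$.

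For Part~\ref{prop:k-doesnt-leak-oauth}, the key $k=\nu_4$ is a fresh nonce generated by $r$ when emitting $m$. I would establish the invariant that, in every state, $k$ is derivable only to $u$ and $r$ unless $u$ has leaked $k$ to $\bidsystem\setminus\{u,r\}$ or $r$ has been corrupted. By Part~\ref{prop:attacker-cannot-decrypt-oauth}, the only occurrence of $k$ in the emitted message $m$ is inside the encryption $\enc{\an{\mi{req},k}}{\pub(k')}$, which only $u$ can open; and by inspection of Algorithm~\ref{alg:rp-oauth} an honest $r$ stores $k$ only in $\mi{pendingRequests}$ (never emitted) and places it nowhere else in any output. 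Thus the only processes that can emit $k$ in a form derivable to a third party are $u$ and a corrupted $r$. Taking the first step at which $k$ becomes known outside $\{u,r\}$ and inspecting its emitter yields exactly the two disjuncts of the statement.

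The last and most delicate part is Part~\ref{prop:only-owner-answers-oauth}. Here I would first note that $r$ accepts $m'$ only if it decrypts under the stored symmetric key $k$ and its contained nonce matches $\mi{req}.\str{nonce}$, so $m'$ has the shape $\encs{\an{\cHttpResp,\mi{req}.\str{nonce},\dots}}{k}$. Producing such a ciphertext requires knowledge of $k$; under the hypothesis that $u$ did not leak $k$ before $s_j$, Part~\ref{prop:k-doesnt-leak-oauth} (with $r$ honest) gives that only $u$ and $r$ know $k$ at $s_j$, so the first process to emit $m'$ --- i.e.\ its creator in the sense of Definition~\ref{def:creating} --- must be $u$ or $r$. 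The secrecy of $\mi{req}.\str{nonce}$ then follows by the same invariant argument as for $k$: the nonce occurs only inside $m$ (hence protected by Part~\ref{prop:attacker-cannot-decrypt-oauth}) and in $r$'s internal $\mi{pendingRequests}$. I expect the main obstacle to be this code-level bookkeeping: exhaustively checking every branch of Algorithm~\ref{alg:rp-oauth} that emits a message to confirm that an honest $r$ never exposes $k$ or $\mi{req}.\str{nonce}$ outside the protecting encryption, and correctly threading the corruption cases (where $r$'s relation degenerates to the attacker's) through the induction.
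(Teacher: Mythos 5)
Your proposal is correct and follows essentially the same approach as the paper's proof: part~\ref{prop:attacker-cannot-decrypt-oauth} directly from assumptions (II) and (III), part~\ref{prop:host-header-matches-oauth} from the immutability of $\str{keyMapping}$ and the key selection in Line~\ref{line:select-enc-key-oauth}, part~\ref{prop:k-doesnt-leak-oauth} from the freshness of $k$ and its confinement to $\mi{pendingRequests}$ (never emitted outside the encryption under $\pub(k')$), and part~\ref{prop:only-owner-answers-oauth} from the fact that acceptance requires encryption under $k$, which by part~\ref{prop:k-doesnt-leak-oauth} only $u$ and $r$ know. The only cosmetic difference is that you phrase the argument as an explicit inductive secrecy invariant and reorder the parts, whereas the paper argues the same facts by direct inspection of Algorithm~\ref{alg:rp-oauth}.
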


\begin{proof} 
  \textbf{(\ref{prop:attacker-cannot-decrypt-oauth})} follows
  immediately from the condition. If $k'$ is initially only known to
  $u$ and $u$ never leaks $k'$, i.e., even with the knowledge of all
  nonces (except for those of $u$), $k'$ can never be derived from any network
  output of $u$, $k'$ cannot be known to any other party. Thus, nobody
  except for $u$ can derive $\mi{req}$ from $m$.

  \textbf{(\ref{prop:k-doesnt-leak-oauth})} 
  We assume that $r$ leaks $k$ to $\bidsystem \setminus \{u,r\}$ in
  the processing step $s_j \rightarrow s_{j+1}$ without $u$ prior
  leaking the key $k$ to anyone except for $u$ and $r$ and that the
  RP is not fully corrupted in $s_j$, and lead this to a contradiction.

  The RP is honest in $s_i$. From the definition of the RP, we see
  that the key $k$ is always a fresh nonce that is not used anywhere
  else. Further, the key is stored in $\mi{pendingRequests}$ ($\nu_4$ in Lines~\ref{line:move-reference-to-pending-request}f. of Algorithm~\ref{alg:rp-oauth}). The
  information from $\mi{pendingRequests}$ is not extracted or used
  anywhere else, except when handling the received messages, where the key
  is only checked against and used to decrypt the message (Lines~\ref{line:rp-https-response}ff. of Algorithm~\ref{alg:rp-oauth}). Hence, $r$ does not leak $k$ to any other
  party in $s_j$ (except for $u$ and $r$). This proves
  (\ref{prop:k-doesnt-leak-oauth}).

  \textbf{(\ref{prop:host-header-matches-oauth})} Per the definition
  of RPs (Algorithm~\ref{alg:rp-oauth}), a host header is always
  contained in HTTP requests by RPs. From
  Line~\ref{line:select-enc-key-oauth} of Algorithm~\ref{alg:rp-oauth}
  we can see that the encryption key for the request $\mi{req}$ was
  chosen using the host header of the message. It is chosen from the
  $\mi{keyMapping}$ in RP's state, which is never changed during
  $\rho$. This proves (\ref{prop:host-header-matches-oauth}).

  \textbf{(\ref{prop:only-owner-answers-oauth})} An HTTPS response
  $m'$ that is accepted by $r$ as a response to $m$ has to be
  encrypted with $k$. The nonce $k$ is stored by the RP in the
  $\mi{pendingRequests}$ state information (see
  Line~\ref{line:move-reference-to-pending-request} of
  Algorithm~\ref{alg:rp-oauth}). The RP only stores freshly chosen
  nonces there (i.e., the nonces are not used twice, or for other
  purposes than sending one specific request). The information cannot
  be altered afterwards (only deleted) and cannot be read except when
  the RP checks incoming messages. The nonce $k$ is only known to $u$
  (which did not leak it to any other party prior to $s_j$) and $r$
  (which did not leak it either, as $u$ did not leak it and $r$ is
  honest, see (\ref{prop:k-doesnt-leak-oauth})). This proves
  (\ref{prop:only-owner-answers-oauth}). \qed
\end{proof}

On a high level, the following lemma shows that the contents in the
list of pending HTTP requests are immutable.

\begin{lemma}[Pending DNS messages become pending
  requests]\label{lemma:rp-pendingdns-to-pendingrequests}
  Let $r$ be some honest relying party in $\oauthwebsystem^n$, $\nu \in
  \nonces$, $l > 0$ such that $(S^l,E^l,N^l)$ is a state in $\rho$,
  and let $\mi{ref} \in \terms$, $\mi{req} \in \httprequests$ such
  that $S^l(r).\str{pendingDNS} \equiv S^{l-1}(r).\str{pendingDNS}
  \plusPairing \an{\nu,\an{\mi{ref},\mi{req}}}$. Then we have that
  $\forall l'$: if there exist $\mi{ref'}$, $\mi{req'}$, $x$, $y \in
  \terms$ with $\mi{req}.\str{nonce} \equiv \mi{req'}.\str{nonce}$ and
  $\an{\mi{ref}',\mi{req'},x,y} \inPairing
  S^{l'}(r).\str{pendingRequests}$ then $\mi{req} \equiv \mi{req'}
  \wedge \mi{ref} \equiv \mi{ref'}$.
\end{lemma}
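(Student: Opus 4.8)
The statement to prove is Lemma~\ref{lemma:rp-pendingdns-to-pendingrequests}, which asserts that once an HTTP request (identified by its nonce) is placed into an honest RP's $\mi{pendingDNS}$ with a certain reference, whenever a matching-nonce entry later appears in $\mi{pendingRequests}$, both the request body and the reference are unchanged. Let me sketch the proof strategy.

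\medskip

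\textbf{Proof approach.} The plan is to trace the lifecycle of a $\mi{pendingDNS}$ entry through the RP relation $R^r$ (Algorithm~\ref{alg:rp-oauth}) and argue that the only transition moving data from $\mi{pendingDNS}$ to $\mi{pendingRequests}$ is the DNS-response branch (Lines~\ref{line:move-reference-to-pending-request}ff.), and that this transition copies both the reference and the request verbatim. The key observation is that request nonces generated by an honest RP are \emph{freshly chosen} placeholders ($\nu_1$, $\nu_5$, $\nu_9$, $\nu_{11}$, $\nu_{13}$ in the various branches that write to $\mi{pendingDNS}$), so the nonce of $\mi{req}$ uniquely identifies the entry and cannot collide with any other request the RP ever builds. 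This freshness is what lets us upgrade the hypothesis ``some entry with the same nonce appears in $\mi{pendingRequests}$'' to ``\emph{that} entry, with the same reference and body.''

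\medskip

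\textbf{Key steps, in order.} First I would establish a freshness/uniqueness invariant: in any reachable state, for a given nonce $n$ there is at most one entry $\an{\nu, \an{\mi{ref}, \mi{req}}}$ in $S(r).\str{pendingDNS}$ with $\mi{req}.\str{nonce} \equiv n$, and likewise at most one matching entry in $\mi{pendingRequests}$; moreover these nonces are drawn from the per-process nonce sequence $N$ via placeholder replacement and hence never repeat across processing steps. This follows by induction on the length of $\rho$, inspecting every line of Algorithm~\ref{alg:rp-oauth} that writes to either list and noting that each such write uses a fresh placeholder for the HTTP request nonce. Second, I would characterize the $\mi{pendingRequests}$-insertion: the only step adding an element to $S(r).\str{pendingRequests}$ is Line~\ref{line:move-reference-to-pending-request}, which fires on a DNS response whose nonce keys a $\mi{pendingDNS}$ entry $\an{\mi{ref}, \mi{req}}$, and it appends exactly $\an{\mi{ref}, \mi{req}, \nu_4, \comp{m}{result}}$ while deleting that $\mi{pendingDNS}$ entry (Line~\ref{line:rp-remove-pendingdns}). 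Crucially the reference and request subterms are carried over unmodified. Third, I would show $\mi{pendingDNS}$ entries are immutable between creation and consumption: once written (with a fresh nonce), an entry is never edited, only read and eventually removed, so the $\an{\mi{ref}, \mi{req}}$ that is consumed by the DNS-response branch is identical to the one written in state $(S^l, E^l, N^l)$.

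\medskip

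\textbf{Assembling the argument.} Given the hypothesis that at step $l$ the entry $\an{\nu, \an{\mi{ref}, \mi{req}}}$ was just added to $\mi{pendingDNS}$, and given some $l'$ with an entry $\an{\mi{ref}', \mi{req}', x, y}$ in $\mi{pendingRequests}$ satisfying $\mi{req}.\str{nonce} \equiv \mi{req}'.\str{nonce}$, I would argue as follows. By the uniqueness invariant, the nonce $\mi{req}.\str{nonce}$ was generated fresh exactly at step $l$, so no other $\mi{pendingDNS}$ write could have produced a request with this nonce. By the characterization of $\mi{pendingRequests}$ insertions, the entry at $l'$ must have originated from the DNS-response branch applied to some $\mi{pendingDNS}$ entry with request nonce $\mi{req}.\str{nonce}$; by immutability and uniqueness that source entry is precisely $\an{\mi{ref}, \mi{req}}$. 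Since that branch copies the reference and request verbatim, we conclude $\mi{req} \equiv \mi{req}'$ and $\mi{ref} \equiv \mi{ref}'$, as required.

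\medskip

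\textbf{Main obstacle.} The delicate part will be the freshness/uniqueness invariant, specifically ruling out that two \emph{distinct} $\mi{pendingDNS}$ writes ever yield HTTP requests with equal nonces. This requires careful bookkeeping over the placeholder-to-nonce mechanism of Definition~\ref{def:processing-step}: each placeholder $\nu_j$ is replaced by a distinct element consumed from $N$, and $N$ is strictly shrinking across the run, so nonces are globally unique per process. One must verify this holds simultaneously across \emph{all} branches of $R^r$ (the code-grant, password-grant, client-credentials, implicit-grant, and introspection branches each allocate a fresh request nonce), and confirm that no branch ever reinserts or mutates an existing $\mi{pendingDNS}$ entry in a way that would duplicate a nonce. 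Once this global nonce-uniqueness is nailed down, the rest of the lemma is a direct structural reading of the two relevant lines of the algorithm.
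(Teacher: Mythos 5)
Your proposal is correct and follows essentially the same route as the paper's proof: both arguments observe that $\mi{pendingDNS}$ entries are only ever appended (with a freshly chosen HTTP request nonce) or removed, that the sole insertion into $\mi{pendingRequests}$ is the DNS-response branch at Line~\ref{line:move-reference-to-pending-request} which copies $\an{\mi{ref},\mi{req}}$ verbatim, and that nonce freshness rules out any colliding entry. The paper merely states the freshness of the HTTP nonce rather than spelling out the placeholder-replacement bookkeeping you flag as the main obstacle, but the underlying argument is identical.
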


\begin{proof}
  We first note that Algorithm~\ref{alg:rp-oauth} (of relying
  parties) modifies the subterm $\mi{pendingDNS}$ of the RP's state
  only in such a way that entries are appended to or removed from this
  subterm, but never modified. Entries are appended in
  Lines~\ref{line:rp-add-introspect-1}, \ref{line:rp-pendingdns-trigger},
  \ref{line:rp-start-retrieve-code},
  \ref{line:rp-pendingdns-password}, and
  \ref{line:rp-add-introspect-2}. At all these places in the algorithm, an
  HTTP message term, say $\mi{req}$, having a fresh (HTTP) nonce, is
  appended (together with some term $\mi{ref}$) to the subterm
  $\mi{pendingDNS}$. (A processing step executing one of these parts
  of the algorithm results in the state $(S^l,E^l,N^l)$ of $\rho$.)
  Entries are only removed in Line~\ref{line:rp-remove-pendingdns}. In
  this part of the algorithm, a sequence $\an{\mi{ref''},\mi{req''},
    x,y}$ with $x$, $y \in \terms$ and $\mi{req''} \equiv \mi{req}$
  and $\mi{ref''} \equiv \mi{ref}$ (which could not have been altered
  in any processing step) are appended to the subterm
  $\mi{pendingRequests}$ of RP's state (in
  Line~\ref{line:move-reference-to-pending-request}). Besides
  Line~\ref{line:rp-remove-pending-request}, where some entry is
  removed from this subterm, there is no other part of the algorithm
  that alters $\mi{pendingRequests}$ in any way. Hence, there we
  cannot have any state $(S^{l'},E^{l'},N^{l'})$ of $\rho$ where we
  have an request in $\mi{pendingRequests}$ with the same (HTTP) nonce
  but a different $\mi{req'}$ or a different $\mi{ref'}$. \qed
\end{proof}

\begin{lemma}[RPs never send requests to themselves]\label{lemma:rp-never-sends-requests-to-itself}
  An honest RP never sends an HTTP request to any RP  (including itself), and only sends
  HTTPS requests to RPs that the receiving RP cannot decrypt.
\end{lemma}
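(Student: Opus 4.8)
The plan is to argue purely structurally from the relying-party relation $R^r$ (Algorithm~\ref{alg:rp-oauth}), since an honest RP's network behaviour is entirely determined by it (the scripts $\mi{script\_rp\_index}$ and $\mi{script\_rp\_implicit}$ run inside browsers and never cause the RP process to emit anything). First I would enumerate every point at which an honest $r$ emits a message and classify the emissions. Inspecting every \textbf{stop} in Algorithm~\ref{alg:rp-oauth}, the emissions fall into three kinds: (i) DNS requests $\an{\cDNSresolve,\dots}$ sent to $s.\str{DNSaddress}$, which are elements of $\dnsrequests$, not HTTP(S) requests; (ii) symmetrically encrypted HTTP \emph{responses} $\encs{\an{\cHttpResp,\dots}}{k}$ returned to the sender of an incoming request (e.g.\ Lines~\ref{line:rp-send-auth-response}, \ref{line:rp-serve-index}, \ref{line:rp-send-redirect}, \ref{line:rp-send-script-implicit}); and (iii) exactly one kind of HTTP(S) \emph{request}, emitted only in the DNS-response branch at Line~\ref{line:rp-send-https-request}, namely $\mi{message}=\enc{\an{\mi{request},\nu_4}}{s'.\str{keyMapping}[\mi{request}.\str{host}]}$. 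As this message is \emph{asymmetrically} encrypted, every HTTP(S) request an honest RP puts on the network is an HTTPS request; in particular $r$ never emits a plaintext HTTP request at all, which already yields the first claim of the lemma.

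It then remains to analyse the single request type in (iii). The next step is to establish the invariant that $\mi{request}.\str{host}$ is always a domain of some identity provider. Every request reaching Line~\ref{line:rp-send-https-request} was previously stored in $s.\str{pendingDNS}$ and is relayed unmodified (Lemma~\ref{lemma:rp-pendingdns-to-pendingrequests}), and requests are placed into $\mi{pendingDNS}$ only at Lines~\ref{line:rp-send-something-1}, \ref{line:rp-send-something-2}, \ref{line:rp-send-something-3}, \ref{line:rp-send-something-4}, and~\ref{line:rp-send-something-5}, in each case taking $\mi{request}.\str{host}$ to be the $\str{host}$ of the token or introspection endpoint read from $s.\str{idps}[\mi{idp}]$. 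Since $R^r$ never writes to the subterm $\str{idps}$, this dictionary equals its initial value throughout the run; and by Definitions~\ref{def:relying-parties} and~\ref{def:idp-registration-record} every registration record in the initial state is a record for some IdP $i$ at $r$, whose endpoint hosts lie in $\mathsf{dom}(i)$. Hence $\mi{request}.\str{host}\in\mathsf{dom}(i)$ for an $i\in\fAP{IDP}$. Note that the encryption index at Line~\ref{line:rp-send-https-request} is this stored (and immutable) $\mi{request}.\str{host}$, so it is an IdP domain regardless of the IP address $\comp{m}{result}$ to which the attacker-controlled DNS resolves the message.

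With the host pinned down, I would finish the second claim. The key used at Line~\ref{line:rp-send-https-request} is $s.\str{keyMapping}[\mi{request}.\str{host}]=\pub(\mapSSLKey(d))$ with $d\in\mathsf{dom}(i)$, because $\str{keyMapping}$ is also never modified and is initialised to the browser key mapping $d'\mapsto\pub(\mapSSLKey(d'))$. Now let $r'$ be any relying party receiving $\mi{message}$. Executing $R^{r'}$, an incoming ciphertext is accepted only if it decrypts either (a) asymmetrically with some $k'$ such that $\an{\mi{inDomain},k'}\in\mi{sslkeys}^{r'}$, or (b) symmetrically with a key recorded in $r'$'s $\mi{pendingRequests}$. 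For (a), the private keys in $\mi{sslkeys}^{r'}$ are exactly $\mapSSLKey(d')$ for $d'\in\mathsf{dom}(r')$; since the domain set of an IdP is disjoint from that of any RP (Appendix~\ref{app:addresses-and-domain-names}) we have $d\notin\mathsf{dom}(r')$, and by injectivity of $\mapSSLKey$ none of these keys equals $\mapSSLKey(d)$, so $\dec{\mi{message}}{k'}$ never reduces to an HTTP request under the theory of Figure~\ref{fig:equational-theory}. For (b), $\mi{message}$ is an asymmetric ciphertext, so $\decs{\mi{message}}{\mi{key}}$ is irreducible and its first projection is never $\cHttpResp$. In both cases $r'$ cannot recover the underlying request, which is exactly the second claim.

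The step I expect to be the main obstacle is pinning down the intended meaning of ``cannot decrypt'' and keeping the argument robust. Read literally as a Dolev--Yao derivability statement it would require that no RP ever \emph{knows} $\mapSSLKey(d)$, which for a corrupt IdP $i$ is false and would force an honesty assumption the lemma does not make. The clean resolution I would adopt is the operational reading used above: an honest \emph{sending} RP only ever encrypts requests under IdP SSL keys, and a \emph{receiving} RP, executing its own fixed relation $R^{r'}$, only ever attempts decryption with its own SSL private keys (case (a)) or with its own pending symmetric keys (case (b)); neither succeeds on a message encrypted under a foreign IdP domain's key. This turns the lemma into a statement about the message format together with each RP's fixed key material, independent of the corruption status of the IdP, and it is in this form that the later authentication and authorization proofs invoke it.
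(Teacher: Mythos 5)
Your proof is correct and takes essentially the same route as the paper's own (much terser) proof: identify the lines where outgoing requests are constructed, observe that their hosts are taken from the never-modified $\str{idps}$ dictionary and hence are IdP domains by Definitions~\ref{def:relying-parties} and~\ref{def:idp-registration-record}, and conclude that since every such request is sent over HTTPS encrypted under $\mi{keyMapping}$ of that host, a receiving RP cannot decrypt it even if DNS manipulation misroutes the message. The extra detail you supply --- the $\mi{pendingDNS}$-to-$\mi{pendingRequests}$ immutability via Lemma~\ref{lemma:rp-pendingdns-to-pendingrequests}, the explicit case analysis of the receiver's two decryption modes, and the operational reading of ``cannot decrypt'' --- simply makes explicit what the paper's proof leaves implicit.
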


\begin{proof}
  Honest RPs send HTTP requests only in
  Lines~\ref{line:rp-send-something-1},
  \ref{line:rp-send-something-2}, \ref{line:rp-send-something-3},
  \ref{line:rp-send-something-4}, and \ref{line:rp-send-something-5}.
  In all of these cases, they send the HTTPS request to an endpoint
  configured in the state (in $\str{idps}$). With
  Definition~\ref{def:relying-parties}, it follows that the domains to
  which these requests are sent, are never a domain of an RP. All
  requests are sent over HTTPS, and the ``correct'' encryption keys
  (as stored in $\str{keyMapping}$) are used (i.e., even if the
  attacker changes the DNS response such that an HTTPS request is sent
  to an RP, it cannot be decrypted by the RP).
\end{proof}

\subsection{Proof of Authentication}
\label{sec:proof-property-a}

We here want to show that every OAuth web system is secure
w.r.t.~authentication, and therefore assume that there exists an
OAuth web system that is not secure w.r.t.~authentication. We then
lead this to a contradiction, thereby showing that all OAuth web
systems are secure w.r.t.~authentication. In detail, we assume:

\begin{assumption}\label{asn:prop-a}  
  There exists an OAuth web system with a network attacker $\oauthwebsystem^n$,
  a run $\rho$
  of $\oauthwebsystem^n$,
  a state $(S^j, E^j, N^j)$
  in $\rho$,
  some $r\in \fAP{RP}$
  that is honest in $S^j$,
  some $i \in \fAP{IDP}$
  that is honest in $S^j$,
  some $g \in \mathsf{dom}(i)$,
  some $u \in \mathbb{S}$
  with the browser $b$
  owning $u$
  being not fully corrupted in $S^j$
  and all $r' \in \mathsf{trustedRPs}(\mathsf{secretOfID}(\an{u,g}))$
  being honest, some RP service token of the form $\an{n,\an{u,g}}$
  recorded in $S^j(r).\str{serviceTokens}$
  such that $n$
  is derivable from the attackers knowledge in $S^j$
  (i.e., $n \in d_{\emptyset}(S^j(\fAP{attacker}))$).
\end{assumption}

To show that this is a contradiction, we first show some lemmas:

\begin{lemma}[Attacker does not learn passwords]\label{lemma:attacker-does-not-learn-password}
  There exists no $l \le j$, $(S^l, E^l, N^l)$ being a state in $\rho$
  such that
  $\mathsf{secretOfID}(u) \in d_\emptyset(S^l(\fAP{attacker}))$.
\end{lemma}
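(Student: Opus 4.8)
The plan is to prove the lemma by contradiction, using the standard secrecy‑by‑induction technique for the FKS model. Write $p := \mathsf{secretOfID}(\an{u,g})$ (the statement's $\mathsf{secretOfID}(u)$); this is a nonce in $\mathsf{Passwords}$ owned by $b$. Let $T := \{b,i\} \cup \mathsf{trustedRPs}(p)$ be the set of processes allowed to handle $p$. By Assumption~\ref{asn:prop-a}, $i$ and every RP in $\mathsf{trustedRPs}(p)$ are honest in $S^j$ and $b$ is not fully corrupted in $S^j$; since server corruption and browser (full and close) corruption are monotonic along a run, each member of $T$ stays in its non‑fully‑corrupted mode throughout $S^0,\dots,S^j$.

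First I would record two facts. (i) The private SSL keys of $i$ and of the trusted RPs are never leaked: an honest IdP or RP uses its key only inside a decryption $\dec{m}{k'}$ and never places it in an output term, and by injectivity of $\mapSSLKey$ no other honest party holds it; hence by Lemma~\ref{lemma:k-does-not-leak-from-honest-rp}(\ref{prop:attacker-cannot-decrypt-oauth}), instantiated with these servers as the key owner, no process outside $T$ can decrypt an HTTPS message addressed to $i$ or to a trusted RP. (ii) In $S^0$ the nonce $p$ occurs only in $S^0(b).\str{secrets}$, stored exactly under the $\https$ origins of $i$ and of the trusted RPs (the only place the initial‑state definition puts it), and it is derivable only by members of $T$.

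The core is an induction on the processing steps establishing, for all $l\le j$, the invariant $\mathsf{INV}(l)$: for every process $q\notin T$ we have $p\notin d_\emptyset(S^l(q))$, and $p$ is derivably contained in a waiting event of $E^l$ only inside an HTTPS request encrypted under $\pub(\mapSSLKey(d))$ for a domain $d$ of $i$ or of a trusted RP, or inside the symmetric‑key response to such a request. In the step I would case on the acting process. A process outside $T$ can only re‑derive terms it already holds, which by $\mathsf{INV}(l)$ do not yield $p$, and by (i) it cannot open the protected messages. For $q\in T$ I trace every emission of $p$: the browser hands $p$ to a script only for a document whose origin stores $p$, i.e.\ an $\https$ origin of $i$ or a trusted RP (Line~\ref{line:browser-secrets} of Algorithm~\ref{alg:runscript} together with layout~(ii)); by (i) such a document was delivered by the genuine honest server, so its script is the honest $\mi{script\_idp\_form}$ resp.\ $\mi{script\_rp\_index}$, never the attacker script, and these scripts only POST $p$ to their own $\https$ origin (Algorithms~\ref{alg:oauth-script-idp-form} and~\ref{alg:script-rp-index}). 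Hence $p$ leaves $b$ only in an HTTPS request to $i$ or a trusted RP. A trusted RP receiving $p$ at $\str{/passwordLogin}$ forwards it only in the resource‑owner‑password token request to $i$'s token endpoint, read from its fixed $\str{idps}$ record and therefore an honest domain of $i$, again over HTTPS; by Lemma~\ref{lemma:rp-pendingdns-to-pendingrequests} this body is not altered in between. Finally $i$ only compares $p$ against received passwords and never emits it, and a close‑corrupted $b$ has its $\str{secrets}$ erased and so cannot derive $p$ either. Thus $\mathsf{INV}(l)$ is preserved, and in particular $p\notin d_\emptyset(S^l(\fAP{attacker}))$ for all $l\le j$, which is the claim.

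The main obstacle is the browser/script case analysis: I must exclude every indirect channel by which the network attacker could coax $b$ into revealing $p$ — loading the credential‑bearing document over a connection the attacker can read, getting an attacker script to run under an $\https$ origin of $i$ or a trusted RP, or exploiting the redirect handling so that the POST body carrying $p$ is replayed to a malicious origin. The last point is settled by the fix modeled here: honest IdPs answer with $303$ (Algorithm~\ref{alg:idp-oauth}), and on a $303$ the browser drops the request body before following the $\str{Location}$ header (Line~\ref{browser-remove-body}, guarded by Line~\ref{line:location-header} of Algorithm~\ref{alg:processresponse}), so $p$ is never forwarded; the other two points reduce to the secrecy of the honest servers' SSL keys from~(i). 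I expect this enumeration over the browser and script relations to be the most delicate and lengthy part of the argument.
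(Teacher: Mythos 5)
Your proposal is correct, and at its core it is the same argument as the paper's: both start from the fact that $\mathsf{secretOfID}(\an{u,g})$ initially occurs only in the browser's $\str{secrets}$ under the $\https$ origins of $i$ and the trusted RPs, note that the browser hands secrets only to scripts loaded from exactly those origins, conclude from the honesty of those parties that only $\mi{script\_rp\_index}$ and $\mi{script\_idp\_form}$ ever obtain the password, and then trace the two resulting HTTPS POSTs (to a trusted RP's $\str{/passwordLogin}$, which forwards the credentials only to $i$'s token endpoint, and to $i$'s authorization endpoint, which checks and discards them). Where you differ is in packaging and completeness. First, you cast the trace as an explicit induction with a secrecy invariant ranging over both process states and in-flight events; the paper argues the same facts directly, which is shorter but leaves the ``there is no other channel'' claim implicit. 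Second, and more substantively, you explicitly invoke the 303 body-dropping rule (Line~\ref{browser-remove-body}, guarded by Line~\ref{line:location-header}, of Algorithm~\ref{alg:processresponse}) to rule out the browser re-posting the credential-bearing body when it follows the IdP's redirect after authentication. The paper's own proof of this lemma never mentions this: it only says that $i$ ``checks $s$ and discards it,'' and the redirect analysis surfaces only later, in the proof of Lemma~\ref{lemma:attacker-does-not-learn-code}. Since the post-authentication redirect targets the redirection endpoint of whichever RP's client id was in the authorization request --- which may belong to a corrupted RP, not a trusted one --- the 303 behavior is genuinely needed for password secrecy (this is precisely the 307 attack), so your decision to handle it inside this lemma closes an informality in the paper's write-up rather than introducing unnecessary work.
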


\begin{proof}
  Let $s := \mathsf{secretOfID}(\an{u,g})$ and $R := \mathsf{trustedRPs}(s)$.
  Initially, in $S^0$, $s$ is only contained in
  $S^0(b).\str{secrets}[\an{d,\https}]$ for any
  $d \in \bigcup_{r' \in R} \mathsf{dom}(r') \cup \mathsf{dom}(i)$ and
  in no other states (or waiting events). By the definition of the
  browser, we can see that only scripts loaded from the origins
  $\an{d,\https}$ can access $s$. We know that $i$ and all $r' \in R$
  are honest (from the assumption). We therefore have that only the
  scripts $\mi{script\_rp\_index}$, $\mi{script\_rp\_implicit}$, and
  $\mi{script\_idp\_form}$ can access $s$ (if loaded from their
  respective origins) and that the browser does not use or leak $s$ in
  any other way. $\mi{script\_rp\_implicit}$ does not use any browser
  secrets. We therefore focus on the remaining two scripts:

  \begin{description}
  \item[\textbf{\textit{script\_rp\_index}.}] If this script was
    loaded and has access to $s$, it must have been loaded from origin
    $\an{d,\https}$ for a domain $d$ of some trusted relying party,
    say $t$ ($\in R$). If $\mi{script\_rp\_index}$ selects the secret
    $s$ in Line~\ref{line:script-rp-index-select-secret} of
    Algorithm~\ref{alg:script-rp-index}, we know that it must have
    selected the id $u$ in Line~\ref{line:script-rp-index-select-id}.
    We therefore know that in
    Line~\ref{line:script-rp-index-password-login-form}, the browser
    $b$ is instructed to send (using HTTPS) $\an{u,s}$ to the path
    $\str{/passwordLogin}$ at $d$. If $b$ sends such a request, $t$ is
    the only party able to decrypt this request (see the general security
    properties in~\cite{FettKuestersSchmitz-TR-BrowserID-Primary-2015}). This
    message is then processed by $t$ according to
    Lines~\ref{line:rp-password-login}ff. There, username and password
    are forwarded to some IdP, say $i'$, using an HTTPS POST
    request. More precisely, this request is sent to the domain of the
    token endpoint URL contained in the IdP registration record for
    the domain contained in $u$. From
    Definitions~\ref{def:idp-registration-record}
    and~\ref{def:relying-parties} and the fact that this part of the
    state (of relying parties) is never changed, we can see that the
    request is sent to a domain of $i$, and therefore $i' = i$. (The
    attacker can also not modify or read this request, see
    Lemma~\ref{lemma:k-does-not-leak-from-honest-rp}.) The body of the
    HTTPS POST request sent to $i$ is of the following form:
    \[\an{\an{\str{grant\_type},
        \str{password}},\an{\str{username},u},\an{\str{password},s}}
    \, .\]
    Such a request can processed by IdP only in
    Lines~\ref{line:idp-grant-type-password}ff. of
    Algorithm~\ref{alg:idp-oauth}. There, IdP checks $s$ and discards
    it. Therefore, $s$ does not leak from $i$, $t$, or $b$ to the
    attacker (or any other party).

  \item[\textbf{\textit{script\_idp\_form}.}] If this script was
    loaded and has access to $s$, it must have been loaded from origin
    $\an{d,\https}$ for a domain $d$ of $i$. This script sends $s$ to
    $d$ in an HTTPS POST request.  If $b$ sends such a request, $i$ is
    the only party able to decrypt this request (see the general
    security properties
    in~\cite{FettKuestersSchmitz-TR-BrowserID-Primary-2015}). This
    message is then processed by $i$ according to
    Lines~\ref{line:idp-auth-endpoint-post}ff. of
    Algorithm~\ref{alg:idp-oauth}. There, the IdP $i$ checks $s$ and
    discards it. Therefore, $s$ does not leak from $i$ or $b$ to the
    attacker (or any other party).

  \end{description}

  This proves Lemma~\ref{lemma:attacker-does-not-learn-password}. \qed

\end{proof}

\begin{lemma}[Attacker does not learn authorization codes]\label{lemma:attacker-does-not-learn-code}
  There exists no $l \le j$, $(S^l, E^l, N^l)$ being a state in
  $\rho$, $v \in \nonces$, $y \in \terms$ such that
  $v \in d_\emptyset(S^l(\fAP{attacker}))$ and
  $\an{v,\an{\mathsf{clientIDOfRP}(r,i),y,u}} \inPairing
  S^l(i).\str{codes}$.
\end{lemma}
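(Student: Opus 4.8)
The plan is to argue by contradiction through a minimal counterexample. Suppose the claim fails and let $(S^l,E^l,N^l)$ with $l\le j$ minimal be a state in which the attacker knows some $v$ with $\an{v,\an{\mathsf{clientIDOfRP}(r,i),y,u}} \inPairing S^l(i).\str{codes}$. Inspecting Algorithm~\ref{alg:idp-oauth}, the only place where an entry is appended to $i.\str{codes}$ is Line~\ref{line:idp-create-auth-code}, so $v$ is the fresh placeholder nonce $\nu_1$ drawn by $i$ in some earlier step, and in that step $i$ places $v$ only into $i.\str{codes}$ and into the $\str{Location}$ URL of the $303$ response emitted in Line~\ref{line:idp-redir-with-auth-code}. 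Hence the only way $v$ can ever leave $i$ is inside that redirect response, which is encrypted under the reply key $k^*$ taken from the triggering HTTPS request. I would make this ``$v$ originates only here and leaves only there'' observation precise first.

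Next I would pin down the triggering request. Reaching Line~\ref{line:idp-create-auth-code} with stored user component $u$ requires the POST body to carry $\mi{username}\equiv u$ together with $\mi{password}\equiv\mapIDtoPLI(u)$, by the check at Line~\ref{line:idp-auth-endpoint-post} (recall $\mapIDtoPLI=\mathsf{secretOfID}$). By Lemma~\ref{lemma:attacker-does-not-learn-password} the attacker cannot derive $\mathsf{secretOfID}(u)$ in any state up to $j$, so it cannot have produced this request; honest RPs forward $u$'s password only to token endpoints in the resource owner password credentials mode (Lines~\ref{line:rp-password-login}\,ff.), never to an authorization endpoint; and honest IdPs do not originate such requests. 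Thus the request was emitted by the browser $b$ owning $u$, and by the generic browser security properties (cf.~\cite{FettKuestersSchmitz-TR-BrowserID-Primary-2015}) its reply key $k^*$ is a fresh nonce known only to $b$. Consequently the redirect response carrying $v$ is decryptable only by $b$, so at the step in which $i$ emits it the attacker does not learn $v$ (this holds even under replay, since the response is bound to $k^*$, not to an address).

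It then remains to trace $v$ through $b$ and $r$. The $\str{Location}$ URL $\mi{redirecturi}$ used by $i$ satisfies $\mi{redirecturi}\,\dot\sim\,S^l(i).\str{clients}[\mathsf{clientIDOfRP}(r,i)]$ (either $y$ matched the allowed list, or for $y\equiv\an{}$ a list entry was chosen). By Definition~\ref{def:initial-state-idp} every URL matching this list has $\str{protocol}\equiv\https$ and $\str{host}\in\mathsf{dom}(r)$, the host being a concrete domain rather than a wildcard. Therefore, when $b$ follows the redirect it sends $v$ in an HTTPS request whose host is a domain of $r$; since $r$ is honest and never leaks its TLS key, only $r$ can decrypt it, even if a network attacker reroutes the message. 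The referrer policies set by $i$ and $r$ strip $\str{Referer}$ headers down to the origin, so $v$ (which resides in the URL, never in a cookie) cannot escape that way either. Finally $r$ only forwards $v$ to $i$'s token endpoint over HTTPS (Lines~\ref{line:rp-prepare-atoken-from-code-req}, \ref{line:rp-send-something-3}), which by Lemma~\ref{lemma:k-does-not-leak-from-honest-rp} only $i$ can read, and $r$ never reflects $v$ back to the browser. Hence $v$ stays confined to $\{b,i,r\}$; as $i,r$ are honest and $b$ is not fully corrupted (and close-corruption erases exactly the volatile components---windows, pending DNS/HTTP requests, session storage---that could hold $v$, leaving no copy among the retained non-session cookies), none of them leaks $v$, contradicting $v\in d_\emptyset(S^l(\fAP{attacker}))$.

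The hard part will be the exhaustive bookkeeping in the last step: ruling out every channel through which $b$ might surrender $v$ (script access to the location of the redirection-endpoint document, Referer leakage, web storage, and the close-corruption case) and establishing rigorously that the redirect target is \emph{forced} to be the honest RP $r$ rather than an attacker endpoint. The latter hinges on the precise shape of the registered redirect-URI patterns in Definition~\ref{def:initial-state-idp} together with the pattern-matching semantics of Definition~\ref{def:pattern-matching}, so I would state and use a small auxiliary claim that any URL matching $r$'s registered patterns has an HTTPS origin at a domain of $r$.
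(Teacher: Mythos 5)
Your overall strategy matches the paper's: the code is only minted in Line~\ref{line:idp-create-auth-code}, Lemma~\ref{lemma:attacker-does-not-learn-password} pins the triggering password-carrying request to the honest browser $b$, and the code is then traced through the 303 redirect to $r$ and onward to $i$'s token endpoint. Your process-level elimination of possible senders (attacker, honest RPs, honest IdPs, hence $b$) in place of the paper's cause-based analysis (the recursive argument ruling out 307/303 redirect chains inside the browser) is acceptable for your purposes, since your downstream argument only needs that the reply key of the triggering request is held by $b$, regardless of whether a script or a redirect made $b$ send it.

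There is, however, one genuine gap, and it sits at the step this lemma most depends on: you assert that ``$r$ only forwards $v$ to $i$'s token endpoint,'' citing Lines~\ref{line:rp-prepare-atoken-from-code-req} and~\ref{line:rp-send-something-3}, and then invoke Lemma~\ref{lemma:k-does-not-leak-from-honest-rp} to conclude that only $i$ can read it. But in Algorithm~\ref{alg:rp-oauth} the destination of that request is $s'.\str{idps}[\mi{idp}].\str{tokenEndpoint}$, where $\mi{idp}$ is the IdP domain stored in the login session --- which, a priori, can belong to the attacker (the flow at $r$ may have been started for a malicious IdP). This is precisely the IdP mix-up attack of Section~\ref{sec:malicious-idp-mitm}, which breaks this very lemma in the unfixed protocol; if $\mi{idp}$ were an attacker domain, Lemma~\ref{lemma:k-does-not-leak-from-honest-rp} would only tell you the code is encrypted \emph{for the attacker}. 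What forces the destination to be $i$ is the fix: by Definition~\ref{def:initial-state-idp} every registered redirect URI at $i$ carries $\str{iss} \equiv d$ for some $d \in \mathsf{dom}(i)$, so the browser's GET to $r$'s redirection endpoint carries this parameter, and the check in Line~\ref{line:rp-check-idp-param} ($\mi{idp} \equiv \mi{parameters}[\str{iss}]$ together with the matching client id) forces the login-session $\mi{idp}$ to be a domain of $i$, whence the looked-up token endpoint belongs to $i$ by Definition~\ref{def:idp-registration-record}. Your proposed auxiliary claim (any URL matching $r$'s registered patterns has an HTTPS origin at a domain of $r$) covers the browser-to-$r$ hop but says nothing about the $\str{iss}$ parameter, so it cannot close this gap; the paper's proof makes this step explicit. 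Two smaller points: your close-corruption remark should also account for the retained $\str{localStorage}$ (none of the honest scripts ever write $v$ into it), and confinement of $v$ to $\{b,i,r\}$ additionally needs the observation that $i$'s token endpoint merely compares the received code against its state and discards it, which the paper notes at the end of its proof.
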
\df{$u$ or $\an{u,g}$ in this lemma?}

\begin{proof}
  $S^l(i).\str{codes}$ is initially empty and appended to only in
  Line~\ref{line:idp-create-auth-code} of
  Algorithm~\ref{alg:idp-oauth} (where an authorization code is
  created). From Line~\ref{line:idp-auth-endpoint-post}ff. it is easy
  to see that the request which triggers the creation of the
  authorization code must carry a valid password for the specific
  identity in the request body. With
  Lemma~\ref{lemma:attacker-does-not-learn-password}, we can see that
  such a request can not come from the attacker, as the attacker does
  not know the password needed in the request. It can also not
  originate from an IdP, as IdPs do not send requests. Further, the
  request can not originate from any corrupted party or an
  attacker-controlled origin in the honest browser (as otherwise there
  would be a flow where the attacker would learn the password by
  sending it to himself, which can be ruled out by
  Lemma~\ref{lemma:attacker-does-not-learn-password}). It is also
  impossible that the request originated from any non-attacker
  controlled origin in the honest browser: Such a request could be
  caused by either a Location redirect or a script. (We will refer to
  the following as *.) A Location redirect must have been issued by an
  honest party (otherwise, the attacker would have learned the
  password by the time he issued the response, see
  Lemma~\ref{lemma:attacker-does-not-learn-password}). There are two
  occasions where honest parties issue Location redirect headers:
  \begin{description}
  \item[IdP in
    Lines~\ref{line:idp-redir-with-auth-code}/\ref{line:idp-redir-with-token}
    of Algorithm~\ref{alg:idp-oauth}] In this case, an HTTP status
    code of 303 is sent. While this causes the browser to do a new
    request, the new request has an empty body in any
    case.\footnote{Note that at this point it is important that a 303
      redirect is performed, not a 307 redirect. See
      Line~\ref{browser-remove-body} of
      Algorithm~\ref{alg:processresponse} for details.}
  \item[RP in Line~\ref{line:rp-send-redirect} of
    Algorithm~\ref{alg:rp-oauth}] In this case, a 307 redirect could
    be issued, causing the browser to preserve the request body. We
    therefore have to check what could have caused the browser to
    issue a request that caused this Location redirect response, and
    what body could be contained in such a request. For clarity, we
    call the request causing the redirection $m$. It is clear that $m$
    cannot come from the attacker (as it contains the password). It
    must therefore come from an honest browser. If it was caused by a
    redirect in the honest browser, (*) applies recursively.
    Otherwise, there are three scripts that could send such a request
    to RP: $\mi{script\_rp\_index}$, $\mi{script\_rp\_implicit}$, and
    $\mi{script\_idp\_form}$. Of these, only $\mi{script\_rp\_index}$
    causes a request for the path $\str{/startInteractiveLogin}$
    (which triggers the redirection in
    Line~\ref{line:rp-send-redirect} of Algorithm~\ref{alg:rp-oauth}),
    which, however, does not contain any secret.
  \end{description}
  A Location redirect can therefore be ruled out as the cause of the
  request. There are three scripts that could send such a request:
  $\mi{script\_rp\_index}$, $\mi{script\_rp\_implicit}$, and
  $\mi{script\_idp\_form}$. The first two, $\mi{script\_rp\_index}$,
  $\mi{script\_rp\_implicit}$, do not send requests to any IdP
  (instead, they only send requests to the RP that sent the scripts to
  the browser, IdP does not send these scripts to the browser). The
  latter script, $\mi{script\_idp\_form}$, can send the request. In
  this (last remaining) case, the IdP responds with a Location
  redirect header in the response, which, among others, carries a URL
  containing the critical value $v$ (in
  Line~\ref{line:idp-redir-with-auth-code}). In this case, the browser
  receives the response, and immediately triggers a new request to the
  redirection URL. This URL was composed by the IdP using the list of
  valid redirection URIs from $S^l(i).\str{clients}$, a part of the
  state of $i$ that is not changed during any run.
  Definition~\ref{def:initial-state-idp} defines how
  $S^l(i).\str{clients}$ is initialized: For the client id $c:=
  \mathsf{clientIDOfRP}(r,i)$, all redirection URLs carry hosts
  (domains) of $r$, have the protocol $\https$ (HTTPS), and contain a
  query parameter component identifying the IdP $i$. In the checks in
  Lines~\ref{line:check-redir-uris}ff., it is ensured that in any
  case, this restriction on domain and protocol applies to the
  resulting redirection URI (called $\mi{redirecturi}$ in the
  algorithm) as well. Therefore, the browser's GET request which is
  triggered by the Location header and contains the value $v$ is sent
  to $r$ over HTTPS.

  The RP $r$
  can process such a GET request only in
  Lines~\ref{line:rp-serve-index} and~\ref{line:rp-redir-endpoint} of
  Algorithm~\ref{alg:rp-oauth}. It is clear, that in
  Line~\ref{line:rp-serve-index}, the value $v$
  does not leak to the attacker: An honest script is loaded into the
  browser, which does not use $v$
  in any form. If this script causes a request to the attacker (or
  causes a request which would be redirected to the attacker), the
  request does not contain $v$.
  In particular, $v$
  cannot be contained in the Referer header, because this is prevented
  by the Referrer Policy.

  In Lines~\ref{line:rp-redir-endpoint}ff., $v$
  is forwarded to the IdP for checking its validity and retrieving the
  access token (there is also code for retrieving the access code from
  the implicit flow in this part of the code, which is not of interest
  here). When sending the authorization code, it is critical to ensure
  that $v$
  is forwarded to an honest IdP (in particular, $i$),
  and not to the attacker. This is ensured by checking the redirection
  URL parameters, which, as mentioned above, contain a hint for the
  IdP in use, in this case $i$.
  In Line~\ref{line:rp-check-idp-param} it is checked that the IdP, to
  which $v$ is eventually sent, is $i$.

  Therefore, we know that $v$
  is sent via POST to the honest IdP $i$.
  There, it can only be processed in
  Lines~\ref{line:idp-token-endpoint}ff. Here, it is easy to see that
  the value $v$
  (called $\mi{body}[\str{code}]$
  in the algorithm) is checked. However, the value is never sent out
  to any other party and therefore does not leak.

  We have shown that the value $v$ cannot be known to the attacker,
  which proves Lemma~\ref{lemma:attacker-does-not-learn-code}. \qed
\end{proof}

\begin{lemma}[Attacker does not learn access tokens]\label{lemma:attacker-does-not-learn-token}
  There exists no $l \le j$, $(S^l, E^l, N^l)$ being a state in
  $\rho$, $v \in \nonces$, such that
  $v \in d_\emptyset(S^l(\fAP{attacker}))$ and
  $\an{v,\mathsf{clientIDOfRP}(r,i),u} \inPairing
  S^l(i).\str{atokens}$.
\end{lemma}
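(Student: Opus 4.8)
The plan is to argue by contradiction in the same style as the two preceding lemmas, reusing them. Assume such an $l$ and $v$ exist, so that $\an{v,\mathsf{clientIDOfRP}(r,i),u} \inPairing S^l(i).\str{atokens}$ with $v \in d_\emptyset(S^l(\fAP{attacker}))$. Recall that, under Assumption~\ref{asn:prop-a}, $r$ and $i$ are honest in $S^j$ (hence in every $S^{l'}$ with $l' \le j$), the browser $b$ owning $u$ is not fully corrupted, and all trusted RPs of $u$ are honest. Since the recorded user component is $u \neq \bot$, the entry cannot stem from the client-credentials branch (Line~\ref{line:idp-create-atoken-clientcred} of Algorithm~\ref{alg:idp-oauth} records $\bot$), so it was appended in exactly one of three places: the authorization-code branch (Line~\ref{line:idp-create-atoken-code}), the password branch (Line~\ref{line:idp-create-atoken-password}), or the implicit-mode branch (Line~\ref{line:idp-create-atoken-implicit}). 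First I would split into these three cases and, in each, show that $v$ is only ever delivered to an honest party from which it cannot escape.

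In the authorization-code and password cases, $i$ returns $v$ inside the body of the HTTPS response to the token-endpoint request, and the task is to show this request was sent by the honest RP $r$. In the code case, the checks at the token endpoint (Lines~\ref{line:idp-token-endpoint}ff.) force $\mi{body}[\str{code}]$ to be an authorization code bound to client $\mathsf{clientIDOfRP}(r,i)$ and user $u$; by Lemma~\ref{lemma:attacker-does-not-learn-code} the attacker cannot know this code, and---repeating the redirect/script backtracing of that lemma, using that the registered redirect URIs for this client all point to HTTPS domains of $r$ (Definition~\ref{def:initial-state-idp})---the only party that can obtain such a code is $r$, so $r$ sends the token request. In the password case, the request carries $u$'s password, which by Lemma~\ref{lemma:attacker-does-not-learn-password} cannot originate from the attacker; the same backtracing shows it comes from an honest (trusted) RP, and since the recorded client id is $\mathsf{clientIDOfRP}(r,i)$ this RP must be $r$. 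In both cases $r$ directs the token request to a domain of $i$ over HTTPS (Definitions~\ref{def:idp-registration-record} and~\ref{def:relying-parties}), so by Lemma~\ref{lemma:k-does-not-leak-from-honest-rp}(\ref{prop:only-owner-answers-oauth}) only $r$ receives the response carrying $v$; and $r$ forwards $v$ solely to $i$'s introspection endpoint over HTTPS (Lines~\ref{line:rp-https-response-atoken} and~\ref{line:rp-send-something-1}), never embedding it in a response to the browser. Hence $v$ does not leak.

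In the implicit case, $i$ appends $v$ to the fragment of the validated redirect URI and answers with a $303$ redirect. The redirect URI is checked (Lines~\ref{line:check-redir-uris}ff.) against $s.\str{clients}[\mathsf{clientIDOfRP}(r,i)]$, whose entries all carry an HTTPS host of $r$, so the fragment containing $v$ reaches a browser that loads a document from $r$. I would then argue that this browser is the honest $b$ and runs the honest $\mi{script\_rp\_implicit}$: since $v$ is bound to user $u$, the authenticating POST to the authorization endpoint must have carried $u$'s password, which by Lemma~\ref{lemma:attacker-does-not-learn-password} only $b$ can supply (the honest trusted RPs do not drive an implicit flow). Loaded from the honest $r$, the document runs $\mi{script\_rp\_implicit}$, which extracts $v$ from the fragment and POSTs it over HTTPS only to $r$; fragments are never placed in the triggering request and the origin-only Referrer Policy rules out any Referer leak. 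As before, $r$ uses $v$ only for introspection at $i$. Thus in all three cases $v$ stays within the honest parties $i$, $r$, $b$, contradicting $v \in d_\emptyset(S^l(\fAP{attacker}))$.

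The main obstacle I expect is establishing, in the code and password cases, that the token-endpoint request necessarily originates from $r$ and not from some other honest RP or an attacker-controlled origin in $b$. This requires carefully rerunning the Location-redirect-versus-script argument of Lemma~\ref{lemma:attacker-does-not-learn-code} (in particular the $303$-versus-$307$ distinction and the redirect-URI validation that pins the recipient to $r$), together with checking that no honest script or RP endpoint re-emits $v$ after it has been delivered and that $r$ never sends $v$ to another RP (Lemma~\ref{lemma:rp-never-sends-requests-to-itself}). The implicit case is comparatively direct once the authenticating browser is identified as $b$ via the password lemma.
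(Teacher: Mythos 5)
Your proposal is correct and takes essentially the same route as the paper's proof: the identical case split over the three token-creation sites at the IdP (with the client-credentials branch excluded via the $\bot$ user component), the same reliance on Lemma~\ref{lemma:attacker-does-not-learn-password} and Lemma~\ref{lemma:attacker-does-not-learn-code} to force the token-endpoint request to come from an honest RP (resp.\ the authenticating request to come from $b$ via $\mi{script\_idp\_form}$), and the same tracing of $v$ through the validated redirect URI, $\mi{script\_rp\_implicit}$, and the introspection endpoint of $i$, using Lemma~\ref{lemma:k-does-not-leak-from-honest-rp} and Lemma~\ref{lemma:rp-pendingdns-to-pendingrequests}. The paper's write-up is only marginally more exhaustive in the implicit case, where it also checks the RP response branches in which the fragment carrying $v$ is simply never used, but this does not change the argument.
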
\df{$u$ or $\an{u,g}$ in this lemma?}

\begin{proof}
  Initially, we have $S^0(i).\str{atokens} \equiv \an{}$.
  $S^l(i).\str{atokens}$ is appended to only in
  Lines~\ref{line:idp-create-atoken-implicit},
  \ref{line:idp-create-atoken-code},
  \ref{line:idp-create-atoken-password}, and
  \ref{line:idp-create-atoken-clientcred} (where in each an access
  token is issued) of Algorithm~\ref{alg:idp-oauth} and not altered in
  any other way.

  In Line~\ref{line:idp-create-atoken-clientcred}, a term of the form
  $\an{*,*,\bot}$ is appended, which is not of the form
  $\an{v,\mathsf{clientIDOfRP}(r,i),u}$. In what follows, we will
  distinguish between the lines of Algorithm~\ref{alg:idp-oauth} were
  $\an{v,\mathsf{clientIDOfRP}(r,i),u}$ is created:

  \begin{description}
  \item[Line~\ref{line:idp-create-atoken-implicit}.] It is easy to
    see, that $i$
    must have received an HTTPS POST request containing an Origin
    header with one of its HTTPS origins and containing (in its body)
    a dictionary with the entries $\an{\str{username},u}$,
    $\an{\str{password},\mathsf{secretOfID}(u)}$,
    and $\an{\str{client\_id},\mathsf{clientIDOfRP}(r,i)}$.
    (Note that in this case, $\mathsf{clientIDOfRP}(r,i) \neq \bot$,
    and therefore, $r \neq \bot$.)
    From Lemma~\ref{lemma:attacker-does-not-learn-password} it follows
    that such a request cannot be assembled by the attacker. Also,
    neither an IdP nor an RP sends such a request. Hence, this request
    must have be sent from a browser. In the browser, only the scripts
    $\str{script\_idp\_form}$
    and the attacker script $\Rasp$
    can instruct the browser to send such a request. From
    Lemma~\ref{lemma:attacker-does-not-learn-password} we know that
    the attacker script cannot access $\mathsf{secretOfID}(u)$
    (otherwise, there would be a run $\rho'$
    in which the attacker script would send $\mathsf{secretOfID}(u)$
    to the attacker instead). Hence, this request must originate from
    a command returned by $\str{script\_idp\_form}$
    and it must be created by the browser $b$
    (which is $\mathsf{ownerOfID}(u)$).
    This script only sends such a request to its own origin, which
    must be an HTTPS origin (it would not have access to
    $\mathsf{secretOfID}(u)$
    otherwise). The IdP responds with a Location redirect header in
    the response, which among others, carries a URL containing the
    critical value $v$
    (in Line~\ref{line:idp-redir-with-access-token}) in the fragment
    of the URL. In this case, the browser receives the response, and
    immediately triggers a new request to the redirection URL. This
    URL was composed by the IdP using the list of valid redirection
    URIs from $S^l(i).\str{clients}$,
    a part of the state of $i$
    that is not changed during any run.
    Definition~\ref{def:initial-state-idp} defines how
    $S^l(i).\str{clients}$
    is initialized: For the client id
    $c:= \mathsf{clientIDOfRP}(r,i)$,
    all redirection URLs carry hosts (domains) of $r$,
    have the protocol $\https$
    (HTTPS), and contain a query parameter component identifying the
    IdP $i$.
    In the checks in Lines~\ref{line:check-redir-uris}ff., it is
    ensured that in any case, this restriction on domain and protocol
    applies to the resulting redirection URI (called
    $\mi{redirecturi}$
    in the algorithm) as well. Therefore, the browser's GET request
    which is triggered by the Location header and contains the value
    $v$ in the fragment, is sent to $r$ over HTTPS.

    The RP $r$ can process such a GET request only in
    Lines~\ref{line:rp-serve-index} and~\ref{line:rp-redir-endpoint}
    of Algorithm~\ref{alg:rp-oauth}. It is clear, that in
    Line~\ref{line:rp-serve-index}, the value $v$ does not leak to the
    attacker: The honest script $\mi{script\_rp\_index}$ is loaded
    into the browser, which does not use $v$ in any form.\gs{etwas
      sehr kurz}

    In Lines~\ref{line:rp-redir-endpoint}ff., RP's algorithm branches
    into two different flows: (1) RP takes some value from the URL
    parameters (which do not contain $v$) and sends it to some
    process. RP defers its response to the browser and will (later)
    only send out the response in
    Lines~\ref{line:rp-create-service-token}ff. This response,
    however, does not contain a script and hence, the browser will not
    be instructed to create any new messages from the resulting
    document. Hence, $v$ does not leak in this case. (2) RP sends an
    HTTPS response containing the script $\mi{script\_rp\_implicit}$
    (and, in the script's initial state, a domain of $i$ derived from
    the redirection URL), which takes $v$ from the URL parameters and
    instructs the browser to send an HTTPS POST request containing $v$
    and the domain of $i$ to the script's (secure) origin at path
    $\str{/receiveTokenFromImplicitGrant}$. RP processes such a
    request in Lines~\ref{line:rp-receive-implicit-token}ff. where it
    forwards $v$ to the IdP for checking its validity. Here, it is
    critical to ensure that $v$ is forwarded to an honest IdP (in
    particular, $i$), and not to the attacker. This is fulfilled since
    a domain of $i$ is contained in the request's body, and, before
    forwarding, it is checked that $v$ is only forwarded to this
    domain.

    Therefore, we know that $v$ is sent via GET to the honest IdP $i$.
    There, it can only be processed in
    Lines~\ref{line:idp-introspection-endpoint}ff. Here, it is easy to
    see that the value $v$ is never sent out to any
    other party and therefore does not leak.
  \item[Line~\ref{line:idp-create-atoken-code}.] In this case, $i$
    must have received an HTTPS POST request carrying a dictionary in
    its body containing the entries
    $\an{\str{grant\_type},\str{authorization\_code}}$
    and $\an{\str{code},\mi{code}}$
    with $\mi{code} \in \nonces$
    such that
    $\an{\mi{code},\an{\mathsf{clientIDOfRP}(r,i),y,u}} \inPairing
    S^{l'}(i).\str{codes}$ for some $y \in \terms$
    and $l' \le l$.\gs{etwas
      kurz}(Note that, as above,
    $\mathsf{clientIDOfRP}(r,i) \neq \bot$,
    and therefore, $r \neq \bot$.)
    From Lemma~\ref{lemma:attacker-does-not-learn-code} it follows
    that such a request can neither be constructed by the attacker nor
    by a browser instructed by the attacker script $\Rasp$.
    In a browser, the remaining honest scripts do not instruct the
    browser to send such a request. (Honest) IdPs do not send such
    requests. Hence, such a request must have been constructed by an
    (honest) RP. An RP prepares such a request only in
    Lines~\ref{line:rp-prepare-atoken-from-code-req}ff. (of
    Algorithm~\ref{alg:rp-oauth}) and finally sends out this request
    in Line~\ref{line:rp-send-https-request} (after a DNS response).
    With Lemma~\ref{lemma:rp-pendingdns-to-pendingrequests} and
    Lemma~\ref{lemma:k-does-not-leak-from-honest-rp} we know that
    $\mi{reference}$
    contains a term of the form $\an{\str{code},\mi{idp},*,*,*,*}$
    with $\mi{idp} \in \mathsf{dom}(i)$
    (as the request was sent encrypted for and to $i$).
    When RP receives the response from $i$,
    RP processes this response in
    Lines~\ref{line:rp-https-response}ff. where RP distinguishes
    between two cases based on the first subterm in $\mi{reference}$.
    As we know that this subterm is $\str{code}$,
    we have that the response is processed only in
    Lines~\ref{line:rp-https-response-code-password-ccred}ff. RP takes
    a subterm from the response's body which might
    contain\footnote{The subterm actually is $v$.}
    $v$
    in Line~\ref{line:rp-https-response-atoken} and prepares an HTTPS
    POST request to an URL of $i$
    (which is taken from the subterm $\mi{idps}$
    of RP's state and this subterm is never altered and initially
    configured such that the URLs under the dictionary key $\mi{idp}$
    are actually belonging to $i$).
    This HTTPS POST request contains $v$
    in the parameter $\str{token}$.
    This request is finally sent out this request in
    Line~\ref{line:rp-send-https-request} (after a DNS response)
    encrypted for and to $i$.

    It is now easy to see that $i$ only accepts\gs{was heißt accept hier genau. vielleicht sollten wir besser ``processes'' sagen.} the
    request only in Lines~\ref{line:idp-introspection-endpoint}ff. (of
    Algorithm~\ref{alg:idp-oauth}). There, the IdP only checks the
    parameter $\str{token}$ against its state and discards it
    afterwards. Hence, $v$ does not leak.

  \item[Line~\ref{line:idp-create-atoken-password}.] In this case, $i$
    must have received an HTTPS POST request carrying a dictionary in
    its body containing the entries
    $\an{\str{grant\_type},\str{password}}$, $\an{\str{username},u}$,
    and $\an{\str{password},\mathsf{secretOfID}(u)}$. From
    Lemma~\ref{lemma:attacker-does-not-learn-password} it follows that
    such a request cannot be constructed by the attacker, dishonest
    scripts in browsers, or any other dishonest party. (Honest) IdPs
    do not construct such a request. All honest scripts do not
    instruct a browser to send such a request. Hence, the request must
    have been constructed by an honest RP. An RP prepares such a
    request only in
    Lines~\ref{line:rp-prepare-atoken-from-password-req}ff. (of
    Algorithm~\ref{alg:rp-oauth}) and finally sends out this request
    in Line~\ref{line:rp-send-https-request} (after a DNS response).
    With Lemma~\ref{lemma:rp-pendingdns-to-pendingrequests} and
    Lemma~\ref{lemma:k-does-not-leak-from-honest-rp} we know that
    $\mi{reference}$ contains a term of the form
    $\an{\str{password},\mi{idp},*,*,*,*}$ with
    $\mi{idp} \in \mathsf{dom}(i)$ (as the request was sent encrypted
    for and to $i$). When RP receives the response from $i$, RP
    processes this response in Lines~\ref{line:rp-https-response}ff.
    where RP distinguishes between two cases based on the first
    subterm in $\mi{reference}$. As we know that this subterm is
    $\str{code}$, we have that the response is processed only in
    Lines~\ref{line:rp-https-response-code-password-ccred}ff. RP takes
    a subterm from the response's body which might
    contain\footnote{The subterm actually is $v$.} $v$ in
    Line~\ref{line:rp-https-response-atoken} and prepares an HTTPS
    POST request to an URL of $i$ (which is taken from the subterm
    $\mi{idps}$ of RP's state and this subterm is never altered and
    initially configured such that the URLs under the dictionary key
    $\mi{idp}$ are actually belonging to $i$). This HTTPS POST request
    contains $v$ in the parameter $\str{token}$. This request is
    finally sent out this request in
    Line~\ref{line:rp-send-https-request} (after a DNS response)
    encrypted for and to $i$. 
    It is now easy to see that $i$ only accepts\gs{genauer} the request only in
    Lines~\ref{line:idp-introspection-endpoint}ff. (of
    Algorithm~\ref{alg:idp-oauth}). There, the IdP only checks the
    parameter $\str{token}$ against its state and discards it
    afterwards. Hence, $v$ does not leak.
  \end{description}

  We have shown that the value $v$ cannot be known to the attacker,
  which proves Lemma~\ref{lemma:attacker-does-not-learn-token}. \qed
\end{proof}

We can now show that Assumption~\ref{asn:prop-a} is a contradiction.

\begin{lemma}\label{lemma:proof-authentication-final}
  Assumption~\ref{asn:prop-a} is a contradiction. 
\end{lemma}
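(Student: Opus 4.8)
The plan is to prove the three sub-statements of Theorem~\ref{thm:security} separately: authentication and authorization (against the network attacker) by contradiction via a cascade of secrecy lemmas, and session integrity (against web attackers) by a causal event-chain argument. First I would lay the structural groundwork already recorded in Lemmas~\ref{lemma:k-does-not-leak-from-honest-rp}--\ref{lemma:rp-never-sends-requests-to-itself}: HTTPS traffic originating at an honest RP is confidential and its responses authentic, entries moved from $\mi{pendingDNS}$ to $\mi{pendingRequests}$ are never tampered with, and honest RPs never address a decryptable request to another RP. These invariants justify the repeated step ``if an honest party encrypts for a domain of $p$, then only $p$ can read the plaintext,'' which every later confidentiality argument relies upon.

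For authentication I would assume the negation, i.e.\ Assumption~\ref{asn:prop-a} (an attacker-derivable service token $\an{n,\an{u,g}}$ at an honest RP $r$ with honest IdP $i$, an uncorrupted browser owning $u$, and uncorrupted trusted RPs), and derive the contradiction asserted by Lemma~\ref{lemma:proof-authentication-final}. The engine is a three-level secrecy cascade, each level feeding the next: the attacker never learns the user password (Lemma~\ref{lemma:attacker-does-not-learn-password}); hence it never learns an authorization code bound to $\an{\mathsf{clientIDOfRP}(r,i),\cdot,u}$ (Lemma~\ref{lemma:attacker-does-not-learn-code}); hence it never learns a matching access token (Lemma~\ref{lemma:attacker-does-not-learn-token}). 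To close the argument I would enumerate every branch of Algorithm~\ref{alg:rp-oauth} that creates a service token (Line~\ref{line:rp-store-service-token}) and show each path forces $r$ to have redeemed, at the honest $i$, a fresh access token issued for $\an{u,g}$; by the token lemma such a token is not attacker-derivable, so $n$ cannot be either.

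For authorization I would again negate the statement: the resource $n=\mathsf{resourceOf}(i,r,u)$ is attacker-derivable with $i$ honest, $r$ honest or $\bot$, and (when $u\neq\bot$) the browser and trusted RPs honest. Since an honest IdP releases a protected resource only at its introspection endpoint (Algorithm~\ref{alg:idp-oauth}, Line~\ref{line:idp-send-secret-resource}) in exchange for a valid access token, I would split on $u$. When $u\neq\bot$ the relevant token carries $\an{\cdot,\mathsf{clientIDOfRP}(r,i),u}$ and Lemma~\ref{lemma:attacker-does-not-learn-token} applies verbatim. When $u=\bot$ the token stems from the client-credentials mode, so I would prove a companion secrecy lemma that the RP--IdP client secret never leaks (using the same HTTPS-confidentiality invariants as above) and conclude that the attacker cannot assemble the token-endpoint request needed to obtain such a token. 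In both cases $n$ is non-derivable, contradicting the assumption. For session integrity (web-attacker setting) I would instead argue constructively with the connectedness machinery of Definitions~\ref{def:connected-steps}--\ref{def:oauth-sessions}: given $\mathsf{endsOA}(Q,b,r,i,t)$ I backtrack the chain of connected processing steps from the introspection response consumed by $r$ to an initiating $\mathsf{startsOA}$ event, yielding part~(a); then, assuming $i$ honest, I show the selected identity ($\mathsf{selected}_\text{ia}$ or $\mathsf{selected}_\text{nia}$) is relayed faithfully to the terminal resource, so $t$ equals $\mathsf{resourceOf}(i,r,\an{u,g})$ (resp.\ with $r'\in\{r,\bot\}$), giving part~(b). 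Session integrity for authentication then reduces to the authorization case, since the service-token creation in Line~\ref{line:rp-store-service-token} is always immediately preceded by exactly such a resource-returning introspection step.

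The main obstacle is the exhaustive, invariant-preserving case analysis that underlies both the three secrecy lemmas and the identity-relay step of session integrity. At every point where a secret (password, code, access token, or client secret) could conceivably flow, one must rule out leakage through every channel the web model provides: through a \texttt{Location} redirect (which is why the $303$-versus-$307$ distinction of our fix is essential, so that bodies are dropped), through the \texttt{Referer} header (requiring the referrer policy), through CSRF on the start/login endpoints (requiring the origin-header checks), and through IdP confusion (requiring the per-IdP $\str{iss}$ redirect-parameter check of Line~\ref{line:rp-check-idp-param}). Maintaining all of these simultaneously, and verifying that no algorithm branch of the honest browser, RP, or IdP violates any of them, is where essentially all the difficulty resides; once the invariants are in place, the contradictions in the authentication and authorization proofs and the relay equalities in the session-integrity proof follow fairly mechanically.
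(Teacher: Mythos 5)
Your skeleton matches the paper's proof of this lemma: the secrecy cascade of Lemmas~\ref{lemma:attacker-does-not-learn-password}, \ref{lemma:attacker-does-not-learn-code}, and~\ref{lemma:attacker-does-not-learn-token}, followed by an enumeration of the branches of Algorithm~\ref{alg:rp-oauth} through which a service token can be created (Lines~\ref{line:rp-create-service-token}/\ref{line:rp-store-service-token}, reached via the references set in Lines~\ref{line:rp-add-introspect-1} and~\ref{line:rp-add-introspect-2}). However, your closing inference --- ``each path forces $r$ to have redeemed, at the honest $i$, a fresh access token issued for $\an{u,g}$; by the token lemma such a token is not attacker-derivable, so $n$ cannot be either'' --- is a non sequitur, and it is exactly where the substance of this lemma lies. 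The service token $n$ is a \emph{fresh nonce} chosen by $r$; it is not derived from the access token, so secrecy of the access token does not transfer to $n$ by any derivability argument. What the three secrecy lemmas actually give you is only this: the HTTPS request $m'$ that triggers the creation of the service token must carry a valid password, authorization code, or access token for $\an{u,g}$, and therefore cannot have been assembled by the attacker; together with Lemma~\ref{lemma:rp-never-sends-requests-to-itself} (and the fact that IdPs never emit requests), this pins the sender of $m'$ down to the honest browser $b$.

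That, however, only rules out the attacker obtaining $n$ by requesting it himself. To reach the contradiction with $n \in d_{\emptyset}(S^j(\fAP{attacker}))$ you must still show that $n$ does not leak \emph{after} it is issued: the paper does this by noting that $r$ sends the response encrypted under the symmetric key of $m'$, so by Lemma~\ref{lemma:k-does-not-leak-from-honest-rp} it is delivered only to $b$, and then by inspecting how $b$ can process that response --- if $m'$ had been an XMLHTTPRequest, the initiating script would have to be an honest script of $r$ that uses XHR, and no such script exists; otherwise $b$ merely attempts to load the response body as a document, which fails, so no script (in particular no attacker script) ever gets access to $n$. Your proposal never addresses where the freshly issued service token goes or why its recipient cannot leak it, so as written the argument for this lemma is incomplete. (A minor point: the statement under consideration is only the authentication lemma, so the authorization and session-integrity portions of your plan, whatever their merits, are not needed here.)
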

\begin{proof}
  The service token $\an{n,\an{u,g}}$ can only be created and added to
  the state $S^j(r).\str{serviceTokens}$ in
  Line~\ref{line:rp-create-service-token} of
  Algorithm~\ref{alg:rp-oauth}. To get to this point in the algorithm,
  in Line~\ref{line:rp-check-reference-for-introspect}, it is checked
  that $\mi{reference}$ is a tupel of the form
  $\an{\str{introspect}, \mi{mode}, g, a', f', n', k'}$. This is taken
  from the pending requests, where the value is transferred to from
  the pending DNS subterm (see
  Lemma~\ref{lemma:rp-pendingdns-to-pendingrequests}). Such a term
  (starting with $\str{introspect}$) is added to the $\str{pendingDNS}$
  subterm only in Lines~\ref{line:rp-add-introspect-1}
  and~\ref{line:rp-add-introspect-2}. We can now do a case distinction
  between these two possibilities to identify the request $m'$ to
  which the response containing the service token will be sent.

  \begin{description}
  \item[Subterm was added in Line~\ref{line:rp-add-introspect-1}.] In
    this case, in
    Line~\ref{line:rp-https-response-code-password-ccred}, an entry of
    the form $\an{\mi{mode}, g, a', f', n', k'}$ must have existed as
    a reference in the pending HTTP requests, where $\mi{mode}$ is
    either $\str{code}$ or $\str{password}$.\footnote{If $\mi{mode}$
      was $\str{client\_credentials}$, no service token is created. }
    Such entries are created in the following lines:

    \begin{description}
    \item[Line~\ref{line:rp-start-retrieve-code}.] Here, a request
      $m'$ must have been received which contained a valid
      authorization code for the identity $u$ at the IdP
      $i$.\footnote{Otherwise, the IdP would not have returned an
        access token for the identity $u$. As $g = \mi{idp}$ is the
        value stored in the reference, it is also clear that the
        authorization code was, in fact, sent to $i$ for retrieving
        the access token, and not to the attacker or another identity
        provider. Also, the request to $i$ was sent over HTTPS, and
        therefore, Lemma~\ref{lemma:k-does-not-leak-from-honest-rp}
        applies.} The attacker cannot know such an authorization code
      (see Lemma~\ref{lemma:attacker-does-not-learn-code}). The RP $r$
      does not send requests to itself or to other RPs (see
      Lemma~\ref{lemma:rp-never-sends-requests-to-itself}), and no
      IdPs send requests. Therefore, $m'$ must have originated from an
      honest browser.
    \item[Line~\ref{line:rp-start-retrieve-code-from-password}.] In
      this case, a request $m'$ was received which contained a valid
      username and password combination for $u$ at $i$. (As above, we
      know that $i$ was used to verify that information as $g$ is a
      domain of $i$, and $\mi{idp} = g$. \df{<- auch sehr kurz}) Only
      the honest browser $b$ and some relying parties know this
      password (see
      Lemma~\ref{lemma:attacker-does-not-learn-password}), but the RPs
      would not send such a request. The request $m'$ was therefore
      sent from the browser $b$.
    \end{description}
  \item[Subterm was added in Line~\ref{line:rp-add-introspect-2}.] If the
    subterm $\an{\str{introspect}, \mi{mode}, g, a', f', n', k'}$ was
    added in this line, the request causing this ($m'$) must have
    carried a valid access token for the identity $u$ at $i$. (As
    above, the access token was sent to $i$ for validation.) The
    attacker does not know such an access token (see
    Lemma~\ref{lemma:attacker-does-not-learn-token}), and other RPs or
    IdPs cannot send $m'$. Therefore, an honest browser must have sent
    $m'$.
  \end{description}

  We therefore have that in all cases, $m'$ was sent by an honest
  browser. Further, $m'$ must have been an HTTPS request (by the
  definition of RPs). If the request was sent as the result of an
  XMLHTTPRequest command from a script, that script must have been
  loaded from the origin $\an{g_r, \https}$ with $g_r \in
  \mathsf{dom}(r)$. This is a contradiction (there are no honest
  scripts that use XMLHTTPRequest). Otherwise, it was a ``regular''
  request. In this case, the browser tries to load the service token
  as a document (which will fail). In particular, the service token
  $\an{n, \an{u,g}}$ never leaks to the attacker.

  We therefore know that the attacker cannot know the service token,
  which is a contradiction to the assumption.
\end{proof}

\subsection{Proof of Authorization}
\label{sec:proof-authorization}

As above, we assume that there exists an OAuth web system that is not
secure w.r.t.~authorization and lead this to a contradiction. Note
that in the following, some of the lemmas shown in
Appendix~\ref{sec:proof-property-a} are used.

\begin{assumption}\label{asn:authorization}
  There exists a run $\rho$
  of an OAuth web system with a network attacker $\oauthwebsystem^n$,
  a state $(S^j, E^j, N^j)$
  in $\rho$,
  some IdP $i \in \fAP{IDP}$
  that is honest in $S^j$,
  some RP $r \in \fAP{RP} \cup \{ \bot \}$
  with $r$
  being honest in $S^j$
  unless $r = \bot$,
  some $u \in \IDs \cup \{\bot\}$,
  some $n = \mathsf{resourceOf}(i,r,u)$,
  $n$
  being derivable from the attackers knowledge in $S^j$
  (i.e., $n \in d_{\emptyset}(S^j(\fAP{attacker}))$),
  and $u = \bot$
  or ((i) the browser $b$
  owning $u$
  is not fully corrupted in $S^j$
  and (ii) all $r' \in \mathsf{trustedRPs}(\mathsf{secretOfID}(u))$
  are honest $S^j$).
\end{assumption}

We first show the following lemma:

\begin{lemma}[Attacker does not learn RP secrets.]\label{lemma:att-does-not-learn-rp-secrets}
  There exists no $l \leq j$,
  $(S^l, E^l, N^l)$
  being a state in $\rho$
  such that
  $\mathsf{secretOfRP}(r,i) \in d_{\emptyset}(S^l(\fAP{attacker}))$
  unless $\mathsf{secretOfRP}(r,i) \equiv \bot$.
\end{lemma}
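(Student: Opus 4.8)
The plan is to prove this as a standard secrecy invariant by induction over the run, exploiting the fact that an honest RP only ever transmits its client secret inside HTTPS ciphertext addressed to the (honest) IdP it belongs to. Write $s := \mathsf{secretOfRP}(r,i)$ and assume $s \neq \bot$ (otherwise there is nothing to show). Since $\mathsf{secretOfRP}$ is bijective, $s$ is a nonce in $\mathsf{RPSecrets}$ uniquely associated with the pair $(r,i)$; in particular it is disjoint from every password in $\mathsf{Passwords}$ and from every other RP secret. First I would pin down the initial distribution of $s$: by Definitions~\ref{def:idp-registration-record} and~\ref{def:relying-parties}, $s$ occurs (as a derivably contained subterm) in $S^0(r)$ only inside the IdP registration record for $i$ (as its $\str{clientPassword}$), and by the client-secret checks of the IdP algorithm (e.g.\ Line~\ref{line:idp-check-auth}) together with Definition~\ref{def:initial-state-idp} it is available to $i$; it occurs in no browser state (browsers store only $\mathsf{Passwords}$ entries), in no other RP or IdP state, nor in the attacker's state or any waiting event of $E^0$. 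Finally I would record that corruption is monotone (the $\str{corrupt}$ flag, once set, is never reset), so since $r$ and $i$ are honest in $S^j$ they are honest in every $S^l$ with $l \le j$.

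The core is the invariant: for every $l \le j$ and every $p \in \bidsystem \setminus \{r,i\}$ we have $s \notin d_\emptyset(S^l(p))$, and $s$ is not derivably contained in any event waiting in $E^l$ for $\bidsystem \setminus \{r,i\}$. Since $\fAP{attacker} \notin \{r,i\}$, this invariant immediately yields the claim. The base case $l=0$ is the initial-distribution analysis above. For the inductive step, consider the process $p$ acting in $s_{l-1}\to s_l$. If $p \notin \{r,i\}$, then by the induction hypothesis $p$ neither knows $s$ nor receives it in the handled event, and since every DY process output must be derivable from input and state, $p$ cannot emit $s$; thus no party outside $\{r,i\}$ gains knowledge of $s$. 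It therefore remains to inspect the two honest processes $r$ and $i$.

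The hard part---and the only genuinely protocol-specific step---is to verify that neither $r$ nor $i$ ever emits a message from which $s$ is derivable for anyone other than $\{r,i\}$. For $i$: going through Algorithm~\ref{alg:idp-oauth}, the secret enters only into equality checks of the form $\mathsf{secretOfClientID}(\cdot,i)\not\equiv\cdot$ (Line~\ref{line:idp-check-auth}); every message $i$ emits (the login form carrying the request parameters, Location redirects carrying code/token/state, token responses carrying a fresh access-token nonce, and introspection responses carrying the protected resource, the public client id, and the user id) is free of $s$, so $i$ never leaks it. For $r$ (honest, so the $\corrupt$ branch is not taken): the client password is inserted only into an $\str{Authorization}$ header of a token-endpoint request, in the three branches at Lines~\ref{line:rp-send-client-password-1}, \ref{line:rp-send-client-password-2}, and~\ref{line:rp-send-client-password-3}; this plaintext request is stored in $\mi{pendingDNS}$ and is emitted only after DNS resolution, and then only as the ciphertext $\enc{\an{\mi{request},\nu_4}}{\comp{s'}{keyMapping}[\comp{\mi{request}}{host}]}$ (Line~\ref{line:select-enc-key-oauth}). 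Because $s$ appears only in the registration record for $i$ at $r$ (bijectivity of $\mathsf{secretOfRP}$) and $r$'s $\str{idps}$ subterm is never modified, the host of this request lies in $\mathsf{dom}(i)$, and because an honest $r$'s $\str{keyMapping}$ is the genuine one (induced by $\mapSSLKey$), the encryption key is $i$'s true public key. Lemma~\ref{lemma:k-does-not-leak-from-honest-rp}(\ref{prop:attacker-cannot-decrypt-oauth})---with $u:=i$, whose private SSL key is initially known only to $i$ and never leaked by an honest IdP---then shows no party except $i$ can decrypt this message, so $s$ is not derivable from it for anyone in $\bidsystem\setminus\{r,i\}$. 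All other messages emitted by $r$ (browser responses and redirects, DNS requests) contain no client password. Hence the invariant is preserved, completing the induction and contradicting $s \in d_\emptyset(S^l(\fAP{attacker}))$. The main obstacle is thus the exhaustive-but-routine audit of the RP and IdP algorithms confirming these are the only appearances of $s$; the one place needing care is arguing that the network attacker's control over DNS is harmless, which is precisely why the argument keys the encryption on the domain-indexed $\str{keyMapping}$ rather than on the destination IP.
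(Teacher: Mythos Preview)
Your proposal is correct and follows essentially the same approach as the paper: identify the initial distribution of $s=\mathsf{secretOfRP}(r,i)$, locate the three places in Algorithm~\ref{alg:rp-oauth} where $r$ puts it into an \texttt{Authorization} header of a token-endpoint request, apply Lemma~\ref{lemma:k-does-not-leak-from-honest-rp} to conclude that only $i$ can decrypt that HTTPS request, and then verify by inspection of Algorithm~\ref{alg:idp-oauth} that $i$ only compares against the secret and never re-emits it. Your write-up is somewhat more explicit in casting this as an induction over the run and in noting that the encryption key is chosen from the domain-indexed \texttt{keyMapping} (so DNS spoofing by the network attacker is harmless), but the substance is the same; one small inaccuracy is that $s$ is not literally a subterm of $i$'s initial state per Definition~\ref{def:initial-state-idp} (the IdP only uses it through the equality check at Line~\ref{line:idp-check-auth}), though this does not affect your invariant since it is phrased for parties outside $\{r,i\}$.
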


\begin{proof}
  Following the definition of the initial states of all atomic
  processes (in particular Definition~\ref{def:relying-parties}),
  initially, $\mathsf{secretOfRP}(r,i)$ is only known to $r$.

  The secret is being used and sent out in an HTTPS message in
  Lines~\ref{line:rp-send-client-password-1}ff.~of
  Algorithm~\ref{alg:rp-oauth} The message is being sent to the token
  endpoint configured for $i$,
  which, according to Definition~\ref{def:idp-registration-record},
  bears a host name belonging to $i$.
  With the definition of $\mi{sslkeys}$
  in Definition~\ref{def:relying-parties} and
  Lemma~\ref{lemma:k-does-not-leak-from-honest-rp} it can be seen that
  this outgoing HTTP POST request can therefore only be read by the
  intended receiver, $i$.

  In $i$,
  the message cannot be processed in the authentication endpoint,
  Lines~\ref{line:idp-auth-endpoint-post}
  to~\ref{line:idp-auth-endpoint-post-end} of
  Algorithm~\ref{alg:idp-oauth}, since it does not carry an Origin
  header. It can be processed in Lines~\ref{line:idp-token-endpoint}
  to~\ref{line:idp-token-endpoint-end}. It is easy to see that the
  secret in the message is not used in any outgoing message, neither
  stored in the IdP's data structures. The message not be processed in
  Line~\ref{line:idp-introspection-endpoint}ff., since it is a POST
  request.

  The same applies when the client sends the password in
  Line~\ref{line:rp-send-client-password-2}ff. or
  Line~\ref{line:rp-send-client-password-3}ff. of
  Algorithm~\ref{alg:rp-oauth}.

  Therefore, the secret $\mathsf{secretOfRP}(r,i)$
  cannot be known to the attacker.
\end{proof}

\begin{lemma}\label{lemma:proof-authorization-final}
  Assumption~\ref{asn:authorization} is a contradiction.
\end{lemma}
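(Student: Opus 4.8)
The plan is to refute Assumption~\ref{asn:authorization} by tracing every route through which the secret $n = \mathsf{resourceOf}(i,r,u)$ could enter the attacker's knowledge and showing each is blocked by the honesty hypotheses. First I would identify the unique source of $n$. Since $n \in \SecRes$ is a fresh nonce that in the initial state occurs only inside $i$'s component $\mi{srlist}^i$ (which is never emitted, but merely kept in the state to make the intended outputs derivable), an inspection of Algorithm~\ref{alg:idp-oauth} shows that the honest IdP $i$ emits $n$ only in the body of an introspection response (Line~\ref{line:idp-send-secret-resource}), and there only as $\mathsf{resourceOf}(i,\mi{clientid},\mi{userid})$ for the pair bound to the access token presented in the request. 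By injectivity of $\mathsf{resourceOf}$, emitting $n$ forces $\mi{clientid}$ to be the client id of $r$ (namely $\mathsf{clientIDOfRP}(r,i)$, which is $\bot$ iff $r=\bot$) and $\mi{userid}=u$. Hence $i$ must have accepted an introspection request carrying an access token $v$ recorded as $\an{v,\mathsf{clientIDOfRP}(r,i),u}$ in its token list $\str{atokens}$.

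Next I would show that this in fact forces the attacker itself to know $v$. The introspection response is an HTTPS message, so by Lemma~\ref{lemma:k-does-not-leak-from-honest-rp} it is readable only by $i$ and by whoever issued the request; IdPs issue no requests and browsers never contact introspection endpoints, so the issuer is either the attacker (or a corrupt party, whose knowledge is the attacker's) or an honest RP. For the honest-RP case I would read the introspection-response handler of Algorithm~\ref{alg:rp-oauth} (Lines~\ref{line:rp-introspect-result}ff.) and confirm that the extracted resource is never placed into any emitted message: in the client-credentials branch the RP stops (Line~\ref{line:rp-check-mode-cc}), and in both the authorization and authentication branches the response body is the fixed $\an{\str{script\_rp\_index},\an{}}$, carrying no resource. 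Combined with the HTTPS confidentiality of the honest-RP-to-honest-IdP exchange (again Lemma~\ref{lemma:k-does-not-leak-from-honest-rp}), this yields that an honest issuer never leaks $n$. Therefore $n$ can reach the attacker only if the attacker issues the introspection request, which requires $v \in d_{\emptyset}(S^j(\fAP{attacker}))$.

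It then remains to exclude the attacker's knowledge of such a $v$, splitting on $u$. For $u\neq\bot$ this is precisely Lemma~\ref{lemma:attacker-does-not-learn-token}: the honesty of $u$'s browser and of all trusted RPs of $u$ keeps $\mathsf{secretOfID}(u)$ secret via Lemma~\ref{lemma:attacker-does-not-learn-password}, and hence every user-bound access token out of reach; the only mild gap to fill is the sub-case $r=\bot$ (so $\mi{clientid}=\bot$, the token stemming from a resource-owner-password flow), where the same credential-secrecy argument applies with $\mathsf{clientIDOfRP}(r,i)=\bot$. For $u=\bot$ the token $\an{v,\mathsf{clientIDOfRP}(r,i),\bot}$ can only have been created in the client-credentials branch (Line~\ref{line:idp-create-atoken-clientcred}), which $i$ reaches only after authenticating the caller with a non-empty client secret $\mathsf{secretOfRP}(r,i)$; as this is the unique producer of user-$\bot$ tokens, and by Lemma~\ref{lemma:att-does-not-learn-rp-secrets} the attacker never learns this (non-$\bot$) secret while the ensuing token response is delivered confidentially only to the honest $r$, the attacker again cannot obtain $v$. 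Both cases contradict the previous paragraph, so Assumption~\ref{asn:authorization} is untenable.

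I expect the second step to be the genuine obstacle: establishing that no honest relying party ever relays the protected resource to the attacker. Unlike the rest of the argument, it does not follow from a single prior lemma but requires an exhaustive reading of every branch of the RP's response handling to confirm that the resource is discarded rather than forwarded, together with the HTTPS guarantees of Lemma~\ref{lemma:k-does-not-leak-from-honest-rp} for the introspection traffic. Once that is secured, the problem collapses to ``the attacker must know a valid access token,'' which the token- and secret-secrecy lemmas already dispatch.
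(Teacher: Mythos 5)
Your proposal is correct and follows essentially the same route as the paper's proof: the paper likewise observes that $n$ can only be emitted by $i$ in Line~\ref{line:idp-send-secret-resource} upon presentation of an access token $a$ with $\an{a,\mathsf{clientIDOfRP}(r,i),u} \inPairing S^{l'}(i).\str{atokens}$, notes via Lemma~\ref{lemma:k-does-not-leak-from-honest-rp} that honest-RP introspection traffic leaks neither $n$ nor $a$, and then splits on $u\equiv\bot$ (client-credentials tokens require the client secret, excluded by Lemma~\ref{lemma:att-does-not-learn-rp-secrets}) versus $u\not\equiv\bot$ (excluded by Lemma~\ref{lemma:attacker-does-not-learn-token}). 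Your more exhaustive treatment of the second step (checking that honest RPs discard rather than forward the resource) is a sound elaboration of what the paper states more tersely.
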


\begin{proof}
  At the beginning of each run, the attacker cannot know $n$
  (as defined in the initial states). Only the IdP $i$
  can send out the protected resource $n$,
  in Line~\ref{line:idp-send-secret-resource} of
  Algorithm~\ref{alg:idp-oauth}. In a state $(S^{l'}, E^{l'}, N^{l'})$
  in $\rho$
  for some $l' < j$,
  for $i$
  to send out $n$,
  an HTTPS request must be received by $i$
  which contains, among others, an access token $a$
  such that
  $\an{a, \mathsf{clientIDOfRP}(r,i), u} \inPairing
  S^{l'}(i).\str{atokens}$. We therefore note that for the attacker to
  learn $n$,
  it has to know $a$.
  We also note that if $r$
  requests $n$
  at the IdP $i$,
  the attacker cannot read $n$
  or $a$
  from such messages (see
  Lemma~\ref{lemma:k-does-not-leak-from-honest-rp}).

  We now have to distinguish two cases:

  \begin{description}
  \item[Anonymous Resource,] i.e., $u \equiv \bot$.
    In this case, the access token $a$
    was chosen by $i$
    in Line~\ref{line:idp-create-atoken-clientcred} of
    Algorithm~\ref{alg:idp-oauth}. There, $a$
    is sent out in response to a request that must have contained the
    client credentials for $r$,
    where the client secret cannot be $\bot$
    (see Line~\ref{line:idp-check-auth}. With
    Lemma~\ref{lemma:att-does-not-learn-rp-secrets} we see that the
    attacker cannot send such a request, and therefore, cannot learn
    $a$.
    This implies that the attacker cannot send the request to learn
    $n$ from $i$.
  \item[User Resource,] i.e., $u \not\equiv \bot$.
    In this Case, Lemma~\ref{lemma:attacker-does-not-learn-token}
    shows that it is not possible for the attacker to send a request
    to learn $n$.
  \end{description}
  With this, we have shown that the attacker cannot learn $n$,
  and therefore, Assumption~\ref{asn:authorization} is a
  contradiction.
\end{proof}

\subsection{Proof of Session Integrity}
\label{sec:proof-session-integrity-all}

Before we prove this property, we highlight that in the absence of a
network attacker and with the DNS server as defined for
$\oauthwebsystem^w$,
HTTP(S) requests by (honest) parties can only be answered by the owner
of the domain the request was sent to, and neither the requests nor
the responses can be read or altered by any attacker unless he is the
intended receiver. This property is important for the following proof.

We further show the following lemma, which says that an attacker
(under the assumption above) cannot learn a $\mi{state}$
value that is used in a login session between an honest browser, an
honest IdP, and an honest RP.

\begin{lemma}[Third parties do not learn state]\label{lemma:attacker-does-not-learn-state}
  Let $\rho$
  be a run of an OAuth web system with web attackers $\oauthwebsystem^w$,
  $(S^j, E^j, N^j)$
  be a state of $\rho$,
  $r\in \fAP{RP}$
  be an RP that is honest in $S_j$,
  $i \in \fAP{IDP}$
  be an IdP that is honest in $S_j$,
  $b$ be a browser that is honest in $S_j$.

  Then there exists no $l \le j$,
  with $(S^l, E^l, N^l)$
  being a state in $\rho$,
  a nonce $\mi{loginSessionId} \in \nonces$,
  a nonce $\mi{state} \in \nonces$,
  a domain $h \in \mathsf{dom}(r)$
  of $r$,
  terms $x$,
  $y$,
  $x'$,
  $y'$,
  $z \in \terms$,
  cookie
  $c := \an{\str{loginSessionId}, \an{\mi{loginSessionId}, x', y',
      z}}$, an atomic DY process
  $p \in \websystem \setminus \{b,i,r\}$
  such that $\mi{state} \in d_\emptyset(S^l(p))$,
  $\an{\mi{loginSessionId}, \an{g, \mi{state}, x, y}} \inPairing
  S^l(r).\str{loginSessions}$ and
  $\an{h, c} \inPairing S^l(b).\str{cookies}$.
\end{lemma}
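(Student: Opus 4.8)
The plan is to prove the lemma by contradiction, tracking the unique $\mi{state}$ nonce through the run. First I would record two structural facts. By inspection of Algorithm~\ref{alg:rp-oauth}, the only place a login session of the form $\an{\mi{loginSessionId}, \an{g, \mi{state}, x, y}}$ (with $g \in \mathsf{dom}(i)$) is created is Line~\ref{line:rp-create-session}, where $\mi{state}$ is the fresh nonce $\nu_7$ and, in the same processing step, a secure, httpOnly cookie $\an{\str{loginSessionId}, \an{\mi{loginSessionId}, \top,\top,\top}}$ is set via the same HTTPS response that also carries $\mi{state}$ in its Location header. Hence $\mi{state}$ is created exactly once, by the honest $r$. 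Second, since $\oauthwebsystem^w$ has no network attacker and uses the honest DNS server $d$, HTTPS requests to an honest party and their responses can be read only by the intended endpoints (as noted before the lemma, and via Lemma~\ref{lemma:k-does-not-leak-from-honest-rp} for $r$).

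Next I would use the cookie hypothesis $\an{h, c} \inPairing S^l(b).\str{cookies}$ to anchor the flow at the honest browser $b$. Because $\mi{loginSessionId}$ is freshly chosen by $r$ and set only in the HTTPS response to the $\str{/startInteractiveLogin}$ request that created the session, and because an honest browser stores a received cookie only in the browser that issued the corresponding request, the presence of $c$ in $b$ forces that very request to have originated from $b$. Consequently the creation-time response (Location header with $\mi{state}$, Set-Cookie with $\mi{loginSessionId}$) was delivered over HTTPS to $b$ and to no third party.

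The core of the proof is an induction over the run establishing the invariant that, in every state $S^{l'}$ with $l' \le j$, (i) no $p \in \websystem \setminus \{b,i,r\}$ has $\mi{state} \in d_\emptyset(S^{l'}(p))$, and (ii) inside $b$, $\mi{state}$ occurs only in documents whose origin is an origin of $r$ or of $i$ and in pending or sent HTTPS messages addressed to $r$ or $i$. For the inductive step I would take the first hypothetical leak and analyze the emitting process, which by (i) must be $b$, $i$, or $r$. The honest $r$ emits $\mi{state}$ only in the Location redirect to $b$, and otherwise merely compares it at the redirection endpoint (Line~\ref{line:rp-check-state}) without forwarding it; the honest $i$ emits $\mi{state}$ only inside the login form it returns (as script state of $\mi{script\_idp\_form}$, Line~\ref{line:idp-send-form}) and in the redirect back to $r$'s registered redirect URI (Lines~\ref{line:idp-redir-with-auth-code}/\ref{line:idp-redir-with-token}), both in responses to requests that, by minimality and invariant (ii), come from $b$. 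Here I would crucially use that $i$ validates the redirect URI against the entries registered for $\mathsf{clientIDOfRP}(r,i)$, which by Definition~\ref{def:initial-state-idp} all carry hosts in $\mathsf{dom}(r)$ and protocol $\https$, so the browser is redirected back to $r$. Finally, $b$ emits $\mi{state}$ only in HTTPS requests to $i$ (authorization endpoint) and to $r$ (redirection endpoint), preserving the invariant.

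The main obstacle, and the step requiring the most care, is ruling out the browser-internal escape routes for $\mi{state}$, i.e.\ showing invariant (ii) is maintained. Two vectors matter. First, the Referer header: every document holding $\mi{state}$ is served by $r$ or $i$ with the header $\an{\str{ReferrerPolicy}, \str{origin}}$, and Algorithm~\ref{alg:send} then strips any referrer down to its origin before sending, so $\mi{state}$ (which sits in URL parameters, the fragment, or script state, never in the origin) cannot leak via Referer even when the user follows a link or loads a resource pointing at the attacker; this is exactly the state-leak fix. Second, cross-origin script access: since the documents carrying $\mi{state}$ are loaded from origins of honest $r$ or $i$, their scripts are the honest $\mi{script\_rp\_index}$, $\mi{script\_rp\_implicit}$, or $\mi{script\_idp\_form}$, none of which send $\mi{state}$ anywhere but back to $r$ or $i$, while the $\mathsf{Clean}$ function prevents any non-same-origin (hence possibly attacker) script from reading these documents. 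As $\mi{state}$ is never placed in a cookie, in cross-origin web storage, or in any message leaving the $\{b,i,r\}$ HTTPS mesh, invariant (i) holds throughout, contradicting $\mi{state} \in d_\emptyset(S^l(p))$ and proving the lemma.
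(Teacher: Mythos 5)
Your proposal is correct and takes essentially the same route as the paper's proof: both anchor the flow at the honest browser via the $\mi{loginSessionId}$ cookie (using that only $r$ can set it under web attackers), then track $\mi{state}$ through the HTTPS-protected $b$--$r$--$i$ flow, ruling out every escape via the Referrer Policy, the IdP's validation of redirect URIs against those registered for $\mathsf{clientIDOfRP}(r,i)$, and the behavior of the honest scripts (including the fragment/\emph{script\_rp\_implicit} case). Your explicit induction with a leakage invariant is merely a more formal packaging of the paper's forward-tracking argument, not a different decomposition.
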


\begin{proof}
  To prove Lemma~\ref{lemma:attacker-does-not-learn-state}, we track
  where the login session identified by $\mi{loginSessionId}$
  is created and used. 
  
  We have that $\an{h, c} \inPairing S^l(b).\str{cookies}$.
  Login sessions are only created in Line~\ref{line:rp-create-session}
  of Algorithm~\ref{alg:rp-oauth} (and never altered afterwards).
  After the session identifier $\mi{loginSessionId}$
  was chosen, its value is sent over the network to the party that
  requested the login. We have that for $\mi{loginSessionId}$,
  this party must be $b$
  because only $r$
  can set the cookie $c$
  for the domain $h$
  in the state of $b$\footnote{Note
    that we have only web attackers.} and
  Line~\ref{line:rp-create-session} of Algorithm~\ref{alg:rp-oauth} is
  actually the only place where $r$ does so.

  Since $b$
  is honest, $b$
  follows the location redirect contained in the response sent by $r$.
  This location redirect contains the $\mi{state}$
  (as a URL parameter). The redirect points to some domain of
  $i$.\footnote{This
    follows from Definition~\ref{def:idp-registration-record} and
    Definition~\ref{def:relying-parties}.} The browser therefore sends
  (among others) $\mi{state}$
  to $i$.
  Of all the endpoints at $i$
  where the request can be received, the authorization endpoint is the
  only endpoint where $\mi{state}$
  could potentially leak to another party. (For all other endpoints,
  the value is dropped.) If the request is received at the
  authorization endpoint, $\mi{state}$
  is only sent back to $b$
  in the initial scriptstate of $\mi{script\_idp\_form}$.
  In this case, the script sends $\mi{state}$
  back to $i$
  in a POST request to the authorization endpoint. Note that in the
  steps outlined here, the value
  $\mi{client\_id} = \mathsf{clientIDOfRP}(r, i)$
  is transferred alongside with $\mi{state}$
  (and not altered in-between). Now, after receiving $\mi{state}$
  and $\mi{client\_id}$
  in a POST request at the authorization endpoint, $i$
  looks up some redirection URI for $\mi{client\_id}$,
  which, by Definition~\ref{def:initial-state-idp}, is some URI at a
  domain of $r$.
  The value $\mi{state}$
  is appended to this URI (either as a parameter or in the fragment).
  The redirection to the redirection URI is then sent to the browser
  $b$. Therefore, $b$ now sends a GET request to $r$.

  If $\mi{state}$
  is contained in the parameter, then $\mi{state}$
  is immediately sent to $r$
  where it is compared to the stored login session records but neither
  stored nor sent out again. In each case, a script is sent back to
  $b$.
  The scripts that $r$
  can send out are $\mi{script\_rp\_index}$
  and $\mi{script\_rp\_implicit}$,
  none of which cause requests that contain $\mi{state}$.
  Also, since both scripts are always delivered with a restrictive
  Referrer Policy header, any requests that are caused by these scripts
  (e.g., the start of a new login flow) do not contain $\mi{state}$
  in the referer header.\footnote{We note that, as discussed earlier,
    without the Referrer Policy, $\mi{state}$
    could leak to a malicious IdP or other parties.} 

  If $\mi{state}$
  is contained in the fragment, then $\mi{state}$
  is not immediately sent to $r$,
  but instead, a request without $\mi{state}$
  is sent to $r$.
  Since this is a GET request, $r$
  either answers with an empty response
  (Lines~\ref{line:rp-set-cookie}ff.~of Algorithm~\ref{alg:rp-oauth}),
  a response containing $\mi{script\_rp\_index}$
  (Lines~\ref{line:rp-serve-index}ff.), or a response containing
  $\mi{script\_rp\_implicit}$
  (Line~\ref{line:rp-send-script-implicit}). In case of the empty
  response, $\mi{state}$
  is not used anymore by the browser. In case of
  $\mi{script\_rp\_index}$,
  the fragment is not used. (As above, there is no other way in which
  $\mi{state}$
  can be sent out, also because the fragment part of an URL is
  stripped in the referer header.) In the case of
  $\mi{script\_rp\_implicit}$
  being loaded into the browser, the script sends $\mi{state}$
  in the body of an HTTPS request to $r$
  (using the path $\str{/receiveTokenFromImplicitGrant}$).
  When $r$
  receives this request, it does not send out $\mi{state}$
  to any party (see Lines~\ref{line:rp-receive-implicit-token}ff.\ of
  Algorithm~\ref{alg:rp-oauth}).

  This shows that $\mi{state}$
  cannot be known to any party except for $b$, $i$, and $r$.
\end{proof}

\begin{definition}\label{def:corresponding-steps}
  Let $e_1 = \an{a_1, f_1, m_1}$
  and $e_2 = \an{a_2, f_2, m_2}$
  be events with $m_1$
  being a DNS request and $m_2$
  being a DNS response or $m_1$
  being an HTTP(S) request and $m_2$
  being an HTTP(S) response. We say that the events \emph{correspond}
  to each other if $m_1$
  and $m_2$
  use the same DNS/HTTP(S) message nonce, $a_1 = f_2$
  and $a_2 = f_1$,
  and (for HTTP(S) messages) either both $m_1$
  and $m_2$ are encrypted or both are not encrypted.
\end{definition}

Given a run $\rho$,
and two events $e_1$
and $e_2$
where $e_1$
is emitted in a processing step $Q_1$
in $\rho$
before $e_2$
is emitted in a processing step $Q_2$
in $\rho$,
we write $e_1 \rightsquigarrow e_2$
if $e_1$
corresponds to $e_2$
and we write $e_1 \dashrightarrow e_2$ if $Q_1$ is connected to $Q_2$.

\begin{figure}
  \begin{eqnarray}
    d_\text{auth}^\text{req} \rightsquigarrow d_\text{auth}^\text{resp} \dashrightarrow e_\text{auth}^\text{req} \dashrightarrow d_\text{cred}^\text{req} \rightsquigarrow d_\text{cred}^\text{resp} \dashrightarrow e_\text{cred}^\text{req} \rightsquigarrow e_\text{cred}^\text{resp} \dashrightarrow d_\text{intr}^\text{req} \rightsquigarrow d_\text{intr}^\text{resp} \dashrightarrow e_\text{intr}^\text{req} \rightsquigarrow e_\text{intr}^\text{resp} \dashrightarrow e_\text{auth}^\text{resp} & & \label{eqn:run-auth-ropcg} \\[2ex]
    e_\text{auth}^\text{req} \dashrightarrow d_\text{tokn}^\text{req} \rightsquigarrow d_\text{tokn}^\text{resp} \dashrightarrow e_\text{tokn}^\text{req} \rightsquigarrow e_\text{tokn}^\text{resp} \dashrightarrow d_\text{intr}^\text{req} \rightsquigarrow d_\text{intr}^\text{resp} \dashrightarrow e_\text{intr}^\text{req} \rightsquigarrow e_\text{intr}^\text{resp} \dashrightarrow e_\text{auth}^\text{resp}  & &  \label{eqn:run-auth-code} \\[2ex]
    e_\text{auth}^\text{req} \dashrightarrow d_\text{intr}^\text{req} \rightsquigarrow d_\text{intr}^\text{resp} \dashrightarrow e_\text{intr}^\text{req} \rightsquigarrow e_\text{intr}^\text{resp} \dashrightarrow e_\text{auth}^\text{resp} & &  \label{eqn:run-auth-implicit}
  \end{eqnarray}
  \caption[Events as described in
  Lemma~\ref{lemma:oauth-flow-backtracking-to-auth}]{Events as
    described in Lemma~\ref{lemma:oauth-flow-backtracking-to-auth}.
    Here, $e_{\cdot}^{\cdot}$
    denotes events containing HTTP(S) messages, $d_{\cdot}^{\cdot}$
    denotes events containing DNS messages. (\ref{eqn:run-auth-ropcg})
    applies to the resource owner password credentials mode,
    (\ref{eqn:run-auth-code}) applies to the authorization code mode,
    and (\ref{eqn:run-auth-implicit}) applies to the implicit mode. }
  \label{fig:oauth-flow-backtracking-to-auth}
\end{figure}

\begin{lemma}\label{lemma:oauth-flow-backtracking-to-auth}
  Given a run $\rho$,
  an RP $r$,
  and a browser $b$,
  if $r$,
  in the run $\rho$,
  emits an event, say $e_\text{auth}^\text{resp}$,
  in Line~\ref{line:rp-send-auth-response} of
  Algorithm~\ref{alg:rp-oauth} that is addressed to $b$,
  and $b$
  and $r$
  are not corrupted at this point in the run, then all of the
  following statements hold true:
  \begin{enumerate}[label=(\alph*)]
  \item Events of one of the forms shown in
    Figure~\ref{fig:oauth-flow-backtracking-to-auth} exist in $\rho$.
  \item The event $e_\text{auth}^\text{req}$
    was emitted by $b$ and is addressed to $r$.
  \item Let
    $e_\text{intr}^\text{resp} = \an{a_\text{intr}^\text{resp},
      f_\text{intr}^\text{resp}, m_\text{intr}^\text{resp}}$ with
    $f_\text{intr}^\text{resp}$
    being an IP adress of some party, say, $i$.
    Then there is a $Q_\text{starts}$
    such that $\mathsf{startsOA}(Q_\text{starts}, b, r, i)$
    and we have that (1) $d_\text{auth}^\text{req}$
    was emitted in $Q_\text{starts}$, or (2) there are events
    $$ d_\text{strt}^\text{req} \rightsquigarrow d_\text{strt}^\text{resp} \dashrightarrow e_\text{strt}^\text{req} \rightsquigarrow e_\text{strt}^\text{resp}$$ 
    such that $d_\text{strt}^\text{req}$
    was emitted in $Q_\text{starts}$
    and $e_\text{strt}^\text{resp}$
    was received by $r$
    before $e_\text{auth}^\text{req}$ was received by $r$.
  \end{enumerate}

\end{lemma}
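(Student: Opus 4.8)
The plan is to prove all three parts by \emph{backward tracing} through the relying party's algorithm (Algorithm~\ref{alg:rp-oauth}), starting from the emission of $e_\text{auth}^\text{resp}$ and reconstructing, step by step, the events that must have preceded it. The essential bookkeeping device is the \emph{reference} term that $r$ threads through its $\mi{pendingDNS}$ and $\mi{pendingRequests}$ subterms: it records the OAuth mode, the IdP domain, and the origin data $\an{a',f',n',k'}$ of the browser request that initiated the current RP action. By Lemma~\ref{lemma:rp-pendingdns-to-pendingrequests} this reference is never modified once written, so each ``$r$ sends an HTTP(S) request after a DNS resolution'' event factors as $d^\text{req} \rightsquigarrow d^\text{resp} \dashrightarrow e^\text{req}$ with a uniquely determined reference, and each ``$r$ receives an HTTP(S) response'' matches the request it answers via Lemma~\ref{lemma:k-does-not-leak-from-honest-rp}(\ref{prop:only-owner-answers-oauth}) together with the no-network-attacker property (a request to a domain is answered only by its owner).

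For parts (a) and (b): the event $e_\text{auth}^\text{resp}$ can only be emitted in Line~\ref{line:rp-send-auth-response}, i.e.\ in the $\str{introspect}$ branch entered at Line~\ref{line:rp-introspect-result}; hence $r$ has just processed an introspection response $e_\text{intr}^\text{resp}$, which by the matching lemmas corresponds to a request $e_\text{intr}^\text{req}$ that $r$ itself sent, preceded by $d_\text{intr}^\text{req}\rightsquigarrow d_\text{intr}^\text{resp}$. The reference attached to this request was written either in Line~\ref{line:rp-add-introspect-2} (the $\str{/receiveTokenFromImplicitGrant}$ handler), yielding chain~(\ref{eqn:run-auth-implicit}) with $e_\text{auth}^\text{req}$ the POST to $\str{/receiveTokenFromImplicitGrant}$, or in Line~\ref{line:rp-add-introspect-1} (the token-response handler). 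In the latter case the reference mode is $\str{code}$, $\str{password}$, or $\str{client\_credentials}$; the last is excluded since reaching $e_\text{auth}^\text{resp}$ requires passing the check in Line~\ref{line:rp-check-mode-cc}. For mode $\str{code}$ we obtain chain~(\ref{eqn:run-auth-code}) by backtracking the token request $e_\text{tokn}^\text{req}$ (written in the redirection-endpoint handler, Line~\ref{line:rp-start-retrieve-code}) to the browser request $e_\text{auth}^\text{req}$ at $\str{/redirectionEndpoint}$; for mode $\str{password}$ we obtain chain~(\ref{eqn:run-auth-ropcg}) analogously via Line~\ref{line:rp-start-retrieve-code-from-password} and the $\str{/passwordLogin}$ request. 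In every branch the component $f'$ of the reference is, by Lemma~\ref{lemma:rp-pendingdns-to-pendingrequests}, exactly the sender of $e_\text{auth}^\text{req}$ and also the address to which $r$ sends $e_\text{auth}^\text{resp}$; since the latter is addressed to $b$ by hypothesis and, in the web-attacker model, addresses are partitioned and no honest party or web attacker spoofs foreign addresses, we conclude $f'\in I^b$, so $e_\text{auth}^\text{req}$ was emitted by $b$ and addressed to $r$, giving (b).

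For part (c): we identify $i$ as the owner of $f_\text{intr}^\text{resp}$, since the introspection endpoint contacted by $r$ is taken from $r$'s immutable $\str{idps}$ record and (by the no-network-attacker property and Lemma~\ref{lemma:rp-never-sends-requests-to-itself}) resolves to and is answered by $i$. It then remains to show that $e_\text{auth}^\text{req}$ (or, in the interactive modes, a preceding $\str{/startInteractiveLogin}$ request) was caused by a step $Q_\text{starts}$ in which $b$ ran $\mi{script\_rp\_index}$, selected a domain of $i$ (Line~\ref{line:script-rp-index-selected-domain} of Algorithm~\ref{alg:script-rp-index}) and was instructed to contact $r$ (Line~\ref{line:script-rp-index-start-oauth-session}). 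In the non-interactive ($\str{password}$) case this script directly emits the $\str{/passwordLogin}$ request, so $Q_\text{starts}$ is the step emitting $d_\text{auth}^\text{req}$. In the interactive ($\str{code}$/$\str{token}$) cases, $e_\text{auth}^\text{req}$ reaches $\str{/redirectionEndpoint}$ resp.\ $\str{/receiveTokenFromImplicitGrant}$ only after $r$ created a login session (Line~\ref{line:rp-create-session}) and issued the redirect of Line~\ref{line:rp-send-redirect} in response to a $\str{/startInteractiveLogin}$ request $e_\text{strt}^\text{req}$; tracing this request back through the honest browser and using that the $\mi{state}$/session-cookie binding cannot be forged by third parties (Lemma~\ref{lemma:attacker-does-not-learn-state}) yields the required $d_\text{strt}^\text{req}\rightsquigarrow d_\text{strt}^\text{resp}\dashrightarrow e_\text{strt}^\text{req}\rightsquigarrow e_\text{strt}^\text{resp}$, with $d_\text{strt}^\text{req}$ emitted in $Q_\text{starts}$ and $e_\text{strt}^\text{resp}$ received by $r$ before $e_\text{auth}^\text{req}$.

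The hardest part will be part (c), specifically the backward reconstruction inside the honest browser: unlike the RP, whose reachable control flow is tightly constrained by the $\str{introspect}$ and mode tags in the reference, the browser can generate the critical requests through several mechanisms (a script command, a Location redirect, or a URL-bar action), and one must rule out the spurious origins and show that the chain of connected processing steps (Definition~\ref{def:connected-steps}) actually terminates at a single $\mathsf{startsOA}$ step with the \emph{same} IdP $i$ that later answers the introspection. This will require combining the state/cookie secrecy of Lemma~\ref{lemma:attacker-does-not-learn-state} (to guarantee that the interactive flow which completes is the one the browser initiated) with the immutability of $r$'s $\str{idps}$ and $\str{loginSessions}$ records (to guarantee that the IdP domain selected in $Q_\text{starts}$ matches the domain whose endpoints $r$ contacts throughout). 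By contrast, parts (a) and (b) amount to a deterministic unwinding of the RP algorithm and should go through routinely once the reference-tracking lemma is in hand.
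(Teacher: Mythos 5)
Your proposal follows essentially the same route as the paper's proof: backward unwinding of Algorithm~\ref{alg:rp-oauth} via the immutable reference terms (Lemma~\ref{lemma:rp-pendingdns-to-pendingrequests}) and HTTPS request/response matching (Lemma~\ref{lemma:k-does-not-leak-from-honest-rp}) to obtain the three event chains and exclude the client-credentials mode, sender-address partitioning under web attackers for (b), and backtracking through the login-session cookie and Origin header to $\mi{script\_rp\_index}$ for (c), with the same split into case (1) for the password mode and case (2) for the two interactive modes. The only cosmetic difference is that you invoke Lemma~\ref{lemma:attacker-does-not-learn-state} to justify the interactive-case backtracking, whereas the paper's proof of this particular lemma rests on cookie integrity (only $r$ can set a $\str{loginSessionId}$ cookie for its domains in the honest browser when only web attackers are present, and the Set-Cookie response is created in the same processing step as the login-session entry), deferring the state-secrecy lemma to the subsequent session-integrity argument.
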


\begin{proof} \textbf{(a)} We have that
  $e_\text{auth}^\text{resp} = \an{a_\text{auth}^\text{resp},
    f_\text{auth}^\text{resp}, m_\text{auth}^\text{resp}}$ was emitted
  by $r$
  in Line~\ref{line:rp-send-auth-response} of
  Algorithm~\ref{alg:rp-oauth}. (Note that $a_\text{auth}^\text{resp}$
  is an address of $b$.)
  This requires that $r$
  received (and further processed) an HTTPS response in
  $e_\text{intr}^\text{resp}$.
  Also, it is required that (before receiving this event) there is an
  entry in the state of $r$
  in the subterm $\str{pendingRequests}$
  of the form
  $\mi{ref} = \an{\mi{reference}, \mi{request}, \mi{key}, \mi{f}}$
  for some terms $\mi{request}$,
  $\mi{key}$,
  and $f$.
  In this subterm, $\mi{request}.\str{nonce}$
  must be the nonce used in the HTTPS response in
  $e_\text{intr}^\text{resp}$,
  and $\mi{reference}$
  must be of the form
  $\an{\str{introspect}, \mi{mode}', \mi{idp},
    f_\text{auth}^\text{resp}, a_\text{auth}^\text{resp}, n', k'}$
  where $n'$
  is the nonce used in $m_\text{auth}^\text{resp}$,
  $k'$
  is the key used to encrypt $m_\text{auth}^\text{resp}$,
  and $\mi{idp}$ is some domain.

  A subterm of the form of $\mi{ref}$
  therefore had to be created in $\str{pendingRequests}$
  before. This term is only appended to in
  Line~\ref{line:move-reference-to-pending-request} of
  Algorithm~\ref{alg:rp-oauth}. There, the message in $\mi{request}$
  was sent out because a DNS response with some message nonce $n''$
  was received and in the state of $r$
  the following holds true:
  $\str{pendingDNS}[n''] \equiv \an{\mi{reference}, \mi{request}}$.
  Such entries in $\str{pendingDNS}$
  can only be created when a corresponding DNS request is sent out,
  which can happen in Lines~\ref{line:rp-send-something-1},
  \ref{line:rp-send-something-2}, \ref{line:rp-send-something-3},
  \ref{line:rp-send-something-4}, and~\ref{line:rp-send-something-5}.
  We therefore have that the events $d_\text{intr}^\text{req}$,
  $d_\text{intr}^\text{resp}$,
  and $e_\text{intr}^\text{req}$
  exist and have the mutual relations shown in
  (\ref{eqn:run-auth-ropcg}), (\ref{eqn:run-auth-code}), and
  (\ref{eqn:run-auth-implicit}).

  The string $\str{introspect}$
  is set as the first part of $\mi{reference}$
  in Lines~\ref{line:rp-add-introspect-2}
  and~\ref{line:rp-add-introspect-1}. We examine these cases
  separately.

  In the case that $\mi{reference}$
  was created in Line~\ref{line:rp-add-introspect-2} (where also the
  second part of $\mi{reference}$
  is set to $\str{implicit}$),
  an incoming HTTPS request from $a_\text{auth}^\text{resp}$,
  i.e., from $b$,
  must have been received. This shows the existence and mutual
  relations of all events depicted in (\ref{eqn:run-auth-implicit})
  for the implicit mode.

  Otherwise, $\mi{reference}$
  was created in Line~\ref{line:rp-add-introspect-1}. This requires
  that $r$
  must have received an HTTPS response ($e_\text{cred}^\text{resp}$
  or $e_\text{tokn}^\text{resp}$),
  that, as above, has a matching entry in $\str{pendingRequests}$,
  which, as above, was created by sending out an HTTPS request, which,
  again as above, was preceded by a DNS request and response. We
  therefore have that (in the resource owner password credentials
  mode) $d_\text{cred}^\text{req}$,
  $d_\text{cred}^\text{resp}$,
  $e_\text{cred}^\text{req}$,
  $e_\text{cred}^\text{resp}$
  or (in the authorization code mode) $d_\text{tokn}^\text{req}$,
  $d_\text{tokn}^\text{resp}$,
  $e_\text{tokn}^\text{req}$,
  $e_\text{tokn}^\text{resp}$
  exist and have the mutual relations shown in`
  (\ref{eqn:run-auth-ropcg}) and (\ref{eqn:run-auth-code}),
  respectively.

  It is further required that another reference term,
  $\mi{reference}'$
  was in $\str{pendingRequests}$
  when $e_\text{cred}^\text{resp}$
  or $e_\text{tokn}^\text{resp}$
  was received. The term $\mi{reference}'$
  must be of the following form:
  $$ \mi{reference}' = \an{w, \mi{idp}, f_\text{auth}^\text{resp},
    a_\text{auth}^\text{resp}, n', k'}$$ with
  $w \in \{\str{password}, \str{code}\}$.\footnote{Note
    that $w$
    cannot be $\str{client\_credentials}$
    because in this case, $\mi{mode}'$
    in $\mi{reference}$
    would have been $\str{client\_credentials}$,
    which contradicts that in the processing step $Q$,\df{<-?}
    an event was emitted.}
  
  Now, as above, we can check where $\mi{reference}'$
  was created as an entry in $\str{pendingDNS}$.
  This can only happen in
  Line~\ref{line:rp-start-retrieve-code-from-password}
  ($w \equiv \str{password}$)
  and~\ref{line:rp-start-retrieve-code} ($w \equiv \str{code}$).
  In both cases, an incoming HTTPS request from
  $a_\text{auth}^\text{resp}$,
  i.e., from $b$,
  must have been received. This shows the existance and mutual
  relations of all events depicted in (\ref{eqn:run-auth-code}).

  For (\ref{eqn:run-auth-ropcg}), it is easy to see (as above) that
  $d_\text{auth}^\text{req}$
  and $d_\text{auth}^\text{resp}$
  exist and have the mutual relations as shown.

  \textbf{(b)} As already shown above, in all cases,
  $e_\text{auth}^\text{req}$ was sent by $b$ to $r$.

  \textbf{(c)} We have that $e_\text{intr}^\text{resp}$
  was received from $i$.
  Therefore, $e_\text{intr}^\text{req}$
  must have been sent to $i$.
  Therefore, $r$
  requested the IP address of some domain of $i$
  in $d_\text{intr}^\text{req}$.
  This DNS request was created for the domain of a token endpoint
  which was looked up in an IdP registration record stored under the
  key $\mi{idp}$.
  From Definitions~\ref{def:relying-parties}
  and~\ref{def:idp-registration-record} it follows that $\mi{idp}$
  is a domain of $i$.
  
  As above, we now have to distinguish where the value
  $\mi{reference}$
  is created such that the first part is $\str{introspect}$.
  This can happen in Lines~\ref{line:rp-add-introspect-1}
  and~\ref{line:rp-add-introspect-2}. We examine these cases
  separately.

  \begin{itemize}
  \item From \textbf{(a)} above we have that $\mi{reference}'$
    (which contains $\mi{idp}$)
    was created as an entry in $\str{pendingDNS}$
    in Line~\ref{line:rp-start-retrieve-code-from-password}
    or~\ref{line:rp-start-retrieve-code}.

    In the case that $\mi{reference}'$
    was created in
    Line~\ref{line:rp-start-retrieve-code-from-password} we have that
    the HTTPS request $e_\text{auth}^\text{req}$
    (which was sent by $b$
    as shown above) must have been received by $r$
    and that this request was a POST request for the path
    $\str{/passwordLogin}$,
    with a message body $\mi{body}$
    such that $\proj{2}{\proj{1}{\mi{body}}} \equiv \mi{idp}$,
    and that contains an origin header for some domain of $r$.
    Such a request can only be caused by $\mi{script\_rp\_index}$
    loaded into $b$
    from some domain of $r$.
    Hence, this script selected the domain $\mi{idp}$
    in Line~\ref{line:script-rp-index-selected-domain} of
    Algorithm~\ref{alg:script-rp-index} and we have that
    $\mathsf{startsOA}(Q_\text{auth}, b, r, i)$
    where $Q_\text{auth}$
    is the processing step that emitted $d_\text{auth}^\text{req}$.

    In the case that $\mi{reference}'$
    was created in Line~\ref{line:rp-start-retrieve-code} we have that (*)
    the HTTPS request $e_\text{auth}^\text{req}$
    must have been received by $r$
    and that in this request there is a cookie $\str{loginSessionId}$
    with a value, say, $l$
    such that in the state of $r$
    (when receiving the request) in the subterm $\str{loginSessions}$
    under the key $l$
    there is a sequence with the first element being $\mi{idp}$.

    Since we have that $e_\text{auth}^\text{req}$
    was sent by $b$
    (as shown above) we have that $b$
    must have received an HTTP(S) response from $r$
    which contains a Set-Cookie header for the cookie
    $\str{loginSessionId}$
    with the value $l$.\footnote{Note
      that this cookie cannot be set by any party except for $r$
      and there are no scripts sent out by $r$
      that set cookies.} We denote the event of this message as
    $e_\text{strt}^\text{resp}$.
    This message must have been created in
    Line~\ref{line:rp-send-redirect} and, in the same processing step,
    an entry in $\str{loginSessions}$
    under the key $l$
    as described above is created in
    Line~\ref{line:rp-create-session}. (There are no other places
    where login session entries are created.) We have that the
    corresponding request $e_\text{strt}^\text{req}$
    is a POST request with an origin header for some domain of $r$,
    the path $\str{/startInteractiveLogin}$,
    and that the body must be $\mi{idp}$.
    As above, such a request can only be caused by
    $\mi{script\_rp\_index}$
    loaded into $b$
    from some domain of $r$.
    Hence, this script selected the domain $\mi{idp}$
    in Line~\ref{line:script-rp-index-selected-domain} of
    Algorithm~\ref{alg:script-rp-index}, which output an
    $\str{HREF}$-command
    to the browser to send $e_\text{strt}^\text{req}$
    to $r$.
    This request is preceded by a pair of corresponding DNS messages
    $d_\text{strt}^\text{req}$
    and $d_\text{strt}^\text{resp}$
    as defined in the browser relation. We therefore have that
    $\mathsf{startsOA}(Q_\text{strt}, b, r, i)$
    where $Q_\text{strt}$
    is the processing step that emitted $d_\text{strt}^\text{req}$.
  
  \item In the case that $\mi{reference}$
    was created in Line~\ref{line:rp-add-introspect-2} we have the
    same situation as in (*) and the proof continues exactly as in
    (*).
  
  \end{itemize}

\end{proof}

\begin{lemma}\label{lemma:owsw-secure-session-integrity-authz}\label{lemma:proof-si-authorization-final}
  Let $\oauthwebsystem^w$
  be an OAuth web system with web attackers, then $\oauthwebsystem^w$
  is secure w.r.t.~session integrity for authorization.
\end{lemma}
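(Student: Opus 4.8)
The plan is to reduce the statement to the backtracking result of Lemma~\ref{lemma:oauth-flow-backtracking-to-auth} and the state-secrecy result of Lemma~\ref{lemma:attacker-does-not-learn-state}. First I would observe that the hypothesis $\mathsf{endsOA}(Q,b,r,i,t)$ of Definition~\ref{def:property-authz-b} is, by Definition~\ref{def:oauth-start-end}, exactly the statement that $r$ emits an event in Line~\ref{line:rp-send-auth-response} of Algorithm~\ref{alg:rp-oauth} addressed to $b$ while processing an introspection response from $i$ whose body carries $t$. Since $b$ and $r$ are honest in $Q$, the premise of Lemma~\ref{lemma:oauth-flow-backtracking-to-auth} is met, so I obtain the existence and mutual $\rightsquigarrow$/$\dashrightarrow$ relations of one of the three event chains in Figure~\ref{fig:oauth-flow-backtracking-to-auth}, together with a step $Q_\text{starts}$ satisfying $\mathsf{startsOA}(Q_\text{starts},b,r,i)$.

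For part (a) I would assemble the processing steps witnessed by that event chain (from $Q_\text{starts}$ up to $Q$) into a single session in the sense of Definition~\ref{def:sessions}: every $\dashrightarrow$ is by definition a connectedness edge, and every $\rightsquigarrow$ pair of a request and its matching encrypted response is connected because, in the web-attacker setting with the honest DNS server, the party that emitted the request is the unique party able to accept the response. Taking the suffix of this chain beginning at the last $\mathsf{startsOA}$-step before $Q$ yields a session whose first step starts an OAuth flow for $(b,r,i)$, contains no later start, and ends in $Q$; by Definition~\ref{def:oauth-sessions} this is the desired $o\in\mathsf{OASessions}(\rho,b,r,i)$, and $Q$ lies in $o$. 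This gives (a) and the ``$Q$ is in $o$'' half of (b).

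The core of part (b) is the identity-matching biconditional, and this is where I expect the real work to lie. Assuming $i$ honest, I would read off from $i$'s introspection code (Lines~\ref{line:idp-introspection-endpoint}ff.\ of Algorithm~\ref{alg:idp-oauth}) that $t \equiv \mathsf{resourceOf}(i,\mi{clientid},\mi{userid})$, where $\an{\mi{atoken},\mi{clientid},\mi{userid}}\inPairing S(i).\str{atokens}$ for the access token $\mi{atoken}$ that $r$ presented; since $\mathsf{resourceOf}$ is injective, fixing $t$ fixes $\mi{userid}$ and the client (hence the RP via $\mathsf{clientIDOfRP}$). I would then trace $\mi{atoken}$ back through the three chains to the step where $i$ created it (Line~\ref{line:idp-create-atoken-implicit}, \ref{line:idp-create-atoken-code}, or \ref{line:idp-create-atoken-password}), showing that $\mi{userid}$ equals the name authenticated to $i$ in that flow: in the resource-owner-password chain the username travels unmodified from $\mi{script\_rp\_index}$ (selected in Line~\ref{line:script-rp-index-select-id} of Algorithm~\ref{alg:script-rp-index}) through $r$'s password grant, so $\mathsf{selected}_\text{nia}$ determines $\mi{userid}$; in the code and implicit chains the username is the one chosen by $\mi{script\_idp\_form}$ (Line~\ref{line:script-idp-form-select-id} of Algorithm~\ref{alg:oauth-script-idp-form}), i.e.\ the identity recorded by $\mathsf{selected}_\text{ia}$.

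The hard part will be closing the loop so that the flow that \emph{issued} the code/token is the \emph{same} flow that $r$ completes in $Q$ --- otherwise a mix-up could let $t$ belong to a different identity than the one the browser selected. Here I would invoke the $\mi{state}$ binding: $r$ accepts the redirection only after the checks in Lines~\ref{line:rp-check-idp-param} and~\ref{line:rp-check-state} of Algorithm~\ref{alg:rp-oauth} (the $\mi{iss}$ fix together with the state comparison against the stored login session), and Lemma~\ref{lemma:attacker-does-not-learn-state} guarantees that the $\mi{state}$ of this login session is unknown to any party other than $b$, $i$, and $r$. Consequently the code/token delivered in $e_\text{auth}^\text{req}$ must be the one minted in the very flow begun at $Q_\text{starts}$ (not one smuggled in by an attacker or lifted from a concurrent flow), so $\mi{userid}$ coincides with the selected identity $\an{u,g}$ in \emph{both} directions of the biconditional; the client is $\mathsf{clientIDOfRP}(r,i)$ in the interactive case (yielding $r'=r$) and may additionally be $\bot$ in the password case (yielding $r'\in\{r,\bot\}$), exactly as required by Definition~\ref{def:property-authz-b}. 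The main obstacle is thus the bookkeeping of this second step: following each of the three chains of Figure~\ref{fig:oauth-flow-backtracking-to-auth} and arguing, using the non-leakage of $\mi{state}$ and injectivity of the relevant maps, that no value is altered or substituted between identity selection, token issuance, redemption, and introspection.
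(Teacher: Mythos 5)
Your overall strategy coincides with the paper's own proof: the event skeleton and the start step come from Lemma~\ref{lemma:oauth-flow-backtracking-to-auth}, the binding of the completed flow to the browser-initiated one comes from Lemma~\ref{lemma:attacker-does-not-learn-state} together with the checks in Lines~\ref{line:rp-check-idp-param} and~\ref{line:rp-check-state} of Algorithm~\ref{alg:rp-oauth}, and the biconditional is obtained mode by mode by tracing the selected identity through code/token issuance, redemption, and introspection (with the $r' \in \{r,\bot\}$ distinction for the password mode). In that respect the plan is sound and matches the paper.

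There is, however, one genuine flaw, located in your part (a). You assemble the session $o$ by asserting that every $\rightsquigarrow$ pair is connected "because the party that emitted the request is the unique party able to accept the response." First, this is not what Definition~\ref{def:connected-steps} requires: the step emitting the response is connected to the step emitting the request only if it takes the request itself as its \emph{input} event (or via the script/document clause), not because only the requester can decrypt the answer. Second, and more importantly, part (a) of Definition~\ref{def:property-authz-b} is \emph{not} conditioned on the honesty of $i$: if $i$ is a (corrupted) web attacker, it may receive $e_\text{intr}^\text{req}$ in one processing step and emit $e_\text{intr}^\text{resp}$ only later upon a trigger event, so those two steps need not be connected and your chain need not be a session at all. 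Consequently your construction of $o$, and your claim that "$Q$ lies in $o$" already in part (a), break down exactly in the case (a) must still cover. The paper avoids this: existence of an OAuth session for (a) follows immediately from $\mathsf{startsOA}(Q_\text{starts},b,r,i)$ given by Lemma~\ref{lemma:oauth-flow-backtracking-to-auth}~(c) --- the session witnessing (a) need not contain $Q$ --- while connectedness of the request/response pairs, the membership of $Q_\text{ends}$ in $o$, and the biconditional are argued only under the hypothesis that $i$ is honest, which is precisely when Definition~\ref{def:property-authz-b}~(b) demands them. Restrict your session construction and the "$Q$ is in $o$" claim to part (b), justify connectedness by the fact that honest $i$ and the honest DNS server answer within the processing step that consumes the request, and the remainder of your plan goes through as in the paper.
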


\begin{proof}
  We have to show that for all OAuth web system with web attackers
  $\oauthwebsystem^w$,
  for every run $\rho$
  of $\oauthwebsystem^w$,
  every processing step $Q_\text{ends}$
  in $\rho$,
  every browser $b$
  that is honest in $Q_\text{ends}$,
  every $r\in \fAP{RP}$
  that is honest in $Q_\text{ends}$,
  every $i \in \fAP{IDP}$,
  every identity $\an{u,g}$,
  some protected resource $t$,
  the following holds true: If $\mathsf{endsOA}(Q_\text{ends}, b, r, i, t)$, then
  \begin{enumerate}[label=(\alph*)]
  \item there is an OAuth Session $o \in \mathsf{OASessions}(\rho, b, r, i)$, and
  \item if $i$
    is honest in $Q_\text{ends}$
    then $Q_\text{ends}$
    is in $o$
    and we have that 
    $$\mathsf{selected}_\text{ia}(o, b, r,  \an{u,g}) \iff \big(t
    \equiv \mathsf{resourceOf}(i, r, \an{u,g})\big)$$ or
    $$\mathsf{selected}_\text{nia}(o, b, r,  \an{u,g}) \iff \big(t
    \equiv \mathsf{resourceOf}(i, r', \an{u,g})\big)$$ for some $r' \in \{r, \bot\}$.
  \end{enumerate}
  We can see that Lemma~\ref{lemma:oauth-flow-backtracking-to-auth}
  applies, since $\mathsf{endsOA}(Q_\text{ends}, b, r, i, t)$
  where $Q_\text{ends}$
  is the processing step in which $e_\text{intr}^\text{resp}$
  was received by $r$
  from $i$
  and $e_\text{auth}^\text{resp}$
  was emitted to $b$.
  With Lemma~\ref{lemma:oauth-flow-backtracking-to-auth} (c)
  and Definition~\ref{def:oauth-sessions} it immediately follows that there
  is an OAuth Session $o \in \mathsf{OASessions}(\rho, b, r, i)$.

  For part (b), we now show the connection between $Q_\text{ends}$
  and $o$
  and show that one of the logical equivalences in (b) hold true. In
  the following, we therefore have that $i$ is honest.

  In Lemma~\ref{lemma:oauth-flow-backtracking-to-auth} we have already
  shown the existence of and the relations between the events of one
  of the forms shown in
  Figure~\ref{fig:oauth-flow-backtracking-to-auth}. For any two events
  $e_1 \rightsquigarrow e_2$
  in Figure~\ref{fig:oauth-flow-backtracking-to-auth}, the processing
  steps where these events where emitted are connected (as $i$
  and DNS servers are honest).

  \begin{figure}[h]
    \begin{eqnarray}
      d_\text{strt}^\text{req} &\dashrightarrow& d_\text{strt}^\text{resp} \dashrightarrow e_\text{strt}^\text{req} \dashrightarrow e_\text{strt}^\text{resp}  \nonumber \\ 
                               &\dashrightarrow& d_\text{aep1}^\text{req} \dashrightarrow d_\text{aep1}^\text{resp} \dashrightarrow e_\text{aep1}^\text{req} \dashrightarrow e_\text{aep1}^\text{resp} \nonumber \\
                               &\dashrightarrow& d_\text{aep2}^\text{req} \dashrightarrow d_\text{aep2}^\text{resp} \dashrightarrow e_\text{aep2}^\text{req} \dashrightarrow e_\text{aep2}^\text{resp}\nonumber \\
                               &\dashrightarrow& d_\text{auth}^\text{req} \dashrightarrow d_\text{auth}^\text{resp} \dashrightarrow e_\text{auth}^\text{req} \label{eqn:run-auth-code-start}\\[2ex]
      d_\text{strt}^\text{req} &\dashrightarrow& d_\text{strt}^\text{resp} \dashrightarrow e_\text{strt}^\text{req} \dashrightarrow e_\text{strt}^\text{resp}  \nonumber \\
                               &\dashrightarrow& d_\text{aep1}^\text{req} \dashrightarrow d_\text{aep1}^\text{resp} \dashrightarrow e_\text{aep1}^\text{req} \dashrightarrow e_\text{aep1}^\text{resp} \nonumber \\
                               &\dashrightarrow& d_\text{aep2}^\text{req} \dashrightarrow d_\text{aep2}^\text{resp} \dashrightarrow e_\text{aep2}^\text{req} \dashrightarrow e_\text{aep2}^\text{resp} \nonumber \\
                               &\dashrightarrow& d_\text{impl}^\text{req} \dashrightarrow d_\text{impl}^\text{resp} \dashrightarrow e_\text{impl}^\text{req} \dashrightarrow e_\text{impl}^\text{resp}\nonumber \\
                               &\dashrightarrow& d_\text{auth}^\text{req} \dashrightarrow d_\text{auth}^\text{resp} \dashrightarrow e_\text{auth}^\text{req} \label{eqn:run-auth-implicit-start}
    \end{eqnarray}
    \caption{Structure of run from start to redirection endpoint}
    \label{fig:oauth-flow-backtracking-to-start}
  \end{figure}

  \noindent
  \textbf{Authorization Code Mode.} We now show that if the events are
  structured as shown in (\ref{eqn:run-auth-code}) in
  Figure~\ref{fig:oauth-flow-backtracking-to-auth} then there also
  exist events as shown in (\ref{eqn:run-auth-code-start}) in
  Figure~\ref{fig:oauth-flow-backtracking-to-start}. (The event
  $e_\text{auth}^\text{req}$ is the same in both figures.)
  
  Since we have that $e_\text{auth}^\text{req}$
  exists and was sent by $b$,
  the DNS messages $d_\text{auth}^\text{req}$
  and $d_\text{auth}^\text{resp}$
  (as shown) follow immediately. The request
  $e_\text{auth}^\text{req}$
  contains a session cookie containing a session id, say, $l$.
  The request also contains a URI parameter $\str{state}$
  with some value, say, $z$.\footnote{
    From the proof of
    Lemma~\ref{lemma:oauth-flow-backtracking-to-auth} we follow that
    $e_\text{auth}^\text{req}$
    must be an HTTPS request for the path $\str{/redirectionEndpoint}$
    containing the parameters $\str{code}$,
    $\str{state}$, $\str{iss}$, and $\str{client\_id}$. }

  With Lemma~\ref{lemma:attacker-does-not-learn-state}, we can see
  that the attacker (or any other party except for $i$,
  $b$,
  and $r$)
  cannot instruct the browser to send $e_\text{auth}^\text{req}$.
  Also, $r$
  does not instruct the browser to send such a request, and neither
  does any honest script. The request must therefore have been caused
  by a redirection contained in an event $e_\text{aep2}^\text{resp}$
  that was sent from $i$
  to $b$
  (see Line~\ref{line:idp-redir-with-auth-code} of
  Algorithm~\ref{alg:idp-oauth}). (The redirection must have included
  the state parameter in the URI as above.) This requires that an
  event $e_\text{aep2}^\text{resp}$
  was sent from $b$
  to $i$.
  (Which, as above, was preceded by DNS messages
  $d_\text{aep2}^\text{req}$
  and $d_\text{aep2}^\text{resp}$.)
   This event must contain an HTTP(S) POST request,
  with an origin header value of some domain of $i$,
  and in the body there must be a dictionary with an entry for the key
  $\str{client\_id}$
  containing the client id $c = \mathsf{clientIDOfRP}(r,i)$,
  and an entry for the key $\str{state}$ with the value $z$. (Note that in this case, $c \neq \bot$.)
  
  Because of the origin header value, this request can only be caused
  by the script $\mi{script\_idp\_form}$.
  This script extracted $c$
  and $z$
  from its initial scriptstate, which was a dictionary with the keys
  as above.\footnote{This initial scriptstate is never changed if the
    script runs under the origin of an honest IdP, which it does in
    this case.} The initial scriptstate must have been sent by $i$
  in an event $e_\text{aep1}^\text{resp}$.
  Such an event can only be sent out in Line~\ref{line:idp-send-form}
  of Algorithm~\ref{alg:idp-oauth}.

  The event $e_\text{aep1}^\text{resp}$,
  as above, must have been preceded by connected events
  $d_\text{aep1}^\text{req}$,
  $d_\text{aep1}^\text{resp}$,
  and $e_\text{aep1}^\text{req}$.
  In $e_\text{aep1}^\text{req}$
  the message must be an HTTP(S) request which must have two
  parameters, first, under the key $\str{state}$,
  the value $z$,
  and second, under the key $\str{client\_id}$,
  the value $l$.
  (These parameters are used as the initial scriptstate for the script
  $\mi{script\_idp\_form}$ above.)

  Similar to above, with Lemma~\ref{lemma:attacker-does-not-learn-state}, we have that the
  event $e_\text{aep1}^\text{req}$
  (and, with that, $d_\text{aep1}^\text{req}$)
  must have been caused by a redirect that was sent from $r$
  to $b$.
  Such a response is only created by $r$
  in Line~\ref{line:rp-send-redirect} of Algorithm~\ref{alg:rp-oauth}.
  Since the state value is always chosen freshly, and we have that in
  this case it is $z$,
  the event containing this redirect is $e_\text{strt}^\text{resp}$.

  It is now easy to see that the sequence of processing steps emitting
  the events in (\ref{eqn:run-auth-code-start}) and
  (\ref{eqn:run-auth-code}) is a session (as in
  Definition~\ref{def:sessions}), say, $o$.
  We already know that $\mathsf{startsOA}(Q_\text{starts}, b, r, i)$
  where $Q_\text{starts}$
  is the processing step in which $d_\text{strt}^\text{req}$
  was emitted. There is no other processing step in $o$
  in which the browser $b$
  triggers the script $\mi{script\_rp\_index}$.
  The processing step $Q_\text{ends}$
  (in which $e_\text{auth}^\text{resp}$
  is emitted) is the only processing step in which $r$
  receives a protected resource from $i$
  and emits an event in Line~\ref{line:rp-send-auth-response} of
  Algorithm~\ref{alg:rp-oauth}. Therefore, $o$
  is an OAuth session, and $Q_\text{ends}$ is in $o$.
  
  We now show that
  $$\mathsf{selected}_\text{ia}(o, b, r, \an{u,g}) \iff \big(t \equiv
  \mathsf{resourceOf}(i, r, \an{u,g})\big)\ .$$ Iff
  $\mathsf{selected}_\text{ia}(o, b, r, \an{u,g})$
  then we have that $b$
  in $Q_\text{start}$
  selected $\mi{interactive} \equiv \top$
  in Line~\ref{line:script-rp-index-select-interactive} and there is
  some $Q_\text{select}$
  in $o$
  such that $b$
  triggers the script $\mi{script\_idp\_form}$
  in $Q_\text{select}$
  and selects $\an{u,g}$
  in Line~\ref{line:script-idp-form-select-id} of
  Algorithm~\ref{alg:oauth-script-idp-form} and sends a message out to
  $i$.

  We therefore have that $Q_\text{select}$
  is the processing step where $d_\text{aep2}^\text{req}$
  was emitted. (This is the only processing step in which the browser
  triggers the script $\mi{script\_idp\_form}$.)
  We have that in this step, the browser selected $\an{u,g}$
  in Line~\ref{line:script-idp-form-select-id} of
  Algorithm~\ref{alg:oauth-script-idp-form}. Then, and only then, the
  HTTPS POST request in $e_\text{aep2}^\text{req}$
  contained, in the body, the credentials (username and password) for
  the identity $\an{u,g}$.
  From the proof of Lemma~\ref{lemma:oauth-flow-backtracking-to-auth}
  we see that in $e_\text{strt}^\text{resp}$,
  in the redirection URI, and hence in the URI in
  $e_\text{aep1}^\text{req}$,
  the parameter $\str{response\_type}$
  must be $\str{code}$.
  We therefore have that the initial scriptstate of
  $\mi{script\_idp\_form}$
  in $e_\text{aep1}^\text{resp}$
  contains the entry $\an{\str{response\_type}, \str{code}}$.
  Now, in $e_\text{aep2}^\text{req}$,
  the body also contains the same entry. Therefore, iff $i$
  receives $e_\text{aep2}^\text{req}$,
  then it creates an entry in the subterm $\str{codes}$
  of its state (in Line~\ref{line:idp-create-auth-code} of
  Algorithm~\ref{alg:idp-oauth}) of the form
  $$ \an{\mi{code}, \an{c, \mi{redirecturi}, \an{u,g}}} $$
  (where $\mi{redirecturi}$
  is some URI and $\mi{code}$ is a freshly chosen nonce).

  Then, and only then, $e_\text{aep2}^\text{resp}$
  contains $\mi{code}$
  in the parameter $\str{code}$
  of the location redirect URI (which is the URI for the HTTPS request
  in $e_\text{auth}^\text{req}$).
  RP sends (as shown in the proof of
  Lemma~\ref{lemma:oauth-flow-backtracking-to-auth}) $\mi{code}$
  to IdP in $e_\text{tokn}^\text{req}$.
  This request contains the body
  $\an{\an{\str{grant\_type}, \str{authorization\_code}},
    \an{\str{code}, \mi{code}}}$. 

  Then, and only then, IdP processes $e_\text{tokn}^\text{req}$
  (in Line~\ref{line:idp-create-atoken-code} of
  Algorithm~\ref{alg:idp-oauth}) and creates an entry in the subterm
  $\str{atokens}$ of its state of the form
  $$ \an{\mi{atoken}, \an{c, \an{u,g}}} $$
  for a freshly chosen nonce $\mi{atoken}$
  (as there exists an entry in the subterm $\str{code}$
  of the form $ \an{\mi{code}, \an{c, \mi{redirecturi}, \an{u,g}}} $).
  Then and only then, $\mi{atoken}$
  is contained in $e_\text{tokn}^\text{resp}$.
  Then and only then, $r$
  sends $\mi{atoken}$
  to $i$
  in $e_\text{intr}^\text{req}$.
  (In this request, $\mi{atoken}$
  is contained in the URI parameter $\str{token}$.) 

  Iff there is an entry of the form
  $ \an{\mi{atoken}, \an{c, \an{u,g}}} $
  in the subterm $\str{atokens}$
  in the state of $i$
  and $i$
  receives $e_\text{intr}^\text{req}$
  (containing $\mi{atoken}$
  as shown) then $i$
  processed $e_\text{intr}^\text{req}$
  in Line~\ref{line:idp-introspection-endpoint}ff. and emitted an
  event ($e_\text{intr}^\text{resp}$)
  containing $\mathsf{resourceOf}(i, r, \an{u,g})$. 

  \noindent
  \textbf{Implicit Mode.} This case is very similar to the
  authorization code mode above. We therefore only describe the
  differences between the two modes. 

  In this case, with the proof of
  Lemma~\ref{lemma:oauth-flow-backtracking-to-auth}, we have that
  $e_\text{auth}^\text{req}$
  is an HTTPS POST request to the path
  $\str{/receiveTokenFromImplicitGrant}$
  with an origin header being some domain of $r$.
  Further, as above, $e_\text{auth}^\text{req}$
  contains the state $z$.
  This request must have been created in the browser by
  $\mi{script\_rp\_implicit}$
  running under an origin of $r$.
  This script retrieves the state value from the fragment of the URI
  from which the script was loaded. Therefore, there must have been a
  request, $e_\text{impl}^\text{req}$
  containing such a fragment in the URI. This implies the presence of
  the events $d_\text{impl}^\text{req}$,
  $d_\text{impl}^\text{resp}$,
  and $e_\text{impl}^\text{resp}$.

  We can now that $Q_\text{ends}$
  is in $o$
  and
  $\mathsf{selected}_\text{ia}(o, b, r, \an{u,g}) \iff \big(t \equiv
  \mathsf{resourceOf}(i, r, \an{u,g})\big)$ by applying the same
  reasoning as above, with the following differences:
  \begin{itemize}
  \item The event $e_\text{impl}^\text{req}$
    takes the role of $e_\text{auth}^\text{req}$ in the proof above.
  \item We can show that the sequence of processing steps emitting the
    events in (\ref{eqn:run-auth-implicit}) in
    Figure~\ref{fig:oauth-flow-backtracking-to-auth} and
    (\ref{eqn:run-auth-implicit-start}) in
    Figure~\ref{fig:oauth-flow-backtracking-to-start} are the OAuth
    session $o$ and (as above) that $Q_\text{ends}$ is in $o$.
  \item Where the parameter $\str{response\_type}$
    was $\str{code}$
    above, it now is $\str{token}$.
    The same applies to the initial scriptstate of
    $\mi{script\_idp\_form}$.
  \item Instead of creating $\mi{code}$
    in the processing step that emits $e_\text{aep2}^\text{resp}$,
    this step now creates an access token $\mi{token}$
    (in the same way as the token was created in the authorization
    code mode in the processing step that emits
    $e_\text{tokn}^\text{resp}$).
    The steps $d_\text{tokn}^\text{req}$,
    $d_\text{tokn}^\text{resp}$,
    $e_\text{tokn}^\text{req}$,
    and $e_\text{tokn}^\text{resp}$ are skipped.
  \item The redirection URI contained in $e_\text{aep2}^\text{resp}$
    contains an access token instead of an authorization code, and the
    access token and the state value are contained in the fragment
    instead of in the parameters.
  \item As already discussed, $e_\text{auth}^\text{req}$
    was created by the script $\mi{script\_rp\_implicit}$
    which relays the access token from the URI fragment to $r$.
  \end{itemize}

  \noindent
  \textbf{Resource Owner Password Credentials Mode.} It is easy to see
  that the sequence of processing steps emitting the events in
  (\ref{eqn:run-auth-ropcg}) is a session (as in
  Definition~\ref{def:sessions}), say, $o$.
  In this case, $\mathsf{startsOA}(Q_\text{starts}, b, r, i)$
  holds true if $Q_\text{starts}$
  is the processing step in which $d_\text{auth}^\text{req}$
  was emitted. As above, $o$
  is also an OAuth session, and $Q_\text{ends}$ is in $o$.

  We now show that
  $$\mathsf{selected}_\text{nia}(o, b, r, \an{u,g}) \iff \big(t
  \equiv \mathsf{resourceOf}(i, r', \an{u,g})\big)$$ for some
  $r' \in \{r, \bot\}$.
  Iff $\mathsf{selected}_\text{nia}(o, b, r,  \an{u,g})$
  then we have that $b$
  in $Q_\text{start}$
  selected $\mi{id} \equiv \an{u,g}$
  in Line~\ref{line:script-rp-index-select-id} of
  Algorithm~\ref{alg:script-rp-index} and selected
  $\mi{interactive} \equiv \bot$
  in Line~\ref{line:script-rp-index-select-interactive}.

  Then and only
  then, $e_\text{auth}^\text{req}$
  is an HTTPS POST request for the path $\str{/passwordLogin}$
  with an origin header containing some domain of $r$
  and with the identity $\an{u,g}$
  and the corresponding password, say $p$,
  in the body. Then and only then, the body in
  $e_\text{cred}^\text{req}$
  is of the form
  $$ \an{\an{\str{grant\_type}, \str{password}}, \an{\str{username},
      \an{u,g}}, \an{\str{password}, p}}\ .$$ 

  Then, and only then, IdP processes $e_\text{cred}^\text{req}$
  (in Line~\ref{line:idp-grant-type-password}ff. of
  Algorithm~\ref{alg:idp-oauth}) and creates an entry in the subterm
  $\str{atokens}$ of its state of the form
  $$ \an{\mi{atoken}, \an{c', \an{u,g}}} $$
  for a freshly chosen nonce $\mi{atoken}$
  (as there exists an entry in the subterm $\str{code}$
  of the form $ \an{\mi{code}, \an{c, \mi{redirecturi}, \an{u,g}}} $) and for $c' \in \{\mathsf{clientIDOfRP}(r,i), \bot\}$.
  Then and only then, $\mi{atoken}$
  is contained in $e_\text{cred}^\text{resp}$.
  Then and only then, $r$
  sends $\mi{atoken}$
  to $i$
  in $e_\text{intr}^\text{req}$.
  (In this request, $\mi{atoken}$
  is contained in the URI parameter $\str{token}$.) 

  Iff there is an entry of the form
  $ \an{\mi{atoken}, \an{c', \an{u,g}}} $
  in the subterm $\str{atokens}$
  in the state of $i$
  and $i$
  receives $e_\text{intr}^\text{req}$
  (containing $\mi{atoken}$
  as shown) then $i$
  processed $e_\text{intr}^\text{req}$
  in Line~\ref{line:idp-introspection-endpoint}ff. and emitted an
  event ($e_\text{intr}^\text{resp}$)
  containing $\mathsf{resourceOf}(i, r, \an{u,g})$
  if $c' \neq \bot$
  and containing $\mathsf{resourceOf}(i, \bot, \an{u,g})$ otherwise.
\end{proof}

\begin{lemma}\label{lemma:owsw-secure-session-integrity-authn}\label{lemma:proof-si-authentication-final}
  Let $\oauthwebsystem^w$
  be an OAuth web system with web attackers, then $\oauthwebsystem^w$
  is secure w.r.t.~session integrity for authentication.
\end{lemma}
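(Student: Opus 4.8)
The plan is to reduce session integrity for authentication to session integrity for authorization (Lemma~\ref{lemma:owsw-secure-session-integrity-authz}), exploiting that in the relying party an authentication flow is completed on exactly the same code path as an authorization flow. Concretely, fix a run $\rho$, a processing step $Q_\text{login}$, an honest browser $b$, an honest $r \in \fAP{RP}$, an $i \in \fAP{IDP}$, and an identity $\an{u,g}$ as in Definition~\ref{def:property-authn-b}, and suppose a service token $\an{n, \an{\an{u',g'},m}}$ with $m \in \mathsf{dom}(i)$ is created in $r$ (Line~\ref{line:rp-store-service-token} of Algorithm~\ref{alg:rp-oauth}) and $n$ is sent to $b$. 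First I would observe that this service token is created only in the $\str{authn}$ branch of the introspection-response handler, and that in the very same processing step $r$ emits the response in Line~\ref{line:rp-send-auth-response}; moreover, the message that triggered this branch is an HTTPS introspection response from $i$ whose body has exactly the shape $\an{\an{\str{protected\_resource}, t}, \an{\str{client\_id}, c}, \an{\str{user}, \an{u',g'}}}$ required by Definition~\ref{def:oauth-start-end}. Hence $\mathsf{endsOA}(Q_\text{login}, b, r, i, t)$ holds, where $t$ is the protected resource contained in that response and $\an{u',g'}$ is the user component of the token.

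Next I would pin down $t$. The guard of the $\str{authn}$ branch (around Line~\ref{line:rp-check-user-empty}) forces $c \equiv \mathsf{clientIDOfRP}(r,i)$, or, in the resource-owner-password subcase without a client secret, $c$ to be the blank value; in either case, by the behaviour of the introspection endpoint of $i$, the returned resource is $t \equiv \mathsf{resourceOf}(i, \tilde r, \an{u',g'})$ with $\tilde r = r$ when the issuing client id is that of $r$ and $\tilde r = \bot$ otherwise. Now I apply Lemma~\ref{lemma:owsw-secure-session-integrity-authz} to $Q_\text{login}$: part (a) immediately yields an OAuth session $o \in \mathsf{OASessions}(\rho, b, r, i)$, establishing clause (a) of the authentication property. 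For clause (b), assuming $i$ honest, Lemma~\ref{lemma:owsw-secure-session-integrity-authz}(b) gives that $Q_\text{login}$ lies in $o$ and supplies, depending on the mode of the flow, one of the two equivalences relating $\mathsf{selected}_\text{ia}$/$\mathsf{selected}_\text{nia}$ to $t \equiv \mathsf{resourceOf}(i, r, \an{u,g})$, respectively $t \equiv \mathsf{resourceOf}(i, r', \an{u,g})$ for some $r' \in \{r,\bot\}$.

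Finally I would convert the resource equivalence into the required identity equivalence. Using injectivity of $\mathsf{resourceOf}$ (Appendix~\ref{app:oauth-pidp-identities}) together with $t \equiv \mathsf{resourceOf}(i, \tilde r, \an{u',g'})$, one gets $t \equiv \mathsf{resourceOf}(i, \tilde r, \an{u,g}) \iff \an{u,g} \equiv \an{u',g'}$; and the RP-component $r'$ appearing in Lemma~\ref{lemma:owsw-secure-session-integrity-authz}(b) is exactly $\tilde r$, since both equal $r$ precisely when a client id was used in the token request and $\bot$ otherwise (this is the same case distinction that concludes the authorization proof). To collapse the disjunction $\mathsf{selected}_\text{ia} \vee \mathsf{selected}_\text{nia}$ I would use that a single OAuth session runs in exactly one mode, so that $\mathsf{selected}_\text{ia}$ is identically false in the resource-owner-password flow while $\mathsf{selected}_\text{nia}$ is identically false in the interactive flows; hence only the mode-appropriate disjunct can be true, and the equivalence supplied by the authorization lemma carries over verbatim. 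The main obstacle is precisely this bookkeeping: ensuring that the RP-component of the resource ($r$ versus $\bot$) is matched consistently between the authentication guard and the authorization lemma's conclusion, and that the mode selected in $o$ lines up with whichever of the two equivalences Lemma~\ref{lemma:owsw-secure-session-integrity-authz} provides.
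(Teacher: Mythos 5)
Your proposal is correct and follows essentially the same route as the paper's proof: observe that the service-token creation step satisfies $\mathsf{endsOA}(Q_\text{login}, b, r, i, t)$, invoke the session-integrity-for-authorization lemma for both clauses, and use the shape of the honest IdP's introspection response (resource and user id share the same identity) together with injectivity of $\mathsf{resourceOf}$ to turn the resource equivalence into the identity equivalence. Your version is in fact more explicit than the paper's rather terse argument about the two bookkeeping points it glosses over — matching the RP-component ($r$ versus $\bot$) across the two properties, and collapsing the $\mathsf{selected}_\text{ia} \vee \mathsf{selected}_\text{nia}$ disjunction via mutual exclusivity of the modes — but the underlying argument is the same.
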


\begin{proof}
  We have that $r$
  sends a service token to $b$,
  and thus, $\mathsf{endsOA}(Q_\text{login}, b, r, i t)$
  for some term $t$.
  Since $\oauthwebsystem^w$
  is secure w.r.t.~session integrity for authorization, we have that
  (a) holds true. For (b), we see from
  Line~\ref{line:idp-introspection-endpoint}ff. that honest IdPs, at
  their introspection endpoint, if they send out an HTTPS response,
  the body of that response is of the form
  $$\an{\an{\str{protected\_resource}, \mathsf{resourceOf}(i'',r'',\an{u'',g''})}, \an{\str{client\_id}, c''}, \an{\str{user}, \an{u'',g''}}}$$
  for any $\an{u'', g''}$ and some $c''$, $i''$, $r''$. We therefore have that 
  $$ \big(t \equiv \mathsf{resourceOf}(i,r,\an{u,g})\big) \iff \big(\an{u,g} \equiv \an{u',g'}\big)\ .$$
  Since $\oauthwebsystem^w$
  is secure w.r.t.~session integrity for authorization, we have that
  (b) holds true. 

\end{proof}

With Lemma~\ref{lemma:proof-authentication-final},
Lemma~\ref{lemma:proof-authorization-final},
Lemma~\ref{lemma:proof-si-authorization-final} and
Lemma~\ref{lemma:proof-si-authentication-final} we have proven
Theorem~\ref{thm:security}. \qed

\end{document}
